\newtheorem*{question}{Question}
\newcommand{\Func}[1]{{\mathsf{#1}}}
\newcommand{\col}{\Func{col}}
\newcommand{\Win}{\texttt{Win}}
\newcommand{\nextmove}{\Func{nextmove}}
\newcommand{\Distribution}{\Func{Distributions}}
\newcommand{\unitinterval}{\mathbb{U}}
\newcommand{\tpl}[1]{\left( #1 \right)}
\newcommand{\eve}{\pmb{\exists}}
\newcommand{\adam}{\pmb{\forall}}
\newcommand{\Gc}{\mathcal{G}} 
\newcommand{\Lc}{\mathcal{L}}
\newcommand{\Ac}{\mathcal{A}}
\newcommand{\Fc}{\mathcal{F}}
\newcommand{\Hc}{\mathcal{H}}
\newcommand{\Bc}{\mathcal{B}}
\newcommand{\Dc}{\mathcal{D}}
\newcommand{\Mc}{\mathcal{M}}
\newcommand{\Tc}{\mathcal{T}}
\newcommand{\Oh}{\mathcal{O}}
\newcommand{\Nc}{\mathcal{N}}
\newcommand{\Nb}{\mathbb{N}}
\newcommand{\Sc}{\mathcal{S}}
\newcommand{\Rc}{\mathcal{R}}
\newcommand{\Cc}{\mathcal{C}}
\newcommand{\Oc}{\mathcal{O}}
\newcommand{\Pc}{\mathcal{P}}
\newcommand{\Zc}{\mathcal{Z}}
\newcommand{\Act}{\Func{Act}}
\newcommand{\prob}{\mathtt{Prob}}
\renewcommand{\inf}{\Func{Inf}}
\newclass{\ptime}{P}
\newclass{\pspace}{PSPACE}
\newclass{\psps}{PSPACE}
\newclass{\logspace}{LOGSPACE}
\newcommand{\safety}{\texttt{safe}}
\newcommand{\CR}{{\mathsf{CR}}}
\newcommand{\WCR}{\mathsf{WCR}}
\newclass{\TOWER}{TOWER}
\newclass{\ACKERMANN}{ACKERMANN}
\newclass{\EXPTIME}{EXPTIME}
\newclass{\IIEXPTIME}{2-EXPTIME}
\newclass{\NEXPTIME}{NEXPTIME}
\newcommand{\swapletter}{
  \begin{tikzpicture}[baseline=(base)]
    \node (base) at (0, -0.04) {}; 
    \node (dot1) at (-0.08, 0) {\tikz\draw[fill] (0,0) circle [radius=1pt];};
    \node (dot2) at (-0.08, 0.17) {\tikz\draw[fill] (0,0) circle [radius=1pt];};
    \node (dot3) at (0.08, 0) {\tikz\draw[fill] (0,0) circle [radius=1pt];};
    \node (dot4) at (0.08, 0.17) {\tikz\draw[fill] (0,0) circle [radius=1pt];};
    \draw (-0.08, 0.17) -- (0.08, 0);
    \draw (-0.08, 0) -- (0.08, 0.17);
  \end{tikzpicture}
}
\newcommand{\sameletter}{
  \begin{tikzpicture}[baseline=(base)]
    \node (base) at (0, -0.04) {}; 
    \node (dot1) at (-0.08, 0) {\tikz\draw[fill] (0,0) circle [radius=1pt];};
    \node (dot2) at (-0.08, 0.17) {\tikz\draw[fill] (0,0) circle [radius=1pt];};
    \node (dot3) at (0.08, 0) {\tikz\draw[fill] (0,0) circle [radius=1pt];};
    \node (dot4) at (0.08, 0.17) {\tikz\draw[fill] (0,0) circle [radius=1pt];};
    \draw (-0.08, 0.17) -- (0.08, 0.17);
    \draw (-0.08, 0) -- (0.08, 0);
  \end{tikzpicture}
}
\newcommand{\upletter}{
  \begin{tikzpicture}[baseline=(base)]
    \node (base) at (0, -0.04) {}; 
    \node (dot1) at (-0.08, 0) {\tikz\draw[fill] (0,0) circle [radius=1pt];};
    \node (dot2) at (-0.08, 0.17) {\tikz\draw[fill] (0,0) circle [radius=1pt];};
    \node (dot3) at (0.08, 0) {\tikz\draw[fill] (0,0) circle [radius=1pt];};
    \node (dot4) at (0.08, 0.17) {\tikz\draw[fill] (0,0) circle [radius=1pt];};
    \draw (-0.08, 0.17) -- (0.08, 0.17);
  \end{tikzpicture}
}
\newcommand{\downletter}{
  \begin{tikzpicture}[baseline=(base)]
    \node (base) at (0, -0.04) {}; 
    \node (dot1) at (-0.08, 0) {\tikz\draw[fill] (0,0) circle [radius=1pt];};
    \node (dot2) at (-0.08, 0.17) {\tikz\draw[fill] (0,0) circle [radius=1pt];};
    \node (dot3) at (0.08, 0) {\tikz\draw[fill] (0,0) circle [radius=1pt];};
    \node (dot4) at (0.08, 0.17) {\tikz\draw[fill] (0,0) circle [radius=1pt];};
    \draw (-0.08, 0) -- (0.08, 0);
  \end{tikzpicture}
}
\newcommandx{\theju}[1]{{\color{red} #1 -Thejaswini}}
\newcommandx{\ap}[1]{{\color{red} #1 -AP}}
\newsavebox{\mycandle}% From https://tex.stackexchange.com/a/39211/5701
\savebox{\mycandle}{ 
\begin{tikzpicture}[scale=.1]
  \shade[top color=yellow,bottom color=red] (0,0) .. controls (1,.15)
  and (1,.3) .. (0,2.5) .. controls (-1,.3) and (-1,.15) .. (0,0);
  \fill[red!70!blue] (.4,0) rectangle (-.4,-5);
\end{tikzpicture} }
\title{Resolving Nondeterminism with Randomness\footnote{This work is a part of project VAMOS that has received funding from the European Research Council (ERC), grant agreement No 101020093}}
\author{\hyperlink{https://pub.ista.ac.at/~tah/}{Thomas A. Henzinger}
\orcidlink{0000-0002-2985-7724}
\\ IST Austria \\ \texttt{tah@ist.ac.at} \and \hyperlink{https://apitya.github.io/}{Aditya Prakash}
\orcidlink{0000-0002-2404-0707}\thanks{Funded by the Chancellors International Scholarship at the University of Warwick.} \\ University of Warwick, UK \\ \texttt{aditya.prakash@warwick.ac.uk} 
\and \hyperlink{https://thejaswiniraghavan.github.io/}{K. S. Thejaswini} \orcidlink{0000-0001-6077-7514}
\\IST Austria \\ \texttt{thejaswini.k.s@ista.ac.at}}
\newtheorem{theorem}{Theorem}
\newtheorem{lemma}[theorem]{Lemma}
\newtheorem{observation}[theorem]{Observation}
\newtheorem{proposition}[theorem]{Proposition}
\newtheorem{corollary}[theorem]{Corollary}
\newtheorem{example}[theorem]{Example}
\newtheorem{definition}[theorem]{Definition}
\newtheorem{claim}[theorem]{Claim}
\date{}
\begin{document}

\maketitle
\begin{abstract}
In automata theory, nondeterminism is a fundamental paradigm that can offer succinctness and expressivity, often at the cost of computational complexity.  While determinisation provides a standard route to solving many common problems in automata theory, some weak forms of nondeterminism can be dealt with in some problems without costly determinisation.  For example, the handling of specifications given by nondeterministic automata over infinite words for the problems of reactive synthesis or runtime verification requires resolving nondeterministic choices without knowing the future of the input word.  We define and study classes of $\omega$-regular automata for which the nondeterminism can be resolved by a policy that uses a combination of memory and randomness on any input word, based solely on the prefix read so far.

We examine two settings for providing the input word to an automaton.  In the first setting, called \emph{adversarial resolvability}, the input word is constructed letter-by-letter by an adversary, dependent on the resolver’s previous decisions.  In the second setting, called \emph{stochastic resolvability}, the adversary pre-commits to an infinite word and reveals it letter-by-letter.  In each setting, we require the existence of an almost-sure resolver, i.e., a policy that ensures that as long as the adversary provides a word in the language of the underlying nondeterministic automaton, the run constructed by the policy is accepting with probability~1.

The class of automata that are adversarially resolvable is the well-studied class of history-deterministic automata.  The case of stochastically resolvable automata, on the other hand, defines a novel class.  Restricting the class of resolvers in both settings to stochastic policies without memory introduces two additional new classes of automata.  We show that the new automaton classes offer interesting trade-offs between succinctness, expressivity, and computational complexity, providing a fine gradation between deterministic automata and nondeterministic automata.
\end{abstract}
\newpage
\tableofcontents
\newpage
\section{Introduction}\label{sec:intro}
The trade-off between determinism and nondeterminism is a central theme in automata theory. For automata over infinite words, nondeterministic B\"uchi automata are more expressive and succinct than their deterministic counterpart and, in fact, are as expressive as deterministic parity automata and exponentially more succinct~\cite{McN66,Saf88}. Even though nondeterminism seems attractive because of these favourable qualities, it presents difficulties in contexts like reactive synthesis or runtime verification when the specifications are expressed as automata. The
fundamental challenge here arises because any algorithm operating with nondeterministic automata needs to account for any possible future inputs.

Some attempts have evolved that involve modifying algorithms to avoid or minimise the blow-up using determinisation procedures~\cite{KPV06,KV05,EKS16}, while other attempts instead focus on building classes of automata that bridge this gap between determinism and nondeterminism to obtain the best of both worlds, while avoiding exponential determinisation procedures. 
The latter of the attempts include defining several classes of automata like history-deterministic (HD) automata~\cite{HP06}, good-for-MDP automata~\cite{HPSS0W20}, semantically deterministic automata~\cite{KS15,AK20}, and explorable automata~\cite{HK23} to name a few. 

The notion of history-determinism, which has inspired this work, has gained significant attention in recent years, since it allows for on-the-fly resolution of nondeterminism, which make them relevant for the problems of model-checking, and reactive synthesis~\cite[Page 22]{BL23}. 

Notably, history-deterministic automata are exponentially more succinct than deterministic automata \cite[Theorem 1]{KS15}, and history-deterministic coB\"uchi automata give rise to a canonical representation of coB\"uchi as well as $\omega$-regular languages~\cite{AK22,ES22}. History-deterministic automata are nondeterministic automata whose nondeterminism can be resolved on-the-fly and only based on the prefix of a word read so far. 
More precisely, history-determinism of an automaton can be characterised by the following history-determinism game (HD game) on that automaton. The HD game is played between two players, Eve and Adam, on an arena of the automaton where Eve initially has a token at the start state. In each round of a play, Adam picks a letter, and Eve moves her token along a transition on the letter picked by Adam, thereby constructing a run on the word that Adam has picked so far. The game proceeds for an infinite duration, where Adam constructs an infinite word, and Eve produces a run on this word. Eve wins if she has a strategy that ensures that she constructs an accepting run whenever the infinite word given by Adam is in the language of the automaton. HD automata are defined as  automata in which Eve wins the HD game. For a sanity check, consider deterministic automata, where Eve trivially wins the HD game by choosing the unique deterministic transition on each letter to produce an accepting run whenever the word is in the language. 

Such a definition of resolving nondeterministic choices as in HD automata is useful to represent winning conditions in games against an adversary, thus also being dubbed good-for-games automata at the time of its introduction~\cite[Theorem 3.1]{HP06}. This property also makes history-deterministic automata relevant for the problems of reactive synthesis and model checking~\cite[Section 7]{BL23}.
%Therefore, history-deterministic automata are valuable for reactive synthesis in an $\omega$-regular setting and further connections to simulation~\cite{BHLP24} strengthen their importance. 

Although history determinism is a useful concept, it is based on a highly adversarial setting for the player who resolves the nondeterminism. 
For example, consider the infinite word reachability automaton in~\cref{fig:ReachSRbutnotHD}, which accepts all infinite words over $\{a,b\}$.
In order to resolve the nondeterministic decision in state $q_0$ and eventually reach state $q_f$, the letter that is to be read in the next step must be guessed.  
If successive letters are chosen adversarially and based on the current state of the automaton, no resolver can successfully resolve the nondeterminism on all words, and therefore this automaton is not HD. 
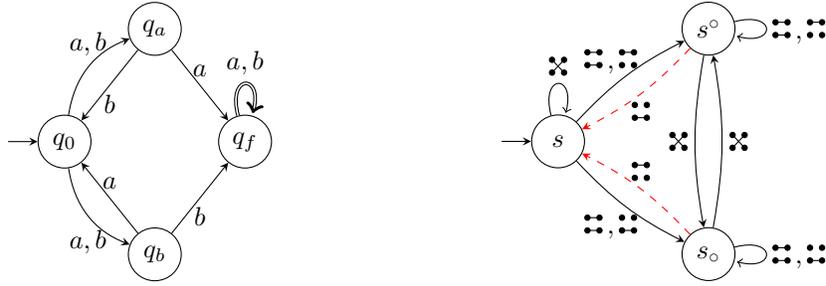
\begin{figure}[ht]
    \centering
    \begin{subfigure}[t]{0.4\textwidth}
    \centering
    \begin{tikzpicture}
        \tikzset{every state/.style = {inner sep=-3pt,minimum size =20}}

    \node[state] (q0) at (0,0) {$q_0$};
    \node[state] (qa) at (1.2,1.5) {$q_a$};
    \node[state] (qb) at (1.2,-1.5) {$q_b$};
    \node[state] (qf) at (2.4,0) {$q_f$};
    \path[-stealth]
    (-0.75,0) edge (q0)

    (q0) edge [bend left = 30] node [above, yshift = 1] {$a,b$} (qa)
    (q0) edge [bend right = 30] node [below, yshift = -2] {$a,b$} (qb)
    (qa) edge node [below] {$b$} (q0) 
    (qb) edge node [above] {$a$} (q0)
    (qa) edge node [above] {$a$} (qf)
    (qb) edge node [below] {$b$} (qf)
    (qf) edge [double, loop above] node [above] {$a,b$} (qf)
    ;
    \end{tikzpicture}
    \caption{A reachability automaton that is not history deterministic.}
    \label{fig:ReachSRbutnotHD}
    \end{subfigure}~~~
    \begin{subfigure}[t]{0.4\textwidth}
    \centering
        \begin{tikzpicture}
        \tikzset{every state/.style = {inner sep=-3pt,minimum size =20}}

    \node[state] (q0) at (0,0) {$s$};
    \node[state] (qa) at (2,1.5) {$s^\circ$};
    \node[state] (qb) at (2,-1.5) {$s_\circ$};
    \path[-stealth]
    (-0.75,0) edge (q0)
    (q0) edge [bend left = 10] (qa)
    (q0) edge [bend right = 10] (qb)
    (qa) edge[red, dashed,bend left =10] (q0) 
    (qb) edge[red, dashed,bend right =10]  (q0)
    (qa) edge[bend right =10]  (qb)
    (qb) edge[bend right =10]  (qa)
    (q0) edge[loop above] (q0)
    (qa) edge[loop right] (qa)
    (qb) edge[loop right] (qb)
    ;
    \node (l1) at (0,1) {$\swapletter$};
    \node (l2) at (2.4,0) {$\swapletter$};
    \node (l3) at (1.6,0) {$\swapletter$};

%    \node (l4) at (0,1) {$\sameletter$};
    \node (l5) at (3.2,1.5) {$\sameletter\!,\!\!\upletter$};
    \node (l6) at (3.2,-1.5) {$\sameletter\!,\!\!\downletter$};  
    
    \node (l4) at (0.7,1.1) {$\sameletter\!,\!\!\upletter$};
    \node (l7) at (0.7,-1.1) {$\sameletter\!,\!\!\downletter$};   
    
    \node (l8) at (1.1,0.4) {$\downletter$};
    \node (l9) at (1.1,-0.4) {$\upletter$};   
    \end{tikzpicture}
    \caption{A HD coB\"uchi automaton where nondeterminism can be resolved randomly}
    \label{fig:coBuchiEx}
    \end{subfigure}
    \caption{Examples of resolving nondeterminism with randomness}
\end{figure}
However, in a more lenient setting where the adversary pre-commits to an infinite word, a resolver that resolves the nondeterminism by choosing uniformly at random one of the two transitions from $q_0$ would produce an accepting run almost surely, that is, with probability 1. 
%In this less adversarial settings of the adversary, history determinism is a stronger notion than required. 

%Even in the adversarial setting, randomness is a useful tool to resolve the nondeterminism. 
Allowing for stochastic resolvers simplifies the memory structure of the resolvers even for HD automata. 
Consider the coB\"uchi automaton in \cref{fig:coBuchiEx} inspired from the work of Kuperberg and Skrzypczak~\cite[Theorem~1]{KS15}. Accepting runs in a coB\"uchi automaton must visit rejecting (dashed) transitions only finitely often. The language of the auomaton is over the alphabet $\{\!\!\swapletter\!,\!\sameletter\!,\!\upletter\!,\!\downletter\!\!\}$ and therefore, each infinite word naturally constructs an infinite graph over  $\Nb\times \{1,2\}$. 
We demonstrate with the finite word $\!\!\swapletter\!\!\!\sameletter\!\!\!\sameletter\!\!\!\swapletter\!\!$ that represents a graph with two distinct finite paths of length 4, if the right-most dots of a letter are identified with the left-most dots of the next letter in the word.  
The infinite word 
$(\!\!\swapletter\!\!\!\sameletter\!\!\!\sameletter\!\!\!\swapletter\!\!)^\omega$ represents a graph with two infinite paths. %, and so does any word in the language 
%$\tpl{\!\!\swapletter\!\!\!+\!\!\!\sameletter\!\!}^\omega$. 
However, the word
$(\!\!\swapletter\!\!\!\upletter\!\!\!\sameletter\!\!)^\omega$ does not have any infinite path, as the letter $\!\!\upletter\!\!$ breaks each path infinitely often. %, but the word $\tpl{\!\!\swapletter\!\!\!\upletter\!\!\!\sameletter\!\!\!\swapletter\!\!}^\omega$ has one infinite path despite the infinite occurrence of the same letter. 
It can be verified that the nondeterministic coB\"uchi automaton in \cref{fig:coBuchiEx} accepts an infinite word if and only if the corresponding graph has at least one infinite path.  
Indeed, the two nondeterministic choices correspond to verifying if the path starting from the bottom vertex is infinite or the top vertex is infinite, respectively. A correct guess ensures that rejecting transitions never occur in the run and a wrong guess brings the run back to the start vertex from which a guess needs to be made again. 

%There is a strategy to resolve the nondeterminism against any adversarially constructed input sequence of letters in this automaton, and therefore it is HD. 
This automaton is history-deterministic. Consider the resolver that chooses the transition from state $s$ that follows the longest unbroken path so far to verify that it extends to an infinite path. If there is an infinite path then eventually it would be the longest unbroken path, and such a resolver would correctly resolve the nondeterminism to produce an accepting run. Note that any resolver that selects a transition from $s$ uniformly at random also constructs an accepting run on any infinite word in the language with probability~$1$, even if letters are chosen adversarially and based on the current state. This is the case because the run of the resolver would, with probability~$1$, eventually coincide with one of the longest unbroken paths. %This randomised resolver ensures that the constructed run is accepting with probability~$1$, trading memory for randomness. 
% For checking SR on the other hand, the problem of checking if a given resolver is an almost-sure resolver is undecidable in the case of B\"uchi or coB\"uchi automata. This makes our result that we can convert any SR coB\"uchi automaton to a MA coB\"uchi automaton in \cref{sec:succandcomp}, all the more surprising.
% in this resolver. %, irrespective of any adversary's choices. 
% This strategy requires memory, which continues the longest path seen so far in the past. However, equivalently, another strategy could be to resolve the nondeterminsm, and even if the input letters are chosen adversarially, if the word is in the language, almost surely, the run produced would be an accepting run. 

%These examples therefore motivates the settings we consider. 

%A parity automaton is a finite state automaton where each transition is equipped with a natural number, called priority. A run of the automaton is accepting if the largest priority occurring infinitely often is even. 
Motivated by these examples, we introduce classes of automata inspired by history determinism, extending them in two ways. First, we consider resolvers that use not only memory but also stochasticity, and second, we also study settings where the letter-giving adversary commits to a word in advance.%\theju{weaker?}   
% This paper introduces and studies categories of nondeterministic parity automata  within two different adversarial frameworks, where the nondeterminism in automata can be resolved on-the-fly using both memory and randomness.

\paragraph*{New classes of automata} Intuitively, stochastic resolvers propose an outgoing transition using both memory and randomness.
We study classes of automata for which there is an \emph{almost-sure} resolver, a policy for resolving the nondeterminism (which can use memory or randomness) such that for all words in the language, the run produced is almost-surely accepting, that is, with probability~1.% under two different settings.

We consider the existence of almost-sure resolvers in the following two settings for the letter-giving adversary.  The first setting, called \emph{adversarial resolvability}, is where the input words to the automaton are from an adversary that generates the input letter-by-letter based on the resolution of the nondeterminism in the past. This definition is reminiscent to the definition of resolvers in history determinism, where now we allow for stochasticity. Indeed, the resolvers here represent the strategy of Eve in the HD game defined for history determinism, where we allow for a larger class of resolvers which allow for randomness. 
The second, and novel setting, called \emph{stochastic resolvability}, is where the adversary commits to an entire input-word, but only reveals the input letter-by-letter to their opponent who is resolving the nondeterminism. 
%Similar to the letter game, we can define an SR game, where such a letter-giving adversary can also be equivalently viewed as an opponent that does not have access to any of the internal coin-tosses of the almost-sure resolver of the non-determinism. This can also be equivalently formulated by a partial observation game similar to the letter game defined. The turn of the games proceed similar to the letter game, but Adam is \emph{blind} to the moves of Eve in the game and has no view of the token of Eve.

 We call the class of automata that have an almost-sure resolver in the adversarial resolvability setting as \emph{adversarially resolvable automata}, and that have an almost-sure resolver in the stochastic resolvability setting as \emph{stochastically resolvable automata}, or SR automata, for short.
We further study the classes of automata where ``weaker'' resolvers are used. If the resolvers of nondeterminism are restricted to policies where no memory and only stochasticity is used, then the resolving strategy is just a probability distribution among the outgoing transition for each state. We call such classes of resolvers \emph{memoryless resolvers}, and the class of automata that are adversarially resolvable using memoryless almost-sure resolvers as \emph{memoryless-adversarially resolvable}, or MA for short. Likewise, we call the class of automata that are stochastically resolvable using memoryless almost-sure resolvers \emph{memoryless-stochastically resolvable}, or MR for short. The class of adversarially resolvable automata, without any restrictions on resolvers, are equivalent to HD automata due to determinacy of $\omega$-regular games.

\paragraph*{Our results}
We introduce and then make comparisons between the three newly introduced classes of automata and also with existing notions such as HD automata and semantically deterministic (SD) automata (automata where every nondeterministic transition leads to language equivalent states)~\cite{AK23}, and we study the questions of succinctness, expressivity, and computational complexity of the class-membership problem for these automata classes. 

%while making distinctions between these classes by providing separating examples. 

% In all generality, adversarially resolvable automata are equivalent to history-deterministic automata. But when restricted to memoryless resolvers, MA automata forms a smaller subclass of HD automata for B\"uchi, coB\"uchi, and parity automata. 
In \cref{sec:succandcomp}, we compare our novel automata classes with each other and also with history-deterministic and semantically deterministic automata. We show separating examples or prove the equivalences between the classes (\cref{theorem:comp}). A landscape of these automata can be found in \cref{fig:venndiagrammmmmm}.
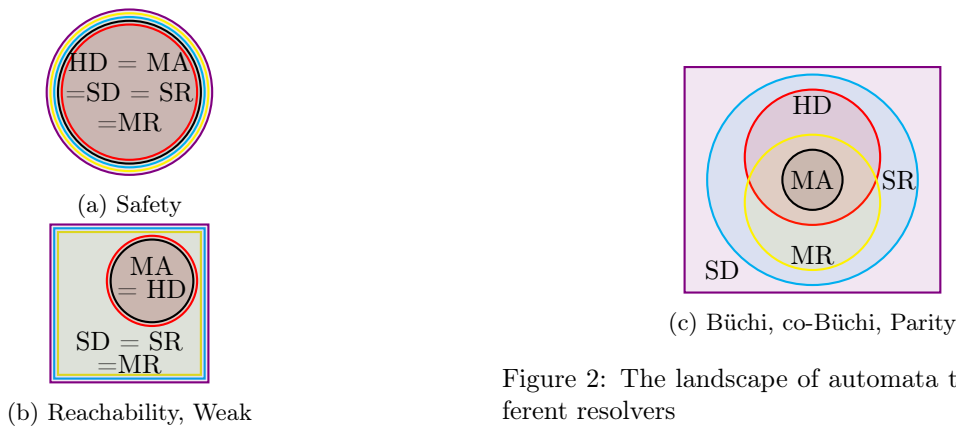
\begin{figure}[h]
    \centering
    \begin{minipage}[b]{0.4\linewidth}
            \begin{subfigure}[b]{\linewidth}
        \centering
        \begin{tikzpicture}
            \draw[thick,violet] (0,0) circle (1.1cm);
            \fill[violet, opacity=0.1] (0,0) circle (1.1cm);
            
            \draw[thick,yellow] (0,0) circle (1.05cm);
            \fill[yellow, opacity=0.1] (0,0) circle (1.05cm);
            
            \draw[thick,cyan] (0,0) circle (1cm);
            \fill[cyan, opacity=0.1] (0,0) circle (1cm); 

            \fill[black, opacity=0.1] (0,0) circle (0.95cm);;
            \draw[thick,black] (0,0) circle (0.95cm);
            
            \draw[thick,red] (0,0) circle (0.9cm);
            \fill[red, opacity=0.1] (0,0) circle (0.9cm);

            \node at (0,0.4) {HD = MA};
            \node at (0,0) {=SD = SR};
            \node at (0,-0.4) {=MR};
        \end{tikzpicture}
    \caption{Safety}
    \end{subfigure}%
    \vfill
    \begin{subfigure}[b]{\linewidth}
        \centering
        \begin{tikzpicture}
            \draw[thick,cyan] (-1,1) rectangle (1,-1);
            \draw[thick,yellow] (-0.95,0.95) rectangle (0.95,-0.95); % Object 4
            \fill[violet, opacity=0.1] (-1.05,1.05) rectangle (1.05,-1.05);
            \fill[cyan, opacity=0.1] (-1,1) rectangle (1,-1); 
            \fill[yellow, opacity=0.1] (-0.95,0.95) rectangle (0.95,-0.95);
            \draw[thick,violet] (-1.05,1.05) rectangle (1.05,-1.05);

            \draw[thick,red] (0.3,0.3) circle (0.6cm);
            \fill[red, opacity=0.1] (0.3,0.3) circle (0.6cm);
            \draw[thick,black] (0.3,0.3) circle (0.55cm);
            \fill[black, opacity=0.1] (0.3,0.3) circle (0.55cm);
            \node at (0,-0.5) {SD = SR};
            \node at (0,-0.8) {=MR};
            \node at (0.3,0.5) {MA};
            \node at (0.3,0.2) {= HD};
        \end{tikzpicture}
    \caption{Reachability, Weak}
    \end{subfigure}%
    \end{minipage}
    \hfill
    \begin{minipage}[b]{0.5\linewidth}
        \begin{subfigure}[b]{\linewidth}
        \centering
        \begin{tikzpicture}
            \draw[thick,violet] (-1.7,1.5) rectangle (1.7,-1.5);
            \fill[violet, opacity=0.1] (-1.7,1.5) rectangle (1.7,-1.5);% Object 1
            \draw[thick,cyan] (0,0) circle (1.4cm);
            \fill[cyan, opacity=0.1] (0,0) circle (1.4cm); % Object 2
            \draw[thick,red] (0,0.3) circle (0.9cm); % Object 3
            \fill[red, opacity=0.1] (0,0.3) circle (0.9cm);
            \draw[thick,yellow] (0,-0.3) circle (0.9cm); % Object 4
            \fill[yellow, opacity=0.1] (0,-0.3) circle (0.9cm);
            \fill[black,opacity=0.1] (0,0) circle (0.4cm);
            \draw[black,thick] (0,0) circle (0.4cm); % Object 5
            \node at (-1.2,-1.2) {SD};
            \node at (1.15,0) {SR};
            \node at (0,1) {HD};
            \node at (0,-1) {MR};
            \node at (0,0) {MA};
        \end{tikzpicture}    \caption{B\"uchi, co-B\"uchi, Parity}
    \end{subfigure}
    \caption{The landscape of automata that admit different resolvers}\label{fig:venndiagrammmmmm}
    \end{minipage}
\end{figure}
In the same section, we show \emph{exponential succinctness of SR coB\"uchi to deterministic B\"uchi automata and SR B\"uchi automata to HD B\"uchi automata (\cref{theorem:succ})} %automaton are exponentially more succinct than HD B\"uchi automata} (\cref{theorem:succ}). 
For coB\"uchi automata on the other hand, we show a surprising result that \emph{any SR coB\"uchi automaton can be converted into an MA automaton with the same number of states} (\cref{theorem:coBuchiHDisSR}). This result, therefore, shows that SR coB\"uchi automata are no more succinct than HD coB\"uchi~automata. %, but exponentially than deterministic coBu\"uchi automata (\cref{theorem:succ}) %But we show that B\"uchi SR automata are exponentially succinct than even HD automata.

In \cref{sec:pih}, we turn our attention to the problem of expressivity.  History-deterministic (and therefore also MA) parity automata that use priorities from $\{i,i+1,\dots,j\}$ are as expressive as their deterministic counterparts~\cite{BKS17}. We show that SR and MR automata are similar to HD automata in this regard by showing that \emph{the parity index hierarchy for SR automata is strict} (\cref{theorem:pih}). 

Finally, in \cref{sec:complexity}, we tackle the problems of deciding if an automata is MA and SR. The exact complexity of the problem of deciding if an automaton is HD is an open problem since the introduction of the class in 2006~\cite{HP06}, with recent improvements giving the problem a $\NP$ lower bound~\cite{Pra24a} and a $\PSPACE$ upper bound~\cite[2-Token Theorem]{LP25, Pra25}.
We show that the situation for MA automata is better as we show that \emph{checking if an automaton is MA is $\NP$-complete} (\cref{theorem:manpcomplete}). The upper bound relies on showing that a slight modification of the 2-token game~\cite{BK18,LP25}, a game used to characterise history deterministic automata, also characterises MA automata. With such a characterisation, we show that to check if an automata is MA, one first needs to guess a strategy in the modified two-token game. Checking correctness of this strategy reduces to solving a Markov decision process (MDP) with Muller objective, where objective can be represented succinctly using a Zielonka DAG. We show that this can be computed in polynomial time (\cref{thm:ZlkDAGMDP}). This result on MDPs was only previously known for less succinct representations of Muller objectives~\cite{Cha07}.  We also consider problems related to checking whether an automaton is in the class MR and SR. We summarise the results of the decision procedures discussed in~\cref{table:TomwantsTables}.

\iffalse

\begin{table*}[t]
\centering
\begin{tabular}{| l || c | c | c | c | c | c | }
\hline
 & Safety & Reachability/Weak & Buchi & coB\"uchi & Parity\\
\hline\hline
Checking MA (\cref{theorem:manpcomplete}) & $\P$ & $\P$   & $\NP$ & $\NP$  & $\NP$-complete \\
\hline
Checking HD & $\P$~\cite{KS15,BL23quantitative} & $\P$~\cite{KS15,BL23quantitative}    & $\P$~\cite{KS15} & $\P$~\cite{BK18}  & \parbox{2.5cm}{$\PSPACE$~\cite{LP25}, $\NP$-hard~\cite{Pra24a}} \\
\hline
Checking SR/ MR (\cref{theorem:complexity}) & $\P$ & $\PSPACE$-comp  & Open & Open & Open\\ 			\hline
Resolver checking for SR/MR (\cref{theorem:complexity}) & $\P$ & $\PSPACE$-comp  & undec. & undec. & undec.\\
  \hline 
\end{tabular}
\caption{Complexity of checking membership of an automaton in a class.}
\label{table:TomwantsTables}
\end{table*}

\fi 

\begin{table*}[t]
\centering
\begin{tabular}{| l || c | c | c | c | c | c | }
\hline
 & Safety & Reachability/Weak & Buchi & coB\"uchi & Parity\\
\hline\hline
\parbox{3cm}{Checking MA (\cref{theorem:manpcomplete})} & $\P$ & $\P$   & $\NP$ & $\NP$  & $\NP$-complete \\
\hline
Checking HD & $\P$~\cite{KS15,BL23quantitative} & $\P$~\cite{KS15,BL23quantitative}    & $\P$~\cite{KS15} & $\P$~\cite{BK18}  & \parbox{2.5cm}{$\PSPACE$~\cite{LP25}, $\NP$-hard~\cite{Pra24a}} \\
\hline
\parbox{3cm}{Checking SR/ MR (\cref{theorem:complexity})} & $\P$ & $\PSPACE$-comp  & Open & Open & Open\\ 			\hline
\parbox{3cm}{Resolver checking for SR/MR (\cref{theorem:complexity})} & $\P$ & $\PSPACE$-comp  & undec. & undec. & undec.\\
  \hline 
\end{tabular}
\caption{Complexity of checking membership of an automaton in a class.}
\label{table:TomwantsTables}
\end{table*}

\section{Automata and resolution of nondeterminism}
\subsection{Automata}
We use $\mathbb{N}$ to denote the set of natural numbers $\{0,1,2,\dots\}$. For two natural numbers $i,j$ with $i<j$, we shall use $[i,j]$ to denote the set $\{i,i+1,\dots,j\}$, consisting of natural numbers that are at least $i$ and at most $j$. For a natural number $i$,  we use $[i]$ to denote the set $[0,i]$. 
We use $\unitinterval$ to denote the unit interval, i.e., $\unitinterval = \{x \mid 0 \leq x \leq 1\}.$
For a finite set $X$, a probability distribution on $X$ is a function $f:X\xrightarrow{}\unitinterval$ that maps each element in $X$ to a number in $\unitinterval$, such that $\sum_{a\in X} f(X) = 1$. We denote the set of probability distributions on $X$ by $\Distribution(X)$. 

\paragraph*{Parity automata}
An $[i,j]$-\emph{nondeterministic parity automaton} $\Ac = (Q,\Sigma,\Delta,q_0)$, $[i,j]$-parity automaton for short, consists of a finite directed graph with edges labelled by letters in $\Sigma$ and \emph{priorities} in $[i,j]$ for some $i,j \in \mathbb{N}$ with $i\leq j$. 
The set of \emph{states} $Q$ constitutes the vertices of this graph, and the set of \emph{transitions} $\Delta \subseteq Q \times \Sigma \times [i,j] \times Q$ represents the labelled edges of the graph. 
%These edges are called \emph{transitions}, which are represented by elements of the set $\Delta \subseteq Q \times \Sigma \times [i,j] \times Q$, and the vertices of this graph are called \emph{states},  which are elements of the set $Q$. 
Each automaton has a designated \emph{initial state} $q_0 \in Q$. 
For states $p,q \in Q$ and a letter $a \in \Sigma$, we use $p\xrightarrow{a:c}q$ to denote a transition from $p$ to $q$ on the letter $a$ that has the priority $c$. We assume our automata to be \emph{complete}, i.e., for each state and letter, there is a transition from that state on that letter.

A \emph{run} on an infinite word $w$ in $\Sigma^{\omega}$ is an infinite path in the automaton, starting at the initial state and following transitions that correspond to the letters of $w$ in sequence.  We say that such a run is \emph{accepting} if the highest priority occurring infinitely often amongst the transitions of that run is even, and a word $w$ in $\Sigma^{\omega}$ is accepting if the automaton has an accepting run on $w$. The \emph{language} of an automaton $\Ac$, denoted by $\Lc(\Ac)$, is the set of words that it accepts. We say that the automaton $\Ac$ \emph{recognises} a language $L$ if $\Lc(\Ac)=L$. A parity automaton $\Ac$ is said to be \emph{deterministic} if for any given state in $\Ac$ and any given letter in $\Sigma$, there is exactly one transition from the given state on the given letter.

We will say that $[i,j]$ with $i=0$ or $1$ is the parity index of $\Ac$. A \emph{B\"uchi} (resp. co-B\"uchi) automaton is a $[1,2]$ (resp. $[0,1]$) parity automaton. A \emph{safety automaton} is a B\"uchi automaton where all  transitions with parity $1$ occur as self-loops on a sink state. Dually, a \emph{reachability automaton} is a B\"uchi automaton $\Ac$ such that all transitions with parity $2$ in $\Ac$ occur as self-loops on a sink state. A \emph{weak automaton} is a B\"uchi automaton, in which there is no cycle that contains both an accepting and a rejecting transition.

We write $(\Ac,q)$ to denote the automaton $\Ac$ with $q$ instead as its initial state, and $\Lc(\Ac,q)$ to denote the language it recognises. We say two states $p$ and $q$ in $\Ac$ are language-equivalent if $\Lc(\Ac,p) = \Lc(\Ac,q)$.

\paragraph*{Probabilistic automata}
A probabilistic parity automaton $\Pc = (Q,\Sigma,\Delta,\rho,q_0)$---a natural extension of probabilistic B\"uchi automata defined by Baier, Gr\"o\ss er, and Bertrand~\cite{BGB12}---has the semantics of a parity automaton. Additionally, we assign a probability to each transition in $\Delta$ using the function $\rho: \Delta \xrightarrow{} \unitinterval$, such that for each state and each letter, the sum of $\rho(\delta)$s for outgoing transitions $\delta$ from that state on that letter add up to 1. We write $\Delta_{q,a}$ to denote the set of outgoing transitions from  the state $q$ on the letter~$a$.

Given a probabilistic automaton $\Pc$ as above, the behaviour of $\Pc$ on an input word $w$ is as follows: each transition $\delta$ from a given state on a given letter can be taken with probability $\mu(\delta)$. We formalise this by an infinite Markov chain that captures all the possible runs of $\Pc$ on $w$. For the word $w=a_0 a_1 a_2 \dots$, consider the Markov chain $M_w$ defined over the vertices $Q \times \mathbb{N}$. After `processing' the finite word $a_0 a_1 \dots a_{i-1}$, the Markov chain will be at some state $(q,i)$, where $q$ is a state that can be reached from $q_0$ on the word $a_0 a_1 \dots a_{i-1}$, and a run of the Markov chain moves from $(q,i)$ to the state $(p,i+1)$ with probability $\rho(\delta)$, where $\delta=q\xrightarrow{a_i:c_i}p$ is a transition in $\Ac$. The initial state of $M_w$ is $(q_0,a_0)$, and  we say that a run in $M_w$ is accepting if the corresponding run for $\rho$ in $\Ac$ is accepting.

For an input word $w$ and a probabilistic automaton $\Pc$, we define the probability $\prob_\Pc(w)$ to be the probability measure of accepting runs in $M_w$.  We mostly deal with almost-sure semantics of probabilistic automata, and therefore, we refer to language of a probabilistic automaton $\Pc$ as $\Lc(\Pc) = \{w\in \Sigma^\omega\mid \prob_\Pc(w) = 1\}$.

If the priorities occurring in a probabilistic parity automaton are from the set $[1,2]$ (resp.\ $[0,1]$), we call it a probabilistic B\"uchi (resp. coB\"uchi) automaton. 

%A probabilistic finite automaton $\Pc= (Q,\Sigma,q_0,\Delta,\rho, F)$, introduced by Rabin~\cite{Rab63} on the other hand contains a finite set of states, and a finite alphabet, a start state $q_0$, however, its transition relation $\Delta\subseteq Q\times \Sigma\times Q$. Further, the probability function $\rho$ is also defined similarly a  function from $\Delta$ to real values in $[0,1]$ such that for each state $q$ and letter $a$, $\sum_{(q,a,q')\in \Delta}\rho(q,a,q') = 1$. The set $F$ represent the set of final states.

%A probabilistic finite automaton, unlike a probabilistic parity automaton, deals with finite words, i.e., words in $\Sigma^*$. For a word $w\in \Sigma^*$, a run of the probabilistic automaton $\Pc$ is accepting if the last state of the run ends at a state in $F$. 
%We also define for a given word $w$, we define $\prob_\Pc(w)$ to be the probability that a run of $\Pc$ on $w$ is accepting.
% \apbutlong{To appendix.}

\subsection{Resolution of nondeterminism}
We will deal with nondeterministic automata where the nondeterminism can be resolved using a combination of memory and randomness. Let us start by recalling the classical definition of history-determinism, which is characterised by the following history-determinism game.

\begin{definition}[History-determinism game]\label{defn:hd-game}
Given a nondeterministic parity automaton $\Ac = (Q,\Sigma, \Delta, q_0)$, the \emph{history-determinism (HD) game} on $\Ac$ is a two-player game between Eve and Adam that starts with Eve's token at $q_0$ and proceeds for infinitely many rounds. 
For each $i \in \mathbb{N}$, round $i$ starts with Eve's token at a state $q_i$ in $Q$, and proceeds as follows.
\begin{enumerate}
    \item Adam selects a letter $a_i \in \Sigma$;
    \item Eve selects a transition $q_i \xrightarrow{a_i:c_i} q_{i+1} \in \Delta$ along which she moves her token. Eve's token then is at $q_{i+1}$ from where the round $(i+1)$ is played. 
\end{enumerate} 
Thus, in the limit of a play of the HD game, Adam constructs a word letter-by-letter, and Eve constructs a run on her token transition-by-transition on that word. Eve wins such a play if the following condition holds: if Adam's word  is in $\Lc(\Ac)$, then the run on Eve's token is accepting.
\end{definition}
We say that an automaton is history deterministic (HD) if Eve has a winning strategy in the HD game on $\Ac$. History-determinism games are finite-memory determined~\cite[Theorem 3.12]{Pra25} and therefore, if an automaton $\Ac$ is HD then Eve has a finite memory winning strategy, that is, a mapping from the set of finite plays of the HD game to the next transition that Eve chooses, where the mapping is only based on the current position and a finite memory structure. We will formalise this concept by \emph{pure resolvers}. We first define the concept of resolvers where we allow for randomness as well.     

\paragraph*{Resolver}
For a nondeterministic parity automaton $\Ac=(Q,\Sigma,\Delta,q_0)$, a \emph{stochastic resolver}, or just \emph{resolver}, for $\Ac$ is given by $\Mc=(M,m_0,\mu,\nextmove)$, where $M$ is a finite set of \emph{memory states}, $m_0$ is the initial memory state. The function $\nextmove$ assigns to every three-tuple of memory state $m$, state $q$ of $\Ac$, and letter $a$ in $\Sigma$, a probability distribution $\nextmove(m,q,a)$ in $\Distribution(\Delta_{q,a})$, where $\Delta_{q,a}$ is the set of outgoing transitions from state $q$ on letter $a$. The function $\mu$ is the transition function and is given by $\mu:M\times \Delta \to M$.

Eve using a resolver $\Mc$, plays in the HD game on $\Ac$ as follows. At state $q$, when the memory state is $m$, suppose Adam chooses the letter $a$. Eve then selects an outgoing transition $\delta$ from $q$ on $a$ with the probability $(\nextmove(m,q,a)) \circ (\delta)$, and updates  to be $m'=\mu(m,\delta)$. 

We say that a resolver $\Mc$ for Eve constitutes a winning strategy in the HD game on $\Ac$ if she wins almost-surely when she plays using $\Mc$ as described above. We say that a resolver $\Mc$ is \emph{pure}, if for each memory state $m$, each state $q$, and letter $a$, the probability distribution $\nextmove(m,q,a)$ assigns probability $1$ to some outgoing transition in $\Delta_{q,a}$ and probability 0 to every other transition in $\Delta_{q,a}$. 

In $\omega$-regular games, which HD games are a subclass of, Eve has a winning strategy if and only if she has a pure winning strategy. Thus, if Eve has a stochastic resolver that is a winning strategy in the HD game on $\Ac$, then she also has a pure resolver. 
We say that a resolver is \emph{memoryless} if that resolver has one state. 
%\begin{mydef*}{Memoryless adversarially resolvable automata}
%    An automaton $\Ac$ is \emph{memoryless-adversarial resolvable} (MA)  if there is a memoryless resolver for $\Ac$ using which Eve wins the HD game on~$\Ac$.
%\end{mydef*}

\begin{tcolorbox}
    An automaton $\Ac$ is \emph{memoryless-adversarial resolvable} (MA)  if there is a memoryless resolver for $\Ac$ using which Eve wins the HD game on~$\Ac$ almost-surely, i.e., with probability $1$.
\end{tcolorbox}

If there is a memoryless resolver $\Mc$ for Eve using which she wins the HD game with probability $1$, then any memoryless resolver $\Mc'$ that assigns nonzero probabilities to the same transitions as $\Mc$ can also be used by Eve to win the HD game on~$\Ac$ (\cref{lemma:indifferent2probabilities} in the appendix).

\paragraph*{Nonadversarial resolvability}
So far, we have discussed resolution of nondeterminism in automata where the letters are chosen adversarially. We next introduce the notions of nondeterminism where this is relaxed. For a parity automaton $\Ac$ and a resolver $\Mc$ for $\Ac$, we say that $\Mc$ is a \emph{almost-sure resolver} 
for $\Ac$ if for each word $w$, the run constructed on $w$ using $\Mc$ in the HD game on $\Ac$ is almost-surely accepting. 

More concretely, consider the probabilistic automaton $\Pc=\Mc\circ\Ac$ that is obtained by \emph{composing} the resolver $\Mc$ with~$\Ac$. That is, the states of $\Pc$ are $Q \times M$, the initial state of $\Pc$ is~$p_0=(q_0,m_0)$. The automaton $\Pc$ has the transition $(q,m)\xrightarrow{a:c}(q',m')$ of probability $p$ if $\delta=q\xrightarrow{a:c}q'$ is a transition in $\Ac$, $(\nextmove(m,q,a))\circ \delta = p$ and $m'=\mu(m,\delta)$. Then, we say that $\Mc$ is an \emph{almost-sure resolver} for $\Ac$  if $\Lc(\Ac)=\Lc(\Pc)$. For $\Mc$, $\Ac$, and $\Pc$ as above, we call $\Pc$ the \emph{resolver-product} of $\Mc$ and $\Ac$.   

\begin{tcolorbox}
   An automaton $\Ac$ is \emph{stochastically resolvable} (SR) if there is an almost-sure resolver for~$\Ac$.
\end{tcolorbox}
%\paragraph*{Stochastically resolvable automata} We say that $\Ac$ is stochastically resolvable, SR for short, if there is an almost-sure resolver for~$\Ac$.
\begin{tcolorbox}
   An automaton $\Ac$ is \emph{memoryless stochastically resolvable} (MR) if there is an almost-sure resolver for~$\Ac$ that is memoryless.  
\end{tcolorbox}

%\paragraph*{Memoryless stochastically resolvable automata} We say that 

\begin{example}\label{example:reachSRbutnotHD}
Consider the example of the reachability automaton $\Ac$ in \cref{fig:ReachSRbutnotHD}, where the accepting transitions are double-arrowed. We will show that $\Ac$ is MR but not HD. This automaton accepts all infinite words over the alphabet $\{a,b\}$. The only nondeterminism is at the initial state $q_0$.  At the state $q_0$ on reading $a$ or $b$,  if the resolver correctly guesses the next letter, then it will reach the accepting state. Consider a resolver that chooses the transitions to $q_a$ and $q_b$ with $\frac{1}{2}$ probability. Then for any fixed infinite word, this resolver produces an accepting run that will eventually, almost surely,  correctly guess the letter one step ahead.

However, in the HD game on $\Ac$, Adam can always adversarially choose the letter based on the resolution of the run. That is, whenever Eve in the HD game is at $q_b$, Adam chooses the letter $a$ and chooses the letter $b$ whenever Eve is at $q_a$. This adversarial choice of letters ensures that Eve's run is rejecting, and thus Adam wins the HD game on $\Ac$.
\end{example}
We give an equivalent definition for SR and MR automata based on the \emph{stochastic-resolvability} game, or the SR game for brevity. 
\paragraph*{Stochastic resolvability game} The SR game on an automaton $\Ac$ proceeds similarly to the HD game. In each round, Adam selects a letter and then Eve responds with a transition of $\Ac$ on that letter; thus, in the limit, Adam constructs an infinite word and Eve constructs a run on that word. 
Eve wins if her transitions form an accepting run whenever Adam's word is in the language. However, unlike the HD game, Adam does not observe Eve's run, and therefore his strategy must not depend on the position of Eve's token in the automaton. We define this game more formally and prove the following related result in the appendix.

\begin{restatable}{lemma}{randomispure}\label{lemma:random-is-pure}
For every parity automaton $\Ac$, a resolver $\Mc$ is an almost-sure resolver for $\Ac$ if and only if $\Mc$ is a finite-memory strategy that is almost-surely winning for Eve against all strategies of Adam in the SR game on $\Ac$.
\end{restatable}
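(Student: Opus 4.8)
The plan is to prove the equivalence by reducing both directions to a per-word statement, exploiting the fact that in the SR game Adam is \emph{blind}: his strategy may not depend on the position of Eve's token, and since the only other information in a play is the sequence of letters he himself has chosen, a deterministic strategy of Adam is nothing but a fixed infinite word $w \in \Sigma^\omega$, while a general randomised strategy of Adam induces a probability measure $\nu$ over $\Sigma^\omega$. Note first that the qualifier ``finite-memory'' is automatic: every resolver $\Mc = (M, m_0, \mu, \nextmove)$ has a finite memory set $M$ and hence, read as a strategy for Eve, is finite-memory. So the real content of the lemma is the equivalence between ``$\Mc$ is an almost-sure resolver'' and ``$\Mc$ is almost-surely winning in the SR game''.

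For the easy direction ($\Leftarrow$), suppose $\Mc$ is almost-surely winning against every Adam strategy. In particular it wins almost surely against each deterministic Adam strategy, that is, against each fixed word $w$. For a word $w \in \Lc(\Ac)$, Eve's winning condition forces the run that $\Mc$ builds on $w$ to be accepting with probability $1$; by definition of the resolver-product $\Pc = \Mc \circ \Ac$ this is exactly $\prob_{\Pc}(w) = 1$, i.e. $w \in \Lc(\Pc)$. Hence $\Lc(\Ac) \subseteq \Lc(\Pc)$, and since the reverse inclusion $\Lc(\Pc) \subseteq \Lc(\Ac)$ always holds (an accepting run of $\Pc$ projects to an accepting run of $\Ac$ on the same word), we get $\Lc(\Ac) = \Lc(\Pc)$, so $\Mc$ is an almost-sure resolver.

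For the forward direction ($\Rightarrow$), suppose $\Mc$ is an almost-sure resolver, so $\prob_{\Pc}(w) = 1$ for every $w \in \Lc(\Ac)$. Fix an arbitrary Adam strategy and let $\nu$ be the measure it induces on $\Sigma^\omega$. Because Adam is blind, his randomisation and Eve's coin flips are independent, so the joint distribution of a play factorises: first draw $w \sim \nu$, then draw Eve's run from the Markov chain $M_w$ associated with $\Pc$. The event ``Eve loses'' is $\{w \in \Lc(\Ac)\} \cap \{\text{the run built on } w \text{ is rejecting}\}$, and conditioning on $w$ gives $\Pr[\text{Eve loses}] = \int_{w \in \Lc(\Ac)} \bigl(1 - \prob_{\Pc}(w)\bigr)\, d\nu(w)$. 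The integrand vanishes identically by assumption, so $\Pr[\text{Eve loses}] = 0$ and $\Mc$ wins almost surely against this arbitrary Adam strategy.

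The one place that needs care --- and which I expect to be the main technical obstacle --- is making the last paragraph rigorous: one has to set up the joint probability space of the SR game formally, verify that the set of words on which $\Mc$ produces a rejecting run is measurable and that ``Eve loses'' is a measurable event in the product space, and justify both the factorisation of $\nu$ against Eve's randomisation and the conditioning (Fubini) step. Conceptually this is just the statement that Adam gains nothing by randomising over words, since Eve's guarantee already holds word-by-word; but the measure-theoretic bookkeeping, together with the formal definition of Adam's blind strategies and the measure $\nu$ they induce, is where the actual work lies. Everything else is routine once the SR game has been defined precisely.
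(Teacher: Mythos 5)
Your proposal is correct, and the easy direction coincides with the paper's; but your forward direction takes a genuinely different route. The paper fixes Eve's strategy $\Mc$ and views what remains as a one-player partial-observation stochastic game for Adam: since Adam's pure strategies in the SR game are exactly fixed words, the almost-sure-resolver hypothesis says every pure strategy of Adam loses, and the paper then concludes by citing the result of Chatterjee, Doyen, Gimbert, and Henzinger~\cite[Theorem 7]{CDGH15} that randomised strategies are no more powerful than pure ones in one-player partial-observation stochastic games with $\omega$-regular objectives. You instead prove this instance of that theorem from scratch, via the disintegration $\Pr[\text{Eve loses}] = \int_{w \in \Lc(\Ac)} \bigl(1 - \prob_{\Pc}(w)\bigr)\, d\nu(w)$. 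This works, and the bookkeeping you flag as ``the actual work'' is lighter than you fear: in the paper's formalisation of the SR game Adam's observation partition is trivial, so a randomised strategy of Adam depends only on the round number, $\nu$ is a product measure, and independence from Eve's coin flips is built in; moreover $\Lc(\Ac)$ and the parity-acceptance event are Borel, and $w \mapsto \prob_{\Pc}(w)$ is Borel measurable as the mass of a Borel set under a measurable kernel, so Fubini applies directly. What each approach buys: yours is self-contained and makes transparent that Adam's blindness is precisely what drives the equivalence (the factorisation would fail if Adam observed Eve's token), while the paper's is shorter and delegates exactly the measure-theoretic content to a published theorem, making it robust to details of how blind strategies are formalised.
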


\label{sec:prelims}
\section{Comparisons between the different notions}\label{sec:succandcomp}
In this section, we compare the notions of stochastic resolvability, memoryless adversarial resolvability, memoryless stochastic resolvability with each other and the existing notions of nondeterminism in the literature.

We start by making a connection to the notion of semantic determinism. These were introduced by Kuperberg and Skrzypczak as residual automata~\cite{KS15}, but we follow the more recent works of Abu Radi, Kupferman, and Leshkowitz in calling them semantically deterministic (SD) automata instead~\cite{AKL21,AK23}.
%\ap{it might make sense to move SD to prelims. we might also want to talk about DBP in \cref{theorem:comp}.}
We call a transition $\delta$ from $p$ to $q$ on a letter $a$ in a parity automaton $\Ac$ as \emph{language-preserving} if $\Lc(\Ac,q) = a^{-1} \Lc(\Ac,p)$. We say that a parity automaton is \emph{semantically deterministic}, SD for short, if all transitions in that automaton are language-preserving. The following observation concerning semantically deterministic automata can be shown by a simple inductive argument on the length of words.

\begin{observation}\label{lemma:SDautomata}
For every semantically deterministic automaton $\Ac$, all states in $\Ac$ that can be reached from a state $q$ upon reading a finite word $u$ recognise the language $u^{-1}\Lc(\Ac,q)$.%\ap{language of state}
\end{observation}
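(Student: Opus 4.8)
The plan is to proceed by induction on the length of the finite word $u$, using the language-preserving property of transitions as the engine of the inductive step. For the base case, take $u = \varepsilon$: the only state reachable from $q$ on reading the empty word is $q$ itself, and since $\varepsilon^{-1}\Lc(\Ac,q) = \Lc(\Ac,q)$, the claim holds immediately.

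For the inductive step, I would assume the statement holds for every word of length $n$ and consider a word $u = va$ of length $n+1$, where $|v| = n$ and $a \in \Sigma$. Any state $p'$ reachable from $q$ on reading $u$ must be reached via some state $p$ reachable from $q$ on reading $v$, followed by a single transition $\delta = p \xrightarrow{a:c} p'$ on the last letter $a$ (this factorisation is immediate from the definition of a run). By the induction hypothesis, $\Lc(\Ac,p) = v^{-1}\Lc(\Ac,q)$. Since $\Ac$ is semantically deterministic, $\delta$ is language-preserving, so $\Lc(\Ac,p') = a^{-1}\Lc(\Ac,p)$, and combining these gives $\Lc(\Ac,p') = a^{-1}\bigl(v^{-1}\Lc(\Ac,q)\bigr)$.

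The remaining ingredient is the elementary composition identity for residual languages, $a^{-1}(v^{-1}L) = (va)^{-1}L$ for any language $L$, which follows by unfolding definitions: $w \in a^{-1}(v^{-1}L)$ iff $aw \in v^{-1}L$ iff $vaw \in L$ iff $w \in (va)^{-1}L$. Applying this with $L = \Lc(\Ac,q)$ yields $\Lc(\Ac,p') = (va)^{-1}\Lc(\Ac,q) = u^{-1}\Lc(\Ac,q)$, closing the induction. I expect no serious obstacle here; the two points needing care are merely verifying that every state reachable on $u = va$ genuinely factors through a state reachable on $v$ plus a single final transition, and the bookkeeping of the residual-language composition identity, both of which are routine.
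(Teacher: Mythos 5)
Your proof is correct and is exactly the ``simple inductive argument on the length of words'' that the paper invokes without spelling out: induct on $|u|$, factor a state reached on $u=va$ through a state reached on $v$ plus one final transition, apply the language-preserving property of that transition, and compose residuals via $a^{-1}(v^{-1}L)=(va)^{-1}L$. Nothing is missing, and no further comparison is needed since the paper offers no more detailed proof than this.
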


We observe that all SR automata are SD, up to removal of some transitions.
\begin{lemma}\label{lemma:SR-implies-SD}
Every stochastically resolvable parity automaton $\Ac$ has a language-equivalent subautomaton $\Bc$ that is semantically deterministic.
\end{lemma}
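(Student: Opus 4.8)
The plan is to take an arbitrary almost-sure resolver $\Mc$ for $\Ac$ and use it to identify a subset of transitions of $\Ac$ that is sufficient to preserve the language, while simultaneously guaranteeing that every retained transition is language-preserving. Concretely, let $\Pc = \Mc \circ \Ac$ be the resolver-product, and call a transition $\delta$ of $\Ac$ \emph{useful} if it is assigned nonzero probability by $\nextmove(m,q,a)$ for some memory state $m$ reachable in $\Pc$ (i.e., some $(q,m)$ reachable from the initial state $(q_0,m_0)$ in $\Pc$). Let $\Bc$ be the subautomaton of $\Ac$ obtained by keeping only the useful transitions (and the states they touch). The first thing I would check is that $\Bc$ is still \emph{complete}: since $\Mc$ is a resolver, at every reachable pair $(q,m)$ and every letter $a$ the distribution $\nextmove(m,q,a)$ is supported on $\Delta_{q,a}$, so at least one transition on $a$ out of $q$ survives.

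Next I would argue $\Lc(\Bc) = \Lc(\Ac)$. The inclusion $\Lc(\Bc) \subseteq \Lc(\Ac)$ is immediate because $\Bc$ is a subautomaton. For the reverse inclusion, take $w \in \Lc(\Ac)$. Since $\Mc$ is an almost-sure resolver, $\prob_\Pc(w) = 1$, so in the Markov chain $M_w$ the set of accepting runs has measure $1$; in particular there \emph{exists} an accepting run of $\Pc$ on $w$. Projecting this run onto $\Ac$ (forgetting the memory component) gives an accepting run of $\Ac$ on $w$, and by construction every transition it uses is useful, hence present in $\Bc$. Therefore $w \in \Lc(\Bc)$, giving $\Lc(\Ac) \subseteq \Lc(\Bc)$, and consequently $\Lc(\Bc,q) \subseteq \Lc(\Ac,q)$ at each retained state, with the reverse inclusion following from $\Bc \subseteq \Ac$; so in fact $\Lc(\Bc,q)=\Lc(\Ac,q)$ for every state $q$ of $\Bc$.

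The heart of the argument, and what I expect to be the main obstacle, is showing that every transition of $\Bc$ is language-preserving, i.e., for a useful transition $\delta = p \xrightarrow{a:c} q$ we need $\Lc(\Bc,q) = a^{-1}\Lc(\Bc,p)$. One inclusion, $\Lc(\Bc,q) \subseteq a^{-1}\Lc(\Bc,p)$, is easy: prepending $\delta$ to an accepting run from $q$ gives an accepting run from $p$ on $aw$ (using that $\Bc$ and $\Ac$ agree on languages and that appending a finite prefix does not change the infinitary parity condition). The delicate direction is $a^{-1}\Lc(\Bc,p) \subseteq \Lc(\Bc,q)$, i.e., that $\delta$ does not "lose" words: whenever $aw \in \Lc(\Bc,p)$ we must have $w \in \Lc(\Bc,q)$. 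Here I would exploit that $\delta$ being useful means there is a reachable memory state $m$ at $p$ with $\nextmove(m,p,a)$ giving $\delta$ positive probability; combining the almost-sure guarantee of $\Mc$ with the fact that, from the memory state $\mu(m,\delta)$ reached after taking $\delta$, the resolver continues to produce almost-surely accepting runs on every word in $w^{-1}$-suffixes, I can conclude that $\Lc(\Ac,q) \supseteq a^{-1}\Lc(\Ac,p)$. Since $\Ac$ is complete and every transition is language-\emph{preserving} in the $\subseteq$ direction trivially (states reachable on $a$ from $p$ recognise subsets of $a^{-1}\Lc(\Ac,p)$), the almost-sure success from $(q,\mu(m,\delta))$ on all of $a^{-1}\Lc(\Ac,p)$ forces equality. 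Transferring this equality from $\Ac$ to $\Bc$ is then immediate from the language-equivalence established in the previous step, completing the proof.

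One subtlety I would be careful about is that language-preservation must hold for \emph{all} useful transitions simultaneously and is a statement about languages of $\Bc$, not just $\Ac$; I would therefore first prove the language equalities $\Lc(\Bc,q)=\Lc(\Ac,q)$ globally (as above) and only then verify the residual-language identity, so that the $\Bc$-level and $\Ac$-level residuals coincide and \cref{lemma:SDautomata} can be invoked if a cleaner inductive phrasing is desired.
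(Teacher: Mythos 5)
Your proposal follows essentially the same route as the paper's proof: prune $\Ac$ to the transitions taken with nonzero probability from pairs $(q,m)$ reachable in $\Pc=\Mc\circ\Ac$ (the paper calls these \emph{feasible}), obtain $\Lc(\Bc)=\Lc(\Ac)$ because the resolver's runs live entirely in $\Bc$, and establish semantic determinism by exactly the conditioning argument you sketch: for a retained $\delta=p\xrightarrow{a:c}q$ witnessed at a reachable $(p,m)$ via a word $u$, any $aw\in\Lc(\Bc,p)$ yields $uaw\in\Lc(\Ac)$ by prefix-independence of the parity condition, and almost-sure acceptance of $uaw$, conditioned on the positive-probability event of sitting at $(q,\mu(m,\delta))$ after reading $ua$, forces $w$ to be accepted almost surely from there. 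That is the heart of the paper's proof, and you have it.

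One intermediate step is mis-justified as written, though it does not sink the argument. You assert $\Lc(\Bc,q)=\Lc(\Ac,q)$ for every retained state $q$, with ``the reverse inclusion following from $\Bc\subseteq\Ac$'': that is backwards. Subautomaton containment gives $\Lc(\Bc,q)\subseteq\Lc(\Ac,q)$, and neither it nor the global equality $\Lc(\Bc)=\Lc(\Ac)$ yields $\Lc(\Ac,q)\subseteq\Lc(\Bc,q)$ --- a priori, pruning could shrink residual languages at individual states even while the language from the initial state is preserved. These state-level equalities are in fact true, but proving them requires the same conditioning argument (for $v\in\Lc(\Ac,q)$ and $u$ reaching $(q,m)$, note $uv\in\Lc(\Ac)$ and condition on being at $(q,m)$ after $u$). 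Cleaner still --- and this is what the paper does --- you can skip the $\Ac$-to-$\Bc$ transfer entirely: the almost-surely accepting runs of $\Pc$ from the reachable state $(q,\mu(m,\delta))$ on $w$ use only feasible transitions, so the conditioning step lands directly in $\Lc(\Bc,q)$; together with the trivial inclusion $\Lc(\Bc,q)\subseteq a^{-1}\Lc(\Bc,p)$ (prepend $\delta$, which is itself a transition of $\Bc$), this gives $\Lc(\Bc,q)=a^{-1}\Lc(\Bc,p)$ with no detour through the residual languages of $\Ac$.
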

\begin{proof}
Let $\Mc$ be an almost-sure resolver for $\Ac$. Consider the set of transitions of $\Ac$ that are reachable from the initial state and taken in $\Mc \circ \Ac$ with nonzero probability: call these transitions \emph{feasible}. Let $\Bc$ be the subautomaton of $\Ac$ consisting of only feasible transitions. Since $\Mc\circ \Ac$ accepts the same language as $\Ac$, so does $ \Mc\circ \Bc$, and therefore, $\Lc(\Bc) = \Lc(\Ac)$. We will argue that $\Bc$ is semantically deterministic. Indeed, let $\delta=q\xrightarrow{a:c}q'$ be any transition in $\Bc$, and let $m,m'$ be states such that in $\Tc$, we have the transition $(q,m) \xrightarrow{a:c} (q',m')$ in $\Mc \circ \Ac$ that has non-zero probability. Let $u$ be a word such that $\Mc \circ \Ac$ has a run from from $(q_0,m_0)$ to $(q,m)$. For every word $aw \in \Lc(\Bc,q)$, we must have that the word $w$ is accepting in $(\Bc,q')$, since otherwise we have that the word $uaw$ is rejected with positive probability in $\Mc\circ \Ac$. It follows that $\Bc$ is semantically deterministic, as desired. 
\end{proof}

A \emph{pre-semantically deterministic} (pre-SD) automaton is an automaton that has a language-equivalent SD subautomaton. The following result gives a comprehensive comparison between the notions of nondeterminism we have discussed so far. The results of \cref{theorem:comp} are summarised by the Venn diagram in \cref{fig:venndiagrammmmmm} in \cref{sec:intro}.  

\begin{theorem}\label{theorem:comp}
\begin{enumerate}
    \item For safety automata, the notions of pre-semantic determinism, stochastic resolvability, memoryless-adversarial resolvability, memoryless-stochastic resolvability, and history-determinism are equivalent.
    \item For reachability and weak automata, the following statements hold.
    \begin{enumerate}
        \item Pre-semantic-determinism, stochastic resolvability, and memoryless stochastic resolvability are equivalent and are strictly larger classes than history-deterministic automata.
        \item History-determinism and memoryless adversarial resolvability are equivalent notions.
    \end{enumerate}
    % \item For reachability and weak automata, the notions of semantic-determinism, stochastic resolvability, and memoryless stochastic resolvability are equivalent and encompass memoryless adversarial resolvability, with memoryless resolvability and history determinism being equivalent notions.\theju{eehhh??? I am also happy with your version because it distils the components better} 
    \item For B\"uchi, coB\"uchi, and parity automata, the following statements hold.
    \begin{enumerate}
        \item Pre-semantically deterministic automata are a strictly larger class than stochastically resolvable automata.
        \item Stochastically resolvable automata are a strictly larger class than history-deterministic automata.
        \item Stochastically resolvable automata are a strictly larger class than memoryless stochastically resolvable automata.
        \item There are history-deterministic automata that are not memoryless stochastically resolvable, and there are memoryless stochastically resolvable automata that are not history-deterministic.
        \item Both history-deterministic and memoryless-stochastically resolvable automata are strictly larger classes than memoryless-adversarially resolvable automata.
    \end{enumerate}
\end{enumerate}
\end{theorem}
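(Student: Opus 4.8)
The backbone of the argument is a chain of inclusions that holds for \emph{all} parity automata, irrespective of the acceptance condition. A memoryless resolver is a special resolver, so MA $\subseteq$ HD and MR $\subseteq$ SR. Since Adam in the HD game may observe Eve's token whereas in the SR game he must commit to his word in advance, Adam is at least as powerful in the HD game; hence any resolver that is winning in the HD game is winning in the SR game, which by \cref{lemma:random-is-pure} means it is an almost-sure resolver. This gives HD $\subseteq$ SR and MA $\subseteq$ MR, using that a finite-memory HD-winning strategy exists for HD automata. Finally, SR $\subseteq$ pre-SD is exactly \cref{lemma:SR-implies-SD}. Thus the Hasse structure with MA at the bottom, HD and MR above it, SR above those, and pre-SD on top is fixed once and for all, and the theorem reduces to proving the upward \emph{collapses} at the low acceptance levels and exhibiting \emph{separating} automata at the higher ones.

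For the low acceptance levels I would prove the upward collapses directly. For \textbf{safety}, pre-SD $\subseteq$ MA: in a semantically deterministic safety automaton a transition is rejecting exactly when it enters the empty-language sink, and by \cref{lemma:SDautomata} every state reached on a prefix of a word $w\in\Lc(\Ac)$ carries the correct, hence nonempty, residual; so while the input stays in the language no positive-probability transition reaches the sink, and every full-support memoryless resolver wins even adversarially, collapsing the whole diagram. For \textbf{reachability and weak} automata I would (2b) obtain HD $\subseteq$ MA by positional determinacy: after pruning to a language-equivalent SD subautomaton (valid as HD $\subseteq$ SR $\subseteq$ pre-SD), the premise ``word $\in\Lc(\Ac)$'' becomes structural and the residual game carries a weak winning condition, which is positionally determined, so Eve's finite-memory strategy collapses to a memoryless one; and (2a) obtain pre-SD $\subseteq$ MR by analysing the uniform memoryless resolver, for which, on $w\in\Lc(\Ac)$, the residuals stabilise ($\Sigma^\omega$ once the accepting sink is reachable, an accepting SCC for weak), semantic determinism keeps the random run among correctly-residualled states, and a finite Markov-chain estimate shows the run almost surely reaches the sink or settles in the accepting part. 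Strictness HD $\subsetneq$ SR is then witnessed by \cref{fig:ReachSRbutnotHD} (\cref{example:reachSRbutnotHD}), which is MR but not HD.

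At the level of \textbf{B\"uchi, coB\"uchi and parity} automata every inclusion becomes strict, and each strictness must be certified by a separating automaton. The reachability automaton of \cref{fig:ReachSRbutnotHD}, being weak, already certifies MA $\subsetneq$ MR, HD $\subsetneq$ SR, and the MR-not-HD direction of incomparability (3d); I would provide B\"uchi/coB\"uchi/parity analogues for uniformity. For MA $\subsetneq$ HD (3e), the HD-not-MR direction of 3d, and SR $\subsetneq$ MR (3c), I would use Kuperberg--Skrzypczak-style automata (in the spirit of \cref{fig:coBuchiEx}) in which correctly resolving the nondeterminism provably requires memory, so that a memoryless resolver---adversarial or stochastic---rejects some committed word with positive probability while a finite-memory one succeeds. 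For pre-SD $\subsetneq$ SR (3a) I would construct a semantically deterministic automaton forcing correct guesses at unboundedly many positions whose right choice depends on the unbounded future, so that against every finite-memory randomised resolver the acceptance probability is a product of infinitely many factors bounded away from $1$, hence $0$.

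The main obstacle is the family of \emph{lower bounds}, i.e.\ proving that a given automaton is \emph{not} SR, not MR, or not MA. For these I would work through the SR-game characterisation of \cref{lemma:random-is-pure} and exhibit an explicit Adam strategy forcing positive rejection probability against every finite-memory (respectively memoryless) resolver, via a pumping/counting argument showing that bounded (respectively no) memory cannot track the information needed to guess almost surely. The other technically delicate point is the almost-sure reachability estimate underlying pre-SD $\subseteq$ MR, where one must bound away from zero, uniformly over positions of the input word, the per-phase probability of entering the accepting part.
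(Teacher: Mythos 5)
Your skeleton is the paper's: the unconditional inclusions (MA $\subseteq$ HD, MA $\subseteq$ MR, HD $\subseteq$ SR via finite-memory determinacy of HD games together with \cref{lemma:random-is-pure}, and SR $\subseteq$ pre-SD via \cref{lemma:SR-implies-SD}), the safety collapse through semantic determinism (your residual argument is essentially \cref{lemma:sd-safety-is-dbp}), the collapse pre-SD $\Rightarrow$ MR for weak automata via the uniform random resolver with a uniform per-phase probability bound (exactly the delicate claim inside \cref{lemma:sdweak-is-mr}), \cref{fig:ReachSRbutnotHD} as the universal MR-not-HD witness, and probabilistic lower bounds (Borel--Cantelli for pre-SD vs.\ SR, convergent infinite products for HD vs.\ MR) for part 3. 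Two points, however, do not go through as written.

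First, your proof of 2(b) (HD $=$ MA for reachability and weak automata) has a genuine gap. After pruning to an SD subautomaton, the premise ``Adam's word is in $\Lc(\Ac)$'' does \emph{not} become structural: even in an SD automaton, membership of the word depends on the unread future and not on Eve's current state (in \cref{fig:ReachSRbutnotHD} every residual is $\Sigma^\omega$, but in general one needs a deterministic monitor for $\Lc(\Ac)$ to turn the HD-game objective into a condition on arena states, and a positional strategy in that product is not a memoryless resolver for $\Ac$). Moreover, weak (Staiger--Wagner) conditions are not positionally determined in general, so ``the residual game carries a weak winning condition, hence positional'' is not a valid step. The paper avoids all of this by invoking the known theorem that history-deterministic weak automata are determinisable-by-pruning~\cite{BKS17}, from which MA is immediate (\cref{lemma:reachability-hd=ma}); you need that theorem or a correct substitute. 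Second, for the HD-not-MR separations (3d, and hence 3c and half of 3e) you propose ``Kuperberg--Skrzypczak-style automata in the spirit of \cref{fig:coBuchiEx}'' --- but that automaton is precisely one where the uniform memoryless random resolver wins even adversarially, i.e., it is MA, so this family cannot separate HD from MR. The paper's witnesses (\cref{fig:HDcoBuchinotMR} for coB\"uchi, \cref{fig:HDBuchinotMR} for B\"uchi) have a different shape: any memoryless resolver with loop-probability $p$ fails on the pre-committed word $x\,a\,x^2a\,x^3a\cdots$ with probability $\prod_{i\ge 1}(1-p^i)>0$, which is the mechanism you describe but not the family you name. More broadly, the separating automata themselves (including the SD-not-SR examples needed for 3a, cf.\ \cref{lemma:buchisdnotsr}) carry most of the mathematical content of part 3 and remain unconstructed in your plan.
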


Even though the five notions of nondeterminism discussed for coB\"uchi automata are all different, we show that every stochastically resolvable coB\"uchi automaton can be efficiently converted to a language-equivalent memoryless-adversarially resolvable coB\"uchi automaton.

\begin{restatable}{theorem}{theoremcobuchisrtoma}\label{theorem:coBuchiHDisSR}
    There is a polynomial-time algorithm that converts stochastically resolvable coB\"uchi automata with $n$ states into language-equivalent memoryless-adversarially resolvable coB\"uchi automata with at most $n$ states. 
\end{restatable}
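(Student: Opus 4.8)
The plan is to reduce to a semantically deterministic (SD) core, exploit the safe (priority-$0$) structure that is specific to coB\"uchi automata, and then build a memoryless randomized resolver that wins the HD game even adversarially. First I would apply \cref{lemma:SR-implies-SD} to replace $\Ac$ by a language-equivalent SD subautomaton $\Bc$ with at most $n$ states; the almost-sure resolver $\Mc$ for $\Ac$ restricts to an almost-sure resolver for $\Bc$, so $\Bc$ is again SR and SD. By \cref{lemma:SDautomata} every state reached from $q$ on a finite word $u$ recognizes $u^{-1}\Lc(\Bc,q)$, so the resolver can never change the residual language, and the only remaining freedom concerns the coB\"uchi condition, namely whether Eve can eventually avoid the rejecting transitions. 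I would then study the \emph{safe automaton} $\Bc_{\mathrm{safe}}$ obtained by deleting rejecting transitions, its decomposition into safe strongly connected components (safe SCCs), and for each state $q$ its \emph{safe language} $\mathrm{safe}(q)$ of words admitting an infinite safe run from $q$. Since $\Mc$ is almost-sure, on every $w\in\Lc(\Bc)$ the run of $\Mc\circ\Bc$ is eventually safe with probability $1$, hence almost surely trapped inside a single safe SCC; this yields the key covering property that every residual language is the union, over the reachable safe SCCs, of the safe languages they recognize.

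The automaton $\Bc$ itself need not be MA (in fact not even HD), so a restructuring is genuinely required. The idea is to keep, at each state--letter pair, only the transitions that move Eve into a safe SCC recognizing a \emph{maximal} safe language compatible with the current residual, pruning the transitions that commit to strictly smaller safe languages, and then to merge states carrying the same residual and the same maximal safe language. Since merging and pruning never increase the number of states, this produces a language-equivalent coB\"uchi automaton $\Bc'$ with at most $n$ states. On $\Bc'$ I would define the memoryless resolver $\sigma$ that, from each state on each letter, plays uniformly at random among the surviving safe-maximal transitions; the covering property guarantees that whenever a safe continuation exists, $\sigma$ assigns it positive probability.

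The heart of the argument is to show that $\sigma$ wins the HD game on $\Bc'$ almost surely, so that $\Bc'$ is MA. For a fixed Adam strategy, the objective (\emph{the word lies in $\Lc(\Bc')$ and the run is rejecting}) is $\omega$-regular, and against the fixed memoryless stochastic $\sigma$ this is an MDP for Adam, so I may assume Adam positional. Then $\sigma$ against Adam induces a finite Markov chain on the states of $\Bc'$, a play reaches a bottom SCC $B$ almost surely, and inside $B$ every transition is taken infinitely often, so the run is accepting exactly when $B$ contains no rejecting transition. It therefore suffices to prove that whenever $B$ contains a rejecting transition, the word $w$ generated along the play lies outside $\Lc(\Bc')$. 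This is where the coB\"uchi \emph{resetting} structure must be used: each rejecting transition restarts a fresh safe attempt, and because $\sigma$ always keeps positive probability on safe-maximal transitions while $\Bc'$ is SD, a trapped rejecting bottom SCC witnesses that $w$ admits no infinite safe run from any reachable state, whence $w\notin\Lc(\Bc')$ by the covering property. Extending from positional to arbitrary Adam strategies is then routine.

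Finally, the safe-SCC decomposition, the pruning to safe-maximal transitions, and the merging are all graph-theoretic operations computable in polynomial time; it is the \emph{correctness} of the construction, rather than the algorithm, that invokes the SR promise, and the output has at most $n$ states. The main obstacle I expect is the adversarial step: converting the \emph{non-adaptive} almost-sure guarantee supplied by stochastic resolvability into an \emph{adaptive} guarantee against an adversary who observes Eve's token. The reachability automaton of \cref{fig:ReachSRbutnotHD} shows that this conversion fails for general objectives, so the proof must genuinely rely on the coB\"uchi tail structure (eventual safety) together with the SD and safe-maximality properties of $\Bc'$. A secondary difficulty is to pin down the pruning and merging precisely enough that $\Bc'$ stays language-equivalent and the uniform resolver $\sigma$ indeed certifies MA-ness.
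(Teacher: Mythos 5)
Your proposal has a genuine structural gap: you only \emph{prune} transitions and \emph{merge} states, so your output $\Bc'$ is (up to quotienting) a subautomaton of the SD core, whereas the paper's construction must \emph{add} transitions that are not present in the input. The automaton of \cref{fig:HDcoBuchinotMR} is a concrete counterexample to your construction: it is HD, hence SR, so the theorem applies to it; all its states are language-equivalent but have pairwise distinct safe languages, so your merging step does nothing; deleting any transition changes the language (as noted in the proof of \cref{lemma:HDcoBuchinotMR}), and the two safe $x$-transitions out of $q_0$ lead to incomparable maximal safe languages, so your safe-maximal pruning does nothing either. Your procedure therefore returns the automaton itself, which is not MR and a fortiori not MA. The paper instead builds a new automaton $\Hc$ on the safe-deterministic states, keeps a determinised copy of the safe (priority-$0$) part, and adds \emph{fresh} priority-$1$ ``reset'' transitions from each state to every weakly coreachable safe-deterministic state; it is exactly these added resets that let the uniformly random memoryless resolver resynchronise with the correct safe run with probability at least $1/\lvert\Hc\rvert$ per reset (\cref{lemma:cobuchi-h-is-ma}), which your bottom-SCC argument silently presupposes.

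The second, related gap is that you never prove the one statement where the SR hypothesis does real work, namely the paper's SR self-coverage lemma (\cref{lemma:coBuchiSRhassafeSRcoverage}): for every state $q$ there is a coreachable state $p$ such that $(\Cc_{\safety},p)$ SR-covers $(\Cc_{\safety},q)$; by transitivity this yields, weakly coreachable to every state, a \emph{safe-deterministic} dominating state (SR safety automata being determinisable-by-pruning via \cref{lemma:SR-implies-SD,lemma:sd-safety-is-dbp}). Your ``covering property'' --- that runs of $\Mc\circ\Bc$ on words of the language are eventually trapped in a safe SCC --- is only a statement about the \emph{existence} of accepting runs and is both weaker and, as phrased (without the finite prefixes), not even correct; it does not give Eve any adaptive recovery capability after a failed safe attempt. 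The paper's proof of self-coverage is a nontrivial probabilistic adversary argument in the \emph{blind} SR game: assuming no coreachable state covers $q$, Adam randomises over resolver-product states $(p,m)$ at each reset, plays pumping words $\alpha_{(p,m)}$ that keep a safe $q$-to-$q$ run alive while forcing Eve's run to take a priority-$1$ transition with probability at least $\epsilon$, and concludes by the second Borel--Cantelli lemma that $\Mc$ was not an almost-sure resolver. Without this lemma (or a substitute), your final step --- ``a trapped rejecting bottom SCC witnesses that $w$ admits no infinite safe run from any reachable state'' --- is unsupported, and as the example above shows, it is false for the automaton your construction actually outputs. You correctly identified the adaptive-versus-non-adaptive conversion as the main obstacle, but the proposal does not supply the mechanism that overcomes it.
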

We note that the above result does not hold for SD coB\"uchi automata, since SD coB\"uchi are exponentially more succinct than HD coB\"uchi automata~\cite[Theorem 14]{AK23}. %SD B\"uchi automata are also exponentially more succinct than HD B\"uchi automata, as was shown by Abu Radi and Kupferman~\cite[Theorem 5]{AK23}. We show that the family of succinct SD B\"uchi automata that they construct are MR, thus giving us the following result. 
However, we show the succinctness of SR B\"uchi automata against HD B\"uchi automata and MA coB\"uchi automata against deterministic coB\"uchi~automata.
\begin{theorem}\label{theorem:succ}
\begin{enumerate}
    \item  There is a class of languages $L_n$ such that, there are memoryless-stochastically resolvable B\"uchi automata recognising $L_n$ with $\Oc(n)$ states, and any HD B\"uchi automaton recognising $L_n$ requires at least $2^n$ states. 
    \item  There is a class of languages $L_n'$, such that there are memoryless-adversarially resolvable coB\"uchi automata recognising $L_n'$ with $\Oc(n)$ states and any deterministic coB\"uchi automaton recognising $L_n'$ requires at least $\Omega(2^n/2n+1)$ states. 
\end{enumerate}
\end{theorem}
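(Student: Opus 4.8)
The plan is to treat the two succinctness claims separately: for each I build the small automaton explicitly, verify its resolver, and then argue the matching lower bound. The second claim turns out to be almost immediate given the earlier results in the paper, so I will concentrate the effort on the first, where the lower bound is the genuine difficulty.

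For the first claim I would take the tail language $L_n = \{\, w \in \{0,1\}^\omega \mid \text{there are infinitely many } i \text{ with } w_i = w_{i+n} = 1 \,\}$. The small \emph{MR} B\"uchi automaton has states $g, c_1, \dots, c_{n-1}$ (so $\Oc(n)$ states): from $g$ on a $1$ it may either stay in $g$ or ``commit'' by moving to $c_1$; the states $c_k$ deterministically count $n$ positions; from $c_{n-1}$ reading a $1$ it takes the unique accepting (priority-$2$) transition back to $g$, and all other transitions have priority $1$. A run is accepting iff it completes infinitely many commitments, which is possible exactly when $w$ has infinitely many witnesses, so the automaton recognises $L_n$. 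The memoryless resolver chooses uniformly between staying and committing at $g$ on a $1$; on a committed word with infinitely many witnesses, at each witness position visited in state $g$ the run commits with probability $\tfrac12$ and then lands on a $1$ after $n$ steps, and since commitments occupy only finitely many steps the run returns to $g$ at infinitely many witness positions, so by a Borel--Cantelli argument it almost surely completes infinitely many commitments. Hence the automaton is MR. Note the language is a tail language, so it has $\Oc(n)$ residuals; this is \emph{necessary}, since \cref{lemma:SR-implies-SD} forces every SR (hence MR) automaton to have at least as many states as reachable residuals.

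For the HD lower bound I would play Adam in the HD game against a purported HD automaton $\Bc$ for $L_n$ with Eve's winning strategy. Adam repeatedly presents an $n$-bit window $x\in\{0,1\}^n$ and later places completing $1$'s $n$ positions after chosen $1$'s of the window, while observing Eve's token. The target assertion is that to guarantee infinitely many accepting transitions on every word with infinitely many witnesses, Eve's token at the end of a window must record the entire set $\{i : x_i = 1\}$: if two windows $x \ne x'$ differing at a coordinate $i$ drove Eve to the same token, Adam would steer both toward a shared continuation that is a witness-word for one window and a rejecting continuation for the other, and correctness of $\Bc$ forbids the common token from scoring on the non-witness positions, so Eve must fail on one of them. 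This yields $|Q| \ge 2^n$.

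For the second claim I would take $L_n'$ to be the Kuperberg--Skrzypczak coB\"uchi family~\cite{KS15}, for which there are HD coB\"uchi automata with $\Oc(n)$ states and every deterministic coB\"uchi automaton needs $\Omega(2^n/(2n+1))$ states. Since HD automata are SR (\cref{theorem:comp}), each such automaton is SR, and \cref{theorem:coBuchiHDisSR} converts it into a language-equivalent \emph{MA} coB\"uchi automaton with no more states; alternatively one argues directly, generalising \cref{fig:coBuchiEx}, that the uniform memoryless resolver almost surely merges with the winning ``longest unbroken path'' run even adversarially. This gives MA coB\"uchi automata of $\Oc(n)$ states, while the deterministic lower bound is inherited verbatim from~\cite{KS15}. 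The hard part will be the HD lower bound of the first claim: because the resolver may use unbounded memory and the language has only $\Oc(n)$ residuals, the separation cannot come from a Myhill--Nerode count, and I must show that the $2^n$ window-distinctions are forced to live in the token space $Q$ rather than in the resolver's memory. I expect to discharge this by a cut-and-paste/pumping argument on the token automaton at the witness-completion times, turning a recurrence of a token across two differing windows into an input with infinitely many genuine witnesses on which Eve's run necessarily misses accepting transitions infinitely often.
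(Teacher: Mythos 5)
Your treatment of part 2 is exactly the paper's: take the Kuperberg--Skrzypczak coB\"uchi family, note the deterministic lower bound from \cite{KS15}, and convert the $\Oc(n)$-state HD automaton into an equally small MA automaton via \cref{theorem:coBuchiHDisSR}; that part is correct and needs no further comment.

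Part 1 is where the proposal genuinely diverges from the paper, and it has a real gap. The paper does not prove an HD lower bound at all: it imports the family $L_n$ of Abu Radi and Kupferman together with the known $2^n$ lower bound for HD B\"uchi automata \cite[Theorem 5]{AK23}, and only has to exhibit the $(3n+3)$-state MR automaton of \cref{fig:succinctSRBuchi} with its uniform resolver on $\$$. You instead invent a new language (infinitely many $i$ with $w_i=w_{i+n}=1$) and must therefore prove the HD lower bound yourself --- and your sketch for it does not work as stated. Since your $L_n$ is prefix-independent, it has a single residual, so two windows $x\neq x'$ driving Eve's token to the same state is not in itself contradictory: Eve's run may ``miss'' the witnesses arising from any one merged pair (indeed any finite number of them) and still be accepting, and her resolver has unbounded memory with which to behave differently after the two prefixes even from the same token. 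To get a contradiction you must force infinitely many such collisions along a \emph{single} play against an arbitrary strategy of Eve, which is precisely the hard content you defer to an unspecified ``cut-and-paste/pumping argument''. Note also that the target bound is delicate: HD B\"uchi automata determinise with only quadratic blow-up \cite{KS15}, so even a $2^n$ \emph{deterministic} lower bound for your language (which does hold, by window-tracking) would only yield roughly $2^{n/2}$ for HD, and it is not evident that exactly $2^n$ states are needed by HD automata for your $L_n$. As written, the central claim of part 1 is asserted, not proved.

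A secondary, fixable issue: in your MR upper bound, the step ``the run returns to $g$ at infinitely many witness positions'' is not automatic --- a word with decoy $1$s placed in the $n$ positions before each witness can keep the run mid-excursion at witness times, and the relevant commit events are not independent, so the second Borel--Cantelli lemma in the form used in the paper (\cref{lemma:secondborellcantelli}) does not apply directly. One can repair this by extracting witnesses separated by more than $n$ and lower-bounding the conditional probability of an accepting transition per witness block by $2^{-n}$, but this needs to be said; as a bonus it shows your per-witness success probability degrades exponentially in $n$, whereas the paper's automaton achieves a constant per-block probability.
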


In the next subsections, we will prove \cref{theorem:comp,theorem:coBuchiHDisSR,theorem:succ}. We organise their proofs based on the acceptance~conditions.
\subsection{Safety automata}\label{subsec:sac-safe}
We showed in \cref{lemma:SR-implies-SD} that stochastically resolvable automata are semantically deterministic. The next result shows that every safety automaton $\Sc$ is semantically deterministic if and only if $\Sc$ is determinisable-by-pruning, that is, $\Sc$ contains a language-equivalent deterministic subautomaton, whose proof is presented in the appendix. 
\begin{restatable}[Folklore]{lemma}{SDsafetyisDBP}\label{lemma:sd-safety-is-dbp}
    Every semantically deterministic safety automaton is determinisable-by-pruning.
\end{restatable}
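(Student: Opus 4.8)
The plan is to prove the contrapositive-friendly direct statement: given a semantically deterministic safety automaton $\Sc$, I construct a language-equivalent deterministic subautomaton by pruning transitions. First I would recall what semantic determinism gives us for a safety automaton: by \cref{lemma:SDautomata}, every state reachable from $q_0$ on a finite word $u$ recognises exactly $u^{-1}\Lc(\Sc)$, so any two transitions from a state $q$ on the same letter $a$ lead to language-equivalent target states, both recognising $a^{-1}\Lc(\Sc,q)$. The key structural fact about safety automata is that a run is accepting if and only if it never takes a rejecting (priority $1$) transition into the sink; equivalently, a finite word $u$ is a prefix extendable to a word in $\Lc(\Sc,q)$ precisely when $\Sc$ has a ``safe'' run on $u$ from $q$ that avoids the rejecting sink. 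Since $\Sc$ is complete, I would work with the subautomaton of \emph{safe} transitions (those not entering the rejecting sink) and argue that language-equivalence is controlled entirely by which safe states one can reach.

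The core of the construction is a fixed choice function: for each state $q$ and each letter $a$, among the outgoing $a$-transitions from $q$, select a single transition $\delta_{q,a}$, pruning the rest. The pruning must be done so that the resulting deterministic automaton $\Dc$ still recognises $\Lc(\Sc)$. The natural requirement is to always select a \emph{safe} transition whenever one exists, i.e. a transition whose target state $q'$ satisfies $\Lc(\Sc,q') = a^{-1}\Lc(\Sc,q) \neq \emptyset$; by semantic determinism every $a$-transition from $q$ already has target recognising $a^{-1}\Lc(\Sc,q)$, so it suffices to pick any transition that does not immediately lead into the rejecting sink (when $a^{-1}\Lc(\Sc,q)$ is nonempty such a transition must exist, since otherwise $q$ would wrongly recognise a language with $a$ in no prefix). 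I would then verify $\Lc(\Dc) = \Lc(\Sc)$ by the two inclusions: $\Lc(\Dc)\subseteq\Lc(\Sc)$ is immediate since $\Dc$ is a subautomaton. For the reverse inclusion, take $w \in \Lc(\Sc)$; inductively, the deterministic run of $\Dc$ on $w$ visits, after reading each prefix $w_{<i}$, a state recognising $w_{<i}^{-1}\Lc(\Sc)$, which is nonempty because $w$ itself witnesses a continuation, so the chosen transition on the next letter is safe and the run never reaches the rejecting sink, hence is accepting.

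The main obstacle I anticipate is the inductive invariant at the heart of the reverse inclusion: I must guarantee that the deterministic run never gets ``stuck'' by having pruned away the only transition that keeps the run safe. This is exactly where semantic determinism is essential, and it is what distinguishes safety from, say, reachability automata where naive pruning fails. The invariant to maintain is that after reading any prefix $u$ of a word $w\in\Lc(\Sc)$, the unique state reached by $\Dc$ recognises $u^{-1}\Lc(\Sc)$, which is nonempty since $w\in\Lc(\Sc)$; by \cref{lemma:SDautomata} every $a$-successor in $\Sc$ recognises $(ua)^{-1}\Lc(\Sc)$, so the deterministically chosen successor also does, and nonemptiness of $(ua)^{-1}\Lc(\Sc)$ (again witnessed by $w$) forces the chosen transition to be safe. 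Care is needed to handle the degenerate case where $a^{-1}\Lc(\Sc,q)$ is empty---there the choice is irrelevant since such states are never visited along accepting runs. Once this invariant is nailed down, the result follows cleanly, and since the construction only selects one transition per state-letter pair, $\Dc$ is a genuine deterministic subautomaton obtained by pruning, establishing determinisability-by-pruning.
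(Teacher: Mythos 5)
Your proposal is correct and follows essentially the same route as the paper: fix one (language-preserving) transition per state--letter pair, and argue via the semantic-determinism invariant (\cref{lemma:SDautomata}) that after any prefix $u$ of a word $w\in\Lc(\Sc)$ the deterministic run sits at a state recognising the nonempty residual $u^{-1}\Lc(\Sc)$, hence never reaches the rejecting sink and is accepting. The only difference is that your extra care in selecting a ``safe'' transition is unnecessary: since every transition in an SD automaton is language-preserving, any choice works, which is exactly what the paper's proof does.
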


Observe that any determinisable-by-pruning automaton is trivially memoryless-adversarially resolvable, where Eve's strategy is to take transitions in a fixed, language-equivalent, deterministic subautomaton. Thus, \cref{lemma:sd-safety-is-dbp} implies the following result.

\begin{lemma}\label{lemma:comp-safety}
    For safety automata, the notions of pre-semantic determinism, stochastic resolvability, memoryless adversarial resolvability, memoryless stochastic resolvability, and history determinism are equivalent.
\end{lemma}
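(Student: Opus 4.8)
The plan is to prove the lemma by closing a cycle of implications among the five notions, observing that only one of these implications is specific to safety automata while the rest are generic monotonicity facts about the power of the letter-giving adversary. Concretely, I would establish pre-SD $\Rightarrow$ MA, then MA $\Rightarrow$ HD and MA $\Rightarrow$ MR, then HD $\Rightarrow$ SR and MR $\Rightarrow$ SR, and finally SR $\Rightarrow$ pre-SD. Once this loop is closed, all five classes are forced to coincide.

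The only safety-specific step is pre-SD $\Rightarrow$ MA, and I would hand almost all of its content to \cref{lemma:sd-safety-is-dbp}. If $\Ac$ is a pre-SD safety automaton, then by definition it has a language-equivalent SD subautomaton, and that subautomaton, being itself an SD safety automaton, is determinisable-by-pruning. Hence $\Ac$ contains a language-equivalent deterministic subautomaton, and the resolver that always follows the transitions of this fixed deterministic subautomaton is a pure memoryless resolver with which Eve wins the HD game (she simply produces the unique deterministic run, which is accepting on every word in the language). Thus $\Ac$ is MA, exactly as noted in the observation preceding the lemma.

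The remaining implications are generic and require no properties of safety. For SR $\Rightarrow$ pre-SD I would appeal directly to \cref{lemma:SR-implies-SD}. For MA $\Rightarrow$ HD I would use that a memoryless almost-sure resolver is in particular a (stochastic) winning strategy for Eve in the HD game; since $\omega$-regular games are determined with pure strategies, Eve then also has a pure winning strategy, so $\Ac$ is HD. The implications MA $\Rightarrow$ MR and HD $\Rightarrow$ SR both rest on the same observation: the pre-committing adversary of the stochastic setting is weaker than the adaptive adversary of the HD game, since every fixed word is a particular Adam strategy there. Consequently any resolver that is almost-surely winning against the adaptive adversary is automatically an almost-sure resolver, so a memoryless MA resolver witnesses MR and a pure HD strategy (which is finite-memory, hence a resolver) witnesses SR. Finally MR $\Rightarrow$ SR is immediate, as a memoryless almost-sure resolver is in particular an almost-sure resolver.

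Assembling these gives the cycle pre-SD $\Rightarrow$ MA $\Rightarrow$ MR $\Rightarrow$ SR $\Rightarrow$ pre-SD, which already shows that pre-SD, MA, MR, and SR are equivalent, while MA $\Rightarrow$ HD $\Rightarrow$ SR folds HD into the same equivalence. I do not expect any serious obstacle, since the mathematical weight is carried entirely by \cref{lemma:sd-safety-is-dbp}; the only point requiring a little care is to verify that the notions are comparable in the intended direction, namely that adaptive adversaries dominate pre-committing ones and that unrestricted resolvers dominate memoryless ones, so that the cycle genuinely passes through all five notions rather than skipping one.
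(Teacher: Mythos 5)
Your proposal is correct and takes essentially the same approach as the paper: the safety-specific weight rests entirely on \cref{lemma:sd-safety-is-dbp} (SD safety automata are determinisable-by-pruning), entered through \cref{lemma:SR-implies-SD}, with determinisability-by-pruning trivially yielding MA and generic inclusions closing the loop. You merely spell out the comparisons (MA $\Rightarrow$ HD via determinacy of $\omega$-regular games, MA $\Rightarrow$ MR, HD and MR $\Rightarrow$ SR) that the paper leaves implicit in the two sentences following \cref{lemma:sd-safety-is-dbp}.
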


\subsection{Reachability and weak automata}\label{subsec:sac-rw}
We now compare our notions of nondeterminism on automata with reachability and weak acceptance conditions. To start with, recall that in \cref{example:reachSRbutnotHD}, we showed a MR reachability automaton that is not HD. This implies the following result.
\begin{lemma}\label{lemma:reachability-MR-not-HD}
    There is a memoryless stochastically resolvable reachability automaton that is not HD.
\end{lemma}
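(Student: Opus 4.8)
The plan is to exhibit a single witness automaton and verify the two required properties separately; in fact the automaton $\Ac$ of \cref{fig:ReachSRbutnotHD}, already discussed in \cref{example:reachSRbutnotHD}, serves exactly this purpose, so the lemma is immediate from that example. To make the argument self-contained I would first recall that $\Ac$ recognises all of $\{a,b\}^\omega$, since the accepting sink $q_f$ is reachable from every state and loops on every letter; hence Adam's word always lies in $\Lc(\Ac)$, and the only thing at stake is whether the constructed run reaches $q_f$.

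For memoryless stochastic resolvability I would fix the memoryless resolver that, at $q_0$, selects each of the two outgoing transitions (to $q_a$ and to $q_b$) with probability $\tfrac12$, and plays with the unique available transition elsewhere. The key observation is that the move out of $q_0$ on reading $\sigma_i$ is effectively a guess of the next letter $\sigma_{i+1}$: choosing $q_a$ predicts $\sigma_{i+1}=a$ and choosing $q_b$ predicts $\sigma_{i+1}=b$; a correct guess advances the run to $q_f$, while a wrong guess returns it to $q_0$. Since the adversary has pre-committed to the word, $\sigma_{i+1}$ is determined at each visit to $q_0$, so each visit independently reaches $q_f$ with probability $\tfrac12$. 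Over an infinite word the run therefore reaches $q_f$ almost surely and is accepting with probability $1$, which is precisely the condition for $\Ac$ to be MR.

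For the failure of history-determinism I would describe Adam's spoiling strategy in the HD game, where he may react to the current position of Eve's token. Whenever Eve's token is at $q_a$ Adam plays $b$, and whenever it is at $q_b$ Adam plays $a$; in either case Eve's token returns to $q_0$ rather than reaching $q_f$. Thus against this strategy Eve's run stays forever within $\{q_0,q_a,q_b\}$ and never visits the accepting sink, so her run is rejecting even though Adam's word lies in $\Lc(\Ac)$. Consequently no resolver wins the HD game, and $\Ac$ is not HD.

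The only point requiring a little care — and the closest thing to an obstacle — is the almost-sure convergence in the MR argument: one must confirm that, for a pre-committed word, the successive guesses at $q_0$ are genuinely independent coin flips with a fixed success probability bounded away from $0$ (here exactly $\tfrac12$), so that the probability of never reaching $q_f$ is $\lim_{k\to\infty}(\tfrac12)^k=0$. Everything else is a direct reading of the transition structure in \cref{fig:ReachSRbutnotHD}.
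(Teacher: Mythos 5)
Your proposal is correct and follows essentially the same route as the paper, which proves this lemma by pointing to \cref{example:reachSRbutnotHD}: the same automaton from \cref{fig:ReachSRbutnotHD}, the same uniform coin-flip resolver at $q_0$ (with the same independence argument giving failure probability $\lim_{k\to\infty}(\tfrac12)^k=0$), and the same Adam strategy in the HD game of playing $b$ at $q_a$ and $a$ at $q_b$. Your write-up is in fact slightly more careful than the paper's, which asserts the almost-sure acceptance and $\Lc(\Ac)=\{a,b\}^\omega$ without spelling out the independence of the successive guesses.
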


We next show that SD weak automata are MR. Since SR automata are pre-SD (\cref{lemma:SR-implies-SD}), we obtain that the notions of pre-semantic determinism and (memoryless) stochastic resolvability are equivalent on weak automata.  

\begin{restatable}{lemma}{lemmasdweakismr}\label{lemma:sdweak-is-mr}
    Every semantically deterministic weak automaton is memoryless-stochastically resolvable. 
\end{restatable}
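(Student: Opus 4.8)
The plan is to show that the \emph{uniform} memoryless resolver $\Mc$---which at every state $q$ and letter $a$ plays the uniform distribution in $\Distribution(\Delta_{q,a})$---is an almost-sure resolver; only the support of the distributions will matter, so any full-support memoryless resolver works just as well. Fix a word $w = a_0a_1\cdots \in \Lc(\Ac)$ and write $w_{\geq i} = a_i a_{i+1}\cdots$ for its suffixes. I must prove that the random run of $\Mc\circ\Ac$ on $w$, as captured by the Markov chain $M_w$, is accepting with probability $1$. The key structural input from semantic determinism is \emph{goodness}: by \cref{lemma:SDautomata}, every state $q$ that $M_w$ can occupy at level $i$ recognises $(a_0\cdots a_{i-1})^{-1}\Lc(\Ac)$, and since $w\in\Lc(\Ac)$ this means $w_{\geq i}\in\Lc(\Ac,q)$. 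Thus \emph{every} configuration the run ever visits can still accept the remaining suffix; in particular the run can never be permanently trapped in a part of the automaton from which the suffix is rejected.

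Next I would exploit the weak acceptance condition. In a weak automaton no cycle mixes accepting and rejecting transitions, so each strongly connected component (SCC) is entirely accepting or entirely rejecting, and the condensation is a directed acyclic graph. Assigning to each state the structural rank $\mathsf{rk}(q)$ equal to the length of the longest path in the condensation from $q$'s SCC to a sink, every transition either stays inside an SCC (keeping $\mathsf{rk}$ fixed) or descends to a strictly lower SCC (dropping $\mathsf{rk}$ by at least one); hence $\mathsf{rk}$ is non-increasing along every run and strictly decreases at each change of SCC. Since $\mathsf{rk}$ is bounded, every run settles in a single SCC $C^{*}$, and the run is accepting if and only if $C^{*}$ is accepting. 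It therefore suffices to prove that, for each rejecting SCC $C$, the probability that the run eventually remains forever in $C$ is $0$.

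To establish this last point I would argue by contradiction on the event $B$ that the run stays in $C$ from some time $T$ onwards. Goodness forbids $B$ from happening while escape is impossible: if after some time no reachable state of $C$ admitted a transition leaving $C$ toward an accepting continuation, then every run from the current good configuration would be confined to the rejecting component $C$ forever and hence reject its suffix, contradicting goodness. Consequently exit opportunities recur along the run, and because $\Mc$ has full support each available exit is taken with probability at least $1/|\Delta|$ whenever it is offered. Formalising this with the conditional (L\'evy) Borel--Cantelli lemma---applied to the events ``the run leaves $C$ at step $n$''---the conditional exit probabilities sum to infinity on $B$, forcing an exit almost surely and contradicting $B$. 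Summing the vanishing probabilities over the finitely many rejecting SCCs $C$ and the countably many thresholds $T$ yields that $C^{*}$ is accepting almost surely, so $\Lc(\Mc\circ\Ac)=\Lc(\Ac)$ and $\Ac$ is MR.

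The hard part will be the routing step inside the proof that exit opportunities are actually \emph{taken} infinitely often. Goodness only guarantees that, from the run's current state, \emph{some} accepting run exits $C$ eventually, but that witness may have to follow a long, word-dependent detour inside $C$ before an exit transition becomes enabled, and the run may repeatedly take different branches; a naive bound on the per-attempt success probability decays too fast to sum to infinity. I expect to control this by using semantic determinism more heavily: all states reachable at a given level are language-equivalent, and an $\omega$-regular language has only finitely many residuals, so the pattern of exit opportunities is governed by a finite object and cannot be postponed indefinitely relative to the run's own exploration of the strongly connected component $C$. Turning this into a clean lower bound on the divergent sum of conditional exit probabilities---equivalently, showing the run cannot almost surely dodge every exit---is the technical crux, and is precisely what makes the memoryless (rather than merely memory-using) resolver nontrivial.
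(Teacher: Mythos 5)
Your proposal sets up the right skeleton (uniform memoryless resolver, semantic determinism via \cref{lemma:SDautomata} to keep every reachable configuration ``good'', a Borel--Cantelli finish), but it is incomplete at exactly the point you flag in your last paragraph, and that point is the entire content of the lemma. Your conditional Borel--Cantelli scheme requires a uniform, history-independent lower bound on the conditional probability of progress within time blocks of some \emph{fixed} length: the single-step events ``the run leaves $C$ at step $n$'' can have conditional probability zero at most steps (the current state may have no direct exit), and your remark that $\omega$-regular languages have finitely many residuals does not by itself bound how long the witness accepting run must wander inside $C$ before an exit becomes enabled --- that waiting time depends on the position within the input word, not only on the residual. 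Without a fixed block length $K$ and a fixed $\epsilon>0$ per block, the per-attempt success probabilities can decay and the divergence of the conditional sum is unproven. So as written this is a genuine gap, not a routine formalisation step.

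The paper closes precisely this gap with a pigeonhole argument you may want to compare against. First, it fixes an accepting run $\rho_{acc}$ of $\Ac$ on $w$ and passes to a suffix $w'=u^{-1}w$ on which $\rho_{acc}$ takes only accepting transitions (possible since in a weak automaton each SCC is homogeneous). Then, over any window $v$ of length $K=n2^n$, it tracks the pair consisting of the state $p_l$ of $\rho_{acc}$ and the set $Q_l$ of states reachable from the random run's current state $q$ on the prefix of $v$ read so far; since there are at most $n2^n$ such pairs, some pair repeats, say $(p_i,Q_i)=(p_j,Q_j)$ with loop word $v'$. The ultimately periodic word $t=a_1\cdots a_{i-1}(v')^{\omega}$ is accepted from $p$ (the accepting run loops on accepting transitions), hence by semantic determinism also from $q$; because the reachable set returns to itself after $v'$, acceptance of $t$ from $q$ forces the existence of a run segment from $q$ \emph{within the window} that contains an accepting transition, and the uniform resolver follows it with probability at least $\epsilon=1/d^{K}$, where $d$ is the maximum out-degree. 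This yields an accepting transition in every length-$K$ block with conditional probability at least $\epsilon$ regardless of history, so the rejection probability is at most $\prod(1-\epsilon)=0$. Note also that the paper's target event is ``sees an accepting transition in each block'' rather than ``exits the rejecting SCC'', which delivers B\"uchi acceptance directly and lets you dispense with your SCC-settling and trapping-event machinery altogether.
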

\begin{proof}[Proof sketch]
Let $\Ac$ be a semantically deterministic weak automaton. We show that the resolver that selects transitions uniformly at random constructs, on any word in $\Lc(\Ac)$, a run that is almost-surely accepting. Let $w$ be a word in $\Lc(\Ac)$, and $\rho$ a run of $\Ac$ on $w$ where transitions are chosen uniformly at random. Then there is a finite prefix $u$ of $w$ and a state $p$ of $\Ac$, such that there is a run from $q_0$ to $p$ on $u$ and a run from $p$ on $u^{-1}w$ that only visits accepting states. Let $K=n2^n$, where $n$ is the number of states of $\Ac$. We show that there is a positive probability $\epsilon>0$, such that on any infix of $u^{-1}w$ that has length at least $K$, the segment of the run $\rho$ on that infix contains an accepting transition with probability at least $\epsilon$. Using this, we argue that $\rho$ contains infinitely many accepting transitions with probability $1$, and hence is almost-surely~accepting.
\end{proof}

Thus, stochastically resolvable weak automata are also memoryless-stochastically resolvable, and strictly encompass HD automata. 
History-deterministic weak automata are determinisable-by-pruning~\cite{BKS17}, and since MA automata are HD, and deterministic automata are trivially MA, we obtain the following result.
\begin{lemma}\label{lemma:reachability-hd=ma}
    The notions of determinisable-by-pruning, history-determinism, and memoryless adversarial resolvability coincide on reachability automata.
\end{lemma}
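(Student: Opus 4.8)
The plan is to establish the cycle of implications
\[
\text{DBP} \Rightarrow \text{MA} \Rightarrow \text{HD} \Rightarrow \text{DBP}
\]
among reachability automata, from which the claimed coincidence of the three classes follows immediately. Two of the three implications are already available from earlier in the excerpt. The implication $\text{DBP}\Rightarrow\text{MA}$ was recorded in \cref{subsec:sac-safe}: Eve's one-state (hence memoryless) resolver simply follows the unique transitions of a fixed language-equivalent deterministic subautomaton, and this produces an accepting run on every word in the language, so she wins the HD game almost surely. The implication $\text{MA}\Rightarrow\text{HD}$ holds because a memoryless resolver is in particular a resolver: an almost-surely winning memoryless resolver in the HD game witnesses that $\Ac$ is adversarially resolvable, and adversarial resolvability coincides with history-determinism by determinacy of $\omega$-regular games, since the stochastic winning strategy can be purified into a pure winning strategy as noted in the preliminaries.

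It remains to close the loop with $\text{HD}\Rightarrow\text{DBP}$, and here the key observation is that \emph{every reachability automaton is a weak automaton}. Indeed, in a reachability automaton all accepting (priority~$2$) transitions are self-loops on an accepting sink, so every cycle through the sink consists solely of accepting transitions, while every cycle avoiding the sink consists solely of rejecting (priority~$1$) transitions; hence no cycle mixes accepting and rejecting transitions, which is exactly the definition of weakness. With this structural fact in hand, $\text{HD}\Rightarrow\text{DBP}$ for reachability automata becomes a direct instance of the known result that history-deterministic weak automata are determinisable-by-pruning~\cite{BKS17}.

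I do not expect any single step to present a genuine obstacle: the two nontrivial directions are packaged in results stated earlier in the excerpt and in \cite{BKS17}, so the only new content is the routine observation that reachability automata are weak. The one point requiring a little care is fixing the convention that the sink in a reachability automaton is accepting (all of its self-loops carry priority~$2$), so that the weakness argument goes through cleanly; this is consistent with the definition and with the example in \cref{fig:ReachSRbutnotHD}.
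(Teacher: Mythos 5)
Your proof is correct and follows essentially the same route as the paper: the paper also obtains the lemma from the cycle DBP $\Rightarrow$ MA (deterministic subautomaton as memoryless resolver), MA $\Rightarrow$ HD (via determinacy/purification), and HD $\Rightarrow$ DBP (via the result of~\cite{BKS17} for weak automata). The only difference is that you make explicit the observation that reachability automata are weak, which the paper leaves implicit when it applies the weak-automata result.
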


\subsection{CoB\"uchi automata}\label{subsec:sac-cobuchi}
We continue our comparison of the notions of nondeterminism, and focus on coB\"uchi automata. We will show that no two notions among SD, SR, HD, MR, and MA are equivalent for coB\"uchi automata. We will then give a polynomial-time algorithm that converts stochastically resolvable coB\"uchi automata with $n$ states into language-equivalent memoryless adversarially resolvable coB\"uchi automata with at most $n$ states. This shows that SR and MR coB\"uchi automata are no more succinct than MA (and also HD) coB\"uchi automata. 
However, this is not the case for B\"uchi automata (\cref{lemma:succinctBuchi}).

We start by showing an SD coB\"uchi automaton that is not stochastically resolvable, as shown in \cref{fig:cobuchisdbutnotsr}. 
\begin{figure}[h]
    \centering
    \begin{minipage}[b]{0.45\linewidth}
            % \begin{subfigure}[b]{\linewidth}
        \centering
            \begin{tikzpicture}
        \tikzset{every state/.style = {inner sep=-3pt,minimum size =20}}

    \node[state] (q0) at (0,0) {$q_0$};
    \node[state] (qa) at (1.5,0) {$q_a$};
    \node[state] (qb) at (-1.5,0) {$q_b$};
    \path[-stealth]
    (0,-0.75) edge (q0)

    (q0) edge node [above,yshift=-1mm] {$a,b$} (qa)
    (q0) edge node [above,yshift=-1mm] {$a,b$} (qb)
    (qa) edge [red, dashed, bend right = 45] node [above] {$b$} (q0) 
    (qa) edge [bend left = 45] node [below] {$a$} (q0)
    (qb) edge [red, dashed, bend left = 45] node [above] {$a$} (q0)
    (qb) edge [bend right = 45] node [below] {$b$} (q0)
    ;
    \end{tikzpicture}
    \caption{A coB\"uchi automaton that is SD but not SR. Rejecting transitions are represented by dashed arrows.}\label{fig:cobuchisdbutnotsr}
    %\end{subfigure}%
    \end{minipage}
    \hfill
    \begin{minipage}[b]{0.5\linewidth}
        %\begin{subfigure}[b]{\linewidth}
                \begin{tikzpicture}
        \tikzset{every state/.style = {inner sep=-3pt,minimum size =20}}

    \node[state] (q0) at (0,0) {$q_0$};
    \node[state] (q1) at (0,1.5) {$q_1$};
    \node[state] (d1) at (1.5,0) {$d_1$};
    \node[state] (d2) at (1.5,1.5) {$d_2$};
    \node[state] (d3) at (3,1.5) {$d_3$};
    \path[-stealth]
    (-0.5,-0.5) edge (q0)
    (q0) edge [bend left = 15] node [left] {$x$} (q1)
    (q1) edge [red, dashed, bend left = 15] node [right] {$a$} (q0)
    (q0) edge [in=-30,out=-60,loop] node [below, xshift=1mm] {$x$} (q0)
    (q0) edge [red, dashed, loop below] node [left] {$b$} (q0)
    (q1) edge [loop above] node [above] {$x$} (q1)
    (q1) edge node [above] {$b$} (d2)
    (d2) edge [red, dashed, loop above] node [left] {$b$} (d2)
    (d2) edge [red, dashed, bend left = 15] node [right] {$a$} (d1)
    (d1) edge [red, dashed, bend left = 15] node [left] {$b$} (d2)
    (q0) edge node [above] {$a$} (d1)
    (d1) edge [loop right] node [right] {$x,a$} (d1)
    (d2) edge [bend left = 15] node [above] {$x$} (d3)
    (d3) edge [bend left = 15] node [below] {$b$} (d2)
    (d3) edge [loop above] node [right] {$x$} (d3)
    (d3) edge [red, dashed] node [below] {$a$} (d1)
;
    \end{tikzpicture}
          \caption{A HD coB\"uchi automaton that is not MR. Rejecting transitions are represented by dashed arrows.}\label{fig:HDcoBuchinotMR}
    %\end{subfigure}
    \end{minipage}%\caption{}
\end{figure}

\begin{lemma}
    There is a semantically deterministic coB\"uchi automaton that is not stochastically resolvable.
\end{lemma}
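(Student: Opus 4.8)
The plan is to verify that the automaton $\Ac$ of \cref{fig:cobuchisdbutnotsr} is semantically deterministic and then to show that it admits no almost-sure resolver. The two preliminary facts are quick. From $q_0$, Eve can always move to $q_a$ or $q_b$ so as to match the forthcoming letter and thereby avoid every rejecting transition; from $q_a$ and $q_b$ every letter leads back to $q_0$. Hence every state recognises $\{a,b\}^\omega$, so $\Lc(\Ac)=\{a,b\}^\omega$, and since $a^{-1}\{a,b\}^\omega=\{a,b\}^\omega=b^{-1}\{a,b\}^\omega$, every transition is language-preserving and $\Ac$ is SD. The content of the lemma is that $\Ac$ is not SR.

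I would first isolate the prediction structure. Every run alternates between $q_0$ and $\{q_a,q_b\}$: at the $k$-th visit to $q_0$ the resolver reads the letter $w_{2k}$ and chooses to move to $q_a$ or $q_b$, which is a \emph{guess} of the next letter $w_{2k+1}$; the ensuing transition back to $q_0$ is rejecting exactly when the guess is wrong. Thus a run is accepting if and only if all but finitely many guesses are correct, and showing non-SR amounts to forcing infinitely many wrong guesses with positive probability.

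To do this I would fix an arbitrary resolver $\Mc=(M,m_0,\mu,\nextmove)$ and build a single word $w$ witnessing $\prob_{\Mc\circ\Ac}(w)<1$. Fix all even letters to $a$, and choose the odd letters in advance as follows. Let $X_k\in M$ be the (random) memory state at the $k$-th visit to $q_0$; its distribution is completely determined by the committed prefix $w_0\cdots w_{2k-1}$. Writing $p_m$ for the probability that $\nextmove(m,q_0,a)$ assigns to the transition into $q_a$, the expected bias $\bar p_k=\mathbb{E}[p_{X_k}]$ is then also determined by this prefix. Since the adversary that commits to $w$ knows $\Mc$, it can simulate the induced Markov chain and set $w_{2k+1}=a$ when $\bar p_k<\tfrac12$ and $w_{2k+1}=b$ otherwise; the probability that the $k$-th guess is wrong is then $1-\bar p_k$ or $\bar p_k$, hence at least $\tfrac12$ for every $k$.

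The point worth stressing is that this is a genuine a priori construction, legitimate in the SR setting: the adversary commits to $w$ before any run is drawn and never observes the random memory state $X_k$, so biasing against the \emph{expected} guess $\bar p_k$ is all that is available and all that is needed. The remaining --- and, I expect, the main --- obstacle is to pass from ``each guess is wrong with probability at least $\tfrac12$'' to ``infinitely many guesses are wrong with positive probability''; the mistake events $R_k$ are dependent through the memory, so the second Borel--Cantelli lemma does not apply. I would instead use continuity of measure from above: with $B_N=\bigcup_{k\ge N}R_k$ one has $B_N\supseteq R_N$ and $B_N\downarrow\bigcap_N B_N=\{\text{infinitely many }R_k\}$, whence $\Pr[\text{infinitely many }R_k]=\lim_N \Pr[B_N]\ge\liminf_N \Pr[R_N]\ge\tfrac12$. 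Therefore the run is rejecting with probability at least $\tfrac12$, so $\prob_{\Mc\circ\Ac}(w)\le\tfrac12<1$ even though $w\in\Lc(\Ac)$. As $\Mc$ was arbitrary, $\Ac$ has no almost-sure resolver and is not stochastically resolvable.
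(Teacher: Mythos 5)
Your proof is correct, but it takes a genuinely different route from the paper's. The paper argues in the SR game: Adam plays letters uniformly at random, so each of Eve's guesses at $q_0$ is wrong with probability exactly $\tfrac12$ independently of the past, the second Borel--Cantelli lemma (\cref{lemma:secondborellcantelli}) then makes Eve's run rejecting with probability $1$, and the conclusion that no almost-sure resolver exists is extracted via \cref{lemma:random-is-pure}, whose proof invokes a derandomisation theorem for one-player partial-observation stochastic games. You instead perform the derandomisation by hand: for each fixed resolver you inductively build a single witness word, choosing each odd letter against the resolver's \emph{expected} bias $\bar p_k$ (well-defined since the distribution of the memory state at the $k$-th visit to $q_0$ is determined by the committed prefix), so each mistake event $R_k$ has unconditional probability at least $\tfrac12$; and since these events are dependent you correctly replace Borel--Cantelli with continuity of measure from above, obtaining $\Pr[\limsup_k R_k]\ge\tfrac12$. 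What each approach buys: the paper's argument is shorter and yields almost-sure failure, but leans on \cref{lemma:random-is-pure} and the cited black-box result behind it; yours is self-contained, needs no game-theoretic machinery, and exhibits an explicit word $w$ with $\prob_{\Mc\circ\Ac}(w)\le\tfrac12<1$, at the harmless cost of proving only positive (rather than almost-sure) failure, which is all the definition of an almost-sure resolver requires you to refute.
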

\begin{proof}
    Consider the coB\"uchi automaton $\Cc$ shown in \cref{fig:cobuchisdbutnotsr} that accepts all infinite words over $\{a,b\}$. The automaton $\Cc$ has nondeterminism on the letters $a$ and $b$ in the initial state.    
    Consider the following random strategy of Adam in the SR game on~$\Cc$, where in each round, he picks the letter $a$ or $b$ with equal probability. Then whenever Eve's token is at $q_{\alpha}$ for $\alpha\in\{a,b\}$, Adam's letter is $\beta \in \{a,b\}$ with $\beta\neq \alpha$ with probability~$\frac{1}{2}$. Thus, in round $2i$ of the SR game for each $i$ in $\mathbb{N}$, the run on Eve's token takes a rejecting transition with probability~$\frac{1}{2}$. Therefore, due to the second Borel-Cantelli lemma (\cref{lemma:secondborellcantelli}), the run on Eve's token contains infinitely many rejecting transitions with probability~1, and Eve loses almost-surely. Thus, Eve has no strategy to win the SR game on $\Cc$ almost-surely, proving $\Cc$ is not SR (\cref{lemma:random-is-pure}). 
\end{proof}

In \cref{subsec:sac-rw}, we showed an MR reachability automaton that is not HD (\cref{lemma:reachability-MR-not-HD}). Since reachability automata are a subclass of coB\"uchi automata, there is a coB\"uchi automaton that is not HD. We next show a HD coB\"uchi automaton that is not MR, and hence, also not MA. 
\begin{restatable}{lemma}{lemmaHDcoBuchinotMR}\label{lemma:HDcoBuchinotMR}
      There is a history-deterministic coB\"uchi automaton that is not memoryless stochastically resolvable.
\end{restatable}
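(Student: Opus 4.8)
The plan is to exhibit a concrete history-deterministic coBüchi automaton and prove that no memoryless resolver can be almost-sure in the stochastic-resolvability setting. The figure \cref{fig:HDcoBuchinotMR} already supplies the candidate automaton; my job is (i) to confirm it is HD, and (ii) to show every memoryless resolver fails the SR game on some word in the language.

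First I would pin down the language and verify history-determinism. The automaton has two "regions": an initial gadget around $q_0,q_1$ on which one must eventually commit to reading $a$ or to reading $b$, and a downstream gadget $d_1,d_2,d_3$ that punishes a wrong commitment with infinitely many rejecting (dashed) transitions. I would show $\Ac$ is HD by describing a finite-memory (in fact deterministic-looking) strategy for Eve: she delays her commitment, using memory to track the prefix read so far, and only moves into the $d_i$ part once the word forces a safe choice, thereby ensuring that whenever the word is in $\Lc(\Ac)$ her run sees only finitely many rejecting transitions. The point of the construction — mirroring \cite{KS15} — is that this safe behaviour crucially requires \emph{remembering} which of $a,b$ has been favoured so far; a memoryless player has no such memory.

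The core of the argument is step (ii): ruling out memoryless almost-sure resolvers. A memoryless resolver is just a fixed probability distribution on the outgoing transitions at each nondeterministic state, in particular at $q_0$ and $q_1$. I would argue by cases on this distribution. The key idea is that at a state where Eve must commit between a transition that reads $a$ safely and one that reads $b$ safely, a memoryless resolver assigns some fixed nonzero probability to at least one of the two "wrong" directions (or, if it assigns probability $0$ to a transition, that transition is effectively pruned and Eve can no longer handle the corresponding word in the language). Then I would construct a single word $w \in \Lc(\Ac)$ on which, because Adam has pre-committed to $w$, the memoryless resolver is forced infinitely often into configurations where it takes a rejecting transition with a uniformly bounded-below probability $\epsilon > 0$ each time, independently. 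By the second Borel--Cantelli lemma (\cref{lemma:secondborellcantelli}), infinitely many of these rejecting transitions occur with probability $1$, so the run is almost-surely \emph{rejecting}, contradicting almost-sure acceptance. The case analysis must cover both the possibility that a transition gets positive probability (leading to repeated rejections) and the possibility that it gets zero probability (leading to an accepted word being rejected outright), and in each case I exhibit the offending word.

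The main obstacle I anticipate is designing the word $w$ so that the \emph{same} memoryless distribution is driven into the rejecting configuration infinitely often with a constant per-visit probability, regardless of which memoryless distribution the resolver chose. This is exactly where pre-commitment helps and where memorylessness hurts: since the resolver cannot adapt, I can tailor an alternating pattern in $w$ (e.g.\ long blocks of $x$ interspersed with the committing letters $a,b$) that repeatedly returns the resolver's token to a nondeterministic choice point with fresh randomness, so that the "wrong branch" events form an independent sequence with $\sum \epsilon = \infty$. I would also need the uniform lower bound $\epsilon$ to be independent of how often we have looped — which holds precisely because the resolver is memoryless and hence stationary — and to verify that $w$ genuinely lies in $\Lc(\Ac)$ (so that an accepting run exists and the failure is a failure of \emph{this} resolver, not of the automaton). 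Establishing independence and the constant lower bound is the delicate quantitative step; the rest is bookkeeping on the transition structure of \cref{fig:HDcoBuchinotMR}.
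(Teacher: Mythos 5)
Your overall architecture (verify HD, then rule out memoryless resolvers by case analysis on the stationary distribution, handling zero-probability transitions via language change) matches the paper, and the HD half and the degenerate cases $p\in\{0,1\}$ are fine. But the core probabilistic step is wrong for this automaton. You aim to build a word $w\in\Lc(\Ac)$ on which the resolver takes a rejecting transition with a constant lower bound $\epsilon>0$ per block, independently, and then invoke the second Borel--Cantelli lemma (\cref{lemma:secondborellcantelli}) to conclude the run is \emph{almost-surely rejecting}. That conclusion is unattainable here: in the automaton of \cref{fig:HDcoBuchinotMR}, whenever the token at the choice point avoids the rejecting transition, it escapes into the deterministic component $d_1,d_2,d_3$, which is \emph{absorbing and ultimately accepting} for every word in the language (any $w\in\Lc(\Ac)$ eventually commits to a suffix in $x^{\omega}$, $(x^*a)^{\omega}$, or $(x^+b)^{\omega}$, and on such suffixes the $d$-component sees only finitely many rejecting transitions). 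So your recurrence premise---that the word ``repeatedly returns the resolver's token to a nondeterministic choice point with fresh randomness''---fails: the token returns to $q_0$ only \emph{conditionally} on having just taken a rejecting transition; on the complementary event it is accepting forever. Consequently, $\Pr[\text{infinitely many rejecting transitions}]=\prod_j(1-\text{escape}_j)$, which is strictly less than $1$ for every word and every $p\in(0,1)$. Worse, with constant-length blocks (which is what a uniform $\epsilon$ with genuine independence would require), the escape events themselves have constant positive probability, so the second Borel--Cantelli lemma works \emph{against} you: the token escapes to the accepting component almost surely, and the resolver wins.

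The repair is to weaken the target and reverse the quantitative mechanism: you only need acceptance probability strictly below $1$, not almost-sure rejection, and this requires \emph{summable} escape probabilities rather than divergent rejection probabilities. This is the paper's route: on $w=x\,a\,x^2a\,x^3a\cdots\in(x^*a)^{\omega}\subseteq\Lc(\Ac)$, the probability of still being at $q_0$ (and hence escaping to $d_1$) at the $i$-th block decays geometrically, so the rejection probability is an infinite product of the form $\prod_{i\ge 1}(1-p^i)$, which converges to a \emph{positive} value by \cref{prop:complexanalysisConverge} precisely because $\sum_i p^i<\infty$. So the correct argument uses increasing block lengths and a convergent-product (first-Borel--Cantelli-flavoured) estimate, which is structurally the opposite of the divergent-sum/second-Borel--Cantelli argument you propose; your plan, as stated, would stall at the independence step and cannot be completed for this automaton.
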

\begin{proof}[Proof sketch]
    Consider the coB\"uchi automaton $\Cc$ shown in \cref{fig:HDcoBuchinotMR}. The automaton $\Cc$ has nondeterminism on the letter~$x$ in the initial state $q_0$. Informally, Eve, from the state $q_0$ in the HD game or the SR game, needs to `guess' whether the next sequence of letters till an $a$ or $b$ is seen form a word in $x^* a$ or in $x^+ b$. The automaton $\Cc$
    recognises the language $$L=(x+a+b)^{*} ((x)^{\omega} + (x^* a)^{\omega} + (x^+ b)^{\omega}).$$ 
    \paragraph*{$\Cc$ is HD}  If Eve's token in the HD game reaches the state $d_1,d_2$, or $d_3$, then she wins the HD game from here onwards since her transitions are deterministic. At the start of the HD game on $q_0$, or whenever she is at $q_0$ after reading $a$ or $b$ in the previous round she decides between staying at $q_0$ till an $a$ or $b$ is seen, or moving to $q_1$ on the first $x$ as follows.
    \begin{itemize}
        \item If the word read so far has a suffix in $x^{*}a$, then she stays in $q_0$ till the next $a$ or~$b$.
        \item If the word read so far has a suffix in $x^{+}b$, then she takes the transition to $q_1$ on~$x$.
        \item Otherwise, she stays in $q_0$ till the next $a$ or $b$.
    \end{itemize}
    Due to the language of $\Cc$ being the set of words which have a suffix in $x^{\omega},(x^{*}a)^{\omega},$ or $(x^+b)^{\omega}$, the above strategy guarantees that Eve's token moves on any word in $L$ in HD game to one of $d_1$ or $d_2$, from where she wins the HD game.
    \paragraph*{$\Cc$ is not MR} Note that the automaton $\Cc$ does not accept the same language if any of its transitions are deleted. Consider a memoryless resolver for $\Cc$ that takes the self-loop on $x$ on $q_0$ with probability $(1-p)$ and and the transition to $q_1$ on $x$ with probability $p$, for some $p$ satisfying $0<p<1$. We show that the runs that the resolver constructs on the word $x^2 a x^3 a x^4 a \dots$, do not visit the states $d_1,d_2,$ or $d_3$ and are rejecting with positive probability. This shows that $\Cc$ has no almost-sure memoryless resolver for $\Cc$, and is not MR.
\end{proof}

We have shown so far that each of the five classes in \cref{fig:venndiagrammmmmm} are different for coB\"uchi automata. We next show that every SR coB\"uchi automaton can be converted into a language-equivalent MA coB\"uchi automaton without any additional states.

\theoremcobuchisrtoma*
We start by fixing a SR coB\"uchi automaton $\Ac$ throughout the proof of \cref{theorem:coBuchiHDisSR}. 
We first relabel the priorities on $\Ac$ to obtain $\Cc$ as follows. Consider the graph consisting of all states of $\Ac$ and 0 priority transitions of $\Ac$. For any 0 priority transition of $\Ac$ that is not in any strongly connected component (SCC) in this graph, we change that transition to have priority~1 in $\Cc$. This relabelling of priority does not change the acceptance of any run (\cref{prop:priority-reduction}, \cref{app:succandcompCB}), and thus,~$\Cc$ is SR and language-equivalent to~$\Ac$.  

We start by introducing notions to describe a proof sketch of \cref{theorem:coBuchiHDisSR}. 
\paragraph*{Safe-approximation}
For the automaton $\Cc$, define the safe-approximation of $\Cc$, denoted $\Cc_{\safety}$ as the safety automaton constructed as follows. The automaton $\Cc_{\safety}$ has the same states as $\Cc$ and an additional rejecting sink state. The transitions of priority $0$ in $\Cc$ are preserved as the safe transitions of $\Cc_{\safety}$, and transitions of priority~$1$ in $\Cc$ are redirected to the rejecting sink state and have priority~1. 

\paragraph*{Weak-coreachability}
We call two states  $p$ and $q$ in $\Cc$ as coreachable, denoted by $p,q\in \CR(\Cc)$, if there is a finite word $u$ on which there are runs from the initial state of $\Cc$ to $p$ and $q$. We denote the transitive closure of this relation as \emph{weak-coreachability}, which we denote by $\WCR(\Cc)$. Note that weak-coreachability is an equivalence relation.

\paragraph*{SR self-coverage} 
For two  parity automata $\Bc$ and $\Bc'$, we say that $\Bc$ SR-covers $\Bc'$, denoted by $\Bc \succ_{SR} \Bc'$, if Eve has an almost-sure winning strategy in the modified SR game as follows. Eve, similar to the SR game on $\Bc$, constructs a run in $\Bc$, but 
Eve wins a play of the game if, in that play, Eve's constructed run in $\Bc$ is accepting whenever Adam's word is in $\Lc(\Bc')$.
We say that a coB\"uchi automaton $\Bc$ has \emph{SR self-coverage} if for every state $q$ there is another state $p$ that is coreachable to $q$ in $\Cc$, such that $(\Bc_{\safety},p)$ SR-covers~$(\Bc_{\safety},q)$.  

The crux of \cref{theorem:coBuchiHDisSR} is in proving the following result.
\begin{restatable}{lemma}{lemmaSRhassafeSRcoverage}\label{lemma:coBuchiSRhassafeSRcoverage}
The coB\"uchi automaton $\Cc$  has SR self-coverage.
\end{restatable}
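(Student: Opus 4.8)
The plan is to reduce safe SR-coverage to a purely combinatorial condition and then to build, for each state that fails to be covered, a single committed word that defeats the almost-sure resolver. First I would observe that randomness is useless for a resolver whose objective is safety: if a (randomised) resolver keeps the run in $\Cc_{\safety}$ safe with probability $1$ on a word $w$, then on every history of positive probability it may only select priority-$0$ transitions, since a rejecting transition taken with positive probability would send the token to the sink with positive probability; fixing one support transition per history therefore yields a pure resolver that is also safe. Consequently $(\Cc_{\safety},p)\succ_{SR}(\Cc_{\safety},q)$ is equivalent to Eve winning the finite safety game played on positions $(s,S)\in Q\times 2^{Q}$, where $s$ is Eve's token moved along priority-$0$ transitions of $\Cc_{\safety}$ and $S$ is the subset reached by the determinisation of $(\Cc_{\safety},q)$; play starts at $(p,\{q\})$ and Eve loses only when her token has no priority-$0$ transition on the current letter while $S\neq\emptyset$. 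In particular Eve can win only if $\Lc(\Cc_{\safety},q)\subseteq\Lc(\Cc_{\safety},p)$. So it suffices to produce, for each reachable $q$, a coreachable state $p$ winning this subset safety game.

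I would argue the existence of such a $p$ by contradiction, working inside the resolver-product $\Pc=\Mc\circ\Cc$ of a fixed almost-sure resolver $\Mc$. Fix a reachable $q$ (the case $\Lc(\Cc_{\safety},q)=\emptyset$ makes coverage vacuous) and a word $v$ with $q_0\xrightarrow{v}q$; every state reachable on $v$ is coreachable with $q$. Assume no coreachable state wins the subset safety game from $(p,\{q\})$. By determinacy of finite safety games, Adam has, for each coreachable $p$ and each memory state $m$ of $\Mc$, a bounded-depth forcing strategy on the memory-aware game positions $(s,m,S)$ that drives Eve's token to a configuration where it has no priority-$0$ transition while $S\neq\emptyset$. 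Restricting this forcing strategy to the \emph{feasible} moves of $\Mc$ (the positive-probability transitions at the current product state) still lets Adam reach such a configuration, and the realised play follows a single branch with probability at least $\epsilon:=\delta^{B}>0$, where $\delta$ is the least positive transition probability of $\Mc$ and $B$ bounds the game depth. Along this branch $\Pc$ is eventually left with only rejecting transitions, so it takes a rejecting transition with probability at least $\epsilon$, while the letters of the branch keep $S$ nonempty, i.e.\ a safe run from $q$ survives the whole block.

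I would then stitch these blocks into one committed infinite word $W=v\,W'\in\Sigma^{\omega}$. Two invariants are maintained: the determinised safe subset from $q$ stays nonempty along $W'$, which guarantees an infinite safe run from $q$ and hence $W\in\Lc(\Cc)$; and $W'$ is cut into infinitely many blocks, in each of which some positive-probability product configuration is forced (as above) to take a rejecting transition with conditional probability at least $\epsilon$. The (conditional) second Borel--Cantelli lemma (\cref{lemma:secondborellcantelli}) then yields that $\Pc$ takes infinitely many rejecting transitions almost surely, so $\prob_\Pc(W)=0$; but $W\in\Lc(\Cc)=\Lc(\Pc)$, contradicting that $\Mc$ is an almost-sure resolver. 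This contradiction delivers a coreachable covering state, and by \cref{lemma:random-is-pure} the pure winning strategy in the subset safety game is exactly the required SR-cover.

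The hard part is the stitching, which is forced on us by the \emph{commit-first} nature of the SR setting: because the adversary may not observe $\Pc$'s random token, $W$ must be fixed in advance, whereas each block's forcing is tailored to a specific product configuration and to a specific reference subset. The delicate points are (i) that appending a block must not destroy the surviving safe run from $q$, so the blocks have to be scheduled along a single safe run rather than restarting the reference at $\{q\}$, and (ii) that the forcing probability must be bounded below uniformly over the infinitely many blocks despite the resolver's memory. I expect to resolve these by first pumping $v$ to an idempotent so that the set of states reachable on $v$ is stable, by resetting the reference to a subset containing $q$ between blocks using the safe cycle through $q$ that the priority relabelling of \cref{prop:priority-reduction} guarantees whenever $\Lc(\Cc_{\safety},q)\neq\emptyset$, and by strengthening the no-coverage hypothesis from the subset $\{q\}$ to arbitrary nonempty safe-reachable subsets. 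The remaining bookkeeping, including taking the witness $p$ distinct from $q$ when the nondeterminism at $q$ is genuine, is routine.
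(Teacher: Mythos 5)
Your first two steps are sound and essentially match the paper's machinery: purifying randomised resolvers for safety objectives, the equivalence of safe SR-coverage with a finite subset safety game, extracting from each uncovered configuration $(p,m)$ of the resolver-product a finite word that forces a rejecting transition with probability at least some $\epsilon_{(p,m)}>0$ while keeping a safe run from $q$, and closing each block back to $q$ through the $0$-SCC guaranteed by \cref{prop:priority-reduction}. Where you diverge from the paper is at the stitching, and that is exactly where your argument has a genuine gap. The paper does \emph{not} build a single committed word by hand: it lets Adam play a \emph{randomised} strategy in the SR game, guessing Eve's current product configuration $(p,m)$ uniformly at random at every reset and then playing $\alpha_{(p,m)}$. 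This guess is correct with probability at least $1/|\Pc|$ \emph{conditionally on any history}, so every block forces a priority-$1$ transition with conditional probability at least $\epsilon/|\Pc|$, the second Borel--Cantelli lemma applies, and \cref{lemma:random-is-pure} --- whose proof rests on the fact that randomised strategies are no more powerful than pure ones for the blind player in one-player partial-observation stochastic games --- converts the randomised Adam into the contradiction with $\Mc$ being an almost-sure resolver. In other words, the derandomisation you are attempting by hand is delegated to a cited black box.

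Your deterministic stitching, as written, does not go through. First, you never say how the targeted configuration of each block is chosen at commitment time; none of your three announced repairs (pumping $v$ to an idempotent, resetting the reference subset, strengthening the hypothesis to arbitrary subsets) addresses Adam's blindness, since the forcing word for block $n$ is useful only on the event that Eve's stochastic configuration at the start of block $n$ equals the target, and that event depends on Eve's private coin flips, not on the letters. Second, even if you pick the target sensibly (e.g., the most probable configuration, which is computable because the prefix is fixed, and has probability at least $1/|\Pc|$ by pigeonhole), the events $E_n=$ ``block $n$ forces a rejecting transition'' are neither independent nor bounded below conditionally on all histories --- on histories where Eve sits at a non-targeted configuration the conditional probability can be $0$ --- so \cref{lemma:secondborellcantelli} does not apply and your claimed conclusion $\prob_\Pc(W)=0$ is unjustified. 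The fix is cheap, though: you only need $\prob_\Pc(W)<1$, since an almost-sure resolver must accept every word of the language with probability $1$, and $\Pr[\limsup_n E_n]\geq \limsup_n \Pr[E_n]\geq \epsilon\,\delta^{B}/|\Pc|>0$ follows from the unconditional bounds alone. With that replacement (or by simply adopting the paper's randomised-Adam route plus \cref{lemma:random-is-pure}) your outline becomes a correct proof.
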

\begin{proof}[Proof sketch] Fix an almost-sure resolver $\Mc$ for Eve in $\Cc$. Let~$\Pc$ be the probabilistic automaton that is the resolver-product of $\Mc$ and $\Cc$. We define $\Pc_{\safety}$ as a safety probabilistic automaton that is the safe-approximation of $\Pc$, similar to how we defined $\Cc_{\safety}$. Suppose, towards a contradiction, that there is a state $q$ in $\Cc$, such that for every state $p$ coreachable to $q$ in $\Cc$, $(\Cc_{\safety},p)$ does not SR-cover $(\Cc_{\safety},q)$. In particular, for every state $(p,m)$ in $\Pc$, where $p$ is coreachable to $q$ in $\Cc$, we have that $\Lc(\Pc_{\safety},(p,m)) \subsetneq \Lc(\Cc_{\safety},q)$. We use this to show that there is a finite word $\alpha_{(p,m)}$, on which there is a run consisting of only priority 0 transitions from $q$ to $q$ in $\Cc$, while a run $\rho$ of $(\Pc,(p,m))$ on $\alpha_{(p,m)}$ contains a priority $1$ transition with probability at least $\epsilon$ for some $\epsilon>0$.    

Adam then has a strategy in the SR game on $\Cc$ against Eve's strategy $\Mc$ as follows. Adam starts by giving a finite word~$u_q$, such that there is a run of $\Cc$ from its  initial state to~$q$. Then Adam, from this point and at each \emph{reset}, selects a state~$(p,m)$ of $\Pc$ uniformly at random, such that $p$ is coreachable to $q$ in $\Cc$ and $m$ is a memory-state in $\Mc$. He then plays the letters of the word $\alpha_{(p,m)}$ in sequence, after which he \emph{resets} to select another such state and play similarly. This results in Eve constructing a run in the SR game on $\Cc$ that contains infinitely many priority $1$ transitions almost-surely, while Adam's word is in $\Lc(\Cc)$. It follows that $\Mc$ is not an almost-sure resolver for Eve, which is a contradiction.
\end{proof}

 SR-covers is a transitive relation, i.e., if $\Ac_1,\Ac_2,\Ac_3$ are nondeterministic parity automata, such that $\Ac_1 \succ_{SR} \Ac_2$ and $\Ac_2 \succ_{SR} \Ac_3$, then $\Ac_1 \succ_{SR} \Ac_3$. The following result then follows from the definition of SR self-coverage and the fact that $\Cc$ has finitely many states.% any finite directed graph in which every vertex has outdegree 1 contains a cycle.

\begin{restatable}{lemma}{lemmacobuchisometingdbp}\label{lemma:cobuchi-srselfcoverage-implies-somtingsdbp}
    For every state $q$ in $\Cc$, there is another state $p$ weakly coreachable to $q$ in $\Cc$, such that $(\Cc_{\safety},p)$ SR-covers $(\Cc_{\safety},q)$ and $(\Cc_{\safety},p)$ SR-covers $(\Cc_{\safety},p)$.  
\end{restatable}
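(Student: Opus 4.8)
The plan is to iterate the SR self-coverage of $\Cc$ established in \cref{lemma:coBuchiSRhassafeSRcoverage} and combine it with a pigeonhole argument on the finite state set and the transitivity of $\succ_{SR}$ asserted just above the lemma statement. Concretely, fix the given state $q$ and set $q_0 = q$. Building the sequence inductively, given $q_k$ the SR self-coverage property supplies a state $q_{k+1}$ that is coreachable to $q_k$ in $\Cc$ and satisfies $(\Cc_{\safety}, q_{k+1}) \succ_{SR} (\Cc_{\safety}, q_k)$. First I would observe that, since consecutive states in this sequence are coreachable and weak coreachability is by definition the transitive closure of coreachability (hence an equivalence relation), every $q_k$ lies in the same weak-coreachability class as $q_0 = q$.

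Next, because $\Cc$ has only finitely many states, say $n$, applying the pigeonhole principle to $q_0, q_1, \dots, q_n$ yields indices $i < j$ with $q_i = q_j$. I then set $p := q_i$. By the previous observation $p$ is weakly coreachable to $q$. Feeding the chain $(\Cc_{\safety}, q_j) \succ_{SR} (\Cc_{\safety}, q_{j-1}) \succ_{SR} \dots \succ_{SR} (\Cc_{\safety}, q_i)$ into the transitivity of $\succ_{SR}$ and using $q_i = q_j$ gives $(\Cc_{\safety}, p) \succ_{SR} (\Cc_{\safety}, p)$; similarly, transitivity along $(\Cc_{\safety}, q_i) \succ_{SR} \dots \succ_{SR} (\Cc_{\safety}, q_0)$ gives $(\Cc_{\safety}, p) \succ_{SR} (\Cc_{\safety}, q)$. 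In the degenerate case $i = 0$ the second chain is empty and $p = q$, but then the conclusion reduces to the self-coverage clause already obtained via the cycle, so both required statements hold in all cases.

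The one genuinely nontrivial point, and the reason a cycle rather than a single application of self-coverage is needed, is the self-coverage clause $(\Cc_{\safety}, p) \succ_{SR} (\Cc_{\safety}, p)$: the relation $\succ_{SR}$ is \emph{not} reflexive in general, since $(\Cc_{\safety}, x) \succ_{SR} (\Cc_{\safety}, x)$ amounts to $(\Cc_{\safety}, x)$ being stochastically resolvable, which can fail. The pigeonhole-plus-transitivity argument is precisely what pins down a state that covers itself. I therefore expect the only care required to be (i) confirming that the chained applications of $\succ_{SR}$ are a legitimate use of its transitivity, and (ii) checking that coreachability of consecutive states indeed propagates to weak coreachability of $p$ with $q$ through the transitive closure; both are routine given the definitions and the stated transitivity of $\succ_{SR}$.
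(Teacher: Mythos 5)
Your proof is correct and essentially identical to the paper's: the paper builds a directed graph on the states of $\Cc$ with an edge from $q$ to a covering state $p$ (outdegree at least $1$ by SR self-coverage) and follows edges until a cycle is reached, which is exactly your iterate-and-pigeonhole construction, with the same use of transitivity of $\succ_{SR}$ along the path and the cycle. Your handling of the degenerate case $p=q$ and the propagation of coreachability to weak coreachability matches what the paper leaves implicit.
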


Note that if $(\Cc_{\safety},p)$ SR-covers $(\Cc_{\safety},p)$ then $(\Cc_{\safety},p)$ is SR. Since SR automata are semantically deterministic (\cref{lemma:SR-implies-SD}) and SD safety automata are determinisable-by-pruning (\cref{lemma:sd-safety-is-dbp}), we call  such states $p$ as \emph{safe-deterministic}. 

We will build a memoryless adversarially resolvable automaton $\Hc$, whose states are the safe-deterministic states in $\Cc$. This construction is similar to the one used by Kuperberg and Skrzypczak, in 2015, for giving a polynomial time procedure to recognise HD coB\"uchi automata~\cite[Section E.7 in the full version]{KS15}.  We fix a uniform determinisation of transitions from each safe-deterministic state in $\Cc_{\safety}$ and we add these transitions in $\Hc$ with priority 0. If there are no outgoing transitions in $\Hc$ from the state $p$ on letter $a$ so far, then we add outgoing transitions from $p$ on $a$ as follows. Let $q$ be a state in $\Cc$ such that there is a transition from $p$ to $q$ on $a$ in $\Hc$. For each state $r$ that is weakly coreachable to $q$ in $\Cc$ and that is safe-deterministic, we add a transition from $q$ to $r$ in $\Hc$ with priority $1$. This concludes our construction of $\Hc$.

We show that the strategy of Eve that chooses transitions from $\Hc$ uniformly at random is an almost-surely winning strategy for Eve in the HD game, and thus, $\Hc$ is MA (\cref{lemma:cobuchi-h-is-ma},\cref{app:succandcompCB}). Both this fact and the language equivalence of $\Hc$ to $\Cc$ primarily relies on \cref{lemma:cobuchi-srselfcoverage-implies-somtingsdbp}. Since a safety automaton is DBP if and only if that automaton is HD and every HD safety automaton can be determinised in polynomial-time~\cite{BL23quantitative},  the safe-deterministic states of $\Cc$ can be identified, and outgoing safe transitions from these states can be determinised in polynomial-time. Thus, the construction of $\Hc$ takes polynomial-time overall. This completes our proof sketch for \cref{theorem:coBuchiHDisSR}.

%the run of $(\Pc_{\safety})$ from $(p,m)$ on $\alpha_{(p,m)}$ sees a rejectin

%Thus, there is a finite word $u_{(p,m)}$ such that a run of $\Pc_{\safety}$ from $(p,m)$ on $u_{p,m}$ sees a rejecting transition with positive probability, but there is a run of $\Cc_{\safety}$ on $u_{p,m}$ from $q$ to $q'$ that contains only safe transitions. Now, $q'$ and $q$ are in the same SCC in the graph consisting of  $\Cc_{\safety}$

We next show that MA coB\"uchi automata are exponentially more succinct when compared to deterministic coB\"uchi automata, thus proving the coBu\"chi part of \cref{theorem:succ}.
\begin{corollary}\label{lemma:succinctcoBuchi}
    There is a family $L_2,L_3,L_4,\dots$ of languages such that for every $n\geq 2$, there is a memoryless adversarially resolvable automaton recognising $L_n$ that has $2n+1$ states and any deterministic coB\"uchi automaton recognising $L_n$ needs at least $\Omega(2^n/2n+1)$ states. 
\end{corollary}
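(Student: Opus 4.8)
The plan is to obtain the corollary as a direct consequence of \cref{theorem:coBuchiHDisSR} combined with the known exponential separation between history-deterministic and deterministic coB\"uchi automata due to Kuperberg and Skrzypczak~\cite{KS15}. The language family $L_n$ will be the one witnessing that separation, namely the $n$-track generalisation of the two-path language of \cref{fig:coBuchiEx}: each letter encodes one layer of a graph over $[1,n]$, and a word belongs to $L_n$ exactly when the induced infinite layered graph contains an infinite path. For this family, \cite{KS15} exhibits a history-deterministic coB\"uchi automaton $\Ac_n$ with $2n+1$ states, which guesses which track carries the eventual infinite path and resets whenever that track is broken, whereas every deterministic coB\"uchi automaton must essentially record the set of surviving tracks, forcing $\Omega(2^n/(2n+1))$ states by a fooling argument on the language itself.

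First I would record that $\Ac_n$ is stochastically resolvable. By \cref{theorem:comp}, item~3(b), every history-deterministic automaton is SR, so $\Ac_n$ is in particular an SR coB\"uchi automaton with $2n+1$ states. Next I would apply \cref{theorem:coBuchiHDisSR} to $\Ac_n$, obtaining a language-equivalent memoryless-adversarially resolvable coB\"uchi automaton $\Hc_n$ with at most $2n+1$ states; if one insists on the exact count, $\Hc_n$ may be padded with unreachable states to have exactly $2n+1$. Since $\Lc(\Hc_n)=L_n$, the automaton $\Hc_n$ is the desired succinct MA witness.

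Finally, the lower bound transfers for free: the bound $\Omega(2^n/(2n+1))$ is a property of the language $L_n$ alone and applies to every deterministic coB\"uchi automaton recognising $L_n$, independently of which upper-bound construction produced the small automaton. Combining the $(2n+1)$-state MA automaton $\Hc_n$ with this language-level lower bound yields the claimed gap. The only genuine content is already discharged by \cref{theorem:coBuchiHDisSR}: the nontrivial point is that the succinct automaton can be taken to be MA rather than merely HD, and with no increase in the number of states, while the separation and the deterministic lower bound are inherited verbatim from~\cite{KS15}. I expect no real obstacle here; the sole things to check are that the cited HD automaton indeed has $2n+1$ states and that applying \cref{theorem:coBuchiHDisSR} does not inflate this count, both of which are immediate from the statements invoked.
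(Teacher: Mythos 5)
Your proposal is correct and follows essentially the same route as the paper: take the Kuperberg--Skrzypczak family with its $(2n+1)$-state HD coB\"uchi automata and language-level deterministic lower bound, note that HD automata are SR, and apply \cref{theorem:coBuchiHDisSR} to obtain a language-equivalent MA automaton with at most as many states. The extra details you supply (the track-based description of $L_n$, padding to hit exactly $2n+1$ states) are harmless refinements of the same argument.
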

% Since 
The language family constructed in the work of Kuperberg and Skrzypczak~\cite[Theorem~1]{KS15} are accepted by $2n+1$-state HD coB\"uchi automata and are not accepted by any $\Omega(2^n/2n+1)$-state deterministic coB\"uchi automata. From \cref{theorem:coBuchiHDisSR}, since any SR---and therefore any HD---automaton has a language-equivalent MA automaton with the same number of states, \cref{lemma:succinctcoBuchi} follows. 
\subsection{B\"uchi automata}\label{subsec:sac-buchi}
Similar to coB\"uchi automata, we show that for B\"uchi automata, no two notions amongst the notions of semantic determinism, stochastic resolvability, history determinism, memoryless stochastic resolvability, and memoryless adversarial resolvability coincide. We then later show that memoryless-stochastically resolvable automata are exponentially more succinct than HD B\"uchi automata: recall that this is not the case for coB\"uchi automata (\cref{theorem:coBuchiHDisSR}).

We start by giving a SD B\"uchi automaton that is not stochastically resolvable. Consider the B\"uchi automaton as shown in \cref{fig:BuchiSDbutnotSR} below, which we show is SD but not SR.
\begin{figure}[ht]
\centering
        \begin{tikzpicture}[auto]
        \tikzset{every state/.style = {inner sep=-3pt,minimum size =15}}
    \node[state] (s1)  at (0,0) {$q_0$};
    \node[state]  (s2)  at (2,0) {};
    \node[state] (s3) at (1,0.8) {$q_a$};
    \node[state] (s4) at (1,-0.8) {$q_b$};

    \node[state] (f1)  at (-1.2,0) {};
    \node[state]  (f2)  at (-2.2,0.8) {};
    \node[state] (f3) at (-2.2,-0.8) {};

    \path[->]
        (f3) edge node [left] {$x$}  (f2)
        (f1) edge node [yshift=1mm] {$y$} (f3)
        (f2) edge node [xshift=-2mm] {$a,b$} (f1)
        (s2) edge [double,bend left = 8] node {$z$}  (s1)
          (s1) edge  node [below,xshift=2mm,yshift=2mm] {$x$} (s3)
         (s1) edge  node [above,xshift=2mm,yshift=-2mm] {$x$} (s4)
         
         (s3) edge  node [above] {$a$} (s2)
         (s4) edge  node [below] {$b$} (s2)
         
         (s2) edge [double,bend left = 8] node [below] {$z$} (s1)
         (s2) edge [bend right = 8] node [above] {$y$} (s1)
         (s3) edge  node [above] {$b$} (f1)
         (s4) edge node [below] {$a$} (f1)
         (f1) edge node [above,xshift=1mm,yshift=-0.5mm] {$z$} (s1)
;
    \path[->,every node/.style={sloped,anchor=south}]
        ;
    \end{tikzpicture}
    \caption{A semantically deterministic B\"uchi automaton that is not stochastically resolvable. The accepting transitions are double-arrowed, and the initial state is $q_0$.}
    \label{fig:BuchiSDbutnotSR}
\end{figure}
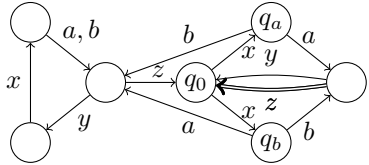 
\begin{restatable}{lemma}{buchisdnotsr}\label{lemma:buchisdnotsr}
    There is a semantically deterministic B\"uchi automaton that is not stochastically resolvable.
\end{restatable}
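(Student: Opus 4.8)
The plan is to verify semantic determinism directly and then to refute stochastic resolvability by exhibiting a single randomised strategy for Adam in the SR game (\cref{lemma:random-is-pure}) that defeats every resolver. Write $p$ for the (unlabelled) state reached from $q_a$ and $q_b$ on a \emph{matching} letter --- the source of the accepting transition $p \xrightarrow{z} q_0$ --- and $r$ for the state reached on a \emph{mismatching} letter, which is the entry to the left-hand $y$-loop. The only nondeterminism in the automaton is the pair of $x$-transitions $q_0 \to q_a$ and $q_0 \to q_b$, so $\mathcal{A}$ is SD precisely when $\mathcal{L}(q_a) = \mathcal{L}(q_b)$. Since $\mathcal{L}(q_a) = a\mathcal{L}(p) \cup b\mathcal{L}(r)$ and $\mathcal{L}(q_b) = b\mathcal{L}(p) \cup a\mathcal{L}(r)$, this reduces to the single identity $\mathcal{L}(p) = \mathcal{L}(r)$. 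First I would establish this identity by solving the language equations: every relevant run returns to $q_0$ by reading a letter of $\{y,z\}$ followed by $x$ and a letter of $\{a,b\}$, and a word is accepted from either $p$ or $r$ if and only if it has the form $c_1 x\sigma_1 c_2 x\sigma_2\cdots$ with $c_i\in\{y,z\}$, $\sigma_i\in\{a,b\}$, and infinitely many $c_i$ equal to $z$. The only asymmetry is that from $r$ a leading $y$ forces a detour through the left loop before control returns to $r$, which delays but never destroys the first accepting visit; hence $\mathcal{L}(p) = \mathcal{L}(r)$, every transition is language-preserving, and $\mathcal{A}$ is SD.

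For the negative part I would use the SR game. Adam commits in advance to the $c$-pattern of a word $W = x\sigma_0\,(c_1 x \sigma_1)(c_2 x \sigma_2)\cdots$, placing $c_i = z$ only at a sparse deterministic set of indices so that the number $g_n$ of consecutive $y$-syllables immediately preceding the $n$-th occurrence of $z$ satisfies $g_n = n$; crucially, each letter $\sigma_i \in \{a,b\}$ is drawn uniformly at random and independently, and none of Adam's choices depend on Eve's token. Because the $c$-pattern still contains infinitely many $z$'s, $W \in \mathcal{L}(\mathcal{A})$ with probability one, as witnessed by the run that always guesses the matching transition. Against this word, for an arbitrary resolver $\mathcal{M}$, I would track whether Eve's token sits at $p$ or at $r$ just before each syllable. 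Two facts drive the argument: (i) since each $\sigma_i$ is a fresh fair coin revealed only after Eve has committed her guess, every guess matches with probability exactly $\frac{1}{2}$, independently of Eve's memory and history; and (ii) once the token reaches $r$, reading a $y$-syllable keeps it at $r$ with no opportunity to guess, so a single mismatch inside a $y$-gap is absorbing for the rest of that gap.

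Combining (i) and (ii), the only accepting transitions occur when the token is at $p$ and reads $z$, and to enter the $n$-th $z$-syllable at $p$ the run must match all $g_n = n$ guesses of the preceding $y$-gap, an event of probability at most $2^{-n}$. Hence the probability that the $n$-th $z$-syllable yields an accepting transition is at most $2^{-n}$, and since $\sum_n 2^{-n} < \infty$, the first Borel--Cantelli lemma shows that almost surely only finitely many accepting transitions occur, so Eve's run is rejecting almost surely. Thus Adam's strategy beats every resolver while keeping his word in the language, so by \cref{lemma:random-is-pure} the automaton is not SR. I expect the main obstacle to be this negative part --- specifically, making precise that a mismatch inside a $y$-gap is irrecoverable (the absorption claim (ii)) and that the growing gaps defeat the independent retries a resolver would otherwise accumulate at the recurring returns to $q_0$ (the phenomenon that rescues the simpler word $(xaz)^\omega$). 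The SD verification, by contrast, is a routine fixpoint computation once the identity $\mathcal{L}(p)=\mathcal{L}(r)$ has been isolated.
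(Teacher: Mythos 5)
Your proposal is correct and takes essentially the same route as the paper's proof: the same randomised, observation-free Adam strategy that fixes the $z$-pattern with linearly growing $y$-gaps (the paper's $YZY^2ZY^3Z\cdots$) while drawing each $a/b$ letter uniformly at random after Eve commits her guess, the same per-block acceptance bound (your $2^{-n}$ versus the paper's exact $2^{-(n+1)}$, both summable), and the same conclusion via the first Borel--Cantelli lemma and \cref{lemma:random-is-pure}. The only difference is that you spell out the semantic-determinism check by reducing it to the identity $\Lc(p)=\Lc(r)$ and solving the language equations, which the paper dismisses as easy to verify.
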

\begin{proof}[Proof sketch]
    Consider the B\"uchi automaton $\Bc$ shown in \cref{fig:BuchiSDbutnotSR}. This automaton $\Bc$ has nondeterminism on the initial state $q_0$, and it recognises the language $$((x \cdot (a+b)\cdot y)^{*}(x\cdot (a+b)\cdot z))^{\omega}.$$ 
    It is easy to verify that $\Bc$ is SD. We will describe a strategy for Adam in the SR game on $\Bc$ using which he wins almost surely. This would imply, due to \cref{lemma:random-is-pure}, that $\Bc$ is not SR. Note that when Eve's token is at $q_0$ in the SR game, then Eve needs to guess whether the next letter after $x$ is going to be $a$ or $b$. If she guesses incorrectly, then her token moves to the left states---states $l_1,l_2,$ and $l_3$, where she stays until a $z$ is seen. Adam's strategy in the SR game is as follows. Let $Y$ be the regular expression $xay+xby$ and $Z$ be the regular expression $xaz+xbz$. Note that both $Y$ and $Z$ consist of two words. Adam picks a word from the set $YZY^2ZY^3ZY^4Z \dots$ in the SR game, where from each occurrence of $Y$ or $Z$, he picks one of the two words in the regular expression with half probability. We show that the probability that Eve's token takes an accepting transition on reading a word chosen randomly from $Y^{n}Z$ is $\frac{1}{2^{n+1}}$. It then follows from the Borel-Cantelli lemma (\cref{lemma:borellcantelli}, \cref{app:defs}) that the probability that Eve's token takes infinitely many accepting transitions in the SR game is 0, as desired. 
\end{proof}

We showed in \cref{lemma:reachability-MR-not-HD} that MR reachability automaton are not HD. This also shows that there are MR B\"uchi automata which are not HD. We next show the other side by showing that there are HD B\"uchi automata that are not MR.
\begin{restatable}{lemma}{HDBuchinotMR}\label{lemma:HDBuchinotMR}
    There is a history-deterministic B\"uchi automaton that is not memoryless stochastically resolvable.
\end{restatable}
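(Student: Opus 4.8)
The plan is to exhibit a single Büchi automaton $\Bc$ that is history-deterministic but admits no memoryless almost-sure resolver. Since history-deterministic Büchi automata are no more expressive than deterministic ones \cite{BKS17}, the language of $\Bc$ must be recognisable by a deterministic Büchi automaton; the nondeterminism of $\Bc$ therefore has to be genuine (not determinisable-by-pruning) yet resolvable using finite memory. I would build $\Bc$ in the spirit of the coBüchi automaton of \cref{fig:HDcoBuchinotMR} behind \cref{lemma:HDcoBuchinotMR}: there is a ``guessing'' state $q_0$ where Eve must commit to one of two tracks, the correct track being determined by recently read input rather than by the current state, together with a state in which the committed track is repeatedly re-confirmed letter-by-letter, producing an accepting (priority~$2$) transition only when a whole block of confirmations succeeds. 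The witness word against memoryless resolvers will be a staircase $u_1 u_2 u_3 \cdots$ whose $n$-th block $u_n$ forces $\ell_n$ independent correct choices with $\ell_n \to \infty$.

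To show $\Bc$ is history-deterministic, I would give Eve a finite-memory resolver that reads the relevant recent suffix of Adam's word, infers the correct track, and commits to it. Because every word in $\Lc(\Bc)$ eventually stabilises to a consistent pattern, Eve's inferred track is eventually always correct, so after finitely many mistakes she re-confirms successfully in every subsequent block and thus takes infinitely many accepting transitions; this dualises the suffix-based argument of \cref{lemma:HDcoBuchinotMR} from ``avoid rejecting transitions eventually'' to ``take accepting transitions infinitely often''. The care here is to make the accepting structure \emph{all-or-nothing} per block, so that a single incorrect choice yields no accepting transition in that block (rather than merely delaying one); otherwise a resolver could harvest infinitely many accepting transitions from partially correct blocks.

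To show $\Bc$ is not MR, fix an arbitrary memoryless resolver $\Mc$. If the confirmation choices of $\Mc$ are correct with probability $1$ at every relevant state, then $\Mc$ realises a static (deterministic) choice of track, which must be wrong on some word in $\Lc(\Bc)$ because $\Bc$ is not determinisable-by-pruning: a single state cannot record the memory that dictates the correct track. Otherwise some confirmation step is correct with probability $\theta<1$, and I would run $\Mc$ on the staircase witness: in block $n$ a play contributes an accepting transition only if all $\ell_n$ choices succeed, an event of probability at most $\theta^{\ell_n}$. Since $\sum_n \theta^{\ell_n} < \infty$, the first Borel--Cantelli lemma (\cref{lemma:borellcantelli}) gives that almost surely only finitely many blocks contribute, so the run has finitely many accepting transitions and is rejecting with positive probability. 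As the witness lies in $\Lc(\Bc)$, this shows $\Mc$ is not almost-sure, and by \cref{lemma:random-is-pure} that $\Bc$ is not MR (and hence, as every MA automaton is MR, also not MA).

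The main obstacle is the joint design of $\Bc$: history-determinism forces the language to be deterministic-Büchi and the correct resolution to depend only on bounded, past-determined information, while failure of MR forces both non-determinisability-by-pruning and the all-or-nothing block structure that makes the per-block success probabilities summable. Reconciling these — engineering the automaton so that the track needed in each block is dictated by memory the automaton itself forgets, yet each block is an independent all-or-nothing gadget of growing length — is the crux, and is the Büchi (liveness) counterpart of the coBüchi construction underlying \cref{lemma:HDcoBuchinotMR}.
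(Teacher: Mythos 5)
Your proposal correctly identifies the architecture that the paper's own proof uses: a single guessing state, a staircase witness word whose $n$-th block forces a growing number of consecutive correct choices, per-block success probability decaying geometrically, and a summability argument. Your probabilistic analysis is sound as stated---the first Borel--Cantelli lemma (\cref{lemma:borellcantelli}) needs no independence, so your route to ``rejecting with positive probability'' works; the paper instead bounds the rejection probability directly by the convergent infinite product $\prod_{i\geq 1}(1-p^i)>0$ via \cref{prop:complexanalysisConverge}, which is essentially the same estimate. Your two-case treatment of a memoryless resolver (deterministic support versus some choice taken with probability $\theta<1$) likewise mirrors the paper, which rules out $p\in\{0,1\}$ by observing that deleting either nondeterministic transition changes the language, and then handles $0<p<1$ on the witness word.

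The genuine gap is that the lemma is an existence statement and you never exhibit the automaton. Everything you supply is conditional on a gadget whose joint properties---history-determinism, failure of determinisability-by-pruning, and the all-or-nothing staircase structure---you yourself flag as ``the crux'' and leave unconstructed, so the proposal is a plan rather than a proof. The paper's proof consists precisely of this construction: the automaton of \cref{fig:HDBuchinotMR} over $\Sigma_\diamond=\{a,b,c,\diamond\}$ recognising $\left[(L_1+L_2)^*(L_1L_1+L_2L_2)\right]^\omega$ with $L_1={\Sigma_\diamond}^*c^+\diamond$ and $L_2={\Sigma_\diamond}^*a\Sigma^*b^+\diamond$, nondeterministic only at $q_0$ on the letter $a$. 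Two design points there are worth internalising, since your sketch glosses over both. First, your ``all-or-nothing per block'' requirement is met not by per-block harvesting but by placing all accepting transitions inside two deterministic components reachable only through a correct guess: a wrong guess yields no accepting transition and simply returns the token to the guessing state, while a single correct guess suffices forever---which is also exactly what makes the HD direction go through (Eve's finite-memory strategy tracks whether the longest suffix lies in $(L_1+L_2)^*L_1$ or $(L_1+L_2)^*L_2$, and the paper's decomposition argument shows that after at most one failed attempt her updated suffix information produces a correct guess). Your weaker claim that Eve is merely ``eventually always correct'' would not by itself give infinitely many accepting transitions in a Büchi automaton without such an escape-once structure. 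Second, the concrete witness against $\Mc_p$ is $w=ac\diamond\, a^2c\diamond\, a^3c\diamond\cdots\in (L_1)^{\omega}$: the run escapes the guessing component during the $k$-th block with probability $p^k$, so it remains there forever, and hence rejects, with probability $\prod_{i\geq 1}(1-p^i)>0$. Until an automaton with these verified properties is written down, the argument does not establish the lemma.
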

\begin{proof}[Proof sketch]
    Consider the automaton in \cref{fig:HDBuchinotMR}. We will only define the language here, but remark that the proof that this language is not MR is similar to that the proof of \cref{lemma:HDcoBuchinotMR}. 
    Let  $\Sigma_\diamond = \{a,b,c,\diamond\}$ and $\Sigma = \{a,b,c\}$. Let $L_1$ and $L_2$ be languages of finite words over the alphabet $\Sigma_\diamond$ where $L_1 =   {\Sigma_\diamond}^*  c^+\diamond $ and $L_2 = {\Sigma_\diamond}^* a \Sigma^* b^+\diamond$. 
    The automaton $\Ac$ accepts the language $\left[(L_1+L_2)^*(L_1L_1+L_2L_2)\right]^\omega$. Equivalently, it accepts words in $(L_1+L_2)^\omega$ that are not in  $(L_1+L_2)^*(L_1L_2)^\omega$.  
\end{proof}
\begin{figure}[ht]
\centering
        \begin{tikzpicture}
        \tikzset{every state/.style = {inner sep=-3pt,minimum size =15}}

    \node[state] (q0) at (-1,1.5) {};
    \node[state] (q1) at (-2.5,1.5) {};
    \node[state] (q2) at (2,1.5) {};
    \node[state] (q3) at (0.5,1.5) {};
    \path[->] (-0.7,1) edge (q0);
    
    \node[state, fill=blue!40, blue!40] (r0) at (2,0.2) {$q_B$};
    \node[state, blue] (r1) at (2,-1.3) {};
    \node[state, blue] (r2) at (3.5,-1.3) {};
    \node[state, blue] (r3) at (3.5,0.2) {};    
    \node[state, blue] (r4) at (5.25,-0.55) {};    

    \node[state, fill=red!40, red!40] (l0) at (-3,0.2) {$q_R$};
    \node[state, red] (l1) at (-3,-1.3) {};
    \node[state, red] (l2) at (-1.5,-1.3) {};
    \node[state, red] (l3) at (-1.5,0.2) {};    
    \node[state, red] (l4) at (0.25,-0.55) {};    

    \node[state, fill=red!40, red!40] (l5) at (-2.25,-2.1) {$q_R$};    
    \node[state, fill=blue!40, blue!40] (l6) at (0.25,-1.8) {$q_B$};  
    
    \node[state, fill=red!40, red!40] (r5) at (2.75,-2.1) {$q_R$};    
    \node[state, fill=blue!40, blue!40] (r6) at (5.25,-1.8) {$q_B$};    
    \node (l52) at (-2.25,-2.1) {$q_R$};    
    \node (l62) at (0.25,-1.8) {$q_B$};  
    
    \node (r51) at (2.75,-2.1) {$q_R$};    
    \node (r61) at (5.25,-1.8) {$q_B$};    
    \node (l01) at (-3,0.2) {$q_R$};
    \node (r01) at (2,0.2) {$q_B$};
    
    \path[-stealth]
%    (-0.5,-0.5) edge (q0)
    (q0) edge [loop above] node [right] {$a,b,\diamond$} (q0)
    (q0) edge [bend left = 8] node [above] {$a$} (q3)
    (q3) edge [bend left = 8] node [below] {$\diamond$} (q0)
    (q0) edge [bend left = 8] node [below] {$c$} (q1)
    (q1) edge [bend left = 8] node [above] {$b,a$} (q0)
    (q3) edge [bend right = 8] node [below] {$b$} (q2)
    (q2) edge [bend right = 8] node [above] {$c,a$} (q3)
    (q3) edge [loop above] node [right] {$a,c$} (q3)
    (q1) edge [loop above] node [left] {$c$} (q1)
    (q2) edge [loop right] node [right] {$b$} (q2)

    (l0) edge [loop above] node [above] {$b,\diamond$} (l0)
    (l0) edge [bend left = 8] node [above] {$a$} (l3)
    (l3) edge [bend left = 8] node [below] {$\diamond$} (l0)
    (l1) edge  node [below, left] {$a$} (l3)
    (l0) edge  node [left] {$c$} (l1)
    (l3) edge [bend right = 8] node [left] {$c$} (l2)
    (l2) edge [bend right = 8] node [right] {$a$} (l3)
    (l3) edge [bend right = 8] node [below=3pt,left] {$b$} (l4)
    (l4) edge [bend right = 8] node [above=3pt,right] {$a$} (l3)
    (l4) edge [bend right = 8] node [above=0.5pt] {$c$} (l2)
    (l2) edge [bend right = 8] node [below=0.5pt] {$b$} (l4)
    
    (l3) edge [in=30,out=60,loop] node [right] {$a$} (l3)
    (l1) edge [in=240,out=270,loop] node [below] {$c$} (l1)
    (l2) edge [in=-30,out=-60,loop]  node [right] {$c$} (l2)
    (l4) edge [loop above] node [above] {$b$} (l4)

    (r0) edge [in=165,out=195,loop] node [left] {$b,\diamond$} (r0)
    (r0) edge [bend left = 8] node [above] {$a$} (r3)
    (r3) edge [bend left = 8] node [below] {$\diamond$} (r0)
    (r1) edge  node [below, left] {$a$} (r3)
    (r0) edge  node [left] {$c$} (r1)
    (r3) edge [bend right = 8] node [left] {$c$} (r2)
    (r2) edge [bend right = 8] node [right] {$a$} (r3)
    (r3) edge [bend right = 8] node [below=3pt,left] {$b$} (r4)
    (r4) edge [bend right = 8] node [above=3pt, right] {$a$} (r3)
    (r2) edge [bend right = 8] node [below=0.5pt] {$b$} (r4)
    (r4) edge [bend right = 8] node [above=0.5pt] {$c$} (r2)

    (r3) edge [in=30,out=60,loop]  node [above] {$a$} (r3)
    (r1) edge [loop left] node [left] {$c$} (r1)
    (r2) edge [in=-30,out=-60,loop] node [right] {$c$} (r2)
    (r4) edge [loop above] node [above] {$b$} (r4)

    (q1) edge [in=50,out=-90] node [right] {$\diamond$} (l0)
    (q2) edge node [right] {$\diamond$} (r0)

    (l1) edge[ double] node [left] {$\diamond$} (l5)
    (l2) edge[ double] node [right] {$\diamond$} (l5)
    (l4) edge node [left] {$\diamond$} (l6)
    %(l4) edge [out=60,in=180] node [below] {$\diamond$} (r0)

    (r1) edge node [left] {$\diamond$} (r5)
    (r2) edge node [right] {$\diamond$} (r5)
    (r4) edge [double] node [left] {$\diamond$} (r6)
;
    \end{tikzpicture}
\caption{A HD B\"uchi automaton that is not MR. The accepting transitions are represented by double arrows. All red-filled states ($q_R$) are identified as the same state, and all blue-filled states ($q_B$) are identified as the same state.}\label{fig:HDBuchinotMR}
\end{figure}
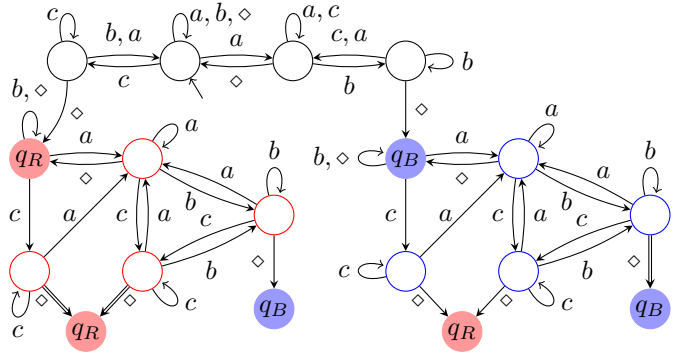
We now show that SR B\"uchi automata  are exponentially more succinct than HD B\"uchi automata. 
\begin{restatable}{lemma}{succinctBuchi}\label{lemma:succinctBuchi}
    There is a family $L_2,L_3,L_4,\dots$ of languages such that for every $n\geq 2$, there is a memoryless stochastically resolvable automaton recognising $L_n$  that has $3n+3$ states and any HD B\"uchi automaton recognising $L_n$ needs at least $2^n$ states. 
\end{restatable}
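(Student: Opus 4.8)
The plan is to exhibit a concrete family of languages $L_n$ together with a small memoryless-stochastically-resolvable B\"uchi automaton, and then prove a matching exponential lower bound for history-deterministic B\"uchi automata. For the construction, I would design $L_n$ over an alphabet of size roughly $n$ so that the automaton must, at certain checkpoints, commit to one of $n$ ``tracks'' while reading a word, where each track corresponds to a position or a subset of a counter of size $n$. The crucial feature is that membership should only require that \emph{some} track is followed successfully infinitely often, so that a uniform random guess among the $n$ outgoing transitions lands on the correct track with probability $\tfrac{1}{n}$ at each checkpoint; by a Borel--Cantelli argument (the second lemma, \cref{lemma:secondborellcantelli}) the correct choice is made infinitely often almost surely, yielding infinitely many accepting transitions. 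I would make the automaton semantically deterministic so that a wrong guess merely routes the run harmlessly back to a hub state rather than out of the language, exactly as in the examples of \cref{fig:ReachSRbutnotHD,fig:coBuchiEx}. Counting states, the hub plus the $n$ tracks (each a short gadget of constant size) gives the claimed bound of $3n+3$ states.

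First I would carefully define $L_n$ and verify $\Lc(\Ac_n) = L_n$ for the $(3n+3)$-state automaton $\Ac_n$, checking completeness and the acceptance condition on both inclusions. Second, I would prove that the memoryless resolver choosing each outgoing transition at the hub uniformly at random is an almost-sure resolver: fix $w \in L_n$, identify the witnessing track, and show the run returns to the hub infinitely often so that infinitely many independent (or at least lower-bounded-probability) guesses are made, each hitting the correct track with probability bounded below by a constant; the second Borel--Cantelli lemma then guarantees infinitely many accepting transitions almost surely. This establishes the upper bound in the statement.

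The more delicate half, and the main obstacle, is the lower bound: showing every HD B\"uchi automaton for $L_n$ needs at least $2^n$ states. The natural strategy is an adversary/fooling argument against Eve's resolver in the HD game. I would identify $2^n$ distinguishing ``profiles'' (for instance, subsets of $[0,n-1]$ encoded by finite words $u_S$) such that for two distinct profiles $S \neq T$ there is a continuation that is accepted after $u_S$ but, when Eve's token is forced into the state reached after committing to $T$, leads to a rejecting or non-accepting run. Because an HD resolver must commit on-the-fly without seeing the future, and because Adam in the HD game may choose letters adversarially based on Eve's token position, I would argue that two profiles cannot safely share a state: if the resolver's run reached the same state on two words realising incompatible profiles, Adam could pick a continuation good for one profile but catastrophic given the run's current state, contradicting that Eve wins the HD game. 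This forces at least $2^n$ distinct reachable states. The subtlety I expect to fight with is making the fooling set genuinely robust against \emph{finite-memory} HD strategies rather than only against deterministic automata, since history-determinism allows memory; I would address this by phrasing the argument purely in terms of the states of Eve's run in the HD game (which determines acceptance) and invoking the adversarial power of Adam, so that the memory of the resolver cannot rescue a collision of run-states.
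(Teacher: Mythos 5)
Your upper-bound half is, in essence, the paper's own argument: the paper instantiates it with the Abu Radi--Kupferman family $L_n$ over $\{1,\dots,n,\$,\#\}$ (words $\$w_0\#i_0\$w_1\#i_1\dots$ with infinitely many blocks where $i_j$ occurs in $w_j$), builds exactly your hub-plus-$n$-tracks automaton with $3n+3$ states, and applies the second Borel--Cantelli lemma (\cref{lemma:secondborellcantelli}) to the uniform memoryless resolver, just as you describe. Note, however, that the paper does \emph{not} prove the lower bound at all: it chooses this particular family precisely because the $2^n$ lower bound against HD B\"uchi automata is already established in \cite[Theorem 5]{AK23}, and it simply cites that result.

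The genuine gap is therefore in your lower-bound plan, and it fails for a structural reason tied to your own upper bound. For the uniform random memoryless resolver to work, a wrong guess must be harmless, i.e., membership in $L_n$ must be a tail (prefix-independent) property: botching any finite number of blocks never changes membership. But then all your profile words $u_S$ have the \emph{same} residual: for every continuation $w$, $u_S w \in L_n$ iff $u_T w \in L_n$. The only fact a state collision in an HD automaton yields is that Eve's run state recognises the residual, $\Lc(\Ac,q) = u_S^{-1}L_n = u_T^{-1}L_n$, which is vacuous here, so your fooling set collapses. Likewise there is no single continuation that is ``accepted after $u_S$ but catastrophic given the run's current state'': B\"uchi acceptance tolerates any finite number of wasted blocks, so Eve simply recovers at the next $\$$. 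A correct proof must instead show that Adam can, adaptively and \emph{infinitely often}, force Eve's run to miss accepting transitions while keeping his word in the language --- a limit argument over the whole play, further complicated by Eve's finite-but-unbounded memory: pigeonholing $2^n$ profiles into fewer states identifies automaton states but not memory states, and Eve's future behaviour from a shared state may differ per memory, which your remedy of ``phrasing the argument in terms of run states'' does not by itself resolve. This is exactly the nontrivial content of \cite[Theorem 5]{AK23}. (An alternative route --- quadratic determinisation of HD B\"uchi automata \cite{KS15} plus a deterministic lower bound --- would also require a non-residual argument, since residuals are equally uninformative for deterministic automata on tail languages.)
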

To prove this result, we will use the family of languages of Abu Radi and Kupferman to show the exponential succinctness of SD B\"uchi automata~\cite[Theorem 5]{AK23}. For each $n\geq 2$, consider the alphabet $\Sigma_n=\{1,2,\dots,n,\$,\#\}$, and let us denote the set $\{1,2,\dots,n\}$ by $[n]$. 
Define the language $L_n=\{\$w_0\#i_0\$w_1\#i_1\$w_2\#i_2\dots \mid \text{~there are}$ $\text{ infinitely many indices $j$ such that $i_j$ appears in $w_j \in [n]^{\omega}$}\}$. The automaton in \cref{fig:succinctSRBuchi}, which has $3n+3$ states (with a missing rejecting sink state), accepts this language. This automaton is MR, where the memoryless resolver chooses uniformly at random one of the outgoing transitions on $\$$ at state $q_0$ is an almost-sure resolver for it. For any word $w$ in $L_n$, there are infinitely many occurrences of $i$-good words for some $i \in [1,n]$, and therefore, by the second Borel-Cantelli Lemma (\cref{lemma:secondborellcantelli}), there are infinitely many positions at which the resolver chooses the transition to $s_i$ while reading a $\$$ right before an $i$-good word from $q_0$, and then visits an accepting transition. 
\begin{figure}[ht]
\centering
  
        \begin{tikzpicture}
        \tikzset{every state/.style = {inner sep=-3pt,minimum size =15}}

    \node[state,initial,initial text=] (q0)  at (-2.5,1.5) {$q_0$};
    \node[state] (r)  at (-2.5,3) {$r$};    
    % \node[draw, rectangle, minimum width=2.5cm, minimum height=1cm] (box1)  at (0,-1) {$\Ac_3$};
    % \node[draw, rectangle, minimum width=2.5cm, minimum height=1cm]  (box2)  at (0,1.5) {$\Ac_2$};
    % \node[draw, rectangle, minimum width=2.5cm, minimum height=1cm] (box3) at (0,3) {$\Ac_1$};

    \node[state] (s1)  at (-0.7,-0.7) {$s_n$};
    \node[state]  (s2)  at (-0.7,1.5) {$s_2$};
    \node[state] (s3) at (-0.7,3) {$s_1$};

    \node[state] (m1)  at (0.4,-0.7) {$m_n$};
    \node[state]  (m2)  at (0.4,1.5) {$m_2$};
    \node[state] (m3) at (0.4,3) {$m_1$};

    \node[state] (f1)  at (1.4,-0.7) {$f_n$};
    \node[state]  (f2)  at (1.4,1.5) {$f_2$};
    \node[state] (f3) at (1.4,3) {$f_1$};

    \node (d)  at (0,0.9) {$\vdots$};

    \node[state] (fin) at (3,1.5) {$q_0$};

    \path[->]
    (s3) edge node [above] {$\#$}  (r)
    (r) edge node [left] {$[n]$}  (q0);

        \path[->]
    (s1) edge[loop above] node [above] {$[n]\setminus\{i\}$}  (s1)
        (s1) edge node [above] {$i$}  (m1)
        (m1) edge [loop above] node [above] {$[n]$}  (m1)
        (m1) edge  node [above] {$\#$} (f1)
        ;
            \path[->]
    (s2) edge[loop above] node [above] {$[n]\setminus\{i\}$}  (s2)
        (s2) edge node [above] {$i$}  (m2)
        (m2) edge [loop above] node [above] {$[n]$}  (m2)
        (m2) edge  node [above] {$\#$} (f2)
        ;
            \path[->]
    (s3) edge[loop above] node [above] {$[n]\setminus\{i\}$}  (s3)
        (s3) edge node [above] {$i$}  (m3)
        (m3) edge [loop above] node [above] {$[n]$}  (m3)
        (m3) edge  node [above] {$\#$} (f3)
        ;
        \path[->]
        (q0) edge node [below] {$\$$} (s1)
        (q0) edge node [below] {$\$$} (s2)
        (q0) edge node [above] {$\$$} (s3)

        % (s1) edge node [below] {$\$$} (m1)
        % (s2) edge node [below] {$\$$} (m2)
        % (s3) edge node [below] {$\$$} (m3)

        % (s1) edge[bend right = 8] node [above, xshift=1.5mm] {$[n]$} (q0)
        % (s2) edge[bend right = 8] node [above, xshift=0.5mm,yshift=-1mm] {$[n]$} (q0)
        % (s3) edge[bend right = 8] node [above,xshift=-1mm] {$[n]$} (q0)

        (f1) edge[bend right =15, double] node [left] {$n$} (fin)
        (f1) edge[bend right = 30] node [right] {$[n]\setminus \{n\}$} (fin)
        (f2) edge[double] node [above] {$2$} (fin)
        (f2) edge[bend right =30] node [below] {$[n]\setminus \{2\}$} (fin)
        (f3) edge[double, bend left = 8] node [below, xshift=-2mm,yshift=+2mm] {$1$} (fin)
        (f3) edge[bend left = 30] node [right] {$[n]\setminus \{1\}$} (fin);
    \end{tikzpicture}
\caption{An MR B\"uchi automaton that is exponentially more succinct than any HD automaton accepting the same language. Both $q_0$s are the identified as the same state. Additionally, transitions are added from all states $s_i$ on $\#$ to state $r$. Other missing transitions go to a sink state (not pictured).}\label{fig:succinctSRBuchi}
\end{figure}
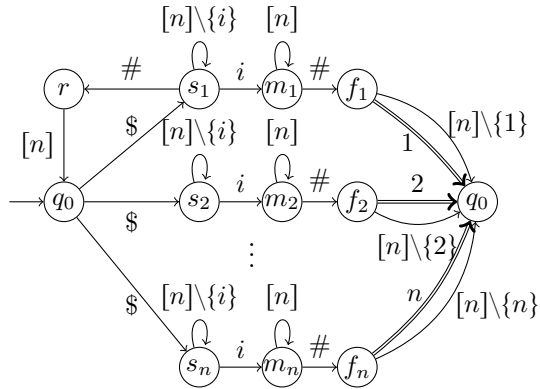 
\section{Expressivity}\label{sec:pih}
In \cref{sec:succandcomp}, we compared the novel classes of nondeterministic automata we defined with each other and the existing notions of semantic determinism and history determinism, focusing on the questions of succinctness and how they coincide. In this section, we compare these notions in terms of expressivity. We show that, similar to history-determinism, stochastically resolvable $[i,j]$-parity automata are as expressive as deterministic $[i,j]$-parity  automata.

\begin{restatable}{theorem}{pih}\label{theorem:pih}
Stochastically resolvable $[i,j]$-parity automata recognise the same  languages as deterministic $[i,j]$-parity automata.
\end{restatable}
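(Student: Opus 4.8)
The statement asserts the equality of two language classes, so the plan is to prove both inclusions. The inclusion of deterministic $[i,j]$-parity languages into stochastically resolvable ones is immediate: a deterministic automaton is its own almost-sure resolver (the memoryless resolver that puts all probability on the unique transition), so every deterministic $[i,j]$-parity automaton is already SR of index $[i,j]$. All the work is in the reverse inclusion, for which I would start from an SR $[i,j]$-parity automaton $\Ac$ with almost-sure resolver $\Mc$ and build a language-equivalent deterministic $[i,j]$-parity automaton.

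First I would reduce to a well-structured probabilistic object. By \cref{lemma:SR-implies-SD} one may replace $\Ac$ by a language-equivalent semantically deterministic subautomaton of the same index, so assume $\Ac$ is SD. Form the resolver-product $\Pc=\Mc\circ\Ac$, a probabilistic $[i,j]$-parity automaton with $\Lc(\Pc)=\Lc(\Ac)=L$. This $\Pc$ enjoys a $0$--$1$ law: for $w\in L$ we have $\prob_\Pc(w)=1$ by definition of an almost-sure resolver, while for $w\notin L$ we have $\prob_\Pc(w)=0$, since any accepting run of $\Pc$ projects (keeping priorities) to an accepting run of $\Ac$, and $\Ac$ has no accepting run on $w\notin\Lc(\Ac)$.

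The crucial qualitative observation is that almost-sure acceptance of a parity condition in the layered Markov chain $M_w$ depends only on the support graph of $\Pc$, i.e.\ on the set of positive-probability transitions, because all transition probabilities come from a fixed finite set of positive reals. So I would determinise by a subset construction: the deterministic automaton $\Dc$ reads $w$ and maintains the support $S_k\subseteq Q\times M$ of states reachable with positive probability after the length-$k$ prefix. Since $\Ac$ is SD, every state of a support recognises the same residual $u^{-1}L$, so the supports refine the residual structure of $L$ and $\Dc$ has the transition behaviour of a deterministic recogniser of $L$; what remains is to equip $\Dc$ with a $[i,j]$-parity acceptance condition. Here I would use the recurrence structure of $M_w$: with probability $1$ a run settles into the recurrent part of the support, and the largest priority occurring infinitely often on that recurrent part is, almost surely, the largest priority seen infinitely often along the run. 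Hence $w\in L$ iff this recurrent top priority is even, and since it lies in $[i,j]$ this is a $[i,j]$-parity condition in principle.

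The main obstacle is precisely to realise this recurrent-top-priority condition as a $[i,j]$-parity labelling computed on the fly by the deterministic subset automaton. The naive route --- viewing the support automaton as nondeterministic and applying a general parity determinisation --- blows up the index, and simply emitting each transition's own priority fails because transient high priorities mislead. The point is that the $0$--$1$ (qualitative) nature of the problem lets one avoid this: one never needs to disambiguate individual runs, only to detect the priority that recurs on the typical run, and I expect this to be implementable with index-preserving bookkeeping over $[i,j]$, exploiting that the supports track residuals. As a cross-check and alternative, one can instead argue that $\Dc$ (or a light variant of it) is history-deterministic of index $[i,j]$ and invoke the known collapse of the HD parity-index hierarchy to the deterministic one~\cite{BKS17}; either way the reverse inclusion follows, and combining both inclusions with the classical strictness of the deterministic parity-index hierarchy yields the strictness advertised for SR automata.
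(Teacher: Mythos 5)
Your forward inclusion is fine, but the reverse inclusion rests on two claims that are not established, and the second is fatal. First, the ``crucial qualitative observation'' --- that almost-sure acceptance in the layered chain $M_w$ depends only on the support graph because probabilities come from a fixed finite set of positive reals --- is unjustified. Acceptance of a parity condition is a tail event in an \emph{infinite}, non-homogeneous Markov chain, and for probabilistic $\omega$-automata qualitative behaviour is in general \emph{not} support-determined: already for probabilistic B\"uchi automata the recognised language can change when positive transition probabilities are perturbed~\cite{BGB12}, and your phrase ``the recurrent part of the support'' has no meaning for input words that are not ultimately periodic. Note also that acceptance is not even a function of the support sequence alone (a one-state deterministic automaton with an even-priority $a$-loop and an odd-priority $b$-loop has constant support sequence but accepts $a^\omega$ and rejects $b^\omega$), so the subset automaton must carry priority bookkeeping --- which is exactly where the difficulty lives. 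Second and decisively, the step you yourself flag as ``the main obstacle'' --- equipping the support automaton with an index-preserving $[i,j]$-parity labelling --- \emph{is} the content of the theorem, and you leave it at ``I expect this to be implementable.'' Index-preserving determinisation is impossible for general nondeterministic $[i,j]$-parity automata, so any such labelling must exploit stochastic resolvability in some concrete way beyond the $0$--$1$ law of the product; neither your construction nor your fallback (asserting that a variant of $\Dc$ is history-deterministic of index $[i,j]$ and citing~\cite{BKS17}) is substantiated, and that HD claim is itself a nontrivial theorem, since SR and HD automata are incomparable classes.

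For contrast, the paper never determinises: it argues by contradiction. It first reduces SR to MR of the same index by taking the underlying nondeterministic automaton of the resolver product. Then, if $L$ is recognised by an MR $[i,j]$-parity automaton but by no deterministic one, the flower lemma of Niwi\'nski and Walukiewicz~\cite{NW98} yields an $[i+1,j+1]$-flower in a deterministic automaton for $L$, and \cref{lemma:index-hierarchy-reduction-to-parity-language} transfers this to an MR automaton recognising the $[i+1,j+1]$-parity language whose priorities are a subset of those of the given automaton. Finally, \cref{lemma:SR-and-parity-languages} --- an inductive probabilistic pumping argument that builds finite words $u_0,u_1,\dots$ forcing, from every state and with probability bounded away from zero, transitions of alternating parity and strictly growing priority --- shows that any MR automaton for the $[\alpha,\beta]$-parity language must contain all priorities in some interval $[2m+\alpha,2m+\beta]$, which cannot fit inside $[i,j]$, a contradiction. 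That lemma supplies precisely the quantitative control over where high priorities can recur that your proposed ``index-preserving bookkeeping'' would have to reproduce; without an argument at that level of detail, the proposal does not close.
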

Since deterministic automata are trivially SR, one direction is clear. For the other direction, we will show that any $\omega$-regular language that is not recognised by any deterministic $[i,i+d]$-parity automaton  cannot be recognised by any SR $[i,i+d]$-parity automaton. To show this, we will consider the language $L_{[i+1,i+d]}$ of the $[i+1,i+d+1]$-parity condition, which is the set of infinite words over the alphabet $[i+1,i+d+1]$ in which the highest number occurring infinitely often is even. We show that there is no SR $[i,i+d]$-parity automaton recognising language $L_{[i+1,i+d+1]}$. 

To prove that no SR $[i,i+d]$-parity automaton recognising language $L_{[i+1,i+d+1]}$, consider any SR automaton $\Ac$ that recognises the language $L_{[i+1,i+d+1]}$ and has an almost-sure resolver $\sigma$. We inductively construct words $u_0,u_1,\cdots,u_d$, such that from every state $q$, a run from $q$ on the word $u_k$ in automaton $\Ac$ constructed using $\sigma$ contains a transition with priority at least $(i+k+1)$ with positive probability. This part of the proof is nontrivial and requires careful analysis of probabilities. Once we have proved this result inductively, we obtain that $\Ac$ has at least as many priorities as in the interval $[i+1,i+d+1]$, and in particular, is not an $[i,i+d]$-parity automaton, as desired.
%We remark that the proof has a similar flavour to the classical proof of Wagner from 1979 that showed the strictness of the parity index hierarchy~\cite{Wag79}. 

\section{Complexity of recognition}\label{sec:complexity}
We now turn our attention to the computational complexity for the problems of deciding if a given automaton is MR, SR, or MA, respectively. The exact complexity of deciding if a given automaton is HD is an open problem since 2006, with only recent results showing that the problem is $\NP$-hard~\cite{Pra24a} and in $\PSPACE$~\cite{LP25}. 

We discuss the key results of this section. Firstly, unlike for history-determinism where the problem of deciding if the automaton is HD has a  complexity gap for parity automata, we show that the problem for memoryless adversarial resolvability is $\NP$-complete.
\begin{restatable}{theorem}{theoremmanpcomplete}\label{theorem:manpcomplete}
    The problem of deciding if a given parity automaton is memoryless-adversarially resolvable is $\NP$-complete. 
\end{restatable}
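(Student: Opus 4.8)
The plan is to prove $\NP$-completeness in the two usual halves: membership in $\NP$ (guess-and-verify a memoryless resolver) and $\NP$-hardness (reduce from a known $\NP$-complete problem). For membership, the natural certificate for ``$\Ac$ is MA'' is the \emph{support} of a memoryless resolver, i.e.\ for each state $q$ and letter $a$ a nonempty subset $S_{q,a}\subseteq\Delta_{q,a}$ of transitions to which Eve assigns positive probability. By the remark following the MA definition in the excerpt (namely \cref{lemma:indifferent2probabilities}, that only the support matters for almost-sure winning), the exact probabilities are irrelevant, so a resolver is determined up to the winning question by its support, and the support has size polynomial in $\Ac$. The guessed certificate is therefore polynomial-sized; what remains is to verify in polynomial time that the memoryless resolver with this support is almost-surely winning in the HD game on~$\Ac$.

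The verification step is where the real work lies, and I would route it exactly as the excerpt foreshadows. Fixing the support of Eve's memoryless resolver, the question ``does Eve win the HD game almost-surely with this resolver'' should be recast as a property of a finite Markov-decision-like structure: compose the (memoryless, support-fixed) resolver with $\Ac$ and let Adam range over letter choices. Because Eve's memory is trivial and her randomization is fixed up to support, Adam's adversarial letter-giving makes this an MDP in which Adam is the controller. Eve wins almost-surely iff, against \emph{every} Adam strategy that feeds a word in $\Lc(\Ac)$, the resulting run is accepting with probability~$1$; the condition ``Adam's word is in $\Lc(\Ac)$'' has to be tracked by running the adversary against a deterministic monitor for $\Lc(\Ac)$, or equivalently by the standard product-with-the-powerset trick used in HD-game and two-token analyses. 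The acceptance requirement on the run is a parity (hence Muller) objective, so the core computational question becomes solving a Markov decision process whose winning objective is a Muller condition. Here I would invoke the paper's own tool, \cref{thm:ZlkDAGMDP}, which solves such MDPs with a Zielonka-DAG-presented Muller objective in polynomial time; this is precisely the ingredient the authors advertise in the introduction as the engine behind the upper bound. Assembling these gives a polynomial-time verifier and hence the $\NP$ upper bound.

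For $\NP$-hardness I would reduce from a known $\NP$-hard problem in this area; the cleanest source is the $\NP$-hardness of deciding history-determinism for parity automata due to \cite{Pra24a}, which already produces families of automata in which resolving the nondeterminism requires committing to one of exponentially many consistent choices, captured by a combinatorial satisfiability-style gadget. The plan is to take (or adapt) that construction and argue that on the reduced instances ``HD'' and ``MA'' coincide, so that the same reduction certifies $\NP$-hardness of MA-membership. The delicate point is that MA is in general strictly smaller than HD (by \cref{theorem:comp}), so I cannot quote HD-hardness as a black box; instead I must either exhibit a reduction whose image lands in the regime where a \emph{memoryless} resolver suffices exactly when a strategy exists, or design a fresh gadget in which Eve's winning strategy is inherently positional. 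Concretely, I would aim for gadgets where each nondeterministic choice corresponds to setting a Boolean variable once and for all, so that any winning strategy can be taken memoryless, making the existence of a memoryless resolver equivalent to satisfiability of the underlying formula.

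\textbf{The main obstacle} I anticipate is the verification reduction in the upper bound: correctly encoding the joint constraint ``Adam stays in $\Lc(\Ac)$ \emph{and} Eve's fixed-support random run must accept with probability~$1$'' as a single Muller-objective MDP that is faithful (no spurious Adam strategies, no loss of almost-sure failures) and is presented compactly enough that \cref{thm:ZlkDAGMDP} applies in polynomial time. In particular I must be careful that the Muller objective capturing ``run accepting \emph{whenever} word is in the language'' is expressible via a polynomial-size Zielonka DAG over the product state space, rather than an exponential explicit Muller table; getting this succinct presentation right, and verifying that almost-sure winning in the HD game is exactly almost-sure reachability of the good Muller components in this MDP, is the crux on which the whole $\NP$ bound rests.
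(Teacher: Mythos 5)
Your lower-bound plan matches the paper's: it reuses the reduction of \cite{Pra24a} from good 2-D parity games and observes that on the image of the reduction a winning strategy for Eve can be taken positional and pure, so that history-determinism, memoryless-adversarial resolvability, and determinisability-by-pruning all coincide there---exactly the delicate point you flag, resolved exactly as you propose. Your certificate for the upper bound (the support of a memoryless resolver, with \cref{lemma:indifferent2probabilities} justifying that only the support matters) also matches the paper's guess of a subautomaton $\Bc$.

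The gap is in the verification step, and it sits precisely where you place your ``main obstacle'': you never resolve it, and the route you sketch would fail. Tracking the condition ``Adam's word is in $\Lc(\Ac)$'' with a deterministic monitor requires determinising a nondeterministic parity automaton, which incurs an exponential blow-up, so the resulting MDP is exponential in $\Ac$ and the verifier is no longer polynomial-time; and the ``powerset trick'' is not a substitute, since a subset construction tracks reachability but not B\"uchi or parity acceptance, so it cannot monitor membership in $\Lc(\Ac)$ for general parity automata. The paper's key idea is to eliminate the language-membership constraint altogether via the 2-token game: by \cref{prop:rigid2Token}, $\Ac$ is MA if and only if there is a subautomaton $\Bc$ such that Eve, randomising uniformly among the transitions of $\Bc$, almost-surely wins $G2(\Bc;\Ac)$, where ``Adam's word is in the language'' is replaced by Adam constructing two runs of $\Ac$ of which at least one must be accepting. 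The correctness of this replacement is genuinely nontrivial: one direction rests on the 2-token theorem of Lehtinen and Prakash \cite{LP25} (if $\Ac$ is not HD, Adam wins the 2-token game), and the case where $\Ac$ is HD but not MA needs a strategy-composition argument in which Adam plays letters against Eve's memoryless resolver while moving his own tokens according to an HD strategy. The payoff is that the resulting game has polynomial size (three synchronised copies of $\Ac$), and Adam's winning condition is a Muller objective whose Zielonka DAG is of polynomial size \cite{Pra25}, so fixing Eve's uniform randomisation yields a polynomial-size MDP to which \cref{thm:ZlkDAGMDP} applies. Without this characterisation, or some equivalent device that removes the $\Lc(\Ac)$-membership constraint from the objective, your verifier does not run in polynomial time and the $\NP$ upper bound is not established.
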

Later, we turn our attention to problems of checking if an automaton is SR and the problem of checking if a resolver is an almost-sure resolver in the stochastically resolvable setting. 
\begin{restatable}{theorem}{complexity}\label{theorem:complexity}
\begin{enumerate}
    \item The problem of deciding if a given safety automaton is stochastically resolvable or memory\-less-stochastically  resolvable is in $\ptime$.
    \item The problem of deciding if a given reachability or weak automata is stochastically resolvable or memoryless stochastically resolvable is $\PSPACE$-complete.
    \item The problem of deciding, for a given B\"uchi or coB\"uchi automaton and a finite memory resolver for that automaton, if the resolver is an almost-sure resolver of that automaton is undecidable.
\end{enumerate}
\end{restatable}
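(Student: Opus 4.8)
The plan is to treat the three items separately, reducing the membership questions of items~1 and~2 to deciding \emph{pre-semantic determinism} (licensed by \cref{lemma:comp-safety} and \cref{theorem:comp}, which identify SR and MR with pre-SD on these acceptance conditions), and reducing the resolver-checking question of item~3 to a question about probabilistic $\omega$-automata. I expect item~3 to be the real difficulty.

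\emph{Safety (item~1).} By \cref{lemma:comp-safety}, on safety automata SR, MR, and history-determinism all coincide, and history-deterministic safety automata are exactly the determinisable-by-pruning ones (\cref{lemma:sd-safety-is-dbp}); so it suffices to decide determinisability-by-pruning in polynomial time. For a safety automaton $\Ac$, the language $\Lc(\Ac,q)$ is the set of words that avoid the rejecting sink from $q$, so the language-equivalence classes of states and all language inclusions between states are computable in polynomial time by a fixpoint over the live states. I would delete states of empty language, retain for each state and letter a transition to a successor whose language is inclusion-maximal among the successors on that letter, and then test in polynomial time whether the resulting subautomaton is language-equivalent to $\Ac$; this is exactly a determinisable-by-pruning witness, so the test is sound.

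\emph{Reachability and weak (item~2).} By \cref{theorem:comp}, SR, MR, and pre-semantic determinism coincide here, so again I reduce to deciding pre-SD, now targeting $\pspace$. For the upper bound I guess a subautomaton $\Bc$ (a subset of the transitions of $\Ac$) and verify that every transition of $\Bc$ is language-preserving and that $\Lc(\Bc)=\Lc(\Ac)$; each language inclusion and equivalence test between reachability (resp.\ weak) automata lies in $\pspace$, and since nondeterministic polynomial space equals $\pspace$ by Savitch's theorem, the whole procedure is in $\pspace$. For the matching lower bound I reduce from the $\pspace$-complete universality/equivalence problem for nondeterministic reachability automata, building a gadget containing an unavoidable nondeterministic choice whose two successor languages are comparable---so that a language-preserving pruning exists---precisely when the source instance is positive. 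The delicate point is to ensure the gadget admits no other, spurious semantically deterministic subautomaton, so that pre-SD faithfully encodes the source problem.

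\emph{Resolver checking (item~3).} This is where I expect the main obstacle. By definition, a finite-memory resolver $\Mc$ is an almost-sure resolver for $\Ac$ iff $\Lc(\Ac)=\Lc(\Pc)$ for the resolver-product $\Pc=\Mc\circ\Ac$, which is a probabilistic B\"uchi (resp.\ coB\"uchi) automaton read under almost-sure semantics; moreover every word accepted with probability~$1$ has an accepting run, so $\Lc(\Pc)\subseteq\Lc(\Ac)$ always, and the question is exactly whether $\Lc(\Ac)\subseteq\Lc(\Pc)$. A memoryless resolver composed with $\Ac$ is just $\Ac$ with probabilities attached to its transitions, so any probabilistic B\"uchi/coB\"uchi automaton $\Pc_0$ is realisable as $\Mc\circ\Ac$ by taking $\Ac$ to be the underlying nondeterministic automaton of $\Pc_0$ and $\Mc$ the memoryless resolver supplying its probabilities. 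To control the reference language, I augment $\Ac$ with a universal gadget reached only through transitions to which $\Mc$ assigns probability $0$: this makes $\Lc(\Ac)=\Sigma^{\omega}$ while leaving $\Lc(\Mc\circ\Ac)=\Lc(\Pc_0)$ unchanged, since zero-probability transitions do not contribute to the resolver-product. Hence $\Mc$ is an almost-sure resolver iff $\Pc_0$ is almost-surely universal, and I would invoke the known undecidability of the universality problem (resp.\ its dual) for probabilistic B\"uchi/coB\"uchi automata under almost-sure semantics \cite{BGB12}. The principal difficulty is selecting the exact undecidable probabilistic-automaton problem that matches the almost-sure semantics and the B\"uchi versus coB\"uchi acceptance condition, and verifying that the zero-probability augmentation preserves both $\Lc(\Ac)=\Sigma^{\omega}$ and the probabilistic language; the $\pspace$ lower-bound gadget of item~2 is the secondary technical hurdle.
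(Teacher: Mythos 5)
Your three-way decomposition matches the paper's, and parts of it align with the paper's actual proofs: the item~2 upper bound (guess a subautomaton, verify every transition is language-preserving and language equivalence in $\pspace$, then $\NPSPACE=\pspace$) is sound and consistent with the paper's appeal to the equivalences in \cref{theorem:comp} and \cref{lemma:SR-implies-SD}; and your zero-probability escape gadget for item~3 is legitimate, since resolvers may assign probability $0$ and $\Lc(\Pc)\subseteq\Lc(\Ac)$ always holds, so resolver checking does reduce to almost-sure universality of the resolver-product. For coB\"uchi this succeeds: almost-sure universality of a probabilistic coB\"uchi automaton is, by dualising priorities on the (partitioned, measurable) run space, interreducible with positive-probability emptiness of a probabilistic B\"uchi automaton, which is undecidable~\cite{BBG08}; this is essentially the paper's reduction in \cref{lemma:UndecidablecoBuchi}, which uses a $\frac{1}{2}$--$\frac{1}{2}$ branch into a universal component instead of zero-probability transitions. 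But for B\"uchi, the problem you ``would invoke'' does not exist: a word is almost-surely B\"uchi-accepted if and only if the priority-dual probabilistic coB\"uchi automaton accepts it with probability $0$, so almost-sure universality of probabilistic B\"uchi automata is interreducible with positive emptiness of probabilistic \emph{coB\"uchi} automata---and that problem is \emph{decidable}, as the paper notes immediately before \cref{lemma:UndecidableBuchi}. Your B\"uchi reduction therefore targets a decidable problem and establishes nothing. The paper must switch reductions entirely: it reduces from the zero-isolation problem for probabilistic automata on \emph{finite} words~\cite{GO10}, looping the PFA with a \$-separator where \$-transitions out of final states are accepting, and settles both directions with the two Borel--Cantelli lemmas (\cref{lemma:borellcantelli}, \cref{lemma:secondborellcantelli}). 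This missing idea---that B\"uchi undecidability must come from non-isolation of acceptance probabilities, not from any universality or emptiness question under almost-sure semantics---is the heart of item~3, precisely where you flagged but deferred the difficulty.

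There are two further gaps in items~1 and~2. In item~1, the reduction to determinisability-by-pruning via \cref{lemma:comp-safety} and \cref{lemma:sd-safety-is-dbp} is correct, but your algorithm is not polynomial: deciding language inclusion between states of a \emph{nondeterministic} safety automaton is $\pspace$-complete (it subsumes NFA universality), so the claim that ``all language inclusions between states are computable in polynomial time by a fixpoint'' is false---fixpoints of that kind compute simulation, not inclusion---and the soundness of greedily retaining one ``inclusion-maximal'' successor per state and letter is itself unargued when maximal successors are incomparable. The paper instead cites that history-determinism of safety automata is decidable in $\ptime$~\cite{BL23quantitative}, via game-based methods rather than inclusion tests. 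In item~2, your lower bound is a from-scratch reduction left as a sketch, with the ``no spurious SD subautomaton'' issue explicitly unresolved; the paper simply invokes \cite[Theorem 3]{AKL21}, where pre-semantic-determinism checking for weak automata is shown $\pspace$-complete and the hard instances are already reachability automata, so no new gadget is needed.
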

\subsection{Memoryless adversarially resolvable automata}
We first prove \cref{theorem:manpcomplete} and show $\NP$-completeness for the problem of deciding if an automaton is MA. To prove the upper bound of $\NP$ (\cref{lemma:MAcheckNPEasy}), we show that an automaton is MA if and only if there is a specific kind of strategy in the so-called $2$-token game on that automaton. 
The $2$-token game, defined by Bagnol and Kuperberg, was introduced as a game that characterises history determinism for B\"uchi automata~\cite{BK18}. Lehtinen and Prakash have recently shown that these games also characterise history determinism for parity automata~\cite{LP25}. Using the characterisation of MA using $2$-token games, we provide an algorithm for verifying if such a specific kind of strategy is indeed a winning strategy. We then show the lower bound in \cref{lemma:MAcheckNPhard} by showing that the problem is $\NP$-hard, similar to the proof of $\NP$-hardness for deciding history-determinism~\cite{Pra24a}. 

\begin{lemma}\label{lemma:MAcheckNPEasy}
Checking if a given parity automaton is memoryless adversarially resolvable is in $\NP$.
\end{lemma}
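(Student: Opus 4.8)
The plan is to prove this upper bound by a guess-and-verify argument, built on a characterisation of memoryless-adversarial resolvability in terms of a \emph{memoryless} strategy in a modified two-token game. First I would establish the characterisation: $\Ac$ is MA if and only if Eve has an almost-surely winning memoryless strategy in a variant of the two-token game on $\Ac$, that is, a strategy whose choice of transition depends only on the current state of Eve's token and the letter Adam plays, and neither on the history of the play nor on the positions of Adam's two tokens. In this game Adam plays a letter and moves his two tokens along transitions of $\Ac$, Eve moves her single token, and Eve wins a play if, whenever at least one of Adam's two runs is accepting, her own run is accepting. Two tokens are needed because, exactly as in the two-token theorem for history determinism~\cite{BK18,LP25}, a single token does not suffice to certify correctness of a resolver against all words in $\Lc(\Ac)$.

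Given this characterisation, the $\NP$ algorithm is as follows. By the indifference-to-probabilities result (\cref{lemma:indifferent2probabilities}), only the \emph{support} of a memoryless resolver matters for almost-sure winning, so I would nondeterministically guess, for each state $q$ and letter $a$, a nonempty set $S_{q,a} \subseteq \Delta_{q,a}$ of allowed transitions; this guess has size $O(|\Delta|)$ and is therefore polynomial. Interpreting $S$ as the uniform memoryless randomised strategy for Eve, it remains to verify in polynomial time that this strategy is almost-surely winning in the modified two-token game. Fixing Eve's randomised strategy turns the game into a Markov decision process whose controller is Adam (choosing the letter and the transitions of his two tokens) and whose stochasticity comes from Eve's uniform choice within each $S_{q,a}$. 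The winning condition---Eve's run is accepting whenever one of Adam's runs is---is an implication between three parity conditions read off the product, hence a Muller condition, which can be represented succinctly by a Zielonka DAG. Eve's guessed strategy is correct precisely when Adam cannot force the complementary (losing) Muller condition with positive probability, and this qualitative MDP problem is solvable in polynomial time by \cref{thm:ZlkDAGMDP}. Thus the guess is polynomial and the verification is polynomial, placing the problem in $\NP$.

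The main obstacle is the characterisation step rather than the algorithm itself. The backward direction---that a memoryless almost-surely winning strategy in the modified game yields an almost-sure memoryless resolver---should be routine once the game is set up correctly. The forward direction is the delicate part: I must argue that if $\Ac$ admits \emph{some} memoryless almost-sure resolver, then the support of that resolver already wins the two-token game almost surely, which amounts to showing that a correct memoryless resolver simultaneously copes with the two candidate accepting runs Adam may exhibit. This is where the adaptation of the two-token theorem is required, and care is needed to check that the modification---restricting Eve to token-blind memoryless strategies and reading ``winning'' as almost-sure winning under randomisation---preserves the equivalence that the unrestricted two-token game enjoys. A secondary technical point is to confirm that the product defining the MDP together with its Muller objective has polynomial size and that the Zielonka DAG faithfully encodes the implication between the three parity conditions, so that \cref{thm:ZlkDAGMDP} indeed applies.
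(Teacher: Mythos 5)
Your proposal is correct and follows essentially the same route as the paper: guess the support of a memoryless resolver (equivalently, a subautomaton $\Bc$ of $\Ac$), characterise MA by Eve playing uniformly at random in the modified two-token game $G2(\Bc;\Ac)$ (the paper's \cref{prop:rigid2Token}, with \cref{lemma:indifferent2probabilities} justifying that only the support matters), and verify the guess by fixing Eve's random strategy to obtain an Adam-only MDP whose Muller objective is given by a polynomial-size Zielonka DAG and solved via \cref{thm:ZlkDAGMDP}.

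One correction on where the difficulty lies: you have the two directions of the characterisation reversed. The direction you call delicate---that the support of an almost-sure memoryless resolver wins $G2(\Bc;\Ac)$---is in fact immediate, since the resolver's HD-game guarantee already holds against adaptive letter-playing adversaries, and Adam's two tokens serve only as a witness that his word lies in $\Lc(\Ac)$; no ``coping with two candidate runs'' is needed. The genuinely hard direction is the one you call routine: that Eve winning the modified game almost surely implies $\Ac$ is MA. Winning the two-token game does not directly yield the HD-game guarantee (Adam's word may be in the language while neither of his token runs is accepting), and the paper closes this gap contrapositively using the 2-token theorem of Lehtinen and Prakash when $\Ac$ is not HD, plus a separate probabilistic argument (combining a spoiling letter strategy with an HD strategy on Adam's tokens) when $\Ac$ is HD but not MA.
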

Next, we describe a proof sketch for \cref{lemma:MAcheckNPEasy}, where we will use the following notions.

\paragraph*{$2$-token game} Informally, the 2-token game on an automaton is a game played between Adam and Eve with three tokens at the starting state of the automaton, one used by Eve and two used by Adam. The game proceeds in infinitely many rounds, where in each round, Adam selects a letter, Eve selects a transition on that letter along which she moves her token, and finally Adam moves each of his two tokens along transitions on the same letter. Therefore, in a play of this game, Eve builds a run and Adam two runs, all on the same word. A play is won by Adam if at least one of the runs on his tokens is accepting while Eve's run on her token is rejecting, and Eve wins otherwise. Given an automaton $\Ac$, we write $G2(\Ac)$ to represent the 2-token game on $\Ac$, and we use $G2(\Bc;\Ac)$ to describe a modified version of the 2-token game where Eve moves her token in the automaton $\Bc$, while Adam moves his two tokens in the automaton~$\Ac$.

 \paragraph*{Muller condition} Muller objectives on graphs are specified by a finite set of colour $C$ and a set of accepting subsets $\Fc\subseteq 2^C$, and each edge of the graph is labelled by a colour from $C$. An infinite path of this graph is accepting if the set of colours that appears infinitely often in this path is a member of~$\Fc$. Such subsets $\Fc$ are sometimes represented using \emph{Zielonka trees}~\cite{DJW97}  defined below.

\paragraph*{Zielonka tree $\Zc_{C,\Fc}$}
For a Muller objective defined by colours $C$ and accepting set $\Fc$, we construct its Zielonka tree, denoted $\Zc_{C,\Fc}$, as a labelled rooted tree. We will call the vertices of this tree as \emph{nodes}. Each node $\nu$ of the Zielonka tree is labelled by a nonempty subset of $C$. 
The root of $\Zc_{C,\Fc}$ is labelled by $C$. For a node $\nu_X$ labelled by $X$, its children are nodes $\nu_Y$ labelled by distinct maximal nonempty subsets $Y$ of $X$ such that $Y\in\Fc$ if and only if $X \notin \Fc$. If there are no such subsets $Y$, then $\nu_X$ has no children in $\Zc_{C,\Fc}$.

\paragraph*{Zielonka DAG} A Zielonka DAG~\cite{HD05} is a succinct representation of Zielonka tree, where nodes with the same labels are merged.

\noindent We show that we can characterise memoryless resolvability using the $2$-token game.
    \begin{lemma}\label{prop:rigid2Token}
        An automaton $\Ac$ is MA if and only if there is a subautomaton $\Bc$ of $\Ac$ such that Eve randomising between all available outgoing edges in $G2(\Bc;\Ac)$ is almost-surely winning for Eve.
    \end{lemma}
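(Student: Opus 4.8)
The plan is to prove the two implications separately, in each case relating the prescribed uniform randomised strategy in $G2(\Bc;\Ac)$ to the behaviour of a memoryless resolver, and using the $2$-token theorem~\cite{LP25,Pra25} to handle the adaptive adversary of the HD game.

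\textbf{Easy direction (MA $\Rightarrow$ game).} Suppose $\Ac$ is MA, witnessed by a memoryless resolver $\Mc$. Let $\Bc$ be the subautomaton of $\Ac$ whose transitions are exactly those assigned nonzero probability by $\Mc$. By the indifference lemma (\cref{lemma:indifferent2probabilities}), the memoryless resolver $\Mc_{\mathrm{unif}}$ that picks uniformly among the outgoing $\Bc$-transitions also wins the HD game on $\Ac$ almost surely. Now consider $G2(\Bc;\Ac)$ with Eve playing $\Mc_{\mathrm{unif}}$. Since Eve ignores Adam's two tokens, the joint distribution of her run and Adam's word coincides with that of the HD game against the corresponding (token-forgetting) Adam strategy. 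Whenever one of Adam's tokens yields an accepting run, Adam's word lies in $\Lc(\Ac)$; as $\Mc_{\mathrm{unif}}$ wins the HD game almost surely, Eve's run is then accepting almost surely. Hence the event ``some Adam token accepts while Eve rejects'' has probability $0$ against every Adam strategy, so Eve randomising uniformly wins $G2(\Bc;\Ac)$ almost surely.

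\textbf{Hard direction (game $\Rightarrow$ MA).} Conversely, assume Eve randomising uniformly over the outgoing edges of $\Bc$ wins $G2(\Bc;\Ac)$ almost surely, and again write $\Mc_{\mathrm{unif}}$ for the induced memoryless resolver. A first, non-adaptive observation: for any fixed word $w\in\Lc(\Ac)$, let Adam play $w$ and place one token on an accepting run of $w$ in $\Ac$; since that token accepts, $\Mc_{\mathrm{unif}}$ must produce an accepting run on $w$ almost surely. In particular $\Lc(\Bc)=\Lc(\Ac)$, and $\Mc_{\mathrm{unif}}$ is already an almost-sure resolver in the \emph{stochastic}, non-adaptive sense. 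The decisive remaining step is to upgrade this to the \emph{adaptive} adversary of the HD game, which is exactly the situation the $2$-token theorem is designed for: beating two adversarial tokens suffices to beat the full nondeterminism of $\Ac$ under an adaptive letter-giver. I would argue that $\Mc_{\mathrm{unif}}$ winning $G2(\Bc;\Ac)$ almost surely propagates to winning the analogous $k$-token games almost surely for every $k$, by the standard inductive composition of token strategies, verifying at each step that the composition preserves both Eve's uniform randomisation and the almost-sure measure of the accepting event. Since the token games for $k\ge 2$ all collapse to the HD game on $\Ac$~\cite{LP25,Pra25}, it follows that against every adaptive Adam strategy producing a word in $\Lc(\Ac)$, Eve's uniformly random run in $\Bc$ is accepting almost surely, so $\Ac$ is MA.

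\textbf{Main obstacle.} The crux is this upgrade in the hard direction: transferring the two-token guarantee to the fully adaptive HD game while keeping Eve pinned to the prescribed uniform randomised strategy. The deterministic $2$-token theorem transfers an arbitrary \emph{winning strategy}, whereas here Eve's strategy is fixed and the winning condition is \emph{almost sure} rather than sure, so the induction from two to $k$ tokens must be carried out at the level of probability measures, ensuring that the positive-probability ``escape'' events Adam might exploit aggregate to probability zero. Managing this measure-theoretic bookkeeping, and making precise that uniform randomisation is exactly the feature neutralising Adam's adaptive choices, is the technically delicate part; the remainder of the argument is a routine reinterpretation of resolvers as uniform strategies.
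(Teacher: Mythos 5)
Your easy direction is correct and essentially the paper's argument (the paper's ``HD and MA'' case), with the extra invocation of \cref{lemma:indifferent2probabilities} to pass to the uniform resolver, which is fine. The genuine gap is in your hard direction, and it sits exactly where you flagged the ``main obstacle'': the $k$-token route does not go through. First, the ``standard inductive composition of token strategies'' (Bagnol--Kuperberg style) builds Eve's $k$-token strategy \emph{out of} her $2$-token strategy using extra memory tracking virtual tokens; it does not preserve a fixed memoryless uniform strategy, so it cannot deliver what you need. (This particular step is repairable without composition: since the uniform strategy is token-blind, any $k$-token Adam strategy can be simulated by a $2$-token Adam strategy that keeps the other $k-1$ tokens in memory and exposes one of them, and a union bound over the $k$ projections gives probability $0$; but you do not give this argument.) Second, and fatally, the final ``collapse to the HD game'' step invokes the $2$-token theorem in a form that does not exist: the theorem of \cite{LP25,Pra25} transfers the \emph{existence} of a (sure) winning strategy in $G2(\Ac)$ to history-determinism of $\Ac$; it says nothing about a \emph{fixed} almost-surely winning strategy in the $2$-token game remaining almost-surely winning in the HD game. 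Applied as cited, it would only yield that $\Ac$ is HD, which is strictly weaker than MA -- the paper exhibits HD automata that are not even MR (\cref{lemma:HDcoBuchinotMR}), so this gap cannot be waved away.

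The paper closes this gap by a trichotomy rather than a limit over $k$. If $\Ac$ is not HD, then by the $2$-token theorem Adam wins $G2(\Ac)$ surely, hence also $G2(\Bc;\Ac)$ against the uniform strategy, so the hypothesis fails vacuously. If $\Ac$ is HD but the uniform resolver on $\Bc$ is not almost-surely winning in the HD game, Adam has a letter strategy $\tau$ making ``word in $\Lc(\Ac)$ and Eve's run rejecting'' happen with positive probability; in $G2(\Bc;\Ac)$, Adam plays $\tau$ for letters (he observes Eve's token in both games) and moves his own token according to a winning HD-game resolver for $\Ac$, so that his token's run is accepting on every play whose word is in $\Lc(\Ac)$. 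Then ``Adam's token accepts while Eve rejects'' has positive probability, contradicting the assumed almost-sure win in $G2(\Bc;\Ac)$. Your proof can be repaired along these lines -- note that your hypothesis does imply $\Ac$ is HD (an almost-surely winning Eve strategy rules out a sure Adam win, so by determinacy of the non-stochastic game Eve wins $G2(\Bc;\Ac)$ surely, hence $G2(\Ac)$, and the $2$-token theorem applies) -- but the key missing idea is letting Adam certify language membership with an HD resolver on his own token, not a $k$-token induction.
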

    \begin{proof}[Proof of \cref{prop:rigid2Token}]
        \emph{$\Ac$ is not HD.} Then $\Ac$ is clearly not MA.  Adam wins $G2(\Ac)$ and hence $G2(\Bc;\Ac)$, due to the recent result of Lehtinen and Prakash~\cite[2-token theorem]{LP25}.
        
        \emph{$\Ac$ is HD and MA.} If $\Ac$ is MA, then there is a strategy of Eve in the HD game on $\Ac$ where she only randomises between some fixed set of transitions $\Ac$. Let $\Bc$ be the sub-automaton consisting exactly of this set of transitions.  Since such a resolution always produces an accepting run almost surely on any word in $\Lc(\Ac)$, Eve also must win the 2-token game $G2(\Bc;\Ac)$ using this strategy. 
        
        \emph{$\Ac$ is HD and not MA.} Then for every Eve's memoryless strategy $\sigma$ in the HD game, Adam has a strategy $\tau_{\sigma}$ to ensure, with positive probability, that his word is accepting while Eve's run is rejecting. For every subautomaton $\Bc$ of Eve, in the $2$-token game, consider the strategy for Adam $\tau_{\Bc}$ in the $2$-token game, where he picks letters according to a strategy that ensures Adam's word is accepting while Eve's run on her token is rejecting with positive probability. He picks transitions on his token according to a winning strategy for Eve in the HD game on $\Ac$, which ensures that the runs on his tokens are accepting if his word is accepting. Thus, there is no subautomaton $\Bc$ of $\Ac$ such that Eve wins $G2(\Bc;\Ac)$ almost-surely by choosing transitions randomly in $\Bc$, as desired.  
    \end{proof}
    To check if an automaton is MA, we guess a subautomaton $\Bc$ of $\Ac$, construct the $2$-token game $G2(\Bc;\Ac)$, and verify if Eve playing randomly is an almost-sure winning strategy. Subsequently, we construct a game where Eve's vertices in are substituted with stochastic vertices in $G2(\Bc;\Ac)$, resulting in a Markov Decision Process (MDP): a single-player complete-observation stochastic game where Adam is the only player. 
    
    If $\Ac$ is MA, Adam can satisfy his objective in the 2-token game with probability $0$ in this resulting MDP. The winning condition of the 2-token game for Adam, which is that at least one of Adam's runs on his tokens is accepting while Eve's run on her token is rejecting, can be represented by a Muller objective~\cite[Page 70]{Pra25}. The Zielonka DAG of this Muller objective has size that is at most polynomial in the size of the automaton $\Ac$~\cite[Page 72]{Pra25}.

    In \cref{lemma:ZlkDAGMDP}, we show that it can be verified in polynomial time if an MDP  satisfies a Muller objective with positive probability or almost-surely where the objective is input as a Zielonka DAG. Therefore,
    we can verify if Eve can win almost-surely or Adam can win with positive probability in the MDP in polynomial time. This proves \cref{lemma:MAcheckNPEasy}. 

\begin{restatable}{theorem}{ZlkDAGMDP}\label{lemma:ZlkDAGMDP}\label{thm:ZlkDAGMDP}
        Given an MDP $\Mc$ where the Muller objective is represented as a Zielonka DAG $\Zc_{C,\Fc}$, deciding whether Adam can satisfy the Muller objective with positive probability (resp. almost-surely) in $\Mc$ can be computed in time $\Oc(|\Mc||\Zc_{C,\Fc}|)$.        
\end{restatable}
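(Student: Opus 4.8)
The plan is to reduce both decision problems---positive-probability and almost-sure satisfaction---to a single structural computation: identifying the states of $\Mc$ that lie in a \emph{good end component}, i.e.\ an end component whose set of colours belongs to $\Fc$. The guiding facts are the standard end-component properties of MDPs: under any strategy of Adam the set of edges visited infinitely often is almost surely an end component, and from any state of a fixed end component $U$ Adam has a strategy that almost surely visits exactly the edges of $U$ infinitely often. Consequently, writing $W^{*}$ for the union of the state sets of all good end components, Adam wins with positive probability from the initial state iff $W^{*}$ is reachable in the underlying graph, and he wins almost surely iff $W^{*}$ is reachable with probability~$1$. Both reachability queries are solved by the usual linear-time attractor computations, so the whole difficulty is concentrated in computing $W^{*}$, and the distinction between the two problems enters only at this final reachability step.

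To compute $W^{*}$ I would traverse the Zielonka DAG $\Zc_{C,\Fc}$ top-down, maintaining at each node $\nu$ labelled by $X$ a sub-MDP $G$ all of whose edge colours lie in $X$, and returning the states of $G$ that belong to a good end component. At $\nu$ the algorithm first computes the (maximal) end-component decomposition of $G$. Any maximal end component $M$ with $\Func{Colors}(M)\in\Fc$ is itself good, so all of its states are recorded; the interesting case is $\Func{Colors}(M)\notin\Fc$. Here I rely on the defining property of the Zielonka tree: the children $Y_{1},\dots,Y_{k}$ of $\nu$ are exactly the maximal subsets of $X$ of the opposite $\Fc$-status, so every $C'\subseteq\Func{Colors}(M)$ with $C'\in\Fc$ is contained in some child label $Y_{i}$. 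Thus every good sub-end-component of such an $M$ survives in the restriction of $M$ to the colours of some $Y_{i}$, and it suffices to recurse into each child $\nu_{i}$ on the sub-MDP obtained by deleting the edges coloured outside $Y_{i}$ and discarding the random states forced to take such an edge (the complement of a random attractor, a linear-time sub-MDP closure). Memoising the results on the labels ensures that nodes with equal labels are evaluated once, so the recursion follows the DAG rather than the exponentially larger tree.

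For correctness I would prove, by induction over $\Zc_{C,\Fc}$, that the value returned at a node labelled $X$ on a sub-MDP $G$ equals $\{\,s : s\in U \text{ for some end component } U\subseteq G \text{ with } \Func{Colors}(U)\in\Fc\,\}$. The inclusion from the returned set into $\Fc$-membership is immediate from the base case and the restriction step. For the reverse inclusion, a good end component $U$ is either contained in a maximal end component whose full colour set is already in $\Fc$ (handled directly), or it lives inside a maximal end component $M$ with $\Func{Colors}(M)\notin\Fc$; in that case $\Func{Colors}(U)$ is an $\Fc$-set contained in $\Func{Colors}(M)$, hence confined to a child's colour restriction, where the induction hypothesis applies. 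This combinatorial lemma---that good sub-end-components of a non-accepting maximal end component are always captured by a single child's restriction---is exactly what makes the Zielonka recursion sound.

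For the complexity, each visited node performs a constant number of graph operations on $G$ (end-component decomposition, attractor computation, and edge/state restriction), and memoisation bounds the number of visited nodes by $\Oc(|\Zc_{C,\Fc}|)$; charging linear work per node to $|\Mc|$ yields the claimed $\Oc(|\Mc|\,|\Zc_{C,\Fc}|)$ bound. The main obstacle I anticipate is twofold: first, establishing the combinatorial lemma above cleanly for both orientations of the $\Fc$-status of $X$; and second, organising the restriction and end-component recomputations so that the per-node cost genuinely stays linear in $|\Mc|$ and the traversal charges work to each DAG node once, rather than once per root-to-node path or paying a superlinear end-component decomposition at every level. Overcoming the latter---by refining end components incrementally along the recursion instead of recomputing them from scratch---is the step I expect to require the most care.
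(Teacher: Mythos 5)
Your proposal is correct and follows essentially the same route as the paper's proof: both reduce the question to computing the union of states lying in end components whose colour set belongs to $\Fc$ (the paper invokes Chatterjee's lemma that the almost-sure winning region is the almost-sure attractor of this union), both locate these good end components by pairing the Zielonka DAG with one maximal-end-component decomposition per DAG node together with the alternation property of Zielonka labels, and both finish with linear-time attractor/reachability closures, yielding the $\Oc(|\Mc||\Zc_{C,\Fc}|)$ bound. The differences are organisational. You recurse top-down from the root, descending per MEC into children; the paper instead iterates over the DAG nodes directly and, at each \emph{accepting} node $\nu_X$, computes the sub-MDP $\Mc|_X$ from the full MDP (the states from which Adam can keep all colours inside $X$ almost surely) and declares a MEC good when its colour set is not contained in any child label --- the rejecting-node bookkeeping you carry explicitly is thereby absorbed into the ``not contained in any child'' test. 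Also, for positive probability you use direct graph reachability to the good region, while the paper derives it from the almost-sure case via the Gimbert--Horn result for prefix-independent objectives; both are valid.

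The one step of yours that needs repair is the memoisation, and you half-identify it yourself in your final paragraph. As written, the value you would memoise at a label $Y$ is defined relative to the sub-MDP $G$ inherited from a particular parent MEC, and different root-to-node paths (and different MEC choices along them) induce different $G$'s, so ``the result at label $Y$'' is not well-defined and reusing a stale answer could miss good end components present only in a later invocation's $G$. The fix --- which the paper's formulation builds in --- is to observe that the set you want is label-determined: an end component $U$ with $\col(U)\subseteq Y$ has all successors of its random states inside $U$, so it is an end component of the \emph{full} MDP restricted to colours in $Y$, independently of any ancestor restrictions. Hence you should compute, per label, on $\Mc|_Y$ (defined from the full MDP) rather than on the inherited $G$; with that change your induction goes through verbatim, each DAG node is charged one decomposition and one attractor computation, and your anticipated worry about incremental refinement of end components along the recursion disappears.
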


Chatterjee showed that for MDPs, deciding if there is a strategy to almost-surely (or positively) satisfy a Muller objective represented by a set of subset of colours $\Fc$ that is union closed or upward closed and succinctly represented as a ``basis condition'' is in $\P$~\cite[Section 4]{Cha07}. Since every upward-closed or union-closed condition can be represented as a simple Zielonka DAG where all its nodes other than the root are leaf nodes, \cref{thm:ZlkDAGMDP} is therefore an improvement. 

We now prove the lower bound for the problem of checking memoryless adversarial resolvability using the result of $\NP$-hardness of checking history determinism~\cite{Pra24a}, and observe, in the appendix, that the reduction in \cite{Pra24a} can also be modified to show the $\NP$-hardness of checking if an automaton is MA. 
\begin{restatable}{lemma}{MAcheckNPhard}\label{lemma:MAcheckNPhard}
    Checking if a given parity automaton is memoryless-adversarially resolvable is $\NP$-hard.
\end{restatable}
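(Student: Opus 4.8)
The plan is to adapt the $\NP$-hardness reduction for history-determinism of Prakash~\cite{Pra24a}. Since a memoryless resolver is a special case of a (pure) resolver, every MA automaton is in particular adversarially resolvable, i.e.\ HD; hence whenever the reduction maps a \emph{no}-instance to a non-HD automaton, that automaton is automatically not MA. The entire burden is therefore to guarantee that \emph{yes}-instances are mapped to automata that are MA, and not merely HD, while leaving the no-instance analysis essentially untouched.

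First I would recall the automaton $\Ac_I$ produced by~\cite{Pra24a} on an instance $I$ of the source $\NP$-hard problem, together with Eve's winning strategy in the HD game on $\Ac_I$ when $I$ is a yes-instance. If that strategy is already \emph{positional}---i.e.\ at each state Eve's transition depends only on the current state and the letter read, corresponding to committing to a local decision dictated by the witness for $I$---then $\Ac_I$ is MA by definition and no change is needed; by \cref{lemma:indifferent2probabilities} it then suffices that some memoryless resolver supported on Eve's chosen transitions wins almost-surely, so the exact probabilities are irrelevant.

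If instead the yes-instance strategy of~\cite{Pra24a} genuinely uses memory, I would modify the construction to internalise this finite memory into the automaton's state space: each state at which Eve's correct choice depends on bounded past information is split into copies indexed by the relevant memory value, with transitions redirected accordingly. The modification must be arranged so that (i) the recognised language is unchanged, (ii) on yes-instances a positional strategy now suffices for Eve, witnessing $\Ac_I \in \text{MA}$, and (iii) on no-instances Adam still wins the HD game, so $\Ac_I$ remains non-HD and hence non-MA. The step I expect to be the main obstacle is reconciling (ii) and (iii): enlarging the state space to make Eve's good strategy positional risks simultaneously granting Eve new positional options that could defeat Adam on a no-instance, so one must verify that Adam's adversarial letter-choices in the no-instance HD game---which depend on Eve's token position and force a rejecting run---survive the duplication of states unchanged. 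Establishing these three properties for the modified reduction yields a polynomial-time many-one reduction from the $\NP$-hard source problem to MA-checking, proving \cref{lemma:MAcheckNPhard}; combined with $\NP$-membership from \cref{lemma:MAcheckNPEasy}, this completes the proof of \cref{theorem:manpcomplete}.
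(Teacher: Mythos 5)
Your proposal follows essentially the same route as the paper: it reuses the reduction of~\cite{Pra24a} (whose source problem is deciding the winner of good 2-D parity games), handles no-instances automatically via the inclusion $\mathrm{MA} \subseteq \mathrm{HD}$, and reduces the yes-case to exhibiting a positional strategy for Eve. The paper confirms your first branch and makes your fallback state-splitting unnecessary: Eve's winning condition in a 2-D parity game admits pure positional winning strategies~\cite{CHP07}, and such a positional strategy $\sigma$ in $\Gc$ translates directly into a positional choice in the HD game on the constructed automaton $\Hc$ (at $v_H$ on the letter $\$$, move to $(v_H,\sigma(v_H))$), so on yes-instances $\Hc$ is determinisable-by-pruning and hence MA.
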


%%%%%%%SRRRRRRRRRRRRR
\subsection{Memoryless-stochastically resolvable automata and stochastically resolvable automata}

We now discuss the questions of deciding whether an automaton is MR or SR in this subsection. 
%More specifically, we prove \cref{theorem:complexity}.

\begin{proposition}\label{prop:safetyPTIME}\label{prop:reachweakPSPACE}
    Deciding if a safety automaton is stochastically resolvable is in $\ptime$, and deciding if a reachability or weak automaton is stochastically resolvable is $\PSPACE$-complete. 
\end{proposition}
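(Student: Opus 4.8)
The plan is to prove the two claims separately, in each case reducing the question of stochastic resolvability to the equivalent notion identified in \cref{theorem:comp} and then analysing the complexity of checking that notion. For \textbf{safety} automata, \cref{theorem:comp}(1) shows that stochastic resolvability coincides with pre-semantic determinism and history determinism. Combined with \cref{lemma:sd-safety-is-dbp}, which says that a semantically deterministic safety automaton is determinisable-by-pruning, this means that a safety automaton is SR if and only if it admits a language-equivalent deterministic subautomaton. Deciding this last property (equivalently, deciding history determinism of a safety automaton) is known to be in $\ptime$~\cite{KS15,BL23quantitative}, so the first part follows immediately.

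For the \textbf{reachability and weak} case I would first establish the $\PSPACE$ upper bound. By \cref{theorem:comp}(2a), a reachability or weak automaton $\Ac$ is SR if and only if it is pre-semantically deterministic, i.e.\ there is a subautomaton $\Bc \subseteq \Ac$ with $\Lc(\Bc) = \Lc(\Ac)$ that is semantically deterministic. The algorithm guesses such a $\Bc$ (a choice of subset of transitions, hence polynomially many bits) and verifies the two requirements. Checking that $\Bc$ is semantically deterministic amounts to checking, for each transition $p \xrightarrow{a} q$ of $\Bc$, the language-preservation equality $\Lc(\Bc,q) = a^{-1}\Lc(\Bc,p)$; here $\Lc(\Bc,q) \subseteq a^{-1}\Lc(\Bc,p)$ is automatic from the presence of the transition, and the reverse inclusion is a containment between nondeterministic reachability/weak automata, decidable in $\PSPACE$. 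Checking $\Lc(\Bc)=\Lc(\Ac)$ reduces to the single containment $\Lc(\Ac)\subseteq\Lc(\Bc)$ (the other inclusion holds since $\Bc\subseteq\Ac$), again in $\PSPACE$. Guessing $\Bc$ followed by a $\PSPACE$ verification places the whole procedure in $\mathsf{NPSPACE}=\PSPACE$ by Savitch's theorem.

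For the matching lower bound I would reduce from the universality problem for nondeterministic automata over finite words, which is $\PSPACE$-complete. Given an NFA $N$ over $\Sigma$, the idea is to build a reachability automaton $\Ac_N$ in which the resolver, after reading an arbitrary prefix, may nondeterministically ``commit'' and guess an accepting run of $N$ on an upcoming block of letters, reaching the accepting sink exactly when such a run exists. A uniformly random resolver then almost-surely eventually commits at a correct position precisely when every block an adversary can present admits an accepting $N$-run, that is, precisely when $L(N)=\Sigma^{\ast}$. The goal is to arrange $\Ac_N$ so that it is MR (equivalently SR, by \cref{theorem:comp}(2a)) if and only if $N$ is universal, while non-universality lets the pre-committed adversary of the SR game feed a word built from blocks outside $L(N)$ that lies in $\Lc(\Ac_N)$ yet forces the run to be rejecting with positive probability. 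Because every reachability automaton is a weak automaton (its only accepting transitions are sink self-loops, so no cycle mixes accepting and rejecting transitions), the same hardness bound transfers to weak automata.

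I expect the main obstacle to be this lower-bound construction: one must design $\Ac_N$ so that its language is insensitive to the reduction's separators and bookkeeping, while the existence of a language-equivalent semantically deterministic subautomaton is \emph{tightly} equivalent to universality of $N$, and then verify both directions of the correspondence through \cref{lemma:SR-implies-SD} and the SR-game characterisation (\cref{lemma:random-is-pure}). The upper-bound ingredients (containment and language-equivalence checks for reachability and weak automata) are standard $\PSPACE$ routines and should pose no difficulty; the care lies entirely in the gadget and in confirming that it is genuinely randomness, rather than adversarial foresight in the HD game, that the universality condition buys.
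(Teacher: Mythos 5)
Your safety argument and your \PSPACE{} upper bound are both sound. The safety part is exactly the paper's route (SR $\Leftrightarrow$ pre-SD $\Leftrightarrow$ determinisable-by-pruning $\Leftrightarrow$ HD, then \ptime{} via \cite{KS15,BL23quantitative}). For the upper bound you do something mildly different from the paper: the paper reduces SR to pre-semantic determinism via \cref{lemma:SR-implies-SD} and \cref{lemma:sdweak-is-mr} and then simply cites the known \PSPACE-completeness of deciding pre-SD for weak automata \cite[Theorem 3]{AKL21}, whereas you give a self-contained guess-and-verify procedure (guess the subautomaton $\Bc$, check each transition's language-preservation by a \PSPACE{} containment test, check $\Lc(\Ac)\subseteq\Lc(\Bc)$, and close with $\NPSPACE=\PSPACE$). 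Your verification is correct, including the observation that one inclusion of each equality is free; this buys self-containedness at no cost.

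The genuine gap is the lower bound, and you have in effect flagged it yourself: what you give is a plan for a reduction from NFA universality, not a reduction. The gadget $\Ac_N$ is never defined, its language is never pinned down, and neither direction of the equivalence ``$\Ac_N$ is SR iff $N$ is universal'' is argued. These are precisely the delicate parts: you must arrange that $\Lc(\Ac_N)$ is a fixed language independent of $N$ (otherwise a pruned subautomaton could be language-equivalent for trivial reasons), and in the non-universal direction you must show that \emph{every} resolver --- not just the uniform one --- fails almost-surely against some pre-committed word, which needs an argument in the style of \cref{lemma:random-is-pure} together with a Borel--Cantelli-type estimate, none of which is sketched. The paper avoids all of this: deciding pre-SD for weak automata is already known to be \PSPACE-hard \cite[Theorem 3]{AKL21}, the hard instances constructed there happen to be reachability automata, and the paper's equivalence SR $\Leftrightarrow$ pre-SD for reachability and weak automata (\cref{theorem:comp}, via \cref{lemma:SR-implies-SD} and \cref{lemma:sdweak-is-mr}) transfers that hardness verbatim. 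So your proposal establishes membership in \PSPACE{} but not \PSPACE-hardness; to close it, either carry out your gadget construction in full or, more economically, observe as the paper does that the existing hardness construction of \cite{AKL21} is itself a reachability automaton.
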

\begin{proof}
    From \cref{lemma:SR-implies-SD}, an automaton is SR if and only if it is SD-by-pruning. A safety automaton is SD if and only if it is determinisable by pruning if and only if it is HD \cite[Theorem 19]{BL21}, which is decidable in $\ptime$~\cite{BL23quantitative}. 

    Deciding if a weak automata is pre-SD is  $\PSPACE$-complete~\cite[Theorem 3]{AKL21}. The automaton constructed for their lower bound is also a reachability automaton, thus providing our lowerbound. 
    The upper bound follows from our results (\cref{lemma:sdweak-is-mr,lemma:SR-implies-SD}) that a weak (or reachability) automaton is SR if and only if it is pre-semantically deterministic.
\end{proof}
\begin{question}[Resolver-(co)Buchi]
    Given a finite memory resolver~$\Rc$ for a (co)B\"uchi automaton $\Ac$, is $\Rc$ an almost-sure resolver for $\Ac$?
\end{question}
\begin{restatable}{lemma}{lemmaUndecidablecoBuchi}\label{lemma:UndecidablecoBuchi}
        The problem \textsf{Resolver-coBuchi} is undecidable.
\end{restatable}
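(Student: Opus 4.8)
The plan is to reduce an undecidable problem about probabilistic B\"uchi automata to \textsf{Resolver-coBuchi}, exploiting the duality between the almost-sure coB\"uchi semantics and the positive B\"uchi semantics. First I would reformulate the question. Writing $\Pc = \Rc \circ \Ac$ for the resolver-product, we always have $\Lc(\Pc) \subseteq \Lc(\Ac)$, since any run of positive probability in $\Pc$ witnesses an accepting run of $\Ac$; hence $\Rc$ is an almost-sure resolver for $\Ac$ if and only if $\Lc(\Ac) \subseteq \Lc(\Pc) = \{w \mid \prob_\Pc(w) = 1\}$. I would then arrange the reduction so that $\Lc(\Ac) = \Sigma^\omega$, at which point the question becomes precisely whether the probabilistic coB\"uchi automaton $\Pc$ is \emph{universal} under its almost-sure semantics.

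Next I would set up the reduction from the emptiness problem for probabilistic B\"uchi automata under the positive (``probable'') semantics, which is undecidable~\cite{BGB12}. Given a probabilistic B\"uchi automaton $\Pc'$ whose acceptance is determined by a set $F$ of transitions (a run is accepting iff it visits $F$ infinitely often), with positive-semantics language $\Lc^{>0}(\Pc') = \{w \mid \prob_{\Pc'}(w) > 0\}$, I would build a nondeterministic coB\"uchi automaton $\Ac$ together with a memoryless resolver $\Rc$ as follows. The automaton $\Ac$ carries the underlying transition structure of $\Pc'$, with every $F$-transition relabelled to priority $1$ and all others to priority $0$, so that coB\"uchi-acceptance means visiting $F$ only finitely often. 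To force universality I add an accepting sink $s$ with priority-$0$ self-loops on every letter, and a priority-$0$ transition from each state to $s$ on every letter; since from any state one may move to $s$ and loop there forever, $\Lc(\Ac) = \Sigma^\omega$. The resolver $\Rc$ is memoryless: at each state--letter pair it reproduces the transition probabilities of $\Pc'$ on the original transitions and assigns probability $0$ to every newly added sink transition. Because probability-$0$ transitions are absent from the resolver-product, $\Pc = \Rc \circ \Ac$ is isomorphic to $\Pc'$ viewed as a probabilistic coB\"uchi automaton (the sink $s$ being unreachable in $\Pc$), so $\prob_{\Rc \circ \Ac}(w) = \prob^{\mathrm{coB}}_{\Pc}(w)$ for every $w$.

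Correctness then follows from the B\"uchi/coB\"uchi duality at the level of the run measure: for each fixed $w$ the set of runs visiting $F$ finitely often is the complement of the set visiting $F$ infinitely often, so $\prob^{\mathrm{coB}}_{\Pc}(w) = 1 - \prob^{\mathrm{B}}_{\Pc'}(w)$, whence $\{w \mid \prob_\Pc(w) = 1\} = \Sigma^\omega \setminus \Lc^{>0}(\Pc')$. Combining the three observations, $\Rc$ is an almost-sure resolver for $\Ac$ if and only if $\{w \mid \prob_\Pc(w) = 1\} = \Sigma^\omega$, i.e.\ if and only if $\Lc^{>0}(\Pc') = \emptyset$; as the latter is undecidable, so is \textsf{Resolver-coBuchi}. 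I would close by discharging the routine well-formedness checks: that $\Pc'$ may be taken complete with rational probabilities without loss of generality, that $\Rc$ is a legitimate finite-memory (indeed memoryless) resolver because the retained probabilities already sum to $1$ at every state--letter pair, and that $\Ac$ is complete.

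The main obstacle I anticipate is conceptual rather than computational: one must recognise that verifying an almost-sure resolver against a universal automaton is exactly a universality test for the almost-sure semantics of the resolver-product, and then pair this with the correct undecidability source. The delicate point in the construction is that the universality gadget must not perturb acceptance probabilities, which is precisely why the added transitions receive probability $0$ under $\Rc$ rather than a small positive weight. By contrast, invoking the duality at the level of probability measures rather than individual runs is immediate here, because coB\"uchi- and B\"uchi-acceptance partition the run space for each fixed word.
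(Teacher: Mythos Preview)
Your proposal is correct and follows essentially the same approach as the paper: both reduce from positive-semantics emptiness of probabilistic B\"uchi automata via the B\"uchi/coB\"uchi duality on the run measure. The only difference is cosmetic---you force universality of $\Ac$ by adding sink transitions that the resolver takes with probability $0$, whereas the paper uses a small $\tfrac{1}{2}$/$\tfrac{1}{2}$ branching gadget (states $s_1,s_2$) that sends one branch to an accepting sink and the other into the dualised PBA; your construction is, if anything, slightly cleaner.
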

Our reduction to prove undecidability of the problem of \textsf{Resolver-coBuchi} is by reducing an instance of the problem of checking the emptiness of probabilistic B\"uchi automaton under the positive semantics of acceptance (the word is accepted if there is an accepting run of non-zero probability). This problem was proved undecidable by Baier, Bertrand, and Gr\"o{\ss}er~\cite[Theorem~2]{BBG08}.    

The dual problem of emptiness checking for probabilistic coB\"uchi automata is decidable. Therefore, we show the following lemma using a different reduction.
% \begin{question}[Resolver-Buchi] 
%     Given a finite-memory resolver $\Rc$ for a B\"uchi automaton $\Bc$, is $\Rc$ an almost-sure resolver for $\Bc$?
% \end{question}
\begin{restatable}{lemma}{lemmaUndecidableBuchi}\label{lemma:UndecidableBuchi}
    The problem \textsf{Resolver-Buchi} is undecidable.
\end{restatable}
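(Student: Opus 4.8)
The plan is to prove undecidability of \textsf{Resolver-Buchi} by reduction from a known-undecidable emptiness problem for probabilistic automata, analogous to how \cref{lemma:UndecidablecoBuchi} reduced from positive-semantics emptiness of probabilistic B\"uchi automata. The key difficulty flagged in the text is that the obvious dual---emptiness of probabilistic coB\"uchi automata---is \emph{decidable}, so I cannot reuse the same source problem; I must find an undecidable probabilistic-automaton problem whose acceptance structure matches the B\"uchi resolver setting. The natural candidate is the \emph{almost-sure} (or value-$1$, or quantitative threshold) emptiness problem for probabilistic automata, several variants of which are undecidable by the results of Baier--Bertrand--Gr\"o\ss er and related work; I would pick a variant whose semantics lines up with ``the resolver-product accepts every word in $\Lc(\Ac)$ with probability exactly $1$''.

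First I would recall that a resolver $\Rc$ for a B\"uchi automaton $\Ac$ is an almost-sure resolver precisely when the resolver-product $\Pc = \Rc \circ \Ac$ satisfies $\Lc(\Ac) = \Lc(\Pc)$ under the almost-sure semantics $\Lc(\Pc) = \{w \mid \prob_\Pc(w) = 1\}$ (using the definitions in \cref{sec:prelims} and \cref{lemma:random-is-pure}). Since $\Pc$ is obtained by composing a given finite-memory stochastic resolver with $\Ac$, the object $\Pc$ is itself an arbitrary probabilistic B\"uchi automaton under almost-sure semantics, and checking whether $\Rc$ is an almost-sure resolver amounts to checking the language inclusion $\Lc(\Ac) \subseteq \Lc(\Pc)$ (the reverse inclusion being automatic as the product only restricts runs). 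So the plan is to engineer, from an instance of the undecidable probabilistic B\"uchi problem, a nondeterministic B\"uchi automaton $\Ac$ together with a resolver $\Rc$ so that the target probabilistic automaton arising as $\Rc \circ \Ac$ is exactly (a simulation of) the instance, and so that $\Lc(\Ac)$ is a fixed easily-described language --- ideally $\Sigma^\omega$ or something that forces the inclusion $\Lc(\Ac)\subseteq \Lc(\Pc)$ to be equivalent to almost-sure acceptance of all relevant words.

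The key steps, in order: (1) fix the undecidable source problem --- I would use the undecidability of deciding whether a probabilistic B\"uchi automaton accepts \emph{some} word with probability $1$ (almost-sure emptiness), or the complementary universality-style statement, choosing whichever polarity makes the inclusion direction come out right; (2) build $\Ac$ so that its nondeterminism is concentrated at a single gadget where the given resolver $\Rc$ injects the transition probabilities of the source automaton, ensuring $\Rc \circ \Ac$ faithfully simulates the source probabilistic automaton's acceptance behaviour letter-by-letter; (3) arrange the B\"uchi acceptance of $\Ac$ and the dead-end/sink structure so that $\Lc(\Ac)$ coincides with the set of words on which ``correctness'' of the resolver must be tested, making $\Rc$ an almost-sure resolver iff the source instance is positive; (4) verify both inclusions $\Lc(\Ac)=\Lc(\Rc\circ\Ac)$ reduce to the source decision, using \cref{lemma:random-is-pure} to pass between the resolver formulation and the SR-game formulation if convenient.

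The main obstacle I anticipate is the mismatch in acceptance type: because coB\"uchi emptiness is decidable, a careless B\"uchi construction could collapse to a decidable problem, so the reduction must genuinely exploit the B\"uchi (liveness) acceptance of $\Pc$ rather than a safety/coB\"uchi shadow of it. Concretely, I must ensure that the undecidable behaviour of the source automaton manifests in \emph{infinitely-often} acceptance of the product run with probability $1$, and that the gadget cannot be ``shortcut'' by the resolver using memory in a way that trivialises the probabilities. Getting the language $\Lc(\Ac)$ to be both simple enough to describe and exactly aligned with the words witnessing the source instance --- so that the inclusion test is equivalent to the undecidable question and not to some decidable relaxation --- is the delicate point; I would handle it by making $\Ac$ accept a fixed $\omega$-regular ``frame'' language via deterministic transitions everywhere except the single probabilistic gadget, so that \emph{every} frame-word must be accepted almost surely exactly when the embedded probabilistic B\"uchi instance meets its threshold.
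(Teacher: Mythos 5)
Your high-level framing is right (reduce from an undecidable probabilistic-automaton problem, as the coB\"uchi case did), and you correctly flag the danger of collapsing to a decidable problem. But the proposal has a genuine gap at exactly the point you defer: you never commit to a correct undecidable source problem, and the candidates you name do not work. ``Almost-sure emptiness'' of probabilistic B\"uchi automata is \emph{decidable} (it is the positive coB\"uchi/almost-sure B\"uchi side of the known decidability frontier), and more damningly, your concrete architecture---a \emph{fixed} $\omega$-regular frame language with deterministic transitions everywhere except one probabilistic gadget, so that ``every frame-word must be accepted almost surely''---is itself a decidable question. Checking whether $\prob_{\Rc\circ\Ac}(w)=1$ for all $w$ in a fixed regular frame reduces to universality of a probabilistic B\"uchi automaton under almost-sure semantics, whose complement is emptiness of a probabilistic coB\"uchi automaton under positive semantics; the paper explicitly notes this dual emptiness problem is decidable, which is precisely why the coB\"uchi-style reduction cannot be dualised. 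So the ``decidable relaxation'' you worry about is not a risk to be engineered around within your scheme---it \emph{is} your scheme. The undecidability of \textsf{Resolver-Buchi} must come from making $\Lc(\Ac)$ instance-dependent, entangling the qualitative language of the nondeterministic automaton with the quantitative behaviour of the resolver product.

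The paper's proof supplies the missing idea: it reduces from the \emph{zero-isolation} problem for probabilistic \emph{finite} automata (undecidable by Gimbert and Oualhadj~\cite{GO10}), not from any PBA emptiness variant. Given a PFA $\Pc$, the B\"uchi automaton $\Bc$ consists of the support NFA of $\Pc$ with $\$$-transitions back to the initial state, accepting (priority $2$) exactly from final states, so that $\Lc(\Bc)$ is the set of words $w_1\$w_2\$\cdots$ with infinitely many segments $w_i$ in the (regular) support language $L_{>0}(\Pc)$; the resolver mirrors the PFA's transition probabilities. The point is that membership in $\Lc(\Bc)$ only requires each tested segment to have \emph{some} accepting run (positive probability), while almost-sure B\"uchi acceptance of the product requires the segment acceptance probabilities not to vanish: if zero is not isolated, segments $w_n$ with $0<\prob_\Pc(w_n)<2^{-n}$ yield a word in $\Lc(\Bc)$ on which the resolver accepts with probability $0$ by the Borel--Cantelli lemma (\cref{lemma:borellcantelli}); if zero is isolated by $c>0$, every word in $\Lc(\Bc)$ has infinitely many independent segments each succeeding with probability $>c$, and the second Borel--Cantelli lemma (\cref{lemma:secondborellcantelli}) gives almost-sure acceptance. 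Note also that the value-$1$ problem, your other candidate, is misaligned in polarity: for almost-sure B\"uchi acceptance across infinitely many independent segments, any uniform positive lower bound on segment probabilities suffices, so what matters is whether positive probabilities can approach $0$ (zero isolation), not whether probabilities can approach $1$.
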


We reduce from the zero-isolation problem for finite probabilistic automata. We define the problem formally in the appendix, but informally, the zero-isolation problems for finite probabilistic automata asks if there are words for which probability of acceptance is positive and reaches arbitrarily close to $0$. This problem was proved undecidable by Gimbert and Oaulhadj~\cite[Theorem 4]{GO10}. 

% Note that we do not know if it is undecidable to check if a given automaton is SR for these classes of automaton. It would be interesting to settle the decidability status of the problem, but conjecture at this point is that  this problem is also undecidable. 

\section{Discussion}
We introduced several classes of automata that act as intermediary classes between deterministic and nondeterministic automata. These new classes of automata are distinct from similar notions in the literature. We extensively compared the novel classes of MA, MR, and SR automata with HD and SD automata. Here, we briefly discuss a few other related notions such as almost-DBP or good-for-MDP automata.  
In an effort to also reason about history deterministic and semantically deterministic automata, Abu Radi, Kupferman, and Leshkowitz~\cite{AKL21} introduced and studied a class of automata they called almost-determinisable by pruning (almost-DBP). 
While their definition is related to the probability that a run on a randomly generated word in the language is accepting, we study the notion where a randomly generated run by a resolver on every word in the language is accepting.  
To further highlight the difference, almost-DBP B\"uchi auomtata are the same as semantically deterministic B\"uchi automata, whereas SR B\"uchi automata are a different class than SD B\"uchi auomtata.  

Good-for-MDP automata are automata that admit compositionality with MDPs that make them relevant for reinforcement learning and MDP model checking~\cite{HPSS0W20}. We remark that SR automata are good-for-MDP, since an almost-sure resolver for an automaton can be used as a strategy for syntactic satisfaction objective for the product of automaton with any MDP. The converse is not true, however, since good-for-MDP B\"uchi automata recognise all $\omega$-regular languages~\cite[Section 3.2]{HPSS0W20} but SR B\"uchi automata are only as expressive as deterministic B\"uchi automata (\cref{theorem:pih}).

 %Another related work to ours is that of probabilistic finite state monitors~\cite{CSV08}, which consider randomising transitions in safety automata, similar to probabilistic automata,
Although we have studied the classes of MA, SR, and MR automata comprehensively, some fundamental questions about these classes are still unknown. For the class of MA automata, the exact probability distributions in the resolver do not matter (\cref{lemma:indifferent2probabilities}). However, the same remains unknown for SR and MR automata. 
\begin{question}
Are almost-sure resolvers that resolve nondeterminism in stochastically resolvable settings indifferent to the exact probabilities distribution?  
\end{question}
Despite the problem of resolver checking for the cases of B\"uchi or coB\"uchi being undecidable, more study is needed to conclude the complexity status of the class membership problem for SR and MR automata. 
\begin{question}
    Is it decidable to check if a given (co)B\"uchi automaton is SR?
\end{question}
Our result of \textsf{Resolver-coBuchi} being undecidable is somewhat surprising when considered together with \cref{theorem:coBuchiHDisSR}, where we showed that every SR coB\"uchi automaton can be efficiently converted to an MA coB\"uchi automaton in polynomial time without any additional states. Thus, it might be reasonable to expect that the above question can be answered affirmatively.% for coB\"uchi automaton is decidable.

\cref{theorem:coBuchiHDisSR} also implies that HD coB\"uchi automata have language-equivalent MA coB\"uchi automata with at most as many states. We ask whether the same holds for parity automata, and thus, if it is  possible to trade the exponential memory for resolvers in HD parity automata with randomness.  
\begin{question}
    Does every HD parity automaton have a language-equivalent MA parity automaton of the same size?
\end{question}

% Similarly, checking if a B\"uchi or coB\"uchi automaton is MA is in $\NP$, but the lower bound results hold for parity automata with arbitrarily many priorities. 
% \theju{low hanging fruit. do we leave it here as a conjecture? i believe checking MA is in $\P$ for coBuchi. If we had a week where we could think, I think we could prove it. We could also stop at the previous conjecture?}

% \apbutlong{I think we should not mention it. I would rather mention the conjecture that every HD parity automaton has  a language-equivalent MA parity automaton with at most as many states.}

\bibliography{gfg}

\newcommand{\etalchar}[1]{$^{#1}$}
\begin{thebibliography}{BMM{\etalchar{+}}23}

\bibitem[AK20]{AK20}
Shaull Almagor and Orna Kupferman.
\newblock Good-enough synthesis.
\newblock In {\em {CAV} {(2)}}, volume 12225 of {\em Lecture Notes in Computer Science}, pages 541--563. Springer, 2020.

\bibitem[AK22]{AK22}
Bader {Abu Radi} and Orna Kupferman.
\newblock {Minimization and Canonization of {GFG} Transition-Based Automata}.
\newblock {\em Log. Methods Comput. Sci.}, 18(3), 2022.

\bibitem[AK23]{AK23}
Bader {Abu Radi} and Orna Kupferman.
\newblock {On Semantically-Deterministic Automata}.
\newblock In {\em International Colloquium on Automata, Languages, and Programming, {ICALP} 2023}, volume 261 of {\em LIPIcs}, pages 109:1--109:20. Schloss Dagstuhl - Leibniz-Zentrum f{\"{u}}r Informatik, 2023.

\bibitem[AKL21]{AKL21}
Bader {Abu Radi}, Orna Kupferman, and Ofer Leshkowitz.
\newblock {A Hierarchy of Nondeterminism}.
\newblock In {\em Mathematical Foundations of Computer Science, MFCS 2021}, volume 202 of {\em LIPIcs}, pages 85:1--85:21. Schloss Dagstuhl - Leibniz-Zentrum f{\"{u}}r Informatik, 2021.

\bibitem[BBG08]{BBG08}
Christel Baier, Nathalie Bertrand, and Marcus Gr{\"{o}}{\ss}er.
\newblock {On Decision Problems for Probabilistic B{\"{u}}chi Automata}.
\newblock In {\em FoSSaCS}, volume 4962 of {\em Lecture Notes in Computer Science}, pages 287--301. Springer, 2008.

\bibitem[BGB12]{BGB12}
Christel Baier, Marcus Gr\"{o}sser, and Nathalie Bertrand.
\newblock Probabilistic $\omega$-automata.
\newblock {\em J. ACM}, 59(1), March 2012.

\bibitem[BK18]{BK18}
Marc Bagnol and Denis Kuperberg.
\newblock {B{\"{u}}chi Good-for-Games Automata Are Efficiently Recognizable}.
\newblock In {\em Foundations of Software Technology and Theoretical Computer Science, {FSTTCS} 2018}, volume 122 of {\em LIPIcs}, pages 16:1--16:14. Schloss Dagstuhl - Leibniz-Zentrum f{\"{u}}r Informatik, 2018.

\bibitem[BKS17]{BKS17}
Udi Boker, Orna Kupferman, and Michal Skrzypczak.
\newblock {How Deterministic are Good-For-Games Automata?}
\newblock In {\em {FSTTCS}}, volume~93 of {\em LIPIcs}, pages 18:1--18:14. Schloss Dagstuhl - Leibniz-Zentrum f{\"{u}}r Informatik, 2017.

\bibitem[BL21]{BL21}
Udi Boker and Karoliina Lehtinen.
\newblock {History Determinism vs. Good for Gameness in Quantitative Automata}.
\newblock In {\em Proc.\ of {FSTTCS}}, pages 35:1--35:20, 2021.

\bibitem[BL23a]{BL23quantitative}
Udi Boker and Karoliina Lehtinen.
\newblock {Token Games and History-Deterministic Quantitative-Automata}.
\newblock {\em Logical Methods in Computer Science}, 19(4), 2023.

\bibitem[BL23b]{BL23}
Udi Boker and Karoliina Lehtinen.
\newblock {When a Little Nondeterminism Goes a Long Way: An Introduction to History-Determinism}.
\newblock {\em {ACM} {SIGLOG} News}, 10(1):24--51, 2023.

\bibitem[BMM{\etalchar{+}}23]{BNNSS22}
Tamajit Banerjee, Rupak Majumdar, Kaushik Mallik, Anne{-}Kathrin Schmuck, and Sadegh Soudjani.
\newblock {Fast Symbolic Algorithms for Omega-Regular Games under Strong Transition Fairness}.
\newblock {\em TheoretiCS}, 2, 2023.

\bibitem[Bor09]{Bor09}
{\'E}mile Borel.
\newblock Les probabilit{\'e}s d{\'e}nombrables et leurs applications arithm{\'e}tiques.
\newblock {\em Rendiconti del Circolo Matematico di Palermo (1884-1940)}, 27(1):247--271, Dec 1909.

\bibitem[Can17]{Can17}
Francesco Cantelli.
\newblock Sulla probabilista come limita della frequencza.
\newblock {\em Rend. Accad. Lincei}, 26:39, 1917.

\bibitem[CDGH15]{CDGH15}
Krishnendu Chatterjee, Laurent Doyen, Hugo Gimbert, and Thomas~A. Henzinger.
\newblock {Randomness for free}.
\newblock {\em Inf. Comput.}, 245:3--16, 2015.

\bibitem[CH11]{CH11}
Krishnendu Chatterjee and Monika Henzinger.
\newblock {\em Faster and Dynamic Algorithms For Maximal End-Component Decomposition And Related Graph Problems In Probabilistic Verification}, pages 1318--1336.
\newblock 2011.

\bibitem[Cha07]{Cha07}
Krishnendu Chatterjee.
\newblock Stochastic m{\"{u}}ller games are pspace-complete.
\newblock In {\em {FSTTCS}}, volume 4855 of {\em Lecture Notes in Computer Science}, pages 436--448. Springer, 2007.

\bibitem[CHP07]{CHP07}
Krishnendu Chatterjee, Thomas~A. Henzinger, and Nir Piterman.
\newblock {Generalized Parity Games}.
\newblock In {\em Foundations of Software Science and Computational Structures, 10th International Conference, {FOSSACS} 2007}, volume 4423 of {\em Lecture Notes in Computer Science}, pages 153--167. Springer, 2007.

\bibitem[DJW97]{DJW97}
Stefan Dziembowski, Marcin Jurdzinski, and Igor Walukiewicz.
\newblock {How Much Memory is Needed to Win Infinite Games?}
\newblock In {\em {LICS}}, pages 99--110. {IEEE} Computer Society, 1997.

\bibitem[EKS16]{EKS16}
Javier Esparza, Jan Křetínský, and Salomon Sickert.
\newblock {From LTL to deterministic automata}.
\newblock {\em Formal Methods in System Design}, 49(3):219--271, 2016.

\bibitem[ES22]{ES22}
R{\"{u}}diger Ehlers and Sven Schewe.
\newblock {Natural Colors of Infinite Words}.
\newblock In {\em {FSTTCS}}, volume 250 of {\em LIPIcs}, pages 36:1--36:17. Schloss Dagstuhl - Leibniz-Zentrum f{\"{u}}r Informatik, 2022.

\bibitem[GH10]{GH10}
Hugo Gimbert and Florian Horn.
\newblock Solving simple stochastic tail games.
\newblock In {\em {SODA}}, pages 847--862. {SIAM}, 2010.

\bibitem[GO10]{GO10}
Hugo Gimbert and Youssouf Oualhadj.
\newblock Probabilistic automata on finite words: Decidable and undecidable problems.
\newblock In {\em {ICALP} {(2)}}, volume 6199 of {\em Lecture Notes in Computer Science}, pages 527--538. Springer, 2010.

\bibitem[GOP11]{GOP11}
Hugo Gimbert, Youssouf Oualhadj, and Soumya Paul.
\newblock {Computing Optimal Strategies for Markov Decision Processes with Parity and Positive-Average Conditions}.
\newblock working paper or preprint, January 2011.

\bibitem[HD05]{HD05}
Paul Hunter and Anuj Dawar.
\newblock {Complexity Bounds for Regular Games}.
\newblock In {\em {MFCS}}, volume 3618 of {\em Lecture Notes in Computer Science}, pages 495--506. Springer, 2005.

\bibitem[HK23]{HK23}
Emile Hazard and Denis Kuperberg.
\newblock {Explorable Automata}.
\newblock In {\em {CSL}}, volume 252 of {\em LIPIcs}, pages 24:1--24:18. Schloss Dagstuhl - Leibniz-Zentrum f{\"{u}}r Informatik, 2023.

\bibitem[HP06]{HP06}
Thomas~A. Henzinger and Nir Piterman.
\newblock {Solving Games Without Determinization}.
\newblock In {\em Computer Science Logic, {CSL} 2006}, volume 4207 of {\em Lecture Notes in Computer Science}, pages 395--410. Springer, 2006.

\bibitem[HPS{\etalchar{+}}20]{HPSS0W20}
Ernst~Moritz Hahn, Mateo Perez, Sven Schewe, Fabio Somenzi, Ashutosh Trivedi, and Dominik Wojtczak.
\newblock {Good-for-MDPs Automata for Probabilistic Analysis and Reinforcement Learning}.
\newblock In {\em {TACAS} {(1)}}, volume 12078 of {\em Lecture Notes in Computer Science}, pages 306--323. Springer, 2020.

\bibitem[Kec12]{Kec12}
A.~Kechris.
\newblock {\em Classical Descriptive Set Theory}.
\newblock Graduate Texts in Mathematics. Springer New York, 2012.

\bibitem[KPV06]{KPV06}
Orna Kupferman, Nir Piterman, and Moshe~Y. Vardi.
\newblock Safraless compositional synthesis.
\newblock In {\em {CAV}}, volume 4144 of {\em Lecture Notes in Computer Science}, pages 31--44. Springer, 2006.

\bibitem[KS15]{KS15}
Denis Kuperberg and Michal Skrzypczak.
\newblock {On Determinisation of Good-for-Games Automata}.
\newblock In {\em {ICALP} {(2)}}, volume 9135 of {\em Lecture Notes in Computer Science}, pages 299--310. Springer, 2015.

\bibitem[KV05]{KV05}
Orna Kupferman and Moshe~Y. Vardi.
\newblock Safraless decision procedures.
\newblock In {\em {FOCS}}, pages 531--542. {IEEE} Computer Society, 2005.

\bibitem[Lan69]{Lan69}
L.~H. Landweber.
\newblock Decision problems for $\omega$-automata.
\newblock {\em Mathematical Systems Theory}, 3(4):376--384, 1969.

\bibitem[LP25]{LP25}
Karoliina Lehtinen and Aditya Prakash.
\newblock {The 2-Token Theorem: Recognising History-Deterministic Parity Automata Efficiently}.
\newblock 2025.
\newblock Private communication, to appear in STOC 2025. A publicly available proof for results can be found in [Pra25].

\bibitem[McN66]{McN66}
Robert McNaughton.
\newblock Testing and generating infinite sequences by a finite automaton.
\newblock {\em Information and Control}, 9(5):521--530, 1966.

\bibitem[NW98]{NW98}
Damian Niwinski and Igor Walukiewicz.
\newblock Relating hierarchies of word and tree automata.
\newblock In {\em {STACS}}, pages 320--331, 1998.

\bibitem[Pra24]{Pra24a}
Aditya Prakash.
\newblock {Checking History-Determinism is NP-hard for Parity Automata}.
\newblock In {\em {FoSSaCS {(1)}}}, volume 14574 of {\em Lecture Notes in Computer Science}, pages 212--233. Springer, 2024.

\bibitem[Pra25]{Pra25}
Aditya Prakash.
\newblock {\em History-Deterministic Parity Automata: Games, Complexity, and the 2-Token Theorem}.
\newblock PhD thesis, University of Warwick, 2025.

\bibitem[Saf88]{Saf88}
Shmuel Safra.
\newblock On the complexity of omega-automata.
\newblock In {\em {FOCS}}, pages 319--327. {IEEE} Computer Society, 1988.

\bibitem[SC16]{Shi16}
A.N. Shiryaev and D.M. Chibisov.
\newblock {\em Probability-1}.
\newblock Graduate Texts in Mathematics. Springer New York, 2016.

\bibitem[SS10]{SS10}
E.M. Stein and R.~Shakarchi.
\newblock {\em Complex Analysis}.
\newblock Princeton lectures in analysis. Princeton University Press, 2010.

\bibitem[Tar72]{Tar72}
Robert Tarjan.
\newblock Depth-first search and linear graph algorithms.
\newblock {\em SIAM Journal on Computing}, 1(2):146--160, 1972.

\bibitem[Var85]{Var85}
Moshe~Y. Vardi.
\newblock Automatic verification of probabilistic concurrent finite-state programs.
\newblock In {\em {FOCS}}, pages 327--338. {IEEE} Computer Society, 1985.

\end{thebibliography}
\appendix
\section{Useful definitions and lemmas}
\subsection{Lemmas concerning probability and analysis}\label{app:defs}
We state two important lemma about an infinite sequence of events below, commonly referred to as the Borel-Cantelli lemma and the second Borel-Cantelli Lemma. Note that the second Borel-Cantelli lemma however only holds if the infinite sequence of events are independent. 

\begin{lemma}[Borel-Cantelli lemma \cite{Bor09,Can17, Shi16}]\label{lemma:borellcantelli}
    Let $E_1,E_2\dots,E_n$ be an infinite sequence of events in some probability space. If the sum of probabilities of the events is finite, then the probability that infinitely many of them occur is $0$. That is, $$\text{if }\sum_{n=1}^\infty\Pr[E_n]<\infty \text{ then } \Pr[\limsup_{n\to\infty}E_n] = 0.$$
\end{lemma}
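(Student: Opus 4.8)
The plan is to prove the classical first Borel--Cantelli lemma by the standard ``tail of a convergent series'' argument, working directly with the measure of the $\limsup$ event. First I would recall that the event ``infinitely many of the $E_n$ occur'' is exactly the set
$$\limsup_{n\to\infty} E_n = \bigcap_{n=1}^{\infty} \bigcup_{m \geq n} E_m,$$
since a point lies in this set precisely when, for every threshold $n$, it still belongs to some $E_m$ with $m \geq n$, i.e.\ it belongs to infinitely many of the events. Establishing this set-theoretic identity is the conceptual step; everything afterwards is a routine estimate on measures.

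Next I would fix an arbitrary $n$ and observe the inclusion $\limsup_{m} E_m \subseteq \bigcup_{m \geq n} E_m$, which is immediate from the intersection form above. By monotonicity of the probability measure followed by countable subadditivity (the union bound), this yields
$$\Pr\!\left[\limsup_{m\to\infty} E_m\right] \;\leq\; \Pr\!\left[\bigcup_{m \geq n} E_m\right] \;\leq\; \sum_{m \geq n} \Pr[E_m].$$
Since this holds for every $n$, I would then take the limit $n \to \infty$ on the right-hand side. The hypothesis $\sum_{m=1}^{\infty} \Pr[E_m] < \infty$ guarantees that the tail sums $\sum_{m \geq n} \Pr[E_m]$ converge to $0$ as $n \to \infty$, because the tail of a convergent series vanishes. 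As the left-hand side does not depend on $n$, we conclude $\Pr[\limsup_m E_m] = 0$, as claimed.

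There is no substantial obstacle here, as this is a textbook result; the only point requiring care is the correct manipulation of the $\limsup$ of events and the justification that the bound $\Pr[\limsup_m E_m] \leq \sum_{m \geq n} \Pr[E_m]$ is valid \emph{uniformly} in $n$, so that passing to the limit is legitimate. I would note that the argument uses \emph{only} countable subadditivity and monotonicity of the measure, and in particular does \emph{not} require independence of the events $E_n$ --- this is precisely the asymmetry with the second Borel--Cantelli lemma, whose converse-direction conclusion does rely on independence.
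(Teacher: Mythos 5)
Your proof is correct and complete: the identity $\limsup_n E_n = \bigcap_n \bigcup_{m\geq n} E_m$, the bound via monotonicity and countable subadditivity, and the vanishing of the tails of a convergent series are exactly the standard argument, and you rightly note that independence plays no role here. The paper itself does not prove this lemma at all --- it is stated as a classical result with citations to Borel, Cantelli, and Shiryaev --- so there is no in-paper proof to compare against; your write-up is the canonical textbook proof and would serve as a faithful self-contained substitute for the citation.
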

\begin{lemma}[Second Borel-Cantelli lemma~\cite{Bor09,Can17, Shi16}]\label{lemma:secondborellcantelli}
      Let $E_1,E_2\dots,E_n$ be an infinite sequence of \emph{independent} events in some probability space. If the sum of probabilities of the events is infinite, then almost-surely infinitely many of them occur. That is, $$\text{if }\sum_{n=1}^\infty\Pr[E_n] = \infty \text{ then } \Pr[\limsup_{n\to\infty}E_n] = 1.$$  
\end{lemma}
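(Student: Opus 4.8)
The plan is to prove the statement by showing that the complementary event has probability zero. Writing the event that infinitely many of the $E_n$ occur as $\limsup_{n\to\infty} E_n = \bigcap_{N\geq 1}\bigcup_{n\geq N} E_n$, its complement is $\bigcup_{N\geq 1}\bigcap_{n\geq N} E_n^c$, the event that only finitely many of the $E_n$ occur. By countable subadditivity it suffices to show that $\Pr[\bigcap_{n\geq N} E_n^c] = 0$ for every fixed $N$, since then $\Pr[\bigcup_N\bigcap_{n\geq N} E_n^c] \leq \sum_N \Pr[\bigcap_{n\geq N} E_n^c] = 0$, and hence $\Pr[\limsup_{n\to\infty} E_n] = 1$, as desired.

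First I would fix $N$ and study the decreasing sequence of events $\bigcap_{n=N}^{M} E_n^c$ as $M\to\infty$. By the independence of the $E_n$, and hence of the complements $E_n^c$, the probability of the finite intersection factorises as $\Pr[\bigcap_{n=N}^M E_n^c] = \prod_{n=N}^M (1 - \Pr[E_n])$. The key analytic step is the elementary inequality $1 - x \leq e^{-x}$, valid for all real $x$, which yields $\prod_{n=N}^M (1 - \Pr[E_n]) \leq \exp\left(-\sum_{n=N}^M \Pr[E_n]\right)$.

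Next I would invoke the divergence hypothesis. Since $\sum_{n=1}^\infty \Pr[E_n] = \infty$, every tail $\sum_{n=N}^\infty \Pr[E_n]$ also diverges, so the upper bound $\exp(-\sum_{n=N}^M \Pr[E_n])$ tends to $0$ as $M\to\infty$. By continuity of the measure from above—the events $\bigcap_{n=N}^M E_n^c$ decrease to $\bigcap_{n\geq N} E_n^c$ as $M\to\infty$—I conclude $\Pr[\bigcap_{n\geq N} E_n^c] = \lim_{M\to\infty}\prod_{n=N}^M (1 - \Pr[E_n]) = 0$, which closes the argument.

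The main obstacle is conceptual rather than computational, and is quite mild: the only place where independence is genuinely used is the factorisation of the probability of the finite intersection, and the only nonroutine inequality is $1 - x \leq e^{-x}$ combined with the observation that divergence of the full series forces divergence of every tail. I would take care to emphasise that independence is indispensable, since without it the factorisation fails and the conclusion can be false; this is precisely why the hypothesis is highlighted in the statement and explicitly flagged in the preceding remark.
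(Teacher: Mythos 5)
Your proof is correct and is the standard textbook argument: complement the $\limsup$, factor the finite intersection $\Pr[\bigcap_{n=N}^{M} E_n^c] = \prod_{n=N}^{M}(1-\Pr[E_n])$ using independence (of the complements), bound by $e^{-\sum \Pr[E_n]}$ via $1-x\leq e^{-x}$, and let tail divergence plus continuity from above finish the job. The paper itself gives no proof of this lemma --- it is stated as a classical result with citations to Borel, Cantelli, and Shiryaev --- so there is nothing to diverge from; your write-up is exactly the proof one would find in the cited references, with the one genuinely load-bearing use of independence correctly identified.
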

We will also use the following standard result from analysis.
\begin{proposition}[\!\!\protect{\cite[Proposition~3.1]{SS10}}]\label{prop:complexanalysisConverge}
If the infinite sum $\sum_{i=1}^\infty|a_n|<\infty$, then the infinite product $\prod_{i=1}^\infty(1+a_n)$ converges. Moreover, the product converges to 0 if and only if at least one of its factors is 0.
\end{proposition}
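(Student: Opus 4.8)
The plan is to view the partial products $P_N = \prod_{n=1}^N (1 + a_n)$ (with $P_0 = 1$) as an ordinary sequence of complex numbers, show it is convergent by a telescoping/absolute-summability argument, and then separately characterise when the limit is $0$. The starting observation I would record is the elementary inequality $|1 + z| \le 1 + |z| \le e^{|z|}$, valid for every complex $z$. Multiplying it over $n \le N$ yields the uniform bound
\[ |P_N| \;\le\; \prod_{n=1}^N (1 + |a_n|) \;\le\; \exp\left(\sum_{n=1}^N |a_n|\right) \;\le\; \exp\left(\sum_{n=1}^\infty |a_n|\right) \;=:\; B \,<\, \infty, \]
so all partial products lie in a fixed disc of radius $B$; this bound is what powers the convergence step.

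For convergence I would telescope. Since $P_N = P_{N-1}(1 + a_N)$, we have $P_N - P_{N-1} = P_{N-1}\,a_N$, and the uniform bound gives $|P_N - P_{N-1}| \le B\,|a_N|$. Hence $\sum_{N} |P_N - P_{N-1}| \le B \sum_N |a_N| < \infty$, so the series $\sum_N (P_N - P_{N-1})$ converges absolutely. As $P_N = 1 + \sum_{n=1}^N (P_n - P_{n-1})$, the partial products converge to some limit $P$.

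It remains to treat the \emph{zero iff a factor is zero} clause. One direction is immediate: if $1 + a_{n_0} = 0$ for some $n_0$, then $P_N = 0$ for all $N \ge n_0$, so $P = 0$. For the converse, assume every factor is nonzero. Using $\sum |a_n| < \infty$, I would fix $M$ with $\sum_{n > M} |a_n| < \tfrac12$, so in particular $|a_n| < 1$ for $n > M$. Then for each $N > M$ the reverse triangle inequality $|1 + a_n| \ge 1 - |a_n| \ge 0$ combined with the Weierstrass product inequality $\prod_{n = M+1}^N (1 - |a_n|) \ge 1 - \sum_{n > M} |a_n| \ge \tfrac12$ gives that the tail product $Q_N = \prod_{n = M+1}^N (1 + a_n)$ satisfies $|Q_N| \ge \tfrac12$. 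Since $Q_N = P_N / P_M$ and $P_M \ne 0$ (no factor vanishes), letting $N \to \infty$ yields $|P| = |P_M|\cdot\lim_N |Q_N| \ge |P_M|/2 > 0$, so $P \ne 0$.

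I expect the only genuinely subtle point to be the non-vanishing of the limit: convergence of $P_N$ alone does not rule out $P = 0$, so for that half of the equivalence one cannot reuse the additive upper bound from the convergence step but must instead invoke the \emph{multiplicative lower} bound (the Weierstrass inequality) on a tail where $|a_n| < 1$. Everything else is a direct exploitation of the hypothesis $\sum_n |a_n| < \infty$.
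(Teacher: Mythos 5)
Your proof is correct and complete. One thing worth noting: the paper itself contains no proof of this proposition --- it is imported with a citation to Stein and Shakarchi \cite[Proposition~3.1]{SS10} --- so the only meaningful comparison is with the textbook argument. That argument runs through the complex logarithm: since $\sum|a_n|<\infty$ forces $|a_n|<\tfrac12$ for all large $n$, one writes $1+a_n=e^{\log(1+a_n)}$ on the tail, uses the bound $|\log(1+z)|\le 2|z|$ for $|z|<\tfrac12$ to see that $\sum\log(1+a_n)$ converges absolutely, and concludes that the tail product equals the exponential of this sum --- which yields convergence and non-vanishing of the tail in a single stroke, with the zero-iff-a-factor-vanishes clause then reduced to the finitely many initial factors. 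Your route is genuinely different and more elementary: convergence comes from the telescoping estimate $|P_N-P_{N-1}|\le B|a_N|$ with the uniform bound $B=\exp\bigl(\sum_n|a_n|\bigr)$, and non-vanishing comes from the Weierstrass product inequality $\prod_{n=M+1}^N(1-|a_n|)\ge 1-\sum_{n>M}|a_n|\ge\tfrac12$ on a tail with small sum, so you never need the logarithm, its branch conventions, or the lemma $|\log(1+z)|\le 2|z|$. Your closing remark is exactly the right diagnosis: mere convergence of $P_N$ cannot rule out $P=0$, so the multiplicative lower bound (or, in the textbook, the surjectivity of $\exp$ onto $\mathbb{C}\setminus\{0\}$) is the one indispensable extra ingredient; your version isolates it cleanly, at the mild cost of proving the Weierstrass inequality by induction, which the logarithmic proof gets to skip.
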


\subsection{Partial-observation games}

We deal with 2-player turn-based partial-observation stochastic games in this paper that are played between Eve and Adam. These are given by a tuple $$\Gc=(V_{\eve},V_{\adam},E,\Act_{\eve},\Act_{\adam},\Oc_{\eve},\Oc_{\adam},\delta)$$ with the following components.
\begin{enumerate}
    \item \emph{Arena.} The set of vertices is given by $V=V_{\eve}\uplus V_{\adam}$, with Eve (resp.\ Adam) said to own the vertices in $V_{\eve}$ (resp.\ $V_{\adam}$). Edges $E$ are directed edges that are a subset of the set $(V_{\eve} \times V_{\adam}) \cup (V_{\adam}\times V_{\eve})$.
    \item \emph{Actions.} $\Act_{\eve}$ (resp.\ $\Act_{\adam}$) is a finite set of actions for Eve (resp.\ Adam). We use $\Act=\Act_{\eve} \uplus \Act_{\adam}$ to denote the set of actions for both players.  
    \item \emph{Transition function.} The function $\delta$ is a probabilistic transition function $$\delta:V\times \Act \xrightarrow{} \Distribution(E)$$ that, for each vertex in $V_{\eve}$ (resp.\ $V_{\adam}$) and action in $\Act_{\eve}$ (resp.\ $\Act_{\adam}$), assigns a probability distribution to the set of outgoing edges from that vertex. 
    \item \emph{Observations.} The set $\Oc_{\eve}$ and $\Oc_{\adam}$ are partitions of $E$. These uniquely map each vertex $e\in E$ to its observations for Eve and Adam as $\Oc_{\eve}(e)$ and $\Oc_{\adam}(e)$, respectively.
\end{enumerate}

A \emph{play} of a game $\Gc$ proceeds as follows. Starting with a token at an initial vertex $v$, the player who owns that vertex chooses an action $a$ from their set of actions, and the token is moved along an edge $e=(v,v')$ with probability $\delta(v,a)\circ (e)$. The new position of the token is $v'$, from where the play proceeds similarly, for infinitely many rounds. 

Thus, each play is an infinite path $\rho=e_0 e_2 e_3 \dots$ in the arena. For the play $\rho$, the \emph{observation sequence} for Eve (resp.\ Adam) is given by $\Oc_{\eve}(e_0 e_1 e_2 \dots)$ (resp.\ $\Oc_{\adam}(e_0 e_1 e_2\dots)$). A \emph{finite play} is a finite prefix of an infinite play.

A \emph{strategy} $\sigma$ for Eve is a partial function $$\sigma:\Oc_{\eve} (E^*)\xrightarrow{}\Distribution(\Act_{\eve})$$ from her observation sequence of finite plays ending at an Eve's vertex to her actions. The strategy $\sigma$ for Eve is said to be \emph{pure} if it assigns, to any observation sequence of a finite play ending at an Eve's vertex, each action either the probability 0 or 1. A pure strategy can equivalently be represented by a function $\sigma:\Oc_{\eve} (E^*)\xrightarrow{}\Act_{\eve}$. A strategy of Eve that is not pure is said to be a \emph{random strategy} for Eve. Pure and random strategies for Adam are defined analogously. 

An \emph{objective} for Eve is given by a set $\Theta \subseteq E^{\omega}$ of plays. A Borel objective is a Borel-measurable set in the Cantor topology, or also known as the product topology, on $E^{\omega}$~\cite[Chapter 17]{Kec12}. We will consider $\omega$-regular objectives, that is, objectives that can be given by a nondeterministic B\"uchi automaton that has its alphabet as $E$, which are Borel-measurable~\cite{Lan69}.

For a game $\Gc$ with an $\omega$-regular winning objective $L$, strategy $\sigma$ for Eve, and strategy $\tau$ for Adam, the probability that a play generated when Eve plays according to $\sigma$ and Adam plays according to $\tau$ is in $L$ is well-defined~\cite[Lemma 4.1]{Var85}, and we denote it by $\Pr_{\sigma,\tau}(\Gc)$.

We say that a strategy $\sigma$ for Eve in $\Gc$ is \emph{almost-sure} (resp.\ \emph{positive}) winning if for all strategies $\tau$ of Adam in $\Gc$, we have that $\Pr_{\sigma,\tau}=1$ (resp.\ $\Pr_{\sigma,\tau}>0$). We say that a strategy $\sigma$ for Eve in $\Gc$ is \emph{surely winning} or just \emph{winning} if Eve wins all possible plays produced in $\Gc$ in which Eve is playing according to $\sigma$. Almost-sure winning, positive winning, and surely winning strategies for Adam are defined analogously. 

\subsubsection*{Restrictions of games} 
We will consider the following restrictions of partial-observation games.
\paragraph*{Complete-observation games} This is the case when both the players have complete observation, i.e., $$\Oc_{\eve} = \Oc_{\adam} =\{\{e\}\mid e\in E\}.$$

\paragraph*{Non-stochastic games} This is the case when the transition function $\delta$ is a function $\delta:V\times \Act \xrightarrow{} E$ that assigns an edge to each action and vertex instead of a probability distribution over edges.

\paragraph*{One-player games}
One-player games for are games where there is a unique action available to either Eve or Adam, i.e., either $\Act_{\eve}=\{a\}$ or \ $\Act_{\adam}=\{a\}$. 

We use the above restrictions to define the games we deal with in this paper rigorously. We note that the HD~game (\cref{defn:hd-game}) and the 2-token game (\cref{defn:twotokengame}) on an automaton are examples of complete-observation nonstochastic games. 

\paragraph*{Stochastic-resolvability game}
The stochastic-resolvability (SR) game on $\Ac$ is a nonstochastic\footnote{Yes, the irony of this sentence is not lost on us.} game defined as follows.
\begin{definition}
    For a parity automaton $\Ac=(Q,\Sigma,\Delta,q_0)$, the the SR game on $\Ac$ is a nonstochastic game defined as follows. Eve's vertices are given by $V_{\eve}=Q\times \Sigma$, while Adam's vertices are given by $V_{\adam}=Q$. The game has the following edges.
    \begin{enumerate}
        \item $\{q\xrightarrow{} (q,a) \mid q \in Q, a \in \Sigma\}$ (Adam chooses a letter) 
        \item $\{(q,a) \xrightarrow{} (q') \mid q\xrightarrow{a:c}q' \in \Delta$\}. (Eve chooses a transition on her token)
    \end{enumerate}
    The game starts at the vertex $q_0$. Eve's actions are given by $\Act_{\eve}= \Delta$, while Adam's actions are given by $\Act_{\adam}=\Sigma$. The transition function $\delta$ is defined trivially.

    \noindent Eve has complete observation, that is $\Oc_{\eve}=\{\{e\}\mid e\in E\}$, while Adam has \emph{no observation}, that is, $\Oc_{\adam}=\{\{E\}\}$. 
    
    \noindent The winning condition for Eve is the same as in the HD game, i.e., a play of the SR game on $\Ac$ is winning for if the following condition holds: if the word formed by Adam's choice of letters is in $\Lc(\Ac)$ then the run formed on that word by Eve's choice of transitions is accepting.
\end{definition}

We now prove \cref{lemma:random-is-pure}. 
\randomispure*
\begin{proof}
    If $\Mc$ is a finite-memory strategy using which she almost-surely wins the SR game on $\Ac$ against all Adam's strategies, then it is clear that the same strategy is a almost-sure resolver for $\Ac$. Indeed, for any word~$w$ in $\Lc(\Ac)$, consider the strategy of Adam in the SR game, where he picks letters from $w$ in sequence. Then, Eve's strategy builds an accepting run almost-surely, and therefore, her strategy is an almost-sure resolver for~$\Ac$.

    For the other direction, suppose $\Mc$ is an almost-sure resolver for $\Ac$, and note that $\Mc$ can also be seen as a finite-memory strategy for Eve in the SR game on $\Ac$. We will show that Eve wins any play almost-surely using the strategy $\Mc$ against any Adam's strategy. To see this, consider the one-player partial-observation stochastic game $\Gc$ for Adam where we fix Eve's strategy $\Mc$. We know, since $\Ac$ is SR, that Adam does not have a pure strategy, i.e., Adam does not have a strategy that uses no randomness to win with a positive probability. For one-player partial-observation stochastic games, random strategies are as powerful as pure strategies in games with $\omega$-regular winning conditions~\cite[Theorem 7]{CDGH15}. It follows that Adam has no strategy in $\Gc$ to win with positive probability, as desired.  
\end{proof}

\paragraph*{Markov decision processes} A Markov decision process, or MDP for short, is a one-player complete-observation stochastic game, where we call the player Adam. MDPs can equivalently be represented by the tuple $\Mc = (V_A, V_R, E, \delta)$, with the following components.
\begin{enumerate}
    \item \emph{Arena.} The set of vertices is given by $V = V_A \uplus V_R$, where $V_A$ are vertices owned by the player, and $V_R$ are \emph{stochastic vertices}. The set $E$ consists of directed edges from $V_A$ to $V_R$ or from $V_R$ to $V_A$.
    \item \emph{Stochastic vertices.} The function $\delta$ is a partial function $\delta: E_R \xrightarrow{} \mathbb{U}$, which assigns a probability to the set of outgoing edges $E_R$ that start at a stochastic vertex, such that the sum of probabilities of the outgoing edges from each vertex in $V_R$ is $1$. 
\end{enumerate}

A play of the MDP $\Mc$ starts at a vertex $v_0$, and in round $i$ where the current position is $v_i$:
\begin{enumerate}
    \item if $v_i$ is a vertex in $V_A$ then \emph{Adam} selects an outgoing edge $e_i=(v_i,v_{i+1})$ from $v_i$;
    \item otherwise, if $v_i$ is a vertex in $V_R$, then an outgoing edge $e_i=(v_i,v_{i+1})$ is chosen with probability $\delta(e_i)$. 
\end{enumerate}
The play then goes to $v_{i+1}$, from where round $(i+1)$ begins.

Objectives and almost-sure (resp.\ positive) winning strategies for Adam in MDPs are defined similar to how we defined it for the games above. In \cref{appendixsubsec:np-completeness}, we will deal with MDPs with Muller conditions.

We will now show that resolvers for MA automata are indifferent to probabilities.
\begin{lemma}\label{lemma:indifferent2probabilities}
For any memoryless-stochastically resolvable automaton $\Ac$, if there is a memoryless resolver $\Mc$ that is an almost-sure winning strategy for the HD game over $\Ac$ then any memoryless resolver $\Mc'$ that assigns nonzero probabilities to the same set of transitions of $\Ac$ is also a memoryless resolver for $\Ac$ using which Eve wins the HD game. 
\end{lemma}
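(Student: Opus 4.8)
The plan is to compare the behaviour of the two resolvers $\Mc$ and $\Mc'$ on the Markov chains they induce, and argue that what matters for almost-sure acceptance is only the \emph{support} of the transition distributions, not the exact probabilities. Since both $\Mc$ and $\Mc'$ are memoryless, a strategy for each is simply, for every state $q$ and letter $a$, a probability distribution over $\Delta_{q,a}$; by hypothesis these two distributions have the same support for every $(q,a)$. First I would fix an arbitrary strategy $\tau$ for Adam in the HD game on $\Ac$ and consider the two Markov chains $M$ and $M'$ obtained by composing $\tau$ with $\Mc$ and with $\Mc'$ respectively. Because Adam's strategy and the supports of Eve's choices coincide, $M$ and $M'$ have exactly the same underlying graph: the set of reachable vertices and the set of edges with nonzero probability are identical; only the edge weights differ.

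The key observation is that for a parity (in particular $\omega$-regular) acceptance condition, whether a run is accepting depends only on the set of edges/priorities taken infinitely often, and in a finite Markov chain this is governed purely by the bottom strongly connected components (BSCCs) of the graph. The standard fact from Markov chain theory is that, regardless of the precise positive transition probabilities, almost-every infinite run eventually enters some BSCC and thereafter visits every edge of that BSCC infinitely often. Consequently the distribution over \emph{which} sets of transitions are visited infinitely often is determined entirely by the graph structure, and is the same in $M$ and $M'$: a BSCC is reachable in $M$ if and only if it is reachable in $M'$, and its edge set is the same. Thus the event ``the run is accepting'' has the same probability in $M$ and in $M'$, since it is a measurable event depending only on which BSCC is reached and all edges of a BSCC are visited infinitely often almost surely in both chains.

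From this I conclude that $\Pr_{M}[\text{run accepting}] = \Pr_{M'}[\text{run accepting}]$ for Adam's strategy $\tau$, whenever Adam's word lies in $\Lc(\Ac)$. Since $\Mc$ is an almost-sure winning resolver, the left-hand side is $1$ for every $\tau$ producing a word in $\Lc(\Ac)$; hence the right-hand side is also $1$, so $\Mc'$ wins the HD game almost surely as well. One subtlety to handle carefully is that Adam's strategy in the HD game may depend on the history of Eve's token, so $M$ and $M'$ are not literally finite Markov chains but infinite unfoldings; I would address this by observing that the relevant acceptance event is tail-measurable and, after fixing $\tau$, the reachability and recurrence structure still depends only on the supports, so the BSCC argument applies to the induced (possibly infinite) chain, or equivalently one reduces to the finite product of Adam's finite-memory strategy with $\Ac$ by $\omega$-regular determinacy.

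The main obstacle I anticipate is making the ``supports determine the acceptance probability'' step fully rigorous against an adversarial Adam whose choices react to Eve's realised transitions: one must ensure that the set of transition-profiles visited infinitely often, and hence the acceptance of the parity condition, truly coincides in the two chains despite differing weights. The cleanest route is to invoke that Adam has a finite-memory optimal strategy (by determinacy of $\omega$-regular games, as already used in the excerpt for pure resolvers), reducing the analysis to a genuine finite Markov chain where the classical BSCC recurrence theorem applies directly and the independence of almost-sure recurrence from the exact positive weights is standard.
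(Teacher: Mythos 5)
Your proposal is sound in outline and takes a genuinely different route from the paper. The paper does not argue over Markov chains at all: it first converts the HD game on $\Ac$ into the simulation game of a language-equivalent deterministic parity automaton $\Dc$ by $\Ac$ (\cref{lemma:hd-equiv-detsimulation}), observes that this simulation game is a Rabin game, and then, after fixing Eve's memoryless randomisation as stochastic vertices, invokes a black-box theorem on stochastic Rabin games~\cite{BNNSS22} stating that almost-sure winning regions are insensitive to the exact positive probabilities, provided the support is unchanged. This handles all of Adam's strategies at once. Your approach instead fixes one Adam strategy at a time, reduces to a finite Markov chain, and re-derives the support-insensitivity from first principles via bottom strongly connected components; it is more elementary and self-contained, at the cost of having to justify the reduction from arbitrary (history-dependent, randomised) Adam to finite-memory Adam, which the two-player-game theorem sidesteps. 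That reduction is valid, but its correct justification is finite-memory sufficiency for positive winning of $\omega$-regular objectives in finite MDPs (after fixing Eve's memoryless stochastic strategy, the HD game becomes a complete-observation one-player stochastic game for Adam), not ``determinacy of $\omega$-regular games'' as you write; compare the analogous fixing trick in the paper's proof of \cref{lemma:random-is-pure}.

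Two points in your chain-comparison step need repair. First, your claim that ``the distribution over which sets of transitions are visited infinitely often \ldots is the same in $M$ and $M'$,'' and hence that the acceptance event has \emph{the same probability} in both chains, is false: the probabilities of reaching the various BSCCs genuinely depend on the edge weights. What is true, and all you need, is the qualitative statement: since $M$ and $M'$ have the same underlying graph, the same BSCCs are reachable with positive probability, each BSCC has the same edge set in both chains, and almost surely every run settles in a BSCC and visits all its edges infinitely often; hence the acceptance probability is $1$ in $M$ if and only if it is $1$ in $M'$ (equivalently, no ``bad'' BSCC is reachable in either). Second, Eve's winning event is not ``the run is accepting'' but the joint event ``Adam's word is in $\Lc(\Ac)$ and Eve's run is rejecting'' having probability $0$; since membership of the word in $\Lc(\Ac)$ is not determined by the set of letters occurring infinitely often, this event is not BSCC-determined in the chain over $\Ac$'s states alone. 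You must take the product with a deterministic parity automaton $\Dc$ for $\Lc(\Ac)$ (exactly the object the paper's simulation-game proof makes explicit), after which the joint event becomes a priority condition on edges of the product chain and your BSCC argument applies verbatim, with the product's support again depending only on the supports of $\Mc$ and $\Mc'$. With these two corrections your argument goes through.
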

To prove \cref{lemma:indifferent2probabilities}, we will use simulation games.

\begin{definition}\label{df:simulation-game}
        The \emph{simulation game} of $\Ac$ by $\Bc$ is a complete-observation non-stochastic game. A play of this game starts with Adam's token and Eve's token in the initial states of $\Bc$ and $\Ac$, respectively, and for infinitely many rounds. In round $i$ of the game when Eve's token is at $q_i$ and Adam's token is at $p_i$: 
    \begin{enumerate}
        \item Adam selects a transition $p_i \xrightarrow{a_i:c} p_{i+1}$ in $\Ac$ along which he moves his token to $p_{i+1}$, on a letter $a_i$;
        \item Eve moves her token along a transition $q_i \xrightarrow{a_i: c'} q_{i+1}$ in $\Bc$ on the same letter.
    \end{enumerate}
The play results in a sequence of Adam's selected transition on $\Ac$, which in turn also forms a word, and the sequence of Eve's selected transitions forms a run on the same word on $\Bc$. Eve wins such a play if Adam's run is rejecting or if Eve's run is accepting. 
\end{definition}
If Eve has a strategy to win the simulation game of $\Ac$ by $\Bc$, then we say that $\Bc$ \emph{simulates} $\Ac$. 

\begin{lemma}\label{lemma:hd-equiv-detsimulation}
     Let $\Ac$ be a nondeterministic parity automaton and $\Dc$ be a deterministic parity automaton that is language equivalent to $\Ac$. Then Eve wins the HD game on $\Ac$ if and only if $\Ac$ simulates $\Dc$.
\end{lemma}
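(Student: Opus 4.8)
The plan is to unfold both game definitions and observe that, because $\Dc$ is deterministic and complete, the simulation game witnessing ``$\Ac$ simulates $\Dc$'' is literally the HD game on $\Ac$, up to a harmless relabelling of Adam's moves. Recall that, by the convention stated after \cref{df:simulation-game}, ``$\Ac$ simulates $\Dc$'' means that Eve wins the simulation game of $\Dc$ by $\Ac$: Adam moves a token in $\Dc$ while Eve moves a token in $\Ac$ on the same word, and Eve wins a play if Adam's run in $\Dc$ is rejecting or Eve's run in $\Ac$ is accepting. Both this game and the HD game are complete-observation non-stochastic games, so strategies for either player depend only on the history of moves.

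First I would reduce Adam's move in this simulation game. Since $\Dc$ is deterministic and complete, for each state $p_i$ of $\Dc$ and each letter $a_i$ there is exactly one transition $p_i \xrightarrow{a_i:c} p_{i+1}$; hence Adam's only meaningful choice is the letter $a_i$, and the successor $p_{i+1}$ and priority $c$ are forced. Consequently Adam's token traces the unique run of $\Dc$ on the word $a_0 a_1 \cdots$ that the two players build, and this run is accepting if and only if the word lies in $\Lc(\Dc)$. This yields a bijection between plays of the simulation game and plays of the HD game on $\Ac$: in both, a play is determined by Adam's sequence of letters together with Eve's sequence of transitions in $\Ac$, and since both games are complete-observation the bijection is compatible with strategies of either player.

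Next I would compare the two winning conditions under this bijection. In the simulation game Eve wins a play exactly when Adam's $\Dc$-run is rejecting or Eve's $\Ac$-run is accepting, i.e.\ when the word is not in $\Lc(\Dc)$ or Eve's run in $\Ac$ is accepting. Using the hypothesis $\Lc(\Dc)=\Lc(\Ac)$, this is equivalent to: the word is not in $\Lc(\Ac)$ or Eve's run in $\Ac$ is accepting, which is precisely Eve's winning condition in the HD game on $\Ac$ (\cref{defn:hd-game}). As the plays, the legal moves, and the winning sets coincide under the bijection, a strategy for Eve is winning in one game if and only if the corresponding strategy is winning in the other; in particular Eve wins the HD game on $\Ac$ iff Eve wins the simulation game of $\Dc$ by $\Ac$, which is the claimed equivalence.

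The argument is a direct unfolding of the definitions, so I do not expect a genuine obstacle; the only points requiring care are the two appeals to the structure of $\Dc$. Completeness of $\Dc$ is used so that Adam can always supply a letter and the word together with its unique $\Dc$-run is well-defined, and determinism of $\Dc$ is used so that Adam's choice of transition is vacuous, making his effective strategy space exactly the choice of letters---matching Adam in the HD game. It is worth noting explicitly that in the simulation game of $\Dc$ by $\Ac$ the spoiler Adam moves in the simulated automaton $\Dc$ and the duplicator Eve moves in the simulating automaton $\Ac$, which is the reading consistent with the winning condition in \cref{df:simulation-game}.
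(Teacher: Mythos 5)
Your proof is correct and takes essentially the same approach as the paper: both arguments hinge on determinism and completeness of $\Dc$ collapsing Adam's transition choice in the simulation game to a bare letter choice, and on $\Lc(\Dc)=\Lc(\Ac)$ identifying Eve's winning conditions in the two games. The paper phrases this as two explicit strategy translations (Eve ignores Adam's token in one direction, and plays against the unique run of $\Dc$ on Adam's word in the other), whereas you package both directions as a single play-level bijection between the games; this is a cosmetic difference, not a different route.
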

\begin{proof}
    If Eve has a strategy to win the HD game on $\Ac$, then she can use the same strategy to win the simulation game of $\Dc$ by $\Ac$, where she ignores the transitions that Adam's token takes. For the converse, suppose that Eve wins the simulation game of $\Dc$ by $\Ac$ using the strategy $\sigma$. Then, she can use $\sigma$ to play the HD game on $\Ac$, where she picks transitions on her token using $\sigma$ by playing the simulation game against the unique run of $\Dc$ on the word Adam has played in the HD game so far. If Adam's word in the HD game is in the language, then the unique run of $\Dc$ on that word is accepting, and since $\sigma$ is a winning strategy, the run on Eve's token in the HD game is accepting as well.
\end{proof}

We now prove \cref{lemma:indifferent2probabilities} using the proof of \cref{lemma:hd-equiv-detsimulation} above.

\begin{proof}[Proof of \cref{lemma:indifferent2probabilities}]
Let $\Dc$ be a deterministic parity automaton that is language-equivalent to $\Ac$. The simulation game between two parity automata is a Rabin game~\cite[Pages 154-155]{CHP07}. For stochastic Rabin games, the probability distribution of the stochastic nodes does not change the set of vertices from which the player wins almost surely, as long as the set of transitions assigned nonzero probabilities remains the same~\cite[Theorem~1]{BNNSS22}. 

In the proof of \cref{lemma:hd-equiv-detsimulation}, we showed that strategies for Eve in the HD game on $\Ac$ can easily be converted to strategies for Eve in the simulation game of $\Ac$ by $\Dc$ and vice versa. Combining this with the above fact, we reach the desired conclusion.
\end{proof}

\section{Appendix for \cref{sec:succandcomp}}
\subsection{Safety, reachability and weak automata}\label{app:succandcompsafety}\label{app:succandcompWeak}
We prove the theorem that every SD automaton, and therefore every pre-SD automaton, is determinisable by pruning.
\SDsafetyisDBP*
\begin{proof}
    Let $\Sc$ be a SD safety automaton. For each state $q$ in $\Sc$ and letter $a$ in the alphabet of $\Sc$, fix a language-preserving transition on $a$ that is outgoing from $q$, and consider the deterministic automaton $\Dc$ consisting of these transitions. It suffices to show that $\Lc(\Sc) \subseteq \Lc(\Dc)$. Let $w$ be a word in $\Lc(\Sc)$, and let $\rho$ be the unique run of $\Dc$ on $w$. For any finite prefix $u$ of $w$, consider the state $q$ that is reached in $\Dc$ upon reading the word $u$. Because $\Sc$ is SD, $u^{-1}w\in \Lc(\Ac,q)$, and hence, $q$ is not the rejecting sink state. It follows that $\rho$ does not reach the rejecting sink state, and hence is an accepting run. We obtain $w\in \Lc(\Dc)$, as desired.
\end{proof}

\lemmasdweakismr*
\begin{proof}
    Let $\Ac$ be a semantically deterministic weak automaton. We will show that the resolver which selects transitions uniformly at random constructs runs that are almost-surely accepting on words in $\Lc(\Ac)$. Let $w$ be a word in $\Lc(\Ac)$, $\rho_{acc}$ an accepting run of $\Ac$ on $w$, and $\rho$ a run of $\Ac$ on $w$, where transitions are chosen uniformly at random. Then there is a finite prefix $u$ of $w$, such that the run $\rho_{acc}$ after reading $u$, and on the word $w'=u^{-1}w$ only contains accepting transitions. Let $K=n2^n$, where $n$ is the number of states of $\Ac$. The crux of our proof is to show the following claim.
 \begin{claim}\label{claim:claiminsdweakismr}
        There is a positive probability $\epsilon>0$, such that on any infix of $w'$ that has length at least $K$, the segment of the run $\rho$ on that infix contains an accepting transition with probability at least $\epsilon$.
    \end{claim}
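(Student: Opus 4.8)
The plan is to prove the claim by exhibiting, from whichever state the random run $\rho$ occupies at the start of the infix, a short witnessing path to an accepting transition, and then observing that the uniform resolver follows any fixed path of bounded length with positive probability. Write $w' = u^{-1}w$, and let $r_0, r_1, r_2, \dots$ be the states visited by the all-accepting run $\rho_{acc}$ on $w'$, so that every transition $r_k \xrightarrow{w'[k]} r_{k+1}$ is accepting. Fix an infix spanning positions $[i, i+K)$ of $w'$ with $K = n2^n$, and condition on $\rho$ occupying some state $v$ after reading $u\cdot w'[0..i-1]$. By semantic determinism (\cref{lemma:SDautomata}) we have $\Lc(\Ac, v) = \Lc(\Ac, r_i)$, so $r_i$ witnesses that the residual language of $v$ admits an all-accepting run on $w'[i..]$. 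The whole argument reduces to the following local statement: from $v$ there is a run on $w'[i..i+K-1]$ that uses at least one accepting transition.

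To prove the local statement I would argue by contradiction: suppose every run from $v$ on $w'[i..i+K-1]$ uses only rejecting transitions. Then, for $0 \le k \le K$, the set $S_k$ of states reachable from $v$ on $w'[i..i+k-1]$ is reached using rejecting transitions only, and no state of $S_k$ has an accepting transition on $w'[i+k]$ (otherwise a reachable run would take it within the window). Consider the sequence of pairs $(S_k, r_{i+k})$ for $0 \le k \le K$; each lives in $2^Q \times Q$, a set of size at most $n2^n = K$, so by pigeonhole there exist $0 \le a < b \le K$ with $(S_a, r_{i+a}) = (S_b, r_{i+b})$. Let $\ell = w'[i+a..i+b-1]$ be the word read along this loop.

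Now the repetition yields two incompatible facts about the word $y = w'[i..i+a-1]\,\ell^{\omega}$. On the one hand, reading $\ell$ from the reachable set $S_a = S_b$ returns to $S_a$ using only rejecting transitions, and since exactly the same transitions are available on each subsequent repetition of $\ell$, every run of $(\Ac, v)$ on $y$ is entirely rejecting; hence $y \notin \Lc(\Ac, v)$. On the other hand, $\rho_{acc}$ restricted to $[i+a, i+b)$ is an accepting cycle at $r_{i+a} = r_{i+b}$ reading $\ell$, so following $\rho_{acc}$ up to $r_{i+a}$ and then looping forever gives an all-accepting, hence accepting, run on $y$ from $r_i$; thus $y \in \Lc(\Ac, r_i)$. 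Since $\Lc(\Ac, v) = \Lc(\Ac, r_i)$, this is a contradiction, which establishes the local statement.

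Finally, the local statement provides a fixed path of length at most $K$ from $v$ that contains an accepting transition and is consistent with the input; the uniform resolver follows it with probability at least $(1/d)^{K}$, where $d$ bounds the number of outgoing transitions of $\Ac$ on any letter. Setting $\epsilon = (1/d)^{K} > 0$, which is independent of $i$ and of $v$, gives the claim. I expect the pigeonhole-and-pumping step to be the main obstacle: the delicate point is that a bounded first accepting transition is \emph{false} for general states (a run can be kept rejecting for arbitrarily long by a suitable prefix), so the argument must genuinely use that $v$'s residual admits an all-accepting run. This is precisely what makes the accepting cycle of $\rho_{acc}$ available to contradict the rejecting loop of the reachable set, and it is why the product $2^Q \times Q$, rather than $Q$ alone, controls the bound $K = n2^n$.
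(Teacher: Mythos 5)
Your proof is correct and takes essentially the same route as the paper's: both pigeonhole the pairs consisting of the reachable set from $\rho$'s current state and the state of $\rho_{acc}$ over a space of size $K = n2^n$, pump the resulting loop into an ultimately periodic word whose acceptance from the two language-equivalent states (via \cref{lemma:SDautomata}) forces a run with an accepting transition inside the window, and conclude with $\epsilon = d^{-K}$. The only difference is presentational: the paper argues directly that the pumped word $t$, being accepted from $q$, yields a run from a state of $S_i$ on $v'$ with an accepting transition, whereas you package the identical step as a proof by contradiction.
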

    \begin{proof}
        Let $v$ be an infix of $w'$ such that $w'=u'vw''$, and $v$ has length $K$. Suppose that the run $\rho$ after reading $uu'$ is at the state $q$, and $\rho_{acc}$ is at the state $p$. Consider the sequence of states $p_0p_1p_2\dots p_K$ in $\rho$ from $p$ on the word $v$, and the sequence of set of states that can be reached from $q$ on reading the prefixes of $v$:
        $$\{q_0\} \xrightarrow{a_0} Q_1 \xrightarrow{a_1} Q_2 \xrightarrow{a_2} \dots \xrightarrow{a_K} Q_K,$$ where $v=a_1a_2\dots a_K$ and $Q_{l+1}$ is the set consisting of states to which there is a transition from a state in $Q_l$ on the letter $a_l$.  By the pigeonhole principle, there are numbers $i<j$, such that $(p_i,S_i)=(p_j,S_j)$. 
        
        Let $v'$ be the word $v'=a_{i} a_{i+1} \dots a_j$, and consider the word $t = a_1 a_2 \dots a_{i-1} (v')^{\omega}$. Then $t$ is in $\Lc(\Ac,p)$, and since $\Ac$ is SD, $t$ is in $\Lc(\Ac,q)$ as well. Thus, there must be a run from a state in $S_i$ on the word $v'$ that contains an accepting transition. It follows that there is a run  from $q$ on the word $v$ that contains an accepting transition. Let $\epsilon=\frac{1}{d^K}$, where $d$ is the maximum outdegree in $\Ac$. Then, the probability that a random run from $q$ on $v$ contains an accepting transition is at least $\epsilon$, as desired.    
    \end{proof}
We now use \cref{claim:claiminsdweakismr} to prove that any random run from $q$ on $w'=u^{-1}w$ is almost-surely accepting. Let $w'=v_1v_2 v_3\dots$, where each $v_i$ is of length $K=n 2^n$, and $\rho_q$ be the suffix of the run $\rho$ on the word $w'$. Note that $\rho_q$ is rejecting if and only if $\rho_q$ contains finitely many priority $2$ transitions (since $\Ac$ is weak). Thus, we have the following chain of inequalities to show that $\rho_q$ is rejecting with probability $0$. 
\begin{equation*}
    \begin{split}
                &\prob[\rho \text{ is rejecting}] = \prob[\rho_q \text{ is rejecting}]\\
                &=\prob[\bigcup_{N\in\mathbb{N}} \rho_q \text{ does not contain a transition of}\\ 
                &\quad\quad\quad\quad\text{priority 2 after prefix $v_1v_2\dots v_N$}]\\
                &=\lim_{N\to\infty}\prob[\rho_q \text{ does not contain a transition of}\\ 
                &\quad\quad\quad\quad\quad\quad\text{priority 2 after prefix $v_1v_2\dots v_N$}] \\
        &\leq \lim_{N\to\infty} \prod_{n\geq N} (1-\epsilon) = 0\\  
    \end{split}
\end{equation*}
Thus, $\rho$ is almost-surely accepting, as desired.
\end{proof}

\subsection{CoB\"uchi automata}\label{app:succandcompCB}

\lemmaHDcoBuchinotMR*
\begin{proof}
    Consider the HD coB\"uchi automaton shown in \cref{fig:HDcoBuchinotMR}, which we re-illustrate in \cref{fig:HDcoBuchinotMRappendix} for convenience. 
\begin{figure}[ht]
\centering
        \begin{tikzpicture}
        \tikzset{every state/.style = {inner sep=-3pt,minimum size =20}}

    \node[state] (q0) at (0,0) {$q_0$};
    \node[state] (q1) at (0,2) {$q_1$};
    \node[state] (d1) at (2,0) {$d_1$};
    \node[state] (d2) at (2,2) {$d_2$};
    \node[state] (d3) at (4,2) {$d_3$};
    \path[-stealth]
    (-0.5,-0.5) edge (q0)
    (q0) edge [bend left = 15] node [left] {$x$} (q1)
    (q1) edge [red, dashed, bend left = 15] node [right] {$a$} (q0)
    (q0) edge [loop left] node [left] {$x$} (q0)
    (q0) edge [red, dashed, loop below] node [left] {$b$} (q0)
    (q1) edge [loop left] node [left] {$x$} (q1)
    (q1) edge node [above] {$b$} (d2)
    (d2) edge [red, dashed, loop above] node [left] {$b$} (d2)
    (d2) edge [red, dashed, bend left = 15] node [right] {$a$} (d1)
    (d1) edge [red, dashed, bend left = 15] node [left] {$b$} (d2)
    (q0) edge node [above] {$a$} (d1)
    (d1) edge [loop right] node [right] {$x,a$} (d1)
    (d2) edge [bend left = 15] node [above] {$x$} (d3)
    (d3) edge [bend left = 15] node [below] {$b$} (d2)
    (d3) edge [loop right] node [right] {$x$} (d3)
    (d3) edge [red, dashed] node [above] {$a$} (d1)
    
;
    \end{tikzpicture}
\caption{A HD coB\"uchi automaton that is not MR. The rejecting transitions are represented by dashed arrows.}\label{fig:HDcoBuchinotMRappendix}
\end{figure}
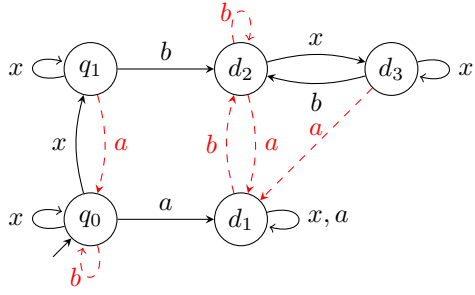 
    The automaton $\Cc$ has nondeterminism on the letter $x$ in the initial state $q_0$. Informally, in the HD game or the SR game, Eve needs to ``guess'' whether the next sequence of letters Adam will give forms a word in $x^* a$ or in $x^+ b$. The automaton $\Cc$
    recognises the language $$L=(x+a+b)^{*} ((x)^{\omega} + (x^* a)^{\omega} + (x^+ b)^{\omega}).$$ 
    \paragraph*{$\Cc$ is HD}  If Eve's token in the HD game reaches the state $d_1,d_2$, or $d_3$, then she wins the HD game from here onwards since her transitions are deterministic. At the start of the HD game on $q_0$, or whenever she is at $q_0$ after reading $a$ or $b$ in the previous round, she decides between staying at $q_0$ until a $a$ or $b$ is seen, or moving to $q_1$ in the first $x$ as follows.
    \begin{itemize}
        \item If the word read so far has a suffix in $x^{*}a$, then she stays in $q_0$ until the next $a$ or~$b$.
        \item If the word read so far has a suffix in $x^{+}b$, then she takes the transition to $q_1$ on~$x$.
        \item Otherwise, she stays in $q_0$ till the next $a$ or $b$.
    \end{itemize}
    Due to the language of $\Cc$ being the set of words which have a suffix in $x^{\omega},(x^{*}a)^{\omega},$ or $(x^+b)^{\omega}$, the above strategy guarantees that Eve's token moves on any word in $L$ in HD game to one of $d_1$ or $d_2$, from where she wins the HD game.
    
    \paragraph*{$\Cc$ is not MR} Note that the automaton $\Cc$ does not accept the same language if any of its transitions are deleted. Therefore, consider a memoryless resolver $\Mc$ for $\Cc$ that takes the self-loop on $x$ on $q_0$ with probability $(1-p)$ and the transition to $q_1$ on $x$ with probability $p$, for some $p$ satisfying $0<p<1$. 
    
    We will show that on the word $w = x a x^2 a x^3 a \dots$, the resolver $\Mc$ constructs a rejecting run with a positive probability. Let $\rho$ be a run on $w$ constructed by $\Mc$. We denote $w$ as $v_1 a v_2 a \dots $, where $v_i=x^i$ for each $i\geq 1$.
    
    If $\rho$ is ever at the state $q_0$ after reading $v_1 a v_2 a \dots v_i$, then $\rho$ is at the state $d_1$ after reading $a$, from where $\rho$ is accepting. Thus, $\rho$ is rejecting if and only if after reading the substring $v_i$, $\rho$ is at $q_1$. The probability that $\rho$, starting from $q_0$, ends on $q_1$ after reading $v_i=x^i$ is $(1-p^i)$. Thus, 
$$
                \prob[\rho \text{ is rejecting}] = \prod_{i=1}^n\left(1-p^{i}\right).
$$
The above quantity is positive due to \cref{prop:complexanalysisConverge}, and thus $\rho$ is not almost-surely accepting, as desired.
\end{proof}

\subsubsection*{Converting SR automata to MA automata}
Next we focus on providing a detailed proof for \cref{theorem:coBuchiHDisSR}.
\theoremcobuchisrtoma*
To prove \cref{theorem:coBuchiHDisSR}, we fix a coB\"uchi automaton $\Ac$ that is stochastically resolvable. We do the following relabelling of priorities on $\Ac$ to get another coB\"uchi automaton $\Cc$, such that a run in $\Ac$ is accepting if and only if that run is accepting in~$\Cc$. 
\paragraph*{Priority-reduction} Consider the graph $G$ consisting of the states of $\Ac$ and the transitions of $\Ac$ that have priority~$0$. Consider the strongly connected components of this graph, and for any priority $0$ transition that is not in any SCC, we change its priority to $1$ in $\Cc$. The rest of the transitions in $\Cc$ have the same priority as in $\Ac$.

\begin{proposition}\label{prop:priority-reduction}
A run in $\Cc$ is accepting if and only if the corresponding run is accepting in $\Ac$.
\end{proposition}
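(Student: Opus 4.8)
The plan is to compare the sets of priority-$1$ transitions in $\Ac$ and $\Cc$, observe that they differ only by the transient (cross-SCC) priority-$0$ transitions of the graph $G$, and then split into the two implications, concentrating the real work on a condensation argument for one direction. Recall that a coB\"uchi run is accepting exactly when it uses priority-$1$ transitions only finitely often.

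First I would record the set-theoretic relationship between the two acceptance conditions. Since $\Cc$ is obtained from $\Ac$ by changing \emph{some} priority-$0$ transitions to priority $1$ and leaving every other transition's priority untouched, the set of priority-$1$ transitions of $\Cc$ is a superset of the set of priority-$1$ transitions of $\Ac$; equivalently, every priority-$0$ transition of $\Cc$ is also a priority-$0$ transition of $\Ac$. This immediately yields the easy implication: if a run $\rho$ is accepting in $\Cc$, then it takes only finitely many transitions from the larger set of $\Cc$-priority-$1$ transitions, hence only finitely many from the smaller set of $\Ac$-priority-$1$ transitions, so $\rho$ is accepting in $\Ac$.

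For the converse, suppose $\rho$ is accepting in $\Ac$, so there is a position after which $\rho$ uses only priority-$0$ transitions of $\Ac$; call this infinite suffix $\rho'$. Then $\rho'$ is an infinite path in the graph $G$ (the states of $\Ac$ together with its priority-$0$ transitions), since every transition of $\rho'$ is an edge of $G$. The transitions relabelled to priority $1$ in $\Cc$ are precisely the priority-$0$ transitions of $\Ac$ not contained in any SCC of $G$, which are exactly the edges of $G$ whose two endpoints lie in distinct SCCs. Here I would invoke the standard fact that the condensation of $G$, obtained by contracting each SCC to a vertex, is a finite directed \emph{acyclic} graph: a walk in a finite DAG can move to a strictly later SCC only finitely often and can never return to an earlier one, so the sequence of SCCs visited by $\rho'$ is eventually constant. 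Consequently $\rho'$ eventually remains inside a single SCC, and from that point on every transition of $\rho'$ has both endpoints in the same SCC and is therefore \emph{not} relabelled, i.e.\ retains priority $0$ in $\Cc$. Hence $\rho$ uses relabelled transitions only finitely often, and combined with the fact that it already uses the original $\Ac$-priority-$1$ transitions only finitely often, it uses $\Cc$-priority-$1$ transitions only finitely often, so $\rho$ is accepting in $\Cc$. The only genuine step is this condensation argument; the rest is bookkeeping on which transitions carry which priority, so the main care I would take is to state cleanly the characterization of the relabelled transitions as the cross-SCC edges of $G$ and to justify that an infinite path in a finite graph stabilizes within one SCC precisely because the condensation is acyclic.
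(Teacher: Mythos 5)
Your proof is correct and follows essentially the same route as the paper's: the easy direction is by inclusion of the priority-$1$ transition sets, and the converse uses the fact that an accepting run's all-priority-$0$ suffix must eventually stabilize within a single SCC of $G$, after which no relabelled (cross-SCC) transition is taken. Your explicit condensation-DAG argument simply makes precise the stabilization step the paper states in one line.
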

\begin{proof}
    Let $\rho$ be a run. If $\rho$ is a rejecting run in $\Ac$, then it contains infinitely many priority $1$ transitions in $\Ac$. It is clear then that $\rho$ contains infinitely many priority $1$ transitions in $\Cc$ as well. 
    
    Otherwise, if $\rho$ is accepting in $\Ac$ then $\rho$ contains finitely many priority $1$ transitions in $\Ac$. Thus, $\rho$ eventually stays in the same SCC in $G$, and therefore, $\rho$ contains finitely many priority $1$ transitions in~$\Cc$.
\end{proof}
It follows from \cref{prop:priority-reduction} that $\Ac$ and $\Cc$ are language equivalent, and since $\Ac$ is SR, so is $\Cc$: any almost-sure resolver for $\Ac$ is also an almost-sure resolver for $\Cc$.
\begin{corollary}\label{cor:lang-c-is-lang-a}
The automaton $\Cc$ is language equivalent to $\Ac$ and stochastically resolvable.
\end{corollary}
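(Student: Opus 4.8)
The plan is to deduce both assertions directly from \cref{prop:priority-reduction}, which already guarantees that the priority-reduction relabelling preserves the acceptance status of every run. The central observation is that $\Ac$ and $\Cc$ share the same state set, the same initial state, and the same underlying transition structure: each transition $q \xrightarrow{a:c} q'$ of $\Ac$ corresponds to a transition $q \xrightarrow{a:c'} q'$ of $\Cc$ on the same letter and between the same endpoints, differing only in its priority label. Consequently there is a canonical bijection between runs of $\Ac$ and runs of $\Cc$ on any fixed word $w$, and \cref{prop:priority-reduction} says precisely that this bijection carries accepting runs to accepting runs.

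For the language equivalence, I would simply chase definitions through this bijection: a word $w$ lies in $\Lc(\Ac)$ iff $\Ac$ has an accepting run on $w$, iff (by the run bijection together with \cref{prop:priority-reduction}) $\Cc$ has an accepting run on $w$, iff $w \in \Lc(\Cc)$. This step needs no probabilistic reasoning.

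For stochastic resolvability, I would fix an almost-sure resolver $\Mc=(M,m_0,\mu,\nextmove)$ for $\Ac$. Since $\nextmove$ selects among outgoing transitions on the basis of the current memory state, automaton state, and input letter, and $\mu$ updates memory along transitions that the bijection preserves, the very same tuple $\Mc$ is a legitimate resolver for $\Cc$. I would then compare the two resolver-products $\Mc\circ\Ac$ and $\Mc\circ\Cc$: because the resolver's choices and memory updates never consult priorities, the transition $(q,m)\xrightarrow{a:c}(q',m')$ of $\Mc\circ\Ac$ and the corresponding transition $(q,m)\xrightarrow{a:c'}(q',m')$ of $\Mc\circ\Cc$ carry the same probability, so the two products induce one and the same Markov chain $M_w$ on every input $w$, differing only in the priority labelling of its edges. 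Applying \cref{prop:priority-reduction} to each run of this common chain then gives $\prob_{\Mc\circ\Cc}(w)=\prob_{\Mc\circ\Ac}(w)$ for every $w$, whence $\Lc(\Mc\circ\Cc)=\Lc(\Mc\circ\Ac)=\Lc(\Ac)=\Lc(\Cc)$, using that $\Mc$ is almost-sure for $\Ac$ and the language equivalence established above. This exhibits $\Mc$ as an almost-sure resolver for $\Cc$, so $\Cc$ is SR.

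The only point requiring genuine care, and hence the main (if mild) obstacle, is checking that the run correspondence of \cref{prop:priority-reduction} lifts from $\Ac$ and $\Cc$ to their resolver-products so that the acceptance \emph{probabilities}, and not merely the sets of accepting runs, agree. I would handle this by making explicit that the resolver ignores priorities, so that $M_w$ is literally the same stochastic process for both products, and then invoking \cref{prop:priority-reduction} on the projections of its runs into $\Ac$ and $\Cc$.
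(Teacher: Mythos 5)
Your proposal is correct and follows exactly the paper's route: the paper's own proof is the one-line observation that language equivalence follows from \cref{prop:priority-reduction} and that, since the relabelling preserves states, letters, and transition endpoints, any almost-sure resolver for $\Ac$ is also an almost-sure resolver for $\Cc$. Your write-up simply makes explicit the run bijection and the fact that the resolver-products $\Mc\circ\Ac$ and $\Mc\circ\Cc$ induce the same Markov chain on every word, which is precisely the detail the paper leaves implicit.
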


Recall the safety-automaton $\Cc_{\safety}$ that we had defined in \cref{subsec:sac-cobuchi}. The following observation is easy to see.

\begin{proposition}\label{prop:safe-cobuchi-langcomparison}
    For every state $q$ in $\Cc$, $\Lc(\Cc_{\safety},q)\subseteq \Lc(\Cc,q)$.
\end{proposition}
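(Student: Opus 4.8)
The plan is to take an arbitrary word $w \in \Lc(\Cc_{\safety},q)$ and produce directly an accepting run of $\Cc$ on $w$ starting from $q$. First I would unwind the definition of the safety automaton $\Cc_{\safety}$: by its acceptance condition, $w \in \Lc(\Cc_{\safety},q)$ exactly when there is a run from $q$ on $w$ that never reaches the rejecting sink state. Now recall the construction of $\Cc_{\safety}$ from $\Cc$: the priority-$0$ transitions of $\Cc$ are kept verbatim as the safe transitions, whereas every priority-$1$ transition of $\Cc$ is redirected into the sink. Consequently, the transitions of $\Cc_{\safety}$ that avoid the sink are precisely (copies of) the priority-$0$ transitions of $\Cc$. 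A sink-avoiding run in $\Cc_{\safety}$ therefore projects, edge for edge, to a genuine run $\rho$ of $\Cc$ from $q$ on the same word $w$ that uses only priority-$0$ transitions.

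The second and final step is to note that such a run $\rho$ is accepting in $\Cc$. Since $\rho$ takes no priority-$1$ transition at all, it takes only finitely many, so the highest priority occurring infinitely often along $\rho$ is $0$, which is even; hence $\rho$ is accepting and $w \in \Lc(\Cc,q)$. This establishes the inclusion $\Lc(\Cc_{\safety},q) \subseteq \Lc(\Cc,q)$. I do not expect any genuine difficulty: the statement is immediate from the two definitions, and the only point requiring mild care is the bookkeeping that a sink-avoiding run in the safety automaton corresponds to a legitimate priority-$0$ run of $\Cc$ on the identical word, which is exactly how $\Cc_{\safety}$ was built.
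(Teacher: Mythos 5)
Your proof is correct and follows essentially the same route as the paper: a sink-avoiding run of $\Cc_{\safety}$ from $q$ is exactly a run of $\Cc$ from $q$ using only priority-$0$ transitions, hence accepting under the coB\"uchi condition. The paper states this in one sentence; your version merely spells out the same bookkeeping in more detail.
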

\begin{proof}
    If $\rho$ is an accepting run on $w$ in $(\Cc_{\safety},q)$, then the same run $\rho$ does not contain any priority $1$ transition in $(\Cc,q)$, and therefore, is accepting.
\end{proof}

Recall the definition of SR-covers and SR self-coverage presented in \cref{subsec:sac-buchi}. We next show that $\Bc$ has SR self-coverage.
\lemmaSRhassafeSRcoverage*
\begin{proof}
We fix an almost-sure resolver $\Mc$ for Eve in $\Cc$,  and let $\Pc$ be the probabilistic automaton that is the resolver-product of $\Mc$ and $\Cc$ (see \cref{sec:prelims}). We define $\Pc_{\safety}$ as a safety probabilistic automaton that is the \emph{safe-approximation} of $\Pc$ as follows, similar to how we defined $\Cc_{\safety}$. 
The automaton $\Pc_{\safety}$ has the same states as $\Pc$, and in addition, a rejecting sink state $q_{\bot}$. 

For each transition $\delta=(q,m)\xrightarrow{a:0}(q',m')$ in $\Pc$ of priority $0$ that has probability $p$, we add the transition $(q,m)\xrightarrow{a:0} (q',m')$ in $\Pc_{\safety}$ with the same probability $p$. For each state $(q,m)$ and letter $a$, we add a transition $(q,m)\xrightarrow{a:1}q_{\bot}$ that has probability $p'$, so that the sum of  the probabilities of all outgoing transitions from $(q,m)$ on $a$ is~$1$.

To prove \cref{lemma:coBuchiSRhassafeSRcoverage}, we suppose, towards contradiction, that
there is a state $q$ in $\Cc$, such that for every state $p$ coreachable to $q$ in $\Cc$, $(\Cc_{\safety},p)$ does not SR-cover $(\Cc_{\safety},q)$. In particular, for every state $(p,m)$ in $\Pc$, where $p$ is coreachable to $q$ in $\Cc$, we have $\Lc(\Pc_{\safety},(p,m)) \subsetneq \Lc(\Cc_{\safety},q)$. 

This implies that there is a finite word $u_{(p,m)}$ such that there is a run from $q$ to some state $q'$ on $u_{(p,m)}$ in $\Cc_{\safety}$ that only consists of safe transitions, while any run of $\Pc_{\safety}$ from $(p,m)$ on $u_{(p,m)}$ reaches the rejecting sink state $q_{\bot}$ with a positive probability $\epsilon_{(p,m)}$. Recall that we modified $\Ac$ to obtain $\Cc$ so that each priority $0$ transition occurs in an SCC consisting of only priority $0$ transitions. Thus, there is a word $v_{(p,m)}$ on which there is run from $q'$ to $q$ in $\Cc_{\safety}$ that contains only safe transitions. Consider the finite word $\alpha_{(p,m)}=u_{(p,m)} v_{(p,m)}$. Then there is a run from $q$ to $q$ in $\Cc$ on $\alpha_{(p,m)}$ that contains only priority $0$ transitions, while a run from $(p,m)$ on $\alpha_{(p,m)}$ in $\Pc$ contains a transition of priority~$1$ with probability $\epsilon_{(p,m)}>0$. 

We define the words $\alpha_{(p',m')}$ and the real number $\epsilon_{(p',m')}>0$ similarly, for all states $(p',m')$ in $\Pc$, such that $p'$ is coreachable to $q$ in $\Cc$. Define $\epsilon>0$ as the quantity $$\epsilon=\min\{\epsilon_{(p,m)}\mid (p,q) \in \CR(\Cc) \}.$$  

We will describe a strategy for Adam in the SR game on $\Cc$, using which he wins almost-surely against Eve's strategy $\Mc$. Adam starts by giving a finite word $u_q$, such that there is a run of $\Cc$ from its  initial state to $q$. Then Adam, from this point and at each \emph{reset}, selects a state $(p,m)$ of $\Pc$ uniformly at random, such that $p$ is coreachable to $q$ in $\Cc$ and $m$ is a memory-state in $\Mc$. He then plays the letters of the word $\alpha_{(p,m)}$ in sequence. Adam then \emph{resets} to select another such state $(p',m')$ with $(p',q) \in \CR(\Cc)$ and plays similarly. 

Consider a play of the SR game on $\Cc$ where Eve is playing according to her strategy $\Mc$ and Adam is playing according to the strategy described above. Note that at each reset, Eve's token is at a state $p$ that is weakly coreachable to $q$ and Eve has the memory $m$. Let $\lvert\Pc\rvert$ be the number of states in the probabilistic automaton. Adam picks the state $(p,m)$ at that reset with probability at least $\frac{1}{\lvert \Pc \rvert}$, from which point Eve's run on the word $\alpha_{(p,m)}$ contains a transition of priority $1$ with probability at least $\epsilon$. Thus, between every two consecutive resets, Eve's token takes a priority $1$ transition with probability at least $\frac{\epsilon}{\lvert \Pc \rvert}$. By the second Borel-Cantelli lemma, the run on Eve's token contains infinitely many priority $1$ transitions and hence is rejecting with probability 1. Thus, $\Mc$ is not an almost-sure resolver for Eve (\cref{lemma:random-is-pure}), which is a contradiction.  
\end{proof}

We note that SR-covers is a transitive relation.
\begin{lemma}\label{lemma:sr-cover-transitivity}
If $\Ac_1,\Ac_2,\Ac_3$ are three nondeterministic parity automata such that  $\Ac_1 \succ_{SR} \Ac_2$ and $\Ac_2 \succ_{SR} \Ac_3$, then $\Ac_1 \succ_{SR} \Ac_3$.
\end{lemma}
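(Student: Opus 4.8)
The plan is to reduce the transitivity of $\succ_{SR}$ to a single language inclusion combined with the reuse of one of the two given winning strategies, rather than trying to naively chain the two strategies (which does not compose directly, since the witness for $\Ac_1 \succ_{SR} \Ac_2$ consumes Adam's word, not a run built in $\Ac_2$). The guiding observation is that the winning condition of the SR-cover game for $\Ac_1 \succ_{SR} \Ac_3$ is \emph{weaker} for Eve than that for $\Ac_1 \succ_{SR} \Ac_2$ exactly when $\Lc(\Ac_3) \subseteq \Lc(\Ac_2)$: in both games Eve builds a run in the same automaton $\Ac_1$ and must make it accepting, but only on the smaller set of words $\Lc(\Ac_3)$ in the former. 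So once I establish $\Lc(\Ac_3) \subseteq \Lc(\Ac_2)$, any strategy witnessing $\Ac_1 \succ_{SR} \Ac_2$ automatically witnesses $\Ac_1 \succ_{SR} \Ac_3$.

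First I would extract the inclusion $\Lc(\Ac_3) \subseteq \Lc(\Ac_2)$ from the hypothesis $\Ac_2 \succ_{SR} \Ac_3$. Let $\Mc_2$ be Eve's almost-sure winning strategy in the SR-cover game witnessing $\Ac_2 \succ_{SR} \Ac_3$. For any word $w \in \Lc(\Ac_3)$, consider Adam's strategy of playing the letters of $w$ in sequence. Since $w \in \Lc(\Ac_3)$, Eve's almost-sure victory forces the run she builds in $\Ac_2$ to be accepting with probability $1$, hence with positive probability, and therefore at least one run of $\Ac_2$ on $w$ is accepting. Thus $w \in \Lc(\Ac_2)$, giving $\Lc(\Ac_3) \subseteq \Lc(\Ac_2)$.

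Next I would fix the strategy $\Mc_1$ witnessing $\Ac_1 \succ_{SR} \Ac_2$ and verify that it is also almost-sure winning in the SR-cover game for $\Ac_1 \succ_{SR} \Ac_3$. For any word $w$ played by Adam: if $w \notin \Lc(\Ac_3)$ then Eve wins vacuously; and if $w \in \Lc(\Ac_3)$ then by the inclusion just established $w \in \Lc(\Ac_2)$, so $\Mc_1$ builds an accepting run in $\Ac_1$ with probability $1$ and Eve wins. Hence $\Mc_1$ is almost-sure winning, establishing $\Ac_1 \succ_{SR} \Ac_3$.

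The only point requiring genuine care—and thus the main obstacle—is the passage between the \emph{game} formulation (Eve wins almost surely against every, possibly randomized, Adam strategy) and the \emph{per-word} formulation used in both steps above (Eve builds an accepting run with probability $1$ on each fixed word in the relevant language). This is precisely the phenomenon handled in \cref{lemma:random-is-pure}: because Adam has no observation of Eve's token in an SR game, fixing Eve's resolver turns it into a one-player partial-observation game for Adam, in which randomized strategies are no more powerful than pure ones (i.e.\ fixed words) against an $\omega$-regular objective. I would invoke this equivalence for the SR-cover game as well, after which both steps are immediate.
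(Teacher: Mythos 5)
Your proposal is correct and follows essentially the same route as the paper's proof: extract the inclusion $\Lc(\Ac_3) \subseteq \Lc(\Ac_2)$ from $\Ac_2 \succ_{SR} \Ac_3$, and then observe that the strategy witnessing $\Ac_1 \succ_{SR} \Ac_2$ already witnesses $\Ac_1 \succ_{SR} \Ac_3$ since its winning condition is only weaker. Your extra care about passing between the per-word and against-all-strategies formulations (via \cref{lemma:random-is-pure}) is a sound refinement of a point the paper leaves implicit.
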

\begin{proof}
    Observe that if $\Ac_2 \succ_{SR} \Ac_3$, then $\Lc(\Ac_2) \supseteq \Lc(\Ac_3)$. Thus, if Eve has a strategy to construct a run in $\Ac_1$ that is almost-surely accepting on any word in $\Lc(\Ac_2)$, then the same strategy constructs a run that is almost-surely accepting on any word in $\Lc(\Ac_3)$. Thus, $\Ac_1$ SR-covers $\Ac_3$, as desired. 
\end{proof}

Thus, the following result follows from the definition of SD self-coverage and \cref{lemma:sr-cover-transitivity} above.

\lemmacobuchisometingdbp*
\begin{proof}
    Consider the directed graph $H$ whose vertices are states of $\Cc$. We add an edge from $q$ to $p$ in $\Cc$ if $(q,p)$ in $\Cc$ and $(\Cc_{\safety},p)$ SR-covers $(\Cc_{\safety},q)$. Note that if there is a path from $r$ to $s$ in $H$, then $(\Cc_{\safety},s)$ SR-covers $(\Cc_{\safety},r)$ and $(r,s)\in\WCR(\Cc)$. Since $\Cc$ has SR self-coverage, we note that every vertex has outdegree $1$. Thus, for every vertex $q$, there is a vertex $p$, such that there is a path from $q$ to $p$ and a path from $p$ to $p$ in $\Hc$. The conclusion follows.
\end{proof}

Note that if for some state $p \in \Cc$,  $(\Cc_{\safety},p)$ SR-covers $(\Cc_{\safety},p)$, then $(\Cc_{\safety},p)$ is SR, and therefore, $(\Cc_{\safety},p)$ is pre-SD (\cref{lemma:SR-implies-SD}), and therefore determinisable-by-pruning (\cref{lemma:sd-safety-is-dbp}). We thus call a state $p$ of $\Cc$ as \emph{safe-deterministic} if $(\Cc_{\safety},p)$ is SR. 

\paragraph*{Construction of $\Hc$} We will construct an MA automaton $\Hc$ that is language-equivalent to $\Cc$. The states of $\Hc$ consists of states that are safe-deterministic in $\Cc$.

For the transitions of $\Hc$, we start by fixing a uniform determinisation of transitions from every state that is safe-deterministic in $\Cc$ to obtain $\Cc'$, so that for every safe-deterministic state $q$, $\Lc(\Cc_{\safety},q)=\Lc(\Cc'_{\safety},q)$. Such a determinisation exists since $(\Cc_{\safety},q)$ is DBP. 

We add the transitions of $\Cc'$ in $\Hc$. Then, for every state $p$ and letter $a$ in $\Hc$, we add priority $1$ transitions from $p$ on $a$ to every state $q$ that is safe-deterministic in $\Cc$ and weakly coreachable to an $a$-successor of $p$ in~$\Cc$. 

We let the initial state of $\Hc$ be a safe-deterministic state that is weakly coreachable to the initial state of $\Cc$. Note that such a state exists due to \cref{lemma:cobuchi-srselfcoverage-implies-somtingsdbp}. This concludes our description of $\Hc$. 

In the next two lemmas, we show that $\Lc(\Hc)=\Lc(\Cc)$ and that $\Hc$ is MA.

\begin{lemma}\label{lemma:cobuchi-lang-equivalence}
    The automata $\Hc$ and $\Cc$ are language-equivalent.
\end{lemma}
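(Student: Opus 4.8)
The plan is to establish the two inclusions $\Lc(\Cc) \subseteq \Lc(\Hc)$ and $\Lc(\Hc)\subseteq \Lc(\Cc)$ separately, exploiting the two kinds of edges in $\Hc$: the priority-$0$ edges coming from the uniform safe determinisation $\Cc'$, and the priority-$1$ ``reset'' edges that jump from a state to every safe-deterministic state weakly coreachable to one of its $a$-successors in $\Cc$. Throughout I will rely on two facts. First, weakly coreachable states of $\Cc$ recognise the same language: since $\Cc$ is SR, it is pre-SD by \cref{lemma:SR-implies-SD}, and by \cref{lemma:SDautomata} all states reached from the initial state on a common word recognise the same residual, a property that survives the transitive closure defining $\WCR(\Cc)$. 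Second, $\Lc(\Cc_{\safety},q)\subseteq\Lc(\Cc,q)$ for every state $q$, by \cref{prop:safe-cobuchi-langcomparison}.

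For $\Lc(\Cc)\subseteq\Lc(\Hc)$, I would take $w\in\Lc(\Cc)$ and fix an accepting run of $\Cc$ on it; being coB\"uchi, this run reaches, after some prefix $u$, a state $q$ from which it reads the suffix $w'=u^{-1}w$ using only priority-$0$ transitions, so $w'\in\Lc(\Cc_{\safety},q)$. The core step is a reachability claim, proved by induction on $\lvert u\rvert$: for every state $q$ reachable in $\Cc$ on $u$ there is a safe-deterministic state $p$ with $(p,q)\in\WCR(\Cc)$ and $\Lc(\Cc_{\safety},q)\subseteq\Lc(\Cc_{\safety},p)$ that is reachable in $\Hc$ on $u$. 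The base case uses that the initial state of $\Hc$ is chosen, via \cref{lemma:cobuchi-srselfcoverage-implies-somtingsdbp}, to SR-cover $(\Cc_{\safety},q_0)$; the inductive step observes that weak coreachability is closed under taking $a$-successors, so the covering state $p$ supplied by \cref{lemma:cobuchi-srselfcoverage-implies-somtingsdbp} for $q$ is weakly coreachable to some $a$-successor of the previous covering state $p'$, and is therefore reached from $p'$ by a priority-$1$ edge of $\Hc$. Applying the claim to the state $q$ above yields such a $p$, reachable in $\Hc$ on $u$, with $w'\in\Lc(\Cc_{\safety},q)\subseteq\Lc(\Cc_{\safety},p)=\Lc(\Cc'_{\safety},p)$; from $p$ the deterministic safe edges of $\Cc'$ read $w'$ with no priority-$1$ transition, so the entire $\Hc$-run has only finitely many priority-$1$ transitions and is accepting.

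For $\Lc(\Hc)\subseteq\Lc(\Cc)$, I would prove the invariant that any state $s$ reachable in $\Hc$ on a finite word $u$ satisfies $\Lc(\Cc,s)\subseteq u^{-1}\Lc(\Cc)$, again by induction on $\lvert u\rvert$. A priority-$0$ edge $s\xrightarrow{a:0}s'$ is a genuine transition of $\Cc$, so $\Lc(\Cc,s')\subseteq a^{-1}\Lc(\Cc,s)\subseteq (ua)^{-1}\Lc(\Cc)$ by the induction hypothesis; a priority-$1$ edge $s\xrightarrow{a:1}r$ leads to a state $r$ weakly coreachable to a genuine $a$-successor $s_a$ of $s$, so $\Lc(\Cc,r)=\Lc(\Cc,s_a)\subseteq(ua)^{-1}\Lc(\Cc)$, where the equality is exactly language-equivalence of weakly coreachable states. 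Given an accepting $\Hc$-run on a word $w$, its finitely many priority-$1$ transitions end after some prefix $w_{<k}$, and the remaining suffix is read using only priority-$0$ (safe) edges from the state $q_k$; hence the suffix lies in $\Lc(\Cc_{\safety},q_k)\subseteq\Lc(\Cc,q_k)\subseteq w_{<k}^{-1}\Lc(\Cc)$ by the invariant, which gives $w\in\Lc(\Cc)$.

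The main obstacle, and the point where the semantic-determinism hypotheses really bite, is controlling the priority-$1$ reset edges. In the forward inclusion they must not enlarge the language, which is precisely where I need weakly coreachable states to be language-equivalent; in the backward inclusion they must suffice to navigate to a covering safe-deterministic state, which rests on weak coreachability being preserved under successors together with the self-coverage guarantee of \cref{lemma:cobuchi-srselfcoverage-implies-somtingsdbp}. A secondary point to verify is the well-definedness of $\Hc$: the determinised safe edges out of a safe-deterministic state must again land in safe-deterministic states and preserve the safe language, which follows from safe transitions of a semantically deterministic safety automaton being language-preserving.
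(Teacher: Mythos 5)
Your forward inclusion is sound and is essentially the paper's argument made explicit: the induction showing that every state $q$ reachable in $\Cc$ on $u$ has a covering safe-deterministic state reachable in $\Hc$ on $u$ (propagated through the reset edges) is exactly what the paper compresses into ``takes arbitrary transitions until reading the penultimate letter of $u$''. The genuine gap is in your backward inclusion. It rests on the claim that weakly coreachable states of $\Cc$ are language-equivalent, which you justify by \cref{lemma:SDautomata}. But that lemma applies to semantically deterministic automata, and $\Cc$ is only \emph{pre}-SD (\cref{lemma:SR-implies-SD}): it has a language-equivalent SD subautomaton, but may retain further transitions leading to states whose languages are \emph{strict subsets} of the corresponding residual. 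Concretely, take the one-state deterministic automaton for $\Sigma^\omega$ (self-loops of priority $0$) and add a transition on $a$ from $q_0$ to a fresh state $q_{\mathrm{bad}}$ all of whose self-loops have priority $1$: the automaton remains SR (an almost-sure resolver simply never uses the new transition), $q_{\mathrm{bad}}$ is coreachable to $q_0$, yet $\Lc(\Cc,q_{\mathrm{bad}})=\emptyset\neq\Sigma^\omega=\Lc(\Cc,q_0)$. So the equality $\Lc(\Cc,r)=\Lc(\Cc,s_a)$ you invoke at each priority-$1$ edge is false in general, and even the containment $\Lc(\Cc,r)\subseteq\Lc(\Cc,s_a)$ can fail, since $s_a$ may be such a dead state while the safe-deterministic target $r$ recognises the full residual.

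Moreover, the invariant you induct on ($\Lc(\Cc,s)\subseteq u^{-1}\Lc(\Cc)$ for every $s$ reachable in $\Hc$ on $u$) is too weak to propagate across reset edges once the false equivalence is removed: an upper bound on $\Lc(\Cc,s)$ gives no control over states weakly coreachable to $s_a$. The repair — which is the paper's route via \cref{claim:cobuchilangquiv} — is to strengthen the invariant to a structural one: every state reachable in $\Hc$ on $u$ is \emph{weakly coreachable in $\Cc$} to some state reachable in $\Cc$ on $u$. This does propagate, since weak coreachability is preserved under taking $a$-successors and reset targets are weakly coreachable to $a$-successors of the current state. One then derives the language bound through the SD subautomaton: along any coreachability chain the residuals computed in the SD subautomaton coincide (this is where \cref{lemma:SDautomata} is legitimately applied), so every state $x$ in the weak-coreachability class of the states reachable on $u$ satisfies $\Lc(\Cc,x)\subseteq u^{-1}\Lc(\Cc)$ — containment in a common residual, not equality of state languages. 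Substituting this for your false lemma, the rest of your backward inclusion goes through unchanged.
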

\begin{proof}
    \textit{{$\Lc(\Cc) \subseteq \Lc(\Hc)$}:} Let $w$ be a word in $\Lc(\Cc)$, and $\rho$ an accepting run of $\Cc$ on $w$. Then, there is a decomposition~$w=uw'$, such that $u \neq \epsilon$ and $\rho$ after reading the prefix $u$ does not contain any priority $1$ transition on the suffix~$w'$ of~$w$. 
    Suppose $\rho$ is at the state $q$ after reading $u$. Then there is a safe-deterministic state $p$, such that $p$ is safe-deterministic, $(p,q) \in \CR(\Cc)$, and $(\Cc_{\safety},p)$ SR-covers $(\Cc_{\safety},q)$. Thus, $\Lc(\Cc_{\safety},p)\supseteq \Lc(\Cc_{\safety},q)$. Since $w' \in \Lc(\Cc_{\safety},q)$, it follows that $w' \in \Lc(\Cc_{\safety},p)$.
    
    Observe that $(\Cc_{\safety},p)=(\Hc_{\safety},p)$, and for each state in $\Hc$ and a letter in $\Sigma$, there is at most one priority 0 transition from that state on that letter. Thus, there is a unique run from $p$ on $w'$ in $\Hc$ that contains only priority 0 transitions. Consider the run of $\Hc$ on $w$ that takes arbitrary transitions until reading the penultimate letter of $u$, and then takes the transition to $p$ on the last letter of $u$, and then follows the unique run from $p$ on $w'$ that contains only priority 0 transitions. This is an accepting run of $\Hc$ on $w$, and thus $w \in \Lc(\Hc)$.
    
    \textit{{$\Lc(\Hc) \subseteq \Lc(\Cc)$}:} Let $w$ be a word in $\Lc(\Hc)$, and $\rho$ an accepting run of $\Hc$ on $w$. Then, there is a decomposition of $w$ as $uw'$, such that $\rho$ after reading $u$ does not contain any priority 1 transition on $w'$. Suppose $\rho$ is at the state $p$ after reading $u$. Then, $w' \in \Lc(\Hc_{\safety},p) = \Lc(\Cc_{\safety},p) \subseteq \Lc(\Cc,p)$, where the last equality holds due to \cref{prop:safe-cobuchi-langcomparison}. Recall that $\Cc$ is pre-semantically deterministic (\cref{lemma:SR-implies-SD}), i.e., contains a language-equivalent SD subautomaton $\Cc'$. Let $q$ be a state in $\Cc'$ that is reached after reading the word $u$. We will show the following claim.

    \begin{claim}\label{claim:cobuchilangquiv}
        $\Lc(\Cc,p) \subseteq \Lc(\Cc',q).$
    \end{claim}
            Note that $p$ and $q$ are weakly coreachable in $\Cc$. Thus, there is a sequence of states $p_1,p_2,\dots,p_k$ in $\Cc$ and finite words $u_0, u_1,\dots,u_k$, such that there are runs from the initial state of $\Cc$ to both $p$ and $p_1$ on the word $u_0$, to $p_i$ and $p_{i+1}$ on the word $u_i$ for each $i \in [1,k-1]$, and to $p_k$ and $q$ on $u_k$. We can pick the states $p_1,p_2,\dots,p_k$, such that there is a run from $q_0$ to $p_i$ in $\Cc'$ on the words $u_{i-1}$ and $u_{i}$ for each $i \in [1,k-1]$. Since $\Cc'$ is SD, this implies that $\Lc(\Cc',q)=\Lc(\Cc',p_1)$. Note, due to \cref{lemma:SDautomata}, that $$\Lc(\Cc,p) \subseteq u^{-1}\Lc(\Cc)=u^{-1}\Lc(\Cc')= \Lc(\Cc',q),$$  and thus the proof of the claim follows.
   
Using \cref{claim:cobuchilangquiv}, we note that there is an accepting run $\rho'$ on $w=uw'$ in $\Cc'$ which follows a run to $q$ on the word $u$, and then since $w' \in \Lc(\Cc,p) \subseteq \Lc(\Cc',q)$, follows an accepting run from $q$ on the word $w'$ in $\Cc'$. Since $\Cc'$ is a subautomaton of $\Cc$, $\rho'$ is also an accepting run of $\Cc$ on $w$, as desired.
\end{proof}

We next show that $\Hc$ is MA.
\begin{lemma}\label{lemma:cobuchi-h-is-ma}
    The coB\"uchi automaton $\Hc$ is a memoryless-adversarially resolvable automaton.
\end{lemma}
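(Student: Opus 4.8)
I would prove that a single memoryless resolver wins the HD game on $\Hc$ almost-surely. Call it $\Mc^{\ast}$: at a state $p$ reading a letter $a$, if $\Hc$ has the (necessarily unique) priority-$0$ transition on $a$ out of $p$, then $\Mc^{\ast}$ takes it with probability $1$; otherwise $\Mc^{\ast}$ spreads positive probability uniformly over all priority-$1$ transitions from $p$ on $a$. This is memoryless, and by \cref{lemma:indifferent2probabilities} only its support matters. It is essential that Eve prefers the safe transition: a resolver that put positive probability on a priority-$1$ transition at a state-letter pair that also carries a priority-$0$ transition would, by the second Borel--Cantelli lemma (\cref{lemma:secondborellcantelli}), take infinitely many priority-$1$ transitions and reject; under $\Mc^{\ast}$ the only priority-$1$ transitions Eve ever makes are \emph{forced jumps}, taken at pairs $(p,a)$ with no priority-$0$ transition in $\Hc$.

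Two structural facts underlie the argument. First, a \emph{closure property}: every state reached from a safe-deterministic state along the uniformly determinised safe transitions of $\Cc'$ is again safe-deterministic, since its safe language equals the corresponding left quotient of $\Lc(\Cc_{\safety},p)$, which is witnessed by the deterministic $\Cc'$ and hence determinisable-by-pruning (so SR). Consequently $\Mc^{\ast}$ never leaves $\Hc$ while the word stays safe, and the following \emph{invariant} is maintained along any play: after reading a prefix $u$, Eve's token sits at a safe-deterministic state weakly coreachable to every state that $\Cc$ reaches on $u$ --- this is preserved by both move types because coreachability (hence its transitive closure) is preserved under common-letter successors, and both the priority-$0$ successors (closure) and the priority-$1$ targets (construction of $\Hc$) are safe-deterministic and weakly coreachable. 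Second, the \emph{covering observation}: if Eve's token covers a state $q$ of an accepting $\Cc$-run, meaning $\Lc(\Cc_{\safety},q)\subseteq\Lc(\Cc_{\safety},\text{Eve's state})$, then as long as that run stays safe Eve can never be forced off, because any letter unsafe from a covering state is also unsafe from the covered state; moreover covering is preserved under the shared safe priority-$0$ step, so Eve then follows priority-$0$ forever and takes no further jumps.

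To conclude I would fix an arbitrary Adam strategy and view the HD game with $\Mc^{\ast}$ plugged in as a Markov decision process for Adam, whose objective is that Adam's word lies in $\Lc(\Hc)=\Lc(\Cc)$ (\cref{lemma:cobuchi-lang-equivalence}) while Eve's run is rejecting. On any realized play with Adam's word in $\Lc$, the word has an accepting $\Cc$-run that is safe from some point on; by \cref{lemma:cobuchi-srselfcoverage-implies-somtingsdbp} each state of that run is covered by a safe-deterministic state in the same weak-coreachability class, which by the invariant is among Eve's available jump targets. Since $\Mc^{\ast}$ chooses uniformly among at most $|\Hc|$ targets, at every forced jump occurring during the safe phase Eve lands on such a covering state with probability at least $\epsilon := 1/|\Hc|$, after which the covering observation guarantees she never jumps again. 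Hence the probability of making at least $K$ forced jumps in the safe phase is at most $(1-\epsilon)^{K}\to 0$, so Eve makes only finitely many priority-$1$ transitions almost-surely and her run is accepting; as this holds for every Adam strategy, $\Mc^{\ast}$ is almost-surely winning and $\Hc$ is MA.

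The main obstacle is making the last step rigorous against an \emph{adaptive} Adam: because Adam picks letters in response to Eve's randomly moving token, both the moment the run becomes safe and the state that must be covered depend on the realized play, so the ``good'' jump target at a given round is not determined before Eve commits. The covering observation is exactly what dissolves this circularity --- it is a statement about the realized play that bounds Eve's safety below by the accepting run's safety, independently of counterfactual futures --- and the clean way to turn the per-jump estimate into an almost-sure guarantee is through the end-component analysis of the induced MDP, where any end-component sustaining Adam's word in $\Lc$ necessarily offers Eve a covering target at positive probability and therefore cannot also keep Eve's run rejecting. Establishing this end-component step, together with the closure property, is where the real work lies; the invariant and the Borel--Cantelli bookkeeping are routine.
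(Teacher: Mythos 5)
Your proposal is correct and, at its core, is exactly the paper's proof: the same resolver (take the unique priority-$0$ transition when available, otherwise jump uniformly at random), the same appeal to \cref{lemma:cobuchi-srselfcoverage-implies-somtingsdbp} to guarantee that once Adam's word has entered a suffix safe from some state of an accepting run of $\Cc$, a suitable safe-deterministic target lies among Eve's jump options, and the same estimate that each forced jump hits it with probability at least $1/\lvert\Hc\rvert$, giving the product bound $\prod_{n}(1-1/\lvert\Hc\rvert)=0$ for the event that Eve jumps forever.

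Two points of divergence are worth noting. You track a \emph{covering} state via safe-language containment where the paper tracks the exact state $p_v$ reached by one fixed safe run; these are interchangeable, and your explicit closure property and weak-coreachability invariant are precisely the facts the paper uses implicitly when it asserts $(\Cc_{\safety},p)=(\Hc_{\safety},p)$ and that the tracking state $p_{va}$ is always among the jump targets, so spelling them out is a small but genuine gain in rigor. However, your assessment that ``the real work lies'' in an end-component analysis of the induced MDP misplaces the difficulty: the paper uses no MDP machinery here, and none is needed. Adaptivity is harmless because the tracking (or covering) target after a given prefix is a function of the realized prefix alone, so your per-jump estimate is a legitimate conditional-probability bound; iterating it and taking a countable union over the possible pairs (safe-suffix start time, source state of the safe run) already yields measure zero for the bad event --- which is your own $(1-\epsilon)^K\to 0$ sentence, and is the entirety of the paper's conclusion. (A minor nit: your aside that any resolver putting positive mass on a priority-$1$ transition alongside an available priority-$0$ one would reject by \cref{lemma:secondborellcantelli} is neither needed nor justified as stated, since the relevant events need not be independent and whether such a resolver rejects depends on the word.)
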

\begin{proof}
    Consider the following memoryless strategy for Eve in the HD game on $\Hc$, where from the state $q$ on the letter $a$:
    \begin{enumerate}
        \item if there is a priority $0$ transition from $q$ on $a$, then she picks that transition (observe that such a transition is unique);
        \item Otherwise, she picks an outgoing priority $1$ transition from $q$ on $a$ uniformly at random.
    \end{enumerate}
    We claim that this strategy is winning for Eve in the HD game on $\Ac$. To see this, consider a play in which Adam in the HD game is constructing a word letter-by-letter and Eve is building a run according to the above strategy. If Adam produces a word that is not in $\Lc(\Hc)$, then Eve wins trivially. Otherwise, eventually, Adam's word must have a prefix $u$ and a run which is at a state $p$ after reading $u$ and after which the suffix $w'$ that Adam constructs in the rest of the rounds by his letters is in $\Lc(\Hc_{\safety},p)$. Observe that the run from $p$ on $w'$ in $(\Hc_{\safety},p)$ is unique.
    Suppose, Eve's token is at the state $q$ after the word $u$ is read. Consider the run of Eve from $q$ where she picks transitions according to her strategy above, while Adam builds a word such that there is a unique run from $p$ on that word consisting of only priority 0 transitions. For every finite word $v$ that is a prefix for some infinite word in $\Lc(\Hc_{\safety},p)$, let $p_v$ be the unique state to which there is a run from $p$ to $p_v$ consisting of only priority 0 transitions.
    
    Then, after a word $v$ is read and Adam chooses a letter $a$, whenever Eve's token is at a state $q_v$ and Eve has no priority $0$ transition available to her, she takes the transition to $p_{va}$ with probability at least $1-\frac{1}{\lvert \Hc \rvert}$, where $\lvert \Hc \rvert$ is the number of states of $\Hc$. If Eve's token after the word $v'$ is read is at $p_{v'}$, then it is clear that Eve wins the HD game from here on. Thus, 
    \begin{equation*}
        \begin{split}
            &\prob[\text{Eve's run is rejecting}]\\ 
            &= \prob[\text{Eve takes infinitely many priority $1$ transitions and } \\ &\text{never takes the transition to $p_v$ after the word $v$}] \\ 
            &\leq \prod_{n\in\mathbb{N}} (1-\frac{1}{|\Hc|}) = 0. 
        \end{split}
    \end{equation*}
    Thus, Eve's run is almost-surely accepting, as desired.
\end{proof}

We have thus proved so far that every SR automaton has a language-equivalent MA automaton with at most as many states. We next show that we can find such an MA automaton for every input SR automaton efficiently, thus proving \cref{theorem:coBuchiHDisSR}.
\theoremcobuchisrtoma*
\begin{proof}
    Let $\Ac$ be an SR coB\"uchi automaton. The priority-reduction procedure relabels the priorities of transitions of  $\Ac$ to obtain a coB\"uchi automaton $\Cc$, in which every priority $0$ transitions occurs in a strongly connected component consisting of only priority $1$ transitions. This procedure is efficient since SCCs can be computed in linear time~\cite{Tar72}. From \cref{cor:lang-c-is-lang-a}, the automaton $\Cc$ is language-equivalent to $\Ac$ and $\Cc$ is stochastically resolvable. We then find states $p$ in $\Cc$, such that $(\Cc_{\safety},p)$ is HD and find a pure positional strategy from all such states. These are the safe-deterministic states, since SR safety automata are determinisable-by-pruning. Such states and this strategy can be found efficiently \cite[Theorem 4.5]{BL23quantitative}. Then, construction of $\Hc$ we described takes polynomial time, since the relations of coreachability and weak-coreachability can be computed in $\ptime$. This automaton $\Hc$ has as many states as $\Cc$ and hence $\Ac$, is language-equivalent to $\Ac$ (\cref{lemma:cobuchi-lang-equivalence}), and is MA (\cref{lemma:cobuchi-h-is-ma}). This concludes our proof.
\end{proof}
\subsection{B\"uchi automata}\label{succandcompBuchi}
\buchisdnotsr*
\begin{proof}
    Consider the B\"uchi automaton $\Bc$ in \cref{fig:BuchiSDbutnotSR}, which we reillustrate in \cref{fig:BuchiSDbutnotSRAgain} for the reader's convenience.  This automaton $\Bc$ has nondeterminism on the initial state $q_0$, and it recognises the language $$((x \cdot (a+b)\cdot y)^{*}(x\cdot (a+b)\cdot z))^{\omega}.$$ It is easy to verify that $\Bc$ is SD.
    \begin{figure}[ht]
    \centering
        \begin{tikzpicture}[auto]
        \tikzset{every state/.style = {inner sep=-3pt,minimum size =15}}

    \node[state] (s1)  at (0,0) {$q_0$};
    \node[state]  (s2)  at (2,0) {};
    \node[state] (s3) at (1,0.8) {$q_a$};
    \node[state] (s4) at (1,-0.8) {$q_b$};

    \node[state] (f1)  at (-1.2,0) {};
    \node[state]  (f2)  at (-2.2,0.8) {};
    \node[state] (f3) at (-2.2,-0.8) {};

    \path[->]
        (f3) edge node [left] {$x$}  (f2)
        (f1) edge node [yshift=1mm] {$y$} (f3)
        (f2) edge node [xshift=-2mm] {$a,b$} (f1)
        (s2) edge [double,bend left = 8] node {$z$}  (s1)
          (s1) edge  node [below,xshift=2mm,yshift=2mm] {$x$} (s3)
         (s1) edge  node [above,xshift=2mm,yshift=-2mm] {$x$} (s4)
         
         (s3) edge  node [above] {$a$} (s2)
         (s4) edge  node [below] {$b$} (s2)
         
         (s2) edge [double,bend left = 8] node [below] {$z$} (s1)
         (s2) edge [bend right = 8] node [above] {$y$} (s1)
         (s3) edge  node [above] {$b$} (f1)
         (s4) edge node [below] {$a$} (f1)
         (f1) edge node [above,xshift=1mm,yshift=-0.5mm] {$z$} (s1)
;
    \path[->,every node/.style={sloped,anchor=south}]
        ;
    \end{tikzpicture}
    \caption{A semantically deterministic B\"uchi automaton that is not stochastically resolvable. The accepting transitions are double-arrowed, and the initial state is $q_0$.}
    \label{fig:BuchiSDbutnotSRAgain}
\end{figure}
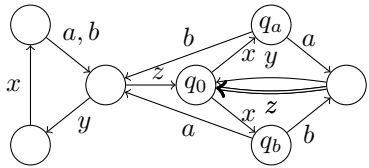 
    
    We will describe a strategy for Adam in the SR game on $\Bc$ using which Adam wins almost-surely in the SR game. This would imply, due to \cref{lemma:random-is-pure}, that $\Bc$ is not SR. 
    
    Note that when Eve's token is at $q_0$ in the SR game, Eve needs to guess whether the next letter is going to be $a$ or $b$. If she guesses incorrectly then her token moves to the left states--- states $l_1,l_2,$ and $l_3$, where she stays until a $z$ is seen. Adam's strategy in the SR game is as follows. Let $Y$ be the regular expression $xay+xby$ and $Z$ be the regular expression $xaz+xbz$. Note that both $Y$ and $Z$ consist of two words. Adam picks a word from the set $YZY^2ZY^3ZY^4Z \dots$ in the SR game, where from each occurrence of $Y$ or $Z$, he picks one of the two words in the regular expression with half probability. We next show that the probability $p_n$ that Eve's token, starting at $q_0$, takes an accepting transition on reading a word chosen randomly from $Y^{n}Z$ is $\frac{1}{2^{n+1}}$. Indeed, note that $1-p_n$ is same as the probability that Eve's token does not reach the left states on the word $Y^n Z$. This is the case only if, whenever Eve's token is at the state $q_a$ (resp.\ $q_b$), Adam picks the letter $a$ (resp.\ $b$). Thus, the probability that Eve's token does not reach a left state on the word $Y^n Z$ is $\frac{1}{2^{n+1}}$, and hence $p_n=\frac{1}{2^{n+1}}$. Thus, in the SR game where Adam picks the word as above,

  $$
         \prob[\text{Eve's run is accepting}] = \sum_{n\geq 1} \frac{1}{2^{n+1}} = \frac{1}{2}.
$$It then follows from the Borel-Cantelli lemma (\cref{lemma:borellcantelli}) that the probability that Eve's token takes infinitely many accepting transitions in the SR game is 0, as desired. 
\end{proof} 

\HDBuchinotMR*
\begin{proof}[Proof of \cref{lemma:HDBuchinotMR}]
    Consider the B\"uchi automaton $\Bc$ shown in \cref{fig:HDBuchinotMR}, which we re-illustrate  below for convenience.
    Let $\Sigma_{\diamond}=\{a,b,c,\diamond\}$ and $\Sigma=\{a,b,c\}$. Then the B\"uchi automaton $\Bc$ recognises the language  $\left[(L_1+L_2)^*(L_1L_1+L_2L_2)\right]^\omega$, where $L_1 = {\Sigma_\diamond}^*  c^+\diamond $ and $L_2 =  
    {\Sigma_\diamond}^* a \Sigma^* b^+\diamond
    $.  Equivalently, it accepts words in $(L_1+L_2)^\omega$ that are however not in $(L_1+L_2)^*(L_1L_2)^\omega$.  
    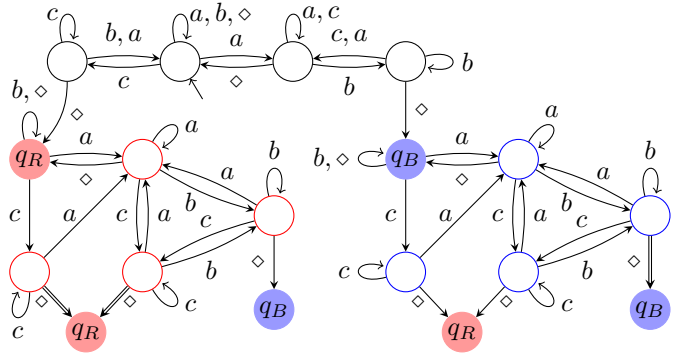
\begin{figure}[ht]
\centering
        \begin{tikzpicture}
        \tikzset{every state/.style = {inner sep=-3pt,minimum size =15}}

    \node[state] (q0) at (-1,1.5) {};
    \node[state] (q1) at (-2.5,1.5) {};
    \node[state] (q2) at (2,1.5) {};
    \node[state] (q3) at (0.5,1.5) {};
    \path[->] (-0.7,1) edge (q0);
    
    \node[state, fill=blue!40, blue!40] (r0) at (2,0.2) {$q_B$};
    \node[state, blue] (r1) at (2,-1.3) {};
    \node[state, blue] (r2) at (3.5,-1.3) {};
    \node[state, blue] (r3) at (3.5,0.2) {};    
    \node[state, blue] (r4) at (5.25,-0.55) {};    

    \node[state, fill=red!40, red!40] (l0) at (-3,0.2) {$q_R$};
    \node[state, red] (l1) at (-3,-1.3) {};
    \node[state, red] (l2) at (-1.5,-1.3) {};
    \node[state, red] (l3) at (-1.5,0.2) {};    
    \node[state, red] (l4) at (0.25,-0.55) {};    

    \node[state, fill=red!40, red!40] (l5) at (-2.25,-2.1) {$q_R$};    
    \node[state, fill=blue!40, blue!40] (l6) at (0.25,-1.8) {$q_B$};  
    
    \node[state, fill=red!40, red!40] (r5) at (2.75,-2.1) {$q_R$};    
    \node[state, fill=blue!40, blue!40] (r6) at (5.25,-1.8) {$q_B$};    
    \node (l52) at (-2.25,-2.1) {$q_R$};    
    \node (l62) at (0.25,-1.8) {$q_B$};  
    
    \node (r51) at (2.75,-2.1) {$q_R$};    
    \node (r61) at (5.25,-1.8) {$q_B$};    
    \node (l01) at (-3,0.2) {$q_R$};
    \node (r01) at (2,0.2) {$q_B$};
    
    \path[-stealth]
%    (-0.5,-0.5) edge (q0)
    (q0) edge [loop above] node [right] {$a,b,\diamond$} (q0)
    (q0) edge [bend left = 8] node [above] {$a$} (q3)
    (q3) edge [bend left = 8] node [below] {$\diamond$} (q0)
    (q0) edge [bend left = 8] node [below] {$c$} (q1)
    (q1) edge [bend left = 8] node [above] {$b,a$} (q0)
    (q3) edge [bend right = 8] node [below] {$b$} (q2)
    (q2) edge [bend right = 8] node [above] {$c,a$} (q3)
    (q3) edge [loop above] node [right] {$a,c$} (q3)
    (q1) edge [loop above] node [left] {$c$} (q1)
    (q2) edge [loop right] node [right] {$b$} (q2)

    (l0) edge [loop above] node [above] {$b,\diamond$} (l0)
    (l0) edge [bend left = 8] node [above] {$a$} (l3)
    (l3) edge [bend left = 8] node [below] {$\diamond$} (l0)
    (l1) edge  node [below, left] {$a$} (l3)
    (l0) edge  node [left] {$c$} (l1)
    (l3) edge [bend right = 8] node [left] {$c$} (l2)
    (l2) edge [bend right = 8] node [right] {$a$} (l3)
    (l3) edge [bend right = 8] node [below=3pt,left] {$b$} (l4)
    (l4) edge [bend right = 8] node [above=3pt,right] {$a$} (l3)
    (l4) edge [bend right = 8] node [above=0.5pt] {$c$} (l2)
    (l2) edge [bend right = 8] node [below=0.5pt] {$b$} (l4)
    
    (l3) edge [in=30,out=60,loop] node [right] {$a$} (l3)
    (l1) edge [in=240,out=270,loop] node [below] {$c$} (l1)
    (l2) edge [in=-30,out=-60,loop]  node [right] {$c$} (l2)
    (l4) edge [loop above] node [above] {$b$} (l4)

    (r0) edge [in=165,out=195,loop] node [left] {$b,\diamond$} (r0)
    (r0) edge [bend left = 8] node [above] {$a$} (r3)
    (r3) edge [bend left = 8] node [below] {$\diamond$} (r0)
    (r1) edge  node [below, left] {$a$} (r3)
    (r0) edge  node [left] {$c$} (r1)
    (r3) edge [bend right = 8] node [left] {$c$} (r2)
    (r2) edge [bend right = 8] node [right] {$a$} (r3)
    (r3) edge [bend right = 8] node [below=3pt,left] {$b$} (r4)
    (r4) edge [bend right = 8] node [above=3pt, right] {$a$} (r3)
    (r2) edge [bend right = 8] node [below=0.5pt] {$b$} (r4)
    (r4) edge [bend right = 8] node [above=0.5pt] {$c$} (r2)

    (r3) edge [in=30,out=60,loop]  node [above] {$a$} (r3)
    (r1) edge [loop left] node [left] {$c$} (r1)
    (r2) edge [in=-30,out=-60,loop] node [right] {$c$} (r2)
    (r4) edge [loop above] node [above] {$b$} (r4)

    (q1) edge [in=50,out=-90] node [right] {$\diamond$} (l0)
    (q2) edge node [right] {$\diamond$} (r0)

    (l1) edge[ double] node [left] {$\diamond$} (l5)
    (l2) edge[ double] node [right] {$\diamond$} (l5)
    (l4) edge node [left] {$\diamond$} (l6)
    %(l4) edge [out=60,in=180] node [below] {$\diamond$} (r0)

    (r1) edge node [left] {$\diamond$} (r5)
    (r2) edge node [right] {$\diamond$} (r5)
    (r4) edge [double] node [left] {$\diamond$} (r6)
;
    \end{tikzpicture}
\caption{A HD B\"uchi automaton that is not MR. The accepting transitions are represented by double arrows. All red-filled states ($q_R$) are identified as the same state, and all blue-filled states ($q_B$) are identified as the same state.}\label{fig:HDBuchinotMRAgain}
\end{figure}
    This automaton only has runs on words of the form $(L_1+L_2)^\omega$. When viewed as a finite-state automaton restricted to red (resp.\ red) states where the B\"uchi transitions are accepting transitions and $q_R$ (resp.\ $q_B$) is the initial state, this automaton accepts words in $L_1$ (resp.\ $L_2$). 
    
    For a run to contain the accepting transitions infinitely often, observe that it must visit the state $q_R$ or $q_B$ after reading some prefix in $(L_1+L_2)^*L_1$ and $(L_1+L_2)^*L_2$, respectively. Furthermore, observe that runs of words in $L_1$ and $L_2$ that start from the states $q_R$ or $q_B$, respectively, visit an accepting transition and then end at $q_R$ and $q_B$, respectively. %Therefore, there are runs on words in  $(L_1+L_2)^*(L_1L_1)$ or $(L_1+L_2)^*(L_2L_2)$ which sees an accepting transition. Since the automaton from  $q_R$ or $q_B$ is deterministic, we can also see that there is a run starting at the red or blue state for words in $(L_1+L_2)^*(L_1L_1)$ or $(L_1+L_2)^*(L_2L_2)$, respectively, that visits an accepting state.

\paragraph*{Eve's strategy in the HD game}
The only state with nondeterminism in automaton $\Ac$ is on the state $q_0$ on the letter $a$, from where  Eve can either choose to keep her token in $q_0$ using the transition that is the self loop: $\delta_1$, or she can move her token along the $a$-transition that goes right: $\delta_2$.  
Intuitively, Eve needs to guess at $q_0$ whether the word being input from now is going to be $L_1$ or $L_2$. If she guesses incorrectly, then her token ends up at the starting state, and she can guess again. If the resolver guesses correctly, Eve's token goes to the deterministic part of the game, i.e., the red or blue state, from where she wins the HD game. 

 We claim that the following Eve's strategy in the HD game is winning, where when she is at the initial state $q$ and Adam gives the letter $a$: if the longest prefix of the input word so far is in the language $(L_1 + L_2)^*L_1$ (rather than $(L_1 + L_2)^*L_2$), then she chooses the transition $\delta_1$ on her. If the longest prefix of the input word so far is in the language $(L_1 + L_2)^*L_2$ instead, then she chooses $\delta_2$ instead. 

Observe that for any word in the language, there are infinitely many prefixes which are either in the language $(L_1 + L_2)^*L_1L_1$ or $(L_1 + L_2)^*L_2L_2$. Consider the first time that the prefix is in $(L_1 + L_2)^*L_1L_1$ (the case for the prefix in the language $(L_1 + L_2)^*L_2L_2$ is similar). Let this prefix be denoted by $w \cdot u\cdot v$, such that $w\in (L_1 + L_2)^*$ and $u$ and $v$ are both words in $L_1$. Although there are many decompositions possible, we find one which ensures that the length of $u$ and $v$ are the shortest. 

Suppose Eve's token takes the transition $\delta_1$ after reading $w$. Then on reading a word in $L_1L_1$, her token would visit an accepting state, and therefore a deterministic component. Otherwise, is she chose the transition $\delta_2$ on reading $w$, then after reading the word $w\cdot u$, the longest prefix of $w\cdot u $ is in the language $(L_1 + L_2)^*L_1$, and therefore the she would chose the transition $\delta_1$. Using this transition and continuing to read a word in $L_1$, the run on her token reaches the deterministic part of the automaton, from where Eve wins the HD game. 

\paragraph*{No memoryless stochastic resolver}
We show that any resolver $\Mc_p$ that assigns with probability $0\leq p \leq 1$, the transition $\delta_1$ and with probability $1-p$, the transition $\delta_2$,
is not an almost-sure resolver. Note that if either of $\delta_1$ or $\delta_2$ is removed from the automaton $\Bc$, then the language changes, and hence if $p=0$ or $p=1$, $\Mc_p$ is not an almost-sure resolver. We therefore suppose that $0<p<1$.

We will construct a word in the language that is accepted with probability $<1$ by the resolver-product $\Mc_p \circ \Bc$. Consider the word $w = ac\diamond a^2c\diamond a^3c\diamond\dots \diamond a^{n}c\diamond\dots$, which is in the language since $w$ is in $(L_1)^\omega$. 
A run on $w$ is accepted if and only if some finite prefix of the run on the word visits the red state. For the run to visit a red state, the transition $\delta_1$ should be chosen by the resolver at every step wh $a^kc\#$ for some $k$. 

The probability that a run of the word $a^kc\diamond$ constructed using $\Mc_p$ starting from $q_0$ and ends at the red state $q_r$ is $p^k$, since the probability of $\delta_1$ being chosen at every step on a word $a^k$ is $p^k$. 
Therefore, the probability that a run constructed using $\Mc_p$ on the word $a^kc\diamond$ starting from $q_0$ \emph{does not} ends in the red state is $1-p^k$.

The probability that on the finite word $w_n = ac\diamond a^2c\diamond a^3c\diamond\dots a^{n}c\diamond$, a run constructed using the resolver $\Mc_p$,  \emph{does not} visit the red state even once is the probability that it does not visit the red state for any substring $a^{i}c\diamond$ which is 
\begin{align*}
\Pr[\text{a run on $w_n$ constructed using }\Mc_p&\\\text{ does not visit a red state}] &= \prod_{i=1}^n\left(1-p^{i}\right)    
\end{align*}

For the infinite word $w = ac\diamond a^2c\diamond a^3c\diamond\dots a^{n}c\diamond\dots$, again using the resolver $\Mc_p$,
\begin{align*}
    \Pr[\text{a run on $w$ using resolver }\Mc_p & \\\text{ does not } \text{visit a red state}] & = \prod_{i=1}^\infty\left(1-p^{i}\right)
\end{align*} 

Since the value $\sum_i^\infty |-p^i| <\infty$ for $0<p<1$ and each $p^i$ is positive, we obtain that $\prod_{i=1}^\infty\left(1-p^{i}\right)$ converges to a positive value (see \cref{prop:complexanalysisConverge}). We can argue further that since each of the elements in the product is strictly smaller than $1-p$, but strictly larger than $0$, therefore $0<\prod_{i=1}^\infty\left(1-p^{i}\right)<1-p$.

This shows that the word $w$ is not accepted with probability~$1$ using $\Mc_p$ as a resolver for any $0<p<1$ and therefore this automaton is not MR.
\end{proof}

\section{Appendix for \cref{sec:pih}}
\subsection{Parity index hierarchy is strict for SR parity automata}
We will give a complete proof of \cref{theorem:pih} below, which shows that the parity-index hierarchy on stochastically resolvable automata is strict.
\pih*
We begin by observing that SR $[i,j]$-parity automata are as expressive as MR-parity $[i,j]$  automata.

\begin{lemma}
    Every SR $[i,j]$-parity  automaton can be converted to an MR $[i,j]$-parity  automaton.
\end{lemma}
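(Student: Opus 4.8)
The plan is to realize the memory of a finite-memory almost-sure resolver by folding it into the state space, so that a memoryless resolver on the enlarged automaton simulates the original memoryful one while leaving every priority—and hence the parity index $[i,j]$—unchanged.

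First I would invoke the hypothesis that $\Ac$ is SR to obtain a finite-memory almost-sure resolver $\Mc=(M,m_0,\mu,\nextmove)$ for $\Ac$, and form its resolver-product $\Pc=\Mc\circ\Ac$, a probabilistic $[i,j]$-parity automaton on states $Q\times M$ with $\Lc(\Pc)=\Lc(\Ac)$. From $\Pc$ I extract a nondeterministic $[i,j]$-parity automaton $\Bc$ on the same state set $Q\times M$ and initial state $(q_0,m_0)$, keeping exactly those transitions $(q,m)\xrightarrow{a:c}(q',m')$ to which $\Pc$ assigns nonzero probability---i.e.\ those arising from a transition $\delta=q\xrightarrow{a:c}q'$ of $\Ac$ with $\nextmove(m,q,a)\circ\delta>0$ and $m'=\mu(m,\delta)$---and inheriting the priority $c$ from $\Ac$. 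Note that $\Bc$ is complete, since for each $(q,m)$ and each letter $a$ the distribution $\nextmove(m,q,a)$ is supported on the nonempty set $\Delta_{q,a}$.

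Next I would verify $\Lc(\Bc)=\Lc(\Ac)$. The inclusion $\Lc(\Bc)\subseteq\Lc(\Ac)$ is immediate, as every run of $\Bc$ projects onto a run of $\Ac$ with identical priorities. For the reverse inclusion, take $w\in\Lc(\Ac)=\Lc(\Pc)$; then the accepting runs of the Markov chain $M_w$ have measure $1$, so at least one accepting run exists, and it is a run of $\Bc$, witnessing $w\in\Lc(\Bc)$. Finally, I would define the memoryless resolver $\Mc'$ for $\Bc$ that, at state $(q,m)$ on letter $a$, assigns to each transition $(q,m)\xrightarrow{a:c}(q',m')$ the probability $\nextmove(m,q,a)\circ\delta$ of the underlying $\delta=q\xrightarrow{a:c}q'$. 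By construction, the resolver-product $\Mc'\circ\Bc$ has (after identifying the single memory state of $\Mc'$) exactly the same states, transitions, and transition probabilities as $\Pc$, so $\Lc(\Mc'\circ\Bc)=\Lc(\Pc)=\Lc(\Ac)=\Lc(\Bc)$. Hence $\Mc'$ is a memoryless almost-sure resolver for $\Bc$, proving $\Bc$ is MR.

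I expect no serious obstacle: the construction is essentially the standard ``push the memory into the states'' trick. The only points requiring care are (i)~checking completeness of $\Bc$ and (ii)~confirming that the memoryless resolver-product literally reproduces, for every word $w$, the Markov chain $M_w$ underlying $\Pc$, so that almost-sure acceptance is preserved verbatim. The priority labels, and therefore the parity index $[i,j]$, are untouched throughout, which is what makes the index-preservation claim immediate.
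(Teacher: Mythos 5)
Your proof is correct and follows essentially the same route as the paper's: the paper likewise forms the resolver-product $\Pc = \Mc \circ \Ac$, takes its underlying nondeterministic automaton, and observes that the memoryless resolver picking transitions according to $\Pc$'s probabilities is almost-sure since $\Lc(\Pc)=\Lc(\Ac)$. Your version merely spells out the details (completeness of $\Bc$, both language inclusions, and the identification of $\Mc'\circ\Bc$ with $\Pc$) that the paper leaves implicit.
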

\begin{proof}
    Let $\Ac$ be an SR automaton, and let $\Mc$ be an almost-sure resolver for $\Ac$. Let $\Pc$ be the resolver-product of $\Mc$ and $\Ac$, and $\Nc$ be the underlying nondeterministic automaton of $\Pc$. Then  the memoryless resolver for $\Nc$ that picks transitions according to $\Pc$ is an almost-sure resolver, since $\Lc(\Pc) = \Lc(\Ac) = \Lc(\Nc)$.
\end{proof}

Thus, it suffices to show that MR $[i,j]$-parity automata are as expressive as deterministic $[i,j]$-parity  automata. 
Observe also that any deterministic $[i,j]$-parity  automaton is trivially an MR $[i,j]$-parity automaton. For the other direction, let $L$ be an $\omega$-regular language that is not recognised by any deterministic $[i,j]$-parity automaton. We will show that the priorities of any automaton $\Mc$ recognising $L$ include all the priorities interval $[2m+i+1,2m+j+1]$ for some integer $m$, which proves \cref{theorem:pih}. We will do this in the following two steps. 

First, in \cref{lemma:SR-and-parity-languages}, we will show that for if an MR automaton $\Nc$ recognises the \emph{$[i+1,j+1]$-parity language $L_{[i+1,j+1]}$}, which is the set of words in $[i+1,j+1]^{\omega}$ such that the highest priority occurring infinitely often is even, then the priorities of $\Nc$ include all the priorities in the interval $[2m+i+1,2m+j+1]$ for some $m$. 

Then, in \cref{lemma:index-hierarchy-reduction-to-parity-language}, we show that if $\Mc$ is an MR automaton recognising $L$---which, recall, is an $\omega$-regular language that is not recognised by any deterministic $[i,j]$-parity automaton---then there is an automaton $\Nc$ recognising the $[i+1,j+1]$-parity language whose priorities are a subset of the priorities of $\Mc$. 

Combining these two results proves \cref{theorem:pih}. We show these two lemmas next. 

\begin{lemma}\label{lemma:SR-and-parity-languages}
   For every two natural numbers $\alpha,\beta$ with $\alpha<\beta$, any MR automaton $\Nc$ recognising the $[\alpha,\beta]$-parity language must contain all the priorities in $[2m+\alpha,2m+\beta]$ for some integer $m$.
\end{lemma}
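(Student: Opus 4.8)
The plan is to exhibit an adversary that, against the given memoryless almost-sure resolver $\sigma$, forces the runs it builds to use $d+1$ transition priorities of strictly increasing value and alternating parity, where $d = \beta - \alpha$. The engine is an induction on $k \in \{0,\dots,d\}$ that produces a finite word $u_k$ over the alphabet $[\alpha,\alpha+k]$ together with a priority $p_k$ of $\Nc$ such that: $p_k \equiv \alpha+k \pmod 2$; $p_0 < p_1 < \dots < p_k$; and from \emph{every} state $q$ that is reachable with positive probability under $\sigma$, the $\sigma$-random run from $q$ reading $u_k$ contains a transition of priority at least $p_k$ with probability at least some $\epsilon_k > 0$. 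Because $\Nc$ has finitely many states, once each individual probability is positive the uniform bound $\epsilon_k = \min_q(\cdots) > 0$ comes for free; this uniformity is precisely what allows the word $u_k$ to be iterated at the next level.

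For the base case I read the word $\alpha^\omega$. Its largest letter occurring infinitely often is $\alpha$, so $\alpha^\omega \in L_{[\alpha,\beta]}$ exactly when $\alpha$ is even. Since $\sigma$ is an almost-sure resolver for $\Nc$, the run it produces is therefore accepting (if $\alpha$ is even) or rejecting (if $\alpha$ is odd) almost surely, and in both cases the largest priority seen infinitely often has the parity of $\alpha$. Taking $p_0$ to be the smallest priority of $\Nc$ of parity $\equiv \alpha$, an $\alpha^\omega$-run sees a priority $\geq p_0$ infinitely often almost surely. Two standard manoeuvres then give the full base claim: prepending a finite word that reaches a given positive-probability state $q$ shows this holds from every such $q$ (finite prefixes change neither membership in $L_{[\alpha,\beta]}$ nor the acceptance of a run), and truncating $\alpha^\omega$ to a long enough finite power $u_0 = \alpha^{N}$ converts ``infinitely often, almost surely'' into ``at least once, with positive probability'' uniformly over the finitely many states.

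The inductive step is the heart of the argument. Given $u_{k-1}$ and $p_{k-1}$, I feed the word $W = (u_{k-1}\,(\alpha+k))^{\omega}$; its largest letter occurring infinitely often is $\alpha+k$, so membership of $W$ in $L_{[\alpha,\beta]}$ is dictated by the parity of $\alpha+k$, which is \emph{opposite} to that of $p_{k-1} \equiv \alpha+k-1$. On one hand, applying the inductive hypothesis to the copy of $u_{k-1}$ at the start of each block yields, from whatever state the block begins in, probability at least $\epsilon_{k-1}$ of seeing a priority $\geq p_{k-1}$; since this lower bound is uniform over states, the (conditional) second Borel--Cantelli lemma (\cref{lemma:secondborellcantelli}) shows that a priority $\geq p_{k-1}$ is seen infinitely often almost surely, so the largest priority seen infinitely often is at least $p_{k-1}$. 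On the other hand, the membership of $W$ forces that largest priority to have the parity of $\alpha+k$; being $\geq p_{k-1}$ yet of the opposite parity, it is in fact $\geq p_{k-1}+1$. Defining $p_k$ as the smallest priority of $\Nc$ of parity $\equiv \alpha+k$ that is at least $p_{k-1}+1$, this argument simultaneously shows such a priority exists (so $p_k$ is well defined and is a genuine priority of $\Nc$), that $p_k > p_{k-1}$, and that almost every $W$-run sees a priority $\geq p_k$ infinitely often. The same prepend-and-truncate bookkeeping as in the base case, applied to a sufficiently long finite power $u_k = (u_{k-1}(\alpha+k))^{N}$, then re-establishes the uniform positive-probability guarantee from every positive-probability state, closing the induction.

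Running the induction up to $k = d$ yields priorities $p_0 < p_1 < \dots < p_d$ of $\Nc$ with $p_k \equiv \alpha+k \pmod 2$; in particular $p_0 \equiv \alpha$ and $p_d \geq p_0 + d$. Assuming, as we may after an order- and parity-preserving relabelling (priority compression, which changes neither the language nor the resolver), that the priorities of $\Nc$ form a consecutive interval, this interval contains both $p_0$ and $p_d$ and hence the whole block $[p_0, p_0+d]$, which equals $[2m+\alpha, 2m+\beta]$ for the even number $2m = p_0 - \alpha$; this is the assertion of the lemma. I expect the inductive step to be the main obstacle, for two intertwined reasons: correctly propagating the \emph{parity} of the almost-sure maximal priority through the language-membership of $W$, and at the same time preserving the \emph{uniform-over-states} positive-probability guarantee that licenses the next appeal to the second Borel--Cantelli lemma --- the per-block events are not independent, so a uniform conditional lower bound, rather than literal independence, is what the argument must supply. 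Distilling an infinite-word, initial-state statement into a single finite witness word $u_k$ that works from every state is the accompanying delicate point.
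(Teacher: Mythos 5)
Your proof is correct and follows essentially the same route as the paper's: the same induction producing finite witness words $u_k$ (base case $\alpha^{\omega}$; inductive step the periodic word $\left(u_{k-1}(\alpha+k)\right)^{\omega}$, whose language membership dictates the parity of the almost-sure maximal priority seen infinitely often), with the same uniformisation over the finitely many states and the same truncation of the periodic word to a finite power. The differences are only cosmetic refinements: you track an explicit increasing alternating chain $p_0<\dots<p_d$ of actual priorities where the paper tracks lower bounds $\eta+k$ above its minimal priority $\eta$ of parity $\alpha$, you treat both parities of the inductive step uniformly where the paper argues the even (almost-surely accepting, via the $\zeta^q_n$ limit) and odd (surely rejecting, via an explicit run construction) cases separately, and you make explicit the chained conditional product bound behind the Borel--Cantelli step and the final priority-compression step, both of which the paper leaves implicit.
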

\begin{proof}
    Let $\Nc$ be an MR automaton that recognises the $[\alpha,\beta]$-parity language $L_{[\alpha,\beta]}$. Let $\Mc$ be a memoryless almost-sure resolver for $\Nc$, and $\Pc = \Mc \circ \Nc$ be the probabilistic automaton that is the resolver product of $\Mc$ and $\Nc$. We assume, without loss of generality, that $\Mc$ chooses every transition of $\Nc$ on some word with positive probability. 
    
 We start by noting that $\Nc$ contains a priority of the same parity as $\alpha$. Indeed, consider the word $\alpha^{\omega}$. If $\alpha$ is odd (resp. even), then the word $\alpha^{\omega}$ is rejected (resp. accepted) by $\Nc$, and hence $\Nc$ must contain an odd (resp. even) priority, and thus, so must $\Pc$. Thus, let $\eta$ be the lowest priority in $\Pc$ that has the same parity as $\alpha$. We shall induct on the following statement, which implies the proof of \cref{lemma:SR-and-parity-languages}.

    \begin{claim}[IH(k)]\label{claim:Ihk}
        For each natural number $k$ with $\alpha+k\leq \beta$, there is a finite word $u_k \in ([\alpha,\alpha+k])^{*}$ and a positive real $0<\theta_k\leq 1$, such that from all states $q$ in $\Pc$, a run from $q$ on $u_k$ in $\Pc$, with probability at least $\theta_k$, contains a priority that is at least $\eta+k$ and that has the same parity as $\alpha+k$.
    \end{claim} 
    
    We prove the claim by induction. For the base case where $k=0$, we distinguish between the cases of when $\alpha$ is odd and when $\alpha$ is even.
    
    \paragraph*{Base case: $\alpha$ is even} Consider the word $\alpha^{\omega}$, which is in $L_{[\alpha,\beta]}$ and thus accepted by $\Pc$ almost-surely, i.e., a random run of $\Pc$ on $\alpha^{\omega}$ is accepting with probability $1$. We assume, without loss of generality, that every state $q$ in $\Pc$ can be reached from the initial state of $\Pc$. Then the automaton $(\Pc,q)$, must accept the word $\alpha^{\omega}$ almost-surely. This implies that for each state $q$ there is be an $n_q \in \mathbb{N}$, such that there is a path from $q$ on $\alpha^{n_q}$ that sees an even priority, which by minimality of $\eta$ (recall that $\eta$ is the smallest priority in $\Pc$ that has the same parity as $\alpha$), is at least $\eta$. Thus, there is a positive probability $\theta_q$ such that a random run of $\alpha^{n_q}$ on $q$ sees an even priority at least $\eta$. Taking $u_0$ to be $\alpha^{n_0}$, where  $n_0$ is the maximum of $n_q$'s for all states $q$, we get that a random run from each state $q$ on $u_0$ sees an even priority at least $\eta$ with probability at least $\theta$, where $\theta =\min \{\theta_q \mid q \in Q\}$.

    \paragraph*{Base case: $\alpha$ is odd} Consider the word $\alpha^{\omega} \notin L_{[\alpha,\beta]}$ that is rejecting in $\Pc$ in $\Nc$. Since $\Lc(\Pc) = \Lc(\Nc)$, all possible runs of $\Pc$ on $\alpha^{\omega}$ is rejecting. Therefore, all runs from each state $q$ on $\alpha^{\omega}$ in $\Pc$ must contain at least one odd priority. This implies that for each state $q$, there is a finite word $u_q=\alpha^{n_q}$ on which there is a run from $q$ in $\Pc$ that contains an odd priority. Thus, there is a positive probability $\theta_q$ of seeing an odd priority at least as large as $\eta$ from each state $q$ on $\alpha^{n_q}$. Taking $u_0$ to be $\alpha^{n_0}$, where  $n_0$ is the maximum of $n_q$'s for all states $q$, we get that a random run from each state $q$ on $u_0$ sees an odd priority at least $\eta$ with probability that is at least the minimum of $\theta_q$'s for all states $q$. 

    This completes our base case for when $k=0$. 
    
    Let us now prove the induction step. Assume IH($k$) holds for some natural number $k<\beta-\alpha$. That is, there is a finite word $u_k \in [\alpha,\alpha+k]^{*}$ and a positive real $\theta_k$, such that from any state $q$, the probability that a run from $q$ on $u_k$ sees a priority that is at least $\eta+k$ and of the same parity as $\eta+k$ is at least $\theta_k$. We now show that this implies IH($(k+1)$).

    We once again distinguish between the cases $(\alpha+k+1)$ is even or odd.
    
    \paragraph*{Induction step: if $(\alpha+k+1)$ is even} 
    
    Fix a state $q$ and consider the finite word $v=u_k \cdot (\alpha+k+1)$. Since $u_k$ is a prefix of $v$, from every state $p$, the probability that an odd priority at least $\eta+k$ occurs in a run from $p$ on $v$ in $\Pc$ is at least $\theta_k$. The word $v^{\omega}$ is accepted almost surely from $q$ in $\Pc$, however. We shall utilise the above two remarks to show that an even priority at least $(\eta+k+1)$ occurs with positive probability on a run from $q$ on the word $v^n$ from some $n$ in $\Pc$.
    
    More concretely, let $\zeta^q_n$ denote the probability that a run of $\Pc$ from $q$ on the word $v^n$ contains an even priority that is at least $\eta+(k+1)$. The sequence $\zeta^q_0,\zeta^q_1,\zeta^q_2,\cdots$ then is nondecreasing, and since it is bounded above by 1, converges to a value $\zeta^q$. Below, we show that $\zeta^q=1$, which implies that $\zeta^q_n>0$ for some finite~$n$. This follows from the computations below.

    If $\rho$ is a run of $\Pc$ from $q$ on the word  $v^{\omega}$, then 
    \begin{equation*}
    \begin{split}
        \prob[\rho \text{ is accepting}]
        &= \prob[\max(\inf(\rho)) \text{ is even}] \\
        &= \prob[\max(\inf(\rho)) \text{ is even} \\ 
        &\quad\quad\qquad\text{ and at most $(\eta+k-1)$}] \\
        &+ \prob[\max(\inf(\rho)) \text{ is even} \\
        &\quad\quad\qquad\text{ and at least $(\eta+k+1)$}] \\
    \end{split} 
    \end{equation*}
    We first show that that $$\prob[\max(\inf(\rho)) \text{ is even and at most $(\eta+k-1)$}]=0.$$ To do so, we use the fact that from any state in $\Pc$, the probability that an odd priority at least $(\eta+k)$ occurs on a run from $\Pc$ on $v$ is at least $\theta_k$. Thus the probability that a priority at most $(\eta+k-1)$ occurs on a run in $\Pc$ from any state on $v$ is at most $(1-\theta_k)$.

\begin{equation*}
    \begin{split}
        &\prob[\max(\inf(\rho)) \text{ is even and at most $(\eta+k-1)$}] \\
        &\leq  \prob[\max(\inf(\rho)) \text{ is at most $(\eta+k-1)$}]  \\ 
        &=\prob[\bigcup_{N \in \mathbb{N}} \text{(No priority that is at least $\eta+k$ }\\
        &\quad\quad\quad\quad\text{occurs after reading the $N^{th}$ $v$ in $\rho$)} ] \\
        &=\lim_{N\to\infty}\prob[\text{(No priority that is at least  $\eta+k$ }\\
        &\quad\quad\quad\quad\quad\quad\text{occurs after reading the $N^{th}$ $v$ in $\rho$)}] \\
        &\leq \lim_{N\to\infty} \prod_{n\geq N} (1-\theta_k) = 0
    \end{split}
\end{equation*}
Thus, we have, 
\begin{equation*}
    \begin{split}
        \prob[\rho \text{ is accepting}]
        &= \prob[\max(\inf(\rho)) \text{ is even }\\
        &\quad\quad\quad\quad\quad\text{and at most $(\eta+k-1)$}] \\
        &+ \prob[\max(\inf(\rho)) \text{ is even }
        \\&\quad\quad\quad\quad\quad\text{and at least $(\eta+k+1)$}] \\
        &= \prob[\max(\inf(\rho)) \text{ is even }\\
        &\quad\quad\quad\quad\quad\text{and at least $(\eta+k+1)$}] \\
        &\leq \prob[\rho \text{ contains a transition with an} 
        \\ &\qquad\text{even priority at least $(\eta+k+1)$}] \\
        &=\lim_{n\to \infty} \zeta^q_n = \zeta^q\\
    \end{split} 
\end{equation*}
    
Since $v^{\omega}$ is almost-surely accepted, $\rho$ is almost-surely an accepting run, and hence we get $1 \leq \zeta^q$, which gives us $\zeta^q$ is $1$ and hence $\zeta^q_n>0$ for some $n$. Let $n$ be large enough so that $\zeta^p_n>0$ for all states $p$. Then, any run of $\Pc$ on the word $u_{k+1}=v^n$ from any state contains an even priority at least $(\eta+k+1)$ with probability $\theta_{k+1}=\min (\zeta^p_n)_{p \in Q}$, as desired.

\paragraph*{Induction step: if $(\alpha+k+1)$ is odd} 

Consider the finite word $v=u_k \cdot (\alpha+k+1)$. Since $u_k$ is a prefix of $v$, from every state $q$, the probability that an even priority at least $\eta+k$ is seen on a run from $q$ on $v$ in $\Pc$ is at least $\theta_k$. In particular, from each state $q$, there is a run from $q$ on $v$ in $\Pc$, such that an even priority at least $\eta+k$ occurs.  

Fix a state $q$, and consider the following run from $q$ on $v^{\omega}$ in $\Pc$ in which the highest even priority occurring infinitely often is at least $(\eta+k)$. There is a run $\rho_1$ in $\Nc$ from $q$ to some state $q_1$ on $v$ such that the highest even priority occurring in $\rho_1$ is at least $(\eta+k)$. Similarly, there is a run $\rho_2$ from $q_1$ to $q_2$ such that the highest even priority occurring in $\rho_1$ is at least $(\eta+k)$. Extending this to get $\rho_3,\rho_4,\dots$ similarly, and then concatenating $\rho_1 \cdot \rho_2 \cdot \rho_3 \dots$ to get $\rho$, we see that $\rho$ is a run from $q$ on $v^{\omega}$ in which the highest even priority occurring infinitely often is at least $(\eta+k)$. Since the word $v^{\omega}$ is rejected by $\Nc$, this means that there is an odd priority at least $(\eta+k+1)$ that occurs in $\rho$ infinitely often. Let $n_q$ be the minimum natural number so that there is a run from $q$ on $v^{n_q}$ that contains a transition of an odd priority at least $(\eta+k+1)$. Thus, the probability that a run of $\Pc$ from $q$ on $v^{n_q}$ that contains a transition of an odd priority at least $(\eta+k+1)$ is positive, say $\theta_q$. Taking $u_k$ to be $v^{n}$, where  $n$ is the maximum of $n_q$'s for all states $q$, we get that a random run from each state $q$ on $u_k$ sees an odd priority at least $(\eta+k+1)$ with probability that is at least the minimum of $\theta_q$'s for all states $q$, as desired.  

This completes our proof of \cref{claim:Ihk}, and also of \cref{lemma:SR-and-parity-languages}.
\end{proof}

A consequence of the lemma above is that an MR $[i,j]$-parity automata cannot recognise the $[i+1,j+1]$-parity language.  The next lemma allows us to reduce  \cref{theorem:pih} to parity languages, from where the conclusion follows.

\begin{lemma}\label{lemma:index-hierarchy-reduction-to-parity-language}
    Let $L$ be an $\omega$-regular language, such that $L$ cannot be recognised by any deterministic $[i,j]$-parity automata. If automaton $\Ac$ is an MR automaton recognising $L$, then there is an MR automaton $\Nc$ recognising the $[i+1,j+1]$-parity language  whose priorities are a subset of the priorities of $\Ac$. 
\end{lemma}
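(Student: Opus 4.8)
The plan is to realise the canonical $[i+1,j+1]$-parity language inside $\Ac$ by a letter-to-word substitution extracted from the hypothesis, and then to lift the almost-sure resolver of $\Ac$ through that substitution. First I would record the one external ingredient. Since $L$ is $\omega$-regular and not recognised by any deterministic $[i,j]$-parity automaton, the characterisation of the parity (Mostowski--Rabin) index of a deterministic automaton by its alternating chains of loops---its \emph{flowers} (Niwi\'nski--Walukiewicz; Wagner)---guarantees a finite word $u$ and nonempty finite words $\lambda_{i+1},\dots,\lambda_{j+1}$ over $\Sigma$ for which the substitution $f\colon [i+1,j+1]^\omega \to \Sigma^\omega$, defined by $f(m_0m_1m_2\cdots)=u\,\lambda_{m_0}\lambda_{m_1}\lambda_{m_2}\cdots$, reduces the $[i+1,j+1]$-parity language to $L$; that is, $w\in L_{[i+1,j+1]}$ if and only if $f(w)\in L$. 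Intuitively, $u$ reaches a flower state and each $\lambda_m$ is a loop whose dominant priority has the parity of $m$, so the highest priority produced infinitely often tracks the highest letter occurring infinitely often in $w$.

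Next I would build $\Nc$ as the substitution automaton of $f$ over $\Ac$. Its states are those of $\Ac$ together with a fresh initial state $q_0^{\Nc}$. For a state $p$ of $\Ac$ and a letter $m\in[i+1,j+1]$, and for every finite run of $\Ac$ from $p$ on the word $\lambda_m$ ending in a state $p'$, I add to $\Nc$ a transition $p\xrightarrow{m:c}p'$, where $c$ is the highest priority occurring along that $\Ac$-run; the transitions out of $q_0^{\Nc}$ on $m$ are defined analogously by simulating $\Ac$ on $u\lambda_m$ from the initial state of $\Ac$. By construction every priority of $\Nc$ is the maximum of a set of priorities of $\Ac$, and is therefore itself a priority of $\Ac$, so the priorities of $\Nc$ are a subset of those of $\Ac$.

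A run of $\Nc$ on $w=m_0m_1\cdots$ is exactly a run of $\Ac$ on $f(w)$ cut into the segments $u\lambda_{m_0},\lambda_{m_1},\lambda_{m_2},\dots$, each macro-transition carrying the maximum priority of its segment. Since the segments have bounded length, the highest priority occurring infinitely often along the $\Ac$-run equals the highest priority occurring infinitely often along the corresponding $\Nc$-run, so the two runs agree on acceptance, and the correspondence between them is a bijection. Combined with the reduction property of $f$ this gives $w\in\Lc(\Nc)$ iff $\Ac$ has an accepting run on $f(w)$ iff $f(w)\in L$ iff $w\in L_{[i+1,j+1]}$, so $\Lc(\Nc)=L_{[i+1,j+1]}$. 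Finally, the memoryless almost-sure resolver $\Mc$ of $\Ac$ induces a memoryless resolver $\Mc'$ of $\Nc$: at an $\Nc$-state $p$ on letter $m$, let $\Mc'$ pick the macro-transition obtained by following $\Mc$ step-by-step through $\lambda_m$ (respectively $u\lambda_m$ at $q_0^{\Nc}$). Because $\Mc$ is memoryless, the resulting distribution over macro-transitions depends only on $p$ and $m$, so $\Mc'$ is memoryless; and for $w\in L_{[i+1,j+1]}$ the run of $\Nc$ under $\Mc'$ on $w$ is distributed as the run of $\Ac$ under $\Mc$ on $f(w)\in L$, which is accepting almost surely. Hence $\Mc'$ is an almost-sure resolver and $\Nc$ is MR.

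The main obstacle is the first step: turning the qualitative hypothesis ``$L$ is not deterministic-$[i,j]$'' into the concrete substitution $f$ with the stated reduction property. This is where the parity-index theory is essential---one must invoke, and carefully align the indexing of, the flower/alternating-chain characterisation so that a language outside $[i,j]$ provably carries an $[i+1,j+1]$-flower, and check that the loops $\lambda_m$ really make the dominant priority track the maximal letter. Everything afterwards---the priority bookkeeping (maxima over bounded segments preserve the parity of the limit superior) and the transfer of the resolver---is routine once $f$ is in hand, the only mild care being the treatment of the finite prefix $u$, which I absorb into the first macro-transition and which never affects acceptance.
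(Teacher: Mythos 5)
Your proposal is correct and follows essentially the same route as the paper: both extract an $[i+1,j+1]$-flower from a deterministic automaton for $L$ via the Niwi\'nski--Walukiewicz lemma, build a macro-transition (substitution) automaton whose transitions on letter $k$ summarise runs on the word $\lambda_k$ (resp.\ $u_k$) with the segment-maximum as priority, and push the memoryless resolver through this construction. The only differences are cosmetic: the paper works directly with the resolver-product $\Pc=\Mc\circ\Ac$ and picks as initial state one positively reachable on the access word, whereas you work on $\Ac$ with a fresh initial state absorbing the prefix and lift $\Mc$ afterwards, which is equivalent since $\Mc$ is memoryless (and your ``bijection'' of runs is really a surjection, which is all the argument needs).
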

\begin{proof}
     Let $\Dc$ be a deterministic parity automaton that recognises $L$, and $\Pc$ be a probabilistic parity automaton that is obtained by taking the resolver-product of $\Ac$ with a memoryless resolver that ensures all accepted words in $\Dc$ are accepted almost-surely in $\Pc$.
     
     Since $L$ cannot be recognised by a deterministic $[i,j]$ automaton, we know due to the flower lemma of Niwi\'nski and Walukiewicz that $\Dc$ contains an $[i+1,j+1]$-\emph{flower}~\cite[Lemma 14]{NW98}. That is, there exists a state $p$ in $\Dc$, an integer $\ell$, and finite words $u_{k}$ for each $k$ in the interval $[i+1,j+1]$, such that the unique run from $p$ on $u_k$ ends at $p$ and the highest priority seen during this run is $2\ell+k$. Let $v$ be a finite word such that there is a run from the initial state of $\Dc$ to $p$ on the word $v$.

     We build a nondeterministic automaton $\Nc$ using the automaton $\Ac$, and the finite words $v,u_{i+1},u_{i+2},\cdots,$ $u_{j+1}$. The states of the automaton $\Nc$ are same as the states of $\Pc$. The initial state of $\Nc$ is a state that can be reached in $\Pc$ from its initial state on the word $v$ with positive probability. The transitions of $\Nc$ are over the alphabet $[i+1,j+1]$, and we have a transition from $q$ to $q'$ on the letter $k$ with priority $\pi$ if there is a run from $q$ to $q'$ in $\Pc$ on the word $u_k$, where the largest priority that run contains is $\pi$. 
     
     We construct an almost sure resolver $\Pc_\Nc$ for $\Nc$, which, by construction is a memoryless resolver, utilising the probabilistic automaton $\Pc$. Concretely, the almost-sure resolver $\Pc_\Nc$, for a state $q$ and letter $k\in [i+1,j+1]$, chooses the transition to $q'$ of priority $\pi$ with probability $\zeta$, if the probability of reaching $q'$ from $q$ on the word $u_k$ such that the highest priority occurring in the run is $\pi$ is $\zeta$. Abusing the notation slightly, we let $\Pc_\Nc$ denote the probabilistic automaton that is obtained by taking the resolver product of $\Pc_\Nc$ with~$\Nc$.

     We will show that $\Nc$ recognises the $[i+1,j+1]$-parity language, and $\Pc_\Nc$ recognises all words in the $[i+1,j+1]$-parity language with probability $1$. This follows from the following chain of equivalences. 
     
     The word $w'=k_0 k_1 k_2 \cdots$ with $k_i \in [i+1,j+1]$ is in the $[i+1,j+1]$-parity language \emph{if~and~only~if}  $w=u_{k_0}\cdot u_{k_1} \cdot u_{k_2} \cdots $ is accepted by $(\Dc,p)$ \emph{if and only if} the word $v.w$ is accepted by $\Ac$ and accepted almost surely by $\Pc$ \emph{if and only if} the word $w'=k_0 k_1 k_2 \cdots$ is accepted by $\Nc$ and almost surely by $\Pc_\Nc$. This concludes our proof. 
\end{proof}

We note that \cref{lemma:index-hierarchy-reduction-to-parity-language,lemma:SR-and-parity-languages} together prove \cref{theorem:pih}. 
\section{Appendix for \cref{sec:complexity}}
\subsection{Recognising Memoryless adversarially resolvable automata}\label{appendixsubsec:np-completeness}

\paragraph*{Token games}

For a parity automaton $\Ac$, the 2-token game on $\Ac$ is defined similarly to the HD game on $\Ac$ (\cref{defn:hd-game}), with Adam building a word letter-by-letter and Eve building a run on her token on Adam's word transition-by-transition, but additionally, Adam is also building two runs on two of his tokens transition-by-transition. The winning condition for Eve is that if any of the runs of Adam's tokens are accepting, then the run on her token is also accepting.

\begin{definition}\label{defn:twotokengame}
Given a nondeterministic parity automaton $\Ac=(Q,\Sigma,\Delta,q_0)$, the $2$-token game on $\Ac$ is a two-player non-stochastic complete-observation game between Eve and Adam that starts with an Eve's token and Adam's two distinguishable tokens at $q_0$, and proceeds in infinitely many rounds. In round $i$ when Eve's token is at $q_i$ and Adam's tokens are at $p^1_i$ and $p^2_i$:
\begin{enumerate}
    \item Adam selects a letter $a_i\in\Sigma$;
    \item Eve selects a transition $q_i\xrightarrow{a_i:c_i}q_{i+1}$ and moves her token to $q_{i+1}$;
    \item Adam selects transitions $p^1_i\xrightarrow{a_i:c^1_i}p^1_{i+1}$ and $p^2_i\xrightarrow{a_i:c^2_i}p^2_{i+1}$ on each of his tokens and moves his first and second token to $p^1_{i+1}$ and $p^2_{i+1}$, respectively.
\end{enumerate}
In the limit of a play of the $2$-token game, Eve constructs a run on her token, and Adam constructs a run on each of his two tokens, all on the same word. We say that Eve wins such a play if the run on her token is accepting or both the runs on Adam's tokens are rejecting. We say Eve \emph{wins the 2-token game} on $\Ac$ if she has a strategy to win the 2-token game.
\end{definition}

The rest of the section of the appendix will focus on the proof of \cref{lemma:ZlkDAGMDP}, which is restated below.
\ZlkDAGMDP*
\begin{proof}
        We begin by stating a lemma proved by Gimbert and Horn~\cite{GH10} for any prefix-independent objective---a property that Muller objectives satisfy.
        \begin{theorem}[\!\!\protect{\cite[Theorem~3.2]{GH10}}]\label{lemma:lastminute}\label{lemma:tailobjectives}
            If a player, say Adam, wins with a positive probability from some vertex of a complete-information two-player stochastic games, then he wins almost-surely from at least one vertex.
        \end{theorem}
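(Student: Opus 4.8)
The plan is to first use the quoted zero-one law (\cref{lemma:lastminute}) to collapse the positive-probability and almost-sure problems onto a single combinatorial object---the set of \emph{winning maximal end-components}---and then to decide which end-components are winning by a recursion that descends the Zielonka DAG $\Zc_{C,\Fc}$. Concretely, I would first compute the maximal end-component (MEC) decomposition of $\Mc$, which takes linear time. Call a MEC $S$ \emph{winning} if Adam has a strategy, confined to $S$, that almost-surely makes the set of colours seen infinitely often lie in $\Fc$. Since a MEC is strongly connected and Adam can reach any of its vertices with probability $1$ from any other, \cref{lemma:lastminute} applied to the sub-MDP on $S$ upgrades ``Adam wins with positive probability from some vertex of $S$'' to ``Adam wins almost-surely from every vertex of $S$''; hence being winning is a property of the whole MEC and is insensitive to the positive-versus-almost-sure distinction. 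A standard end-component argument then shows that the almost-sure (resp.\ positive) winning region of $\Mc$ equals the almost-sure (resp.\ positive) attractor of the union of all winning MECs, and both attractors are computable in time $\Oc(|\Mc|)$. Thus everything reduces to deciding the winning MECs.

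To decide whether a MEC $S$ is winning I would run a McNaughton/Zielonka-style recursion guided by the DAG. Maintain the invariant that each recursive call receives a sub-MDP that is a single end-component whose colour set is contained in the label $X$ of the current node $\nu_X$. If $X \in \Fc$ and the colour set of the end-component is exactly $X$, then the uniform strategy that visits every edge infinitely often produces infinitely-often colour set $X \in \Fc$ almost-surely, so the component is winning. Otherwise Adam must confine himself to a strictly smaller sub-end-component, and the admissible ways to do so are exactly dictated by the children of $\nu_X$: for each child $\nu_Y$ (labelled by a maximal $Y \subsetneq X$ on the opposite side of $\Fc$), delete the edges whose colour lies in $X \setminus Y$, recompute the MEC decomposition of the restricted sub-MDP, and recurse on each resulting sub-MEC against $\nu_Y$. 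The component $S$ is declared winning precisely when Adam can, staying inside $S$, reach with probability $1$ one of the sub-MECs recursively found to be winning. Proving this characterisation correct---in particular that confinement to a winning child region is both necessary and sufficient, with the $\Fc$-membership handled correctly on each level---is the first substantive step.

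The remaining, and I expect hardest, part is the complexity bound $\Oc(|\Mc||\Zc_{C,\Fc}|)$. The danger, familiar from Zielonka's algorithm on games, is that naive recursion on the tree unfolding of $\Zc_{C,\Fc}$ branches too much. The observation that saves us in the MDP setting is that at any fixed DAG node $\nu_X$ the sub-instances that ever reach it are carried by \emph{vertex-disjoint} end-components: at each level the edge deletions followed by MEC recomputation partition the current state space, so the total size of all sub-MDPs handled at $\nu_X$ across the whole run is $\Oc(|\Mc|)$. Since MEC decomposition and attractor computation are linear, the work charged to each of the $|\Zc_{C,\Fc}|$ nodes is $\Oc(|\Mc|)$, giving the claimed product bound once the charging is made rigorous by memoising results per (node, end-component) pair and reusing them across the DAG's shared descendants. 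The main obstacle is therefore twofold: establishing soundness and completeness of the confinement recursion, and turning the informal disjointness argument into a clean amortised accounting that genuinely yields $\Oc(|\Mc||\Zc_{C,\Fc}|)$ rather than a bound depending on the (potentially much larger) tree size of the underlying Zielonka tree. Finally, having computed the winning MECs, the positive and almost-sure regions follow immediately from the attractor computations of the first paragraph, completing the proof.
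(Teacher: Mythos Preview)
You have addressed the wrong statement. The theorem labelled \cref{lemma:lastminute} is the zero-one law for prefix-independent objectives in stochastic games, quoted verbatim from \cite[Theorem~3.2]{GH10}; the paper does not prove it but imports it as a black box. Your proposal, by contrast, is a sketch of a proof for \cref{thm:ZlkDAGMDP}---the result that almost-sure and positive winning regions in an MDP with a Muller objective given as a Zielonka DAG can be computed in time $\Oc(|\Mc||\Zc_{C,\Fc}|)$---and you explicitly invoke \cref{lemma:lastminute} as a tool rather than establishing it.

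If your intent was indeed \cref{thm:ZlkDAGMDP}, your approach differs from the paper's. The paper proceeds \emph{bottom-up} over the DAG in reverse topological order: for each accepting node $\nu_X$ it forms the restricted sub-MDP $\Mc|_X$, computes its MEC decomposition once, and declares a MEC winning when its colour set meets the complement of every child label $Y_j$; the almost-sure winning region is then the almost-sure attractor of the union of these MECs (using \cite[Lemma~9]{Cha07}), and the positive region follows via \cite[Corollary~1]{GOP11}. Your proposal instead runs a \emph{top-down} McNaughton-style recursion that descends into sub-end-components along the DAG. Both routes are plausible, but the paper's version sidesteps the amortisation concern you yourself flag: there is no recursion, just one MEC computation and one attractor per DAG node on a fixed restriction of $\Mc$, so the $\Oc(|\Mc||\Zc_{C,\Fc}|)$ bound is immediate. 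Your recursion, as you note, would need the disjointness-plus-memoisation accounting to avoid blowing up to the size of the Zielonka \emph{tree}, and that argument is only sketched.
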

        The following is a corollary of \cref{lemma:lastminute}, described in the work of Gimbert, Oualhadj, and Paul. 
        \begin{corollary}[\!\!\protect{\cite[Corollary~1]{GOP11}}]\label{cor:tailobjectives}
            If the set of vertices from which Adam wins almost-surely in an MDP with prefix-independent objectives can be computed in polynomial time, then there is also a polynomial time algorithm to compute the set of vertices from which Adam wins with positive probability. % what is exact value 
        \end{corollary}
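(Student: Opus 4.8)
The plan is to reduce the computation of the positive-probability winning set to a single reachability computation on top of the almost-sure winning set. Let $W_{=1}$ denote the set of vertices from which Adam wins almost-surely, which by hypothesis is computable in polynomial time, and let $W_{>0}$ denote the set of vertices from which Adam wins with positive probability; since the objective is prefix-independent, membership of a vertex in either set is well-defined independently of the history used to reach it. Write $\mathrm{Attr}^{+}(T)$ for the set of vertices from which Adam can reach a target set $T$ with positive probability, i.e.\ the least set containing $T$, containing every Adam-vertex with some successor in the set, and containing every stochastic vertex with some positive-probability successor in the set. This set is computable in linear time by a standard backward fixpoint. I would prove that $W_{>0} = \mathrm{Attr}^{+}(W_{=1})$, which immediately yields the polynomial-time algorithm, as computing $W_{=1}$ is polynomial by hypothesis and computing $\mathrm{Attr}^{+}$ is linear.

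For the inclusion $\mathrm{Attr}^{+}(W_{=1}) \subseteq W_{>0}$, I would have Adam play a strategy that forces the play into $W_{=1}$ with positive probability (which exists by the definition of the attractor) and then, upon entering $W_{=1}$, switch to an almost-sure winning strategy from the vertex reached. Because the objective is prefix-independent, the finite prefix used to reach $W_{=1}$ is irrelevant to the winning condition, so conditioned on reaching $W_{=1}$ the continuation is won with probability $1$; hence the unconditioned winning probability is at least the positive probability of reaching $W_{=1}$, placing the starting vertex in $W_{>0}$.

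The interesting direction is $W_{>0}\subseteq \mathrm{Attr}^{+}(W_{=1})$, and this is where I would invoke \cref{lemma:lastminute}. The key structural observation is that the complement $U = V \setminus \mathrm{Attr}^{+}(W_{=1})$ is a \emph{trap}: every Adam-vertex in $U$ has all its successors in $U$, and every stochastic vertex in $U$ has all its positive-probability successors in $U$ (otherwise the vertex would already have been pulled into the attractor). Consequently $U$ induces a well-defined sub-MDP $\restr{\Mc}{U}$ in which every vertex still has an outgoing edge, and any play of $\Mc$ started inside $U$ stays inside $U$ almost surely under every Adam strategy; therefore the winning probability from a vertex of $U$ is the same in $\Mc$ and in $\restr{\Mc}{U}$. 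Suppose toward a contradiction that some $v \in W_{>0}$ lay in $U$. Then Adam wins with positive probability from $v$ in $\restr{\Mc}{U}$, so by \cref{lemma:lastminute} (applied to the MDP $\restr{\Mc}{U}$, a special case of complete-information two-player stochastic games) Adam wins almost-surely from some vertex $w \in U$ in $\restr{\Mc}{U}$. By the trap property the winning probability from $w$ is unchanged in $\Mc$, so $w \in W_{=1} \subseteq \mathrm{Attr}^{+}(W_{=1})$, contradicting $w \in U$. Hence no vertex of $W_{>0}$ lies in $U$, giving the desired inclusion.

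The main obstacle I anticipate is making the two transfer arguments fully rigorous. First, the ``reach then switch'' argument must use prefix-independence to legitimately stitch a positive-reachability strategy to an almost-sure strategy while preserving measurability and the claimed probability lower bound. Second, and more delicately, I must verify that restricting to the trap $U$ genuinely preserves winning probabilities, which requires checking that $U$ is closed under \emph{all} of Adam's moves and under the support of \emph{every} stochastic distribution, so that no probability mass escapes $U$ under any strategy. Once the trap property is established, the application of \cref{lemma:lastminute} to $\restr{\Mc}{U}$ is immediate and the polynomial-time bound follows.
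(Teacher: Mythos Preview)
Your proposal is correct and matches the intended derivation: the paper does not give its own proof but cites the result as a corollary of \cref{lemma:lastminute} and remarks that the additional overhead is linear, which is exactly what your argument yields via the positive-reachability attractor to $W_{=1}$ together with the trap-and-restrict application of \cref{lemma:lastminute}.
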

       We note that the additional time taken to compute the vertices that wins with positive probability is, in fact, linear. 
        Because of \cref{cor:tailobjectives}, we will focus only on providing an $\Oh(|\Mc||\Zc_{C,\Fc}|)$-time algorithm to compute the almost-sure winning region for Adam. 
        
        Consider an MDP $\Mc$, with a Muller objective that is specified using a colouring function  $\col \colon E \rightarrow C$ over the edges of the MDP. 
        Let the Zielonka DAG representing the winning condition be $\Zc_{C,\Fc}$. Recall that vertices of the DAG are elements of the subset of colours, that is, if $\mu_X\in V(\Zc_{C,\Fc})$, then $X\subseteq C$. We can assume that the DAG $\Zc_{C,\Fc}$ has exactly one source node, $\nu_C$. The source node of a Zielonka DAG is either labelled by an accepting subset, that is, a subset  in $\Fc$, or by a rejecting subset, that is not in $\Fc$. %If the source node $\nu_X$ is accepting then all its successors $\nu_{Y_1},\dots,\nu_{Y_t}$ are all rejecting. If not and the label of the source node is rejecting, then its successors $\nu_{Y_1},\dots,\nu_{Y_t}$ are accepting. 
        \paragraph*{Maximal End Components} An end component
        of an MDP $\Mc = (V_A,V_R, E,\delta)$ is a subset of the vertices $V$ of the arena such that 
        \begin{itemize}
            \item the subgraph induced by $M$ in the graph $(V,E)$ is a strongly connected component, and
            \item for every $v\in V_R$ and for edges $v\rightarrow w\in E$, we have $w\in M$.
        \end{itemize}
        A Maximal End Component (MEC) is an end component that is maximal with respect to inclusion order. 
        
        Every vertex of an MDP belongs to at most one MEC. From every vertex in the MEC $M$, Adam can ensure that all vertices in $M$ are visited infinitely often, and the play starting from $M$ stays within $M$ with probability $1$. A MEC can be computed in polynomial time and even sub-quadratic time~\cite{CH11}.
        
        Consider the subroutine that takes as input an MDP $\Mc$ with a colouring function $\col$ from vertices $V$ to a finite set of colours $C$, along with a Zielonka DAG $\Zc_{C,\Fc}$ that uniquely represents the winning condition. 

        For a subset $X$ of colours, we write $\Mc|_X$ to denote the sub-MDP of $\Mc$ restricted to vertices in $\Mc$ from which the player Adam can ensure that all colours seen are within $X$ with probability $1$.  
        For an end-component $M$, we write $\col(M)$ to represent the set of all edge colours for all edges between two vertices in $M$. Therefore, $\col(M) = \{c\mid c= \col(u,v)\text{ and }u,v\in M\}$

        We refer to the reverse topological order of the nodes in a DAG to refer to the ordering $\nu_{X_1},\nu_{X_2},\dots,\nu_{X_d}$ of the nodes such that there is no edge $\nu_{X_i}\rightarrow\nu_{X_j}$, where $j<i$, that is, no node has an edge to a node occurring earlier than this in the order. 
        \begin{algorithm}
            \begin{algorithmic}\caption{A bottom up Zielonka-DAG algorithm to compute almost-sure winning sets in MDPs with Muller Objective}\label{algo:AlmostSureZielonkaDAG}
                \Procedure{AlmSureReach}{$\Mc, T$}
                    \State \Return{set of vertices in $\Mc$ from which player can almost surely reach $T$}
                \EndProcedure
                \Procedure{AlmostSureMuller}{$\Mc,\Dc$}
                        \For{nodes $\nu_X$ in reverse topological order in $\Zc_{C,\Fc}$}
                            \If{$X$ is an accepting set of colours}
                                \For{ $M_i\in M_1,\dots,M_k$, where each $M_i$ a maximal end components of $\Mc|_X$}
                                    \If{$\col(M_i)\cap \Bar{Y_j} \neq \emptyset$ for all $Y_j$ such that $\nu_X\rightarrow\nu_{Y_j}$}
                                    \State $\Win_{\nu_X}\gets\Win\cup M_i$
                                    \State $\Win\gets\textsc{AlmSureReach}_\Mc(\Win_{\nu_X})$
                                    \EndIf
                                \EndFor
                            \EndIf
                        \EndFor
                    \Return $\Win$
                \EndProcedure
            \end{algorithmic}
            \end{algorithm}
    \paragraph*{Correctness}
    We show that the procedure $\textsc{AlmostSureMuller}(\Mc,\Dc)$ returns exactly the set of vertices of the MDP from which Adam can satisfy the Muller objective with probability~$1$. 
    Let $W$ represent the set of vertices from which there is a strategy for Adam to ensure the Muller objective is satisfied with probability $1$. We show that $W = \Win$, where $\Win$ is the set returned by \cref{algo:AlmostSureZielonkaDAG}.  
    
    $W\subseteq\Win$. 
    We show that any vertex that is almost-surely winning is added to $\Win$ by \cref{algo:AlmostSureZielonkaDAG}. We say that an end components $M$ of the MDP is \emph{$\Fc$-winning} if $\col(M)$ is an accepting Muller set. 
    We use the following lemma to prove $W\subseteq\Win$.
    \begin{lemma}[\!\!\protect{\cite[Lemma~9]{Cha07}}]\label{lemma:Cha07Union}
        The vertices from which the player can achieve the Muller objectives with probability $1$ (the set of vertices $W$) are exactly the set of vertices that  can reach the union of all $\Fc$-winning end components with probability $1$.
    \end{lemma}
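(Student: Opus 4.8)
The statement is the classical characterisation of almost-sure winning in MDPs under prefix-independent $\omega$-regular objectives, so the plan is to reproduce the standard two-direction argument, with the bulk of the work resting on the \emph{end-component lemma} of de~Alfaro and of Courcoubetis--Yannakakis. First I would recall this lemma: in a finite MDP, under \emph{any} strategy of Adam and from any starting vertex, the set $\inf(\rho)$ of states visited infinitely often along a play $\rho$ is, with probability~$1$, the vertex set of an end component, and moreover the set of colours seen infinitely often equals $\col(\inf(\rho))$. This is precisely the fact that makes an $\Fc$-winning end component the right object to track, and it lets us reduce the infinitary winning condition to a reachability question about $U$, the union of all $\Fc$-winning end components.

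Granting this, the sufficiency direction (every vertex that reaches $U$ almost surely lies in $W$) goes as follows. From any vertex of a fixed $\Fc$-winning end component $M$, Adam has a strategy that stays inside $M$ and visits every edge of $M$ infinitely often almost surely: strong connectivity of $M$, together with the fact that random vertices of $M$ have all successors in $M$, lets him repeatedly steer towards each edge, and since each targeted edge is then attempted infinitely often with a fixed positive probability, the second Borel--Cantelli lemma (\cref{lemma:secondborellcantelli}) guarantees each is taken infinitely often. Under such a strategy the infinitely-often colour set is exactly $\col(M) \in \Fc$, so the play is won. Hence any vertex reaching $U$ almost surely is almost-sure winning: Adam first plays an almost-sure reachability strategy to $U$, and upon entering some $M \subseteq U$ switches to the corresponding visit-everything strategy.

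For necessity ($W$ is contained in the set of vertices reaching $U$ almost surely), I would take any almost-sure winning strategy $\sigma$ from a vertex $v \in W$ and apply the end-component lemma to $\sigma$: almost surely the limit set $\inf(\rho)$ is the vertex set of an end component $M$, and the infinitely-often colours are $\col(M)$. Since $\sigma$ is winning, $\col(M) \in \Fc$ holds almost surely, i.e.\ $M$ is $\Fc$-winning, so almost surely the play eventually remains within---and in particular reaches---$U$. Thus $v$ reaches $U$ with probability~$1$, which is the claimed inclusion, and together with sufficiency this yields the equality.

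I expect the main obstacle to be the end-component lemma itself rather than the two packaging directions. Its proof requires showing that whenever a random vertex is visited infinitely often each of its out-edges is taken infinitely often almost surely (again a positive-probability, independent-trials argument), and that the states occurring infinitely often form a single strongly connected set closed under the random moves; the delicate point is ruling out colours that appear infinitely often only along transient excursions between recurrent states. Since this lemma is standard and is cited here precisely to encapsulate that machinery, in the write-up I would invoke it directly and spend the remaining effort making the reachability-plus-stay construction of the sufficiency direction explicit, as that is exactly what the polynomial-time procedure of \cref{algo:AlmostSureZielonkaDAG} realises when it alternates maximal-end-component computation with almost-sure reachability into the accumulating winning set $\Win$.
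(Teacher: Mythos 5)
Your proposal is correct, but there is nothing in the paper to compare it against: the paper does not prove this lemma, it imports it verbatim as \cite[Lemma~9]{Cha07}, and your argument is essentially the canonical proof from that source (going back to de~Alfaro and Courcoubetis--Yannakakis) --- the end-component lemma plus the two packaging directions, where your sufficiency step (randomise among the edges inside an $\Fc$-winning end component to make the infinitely-occurring colour set exactly $\col(M)$, prefixed by an almost-sure reachability strategy into the union $U$) is precisely the construction that the correctness argument of \cref{algo:AlmostSureZielonkaDAG} re-uses. One precision worth fixing in your write-up: since colours here label \emph{edges}, the end-component lemma must be invoked at the level of edges (state--action pairs), i.e.\ under any strategy, almost surely the set of edges taken infinitely often constitutes an end component $M'$ and the colours occurring infinitely often are exactly $\col(M')$. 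Your vertex-level phrasing (``the colours seen infinitely often equal $\col(\inf(\rho))$'') is not literally true, since a play can have $\inf(\rho)=M$ as a vertex set while forever avoiding some controlled edge inside $M$, so that only a strict subset of $\col(M)$ recurs; but the edge-level statement is what the standard lemma actually asserts, and it is all your necessity direction needs, since the sub-end-component $M'$ it produces is then $\Fc$-winning and contained in $U$.
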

    % From this lemma, it is enough to show that 
    
    % \begin{claim}
    %     There is a strategy for Adam to win from all vertices of an $\Fc$-winning component $M$ almost-surely.
    % \end{claim} 
    % If there is a winning strategy for Adam, there is one where he uses no memory and only randomises between a few outgoing edges~\cite{CAH04}. Adam just follows his winning strategy that ensures that he visits all vertices in $M$ infinitely often without leaving. In fact, the following lemma says that all vertices that are winning are a union of such end components.
    From \cref{lemma:Cha07Union}, it is enough to show that for any end component $M$ such that all vertices almost-surely satisfies the Muller objective, $M\subseteq \Win$. Indeed, since $\Win$ that is returned by the algorithm is closed under almost-sure attractors, if all such $\Fc$-winning end components are in $\Win$, all almost-sure winning vertices are in $\Win$.
    
    Consider an end component $\Mc$ that is $\Fc$-winning. Then there is a node $\nu_X$ in the DAG that is the earliest (in reverse topological order) node such that $X$ is an accepting subset of colours and $X\supseteq \col(M)$ and $\col(M)\nsubseteq Y_j$ for any $\nu_{Y_j}$ that has an edge from $\nu_X$. Such a $\nu_X$ exists because $\col(M)$ is an accepting set, and since $\col(E)$ is the root of the DAG. The set would just be the smallest superset of $\col(M)$ among the labels of the nodes of the DAG. 
    
    In the sequel, we show the stronger statement that $\Fc$-winning end component $M$ satisfies $M\subseteq \Win_{\nu_X}$. 
    In $\Mc|_X$, the $\Fc$-winning $M$ is also an end component, since it contains winning vertices in $X$. Moreover, this end component $M$ is contained in some maximal end component $M_i$. Since $\col(M_i)\supseteq \col(M)$, and we assumed $\nu_X$ was the earliest node, it follows from the structure of the DAG that $X$ is a minimal subset of accepting colours containing $\col(M)$. Therefore, it must also be the case that $\col(M)\nsubseteq Y_i$ for any outgoing $\nu_{Y_i}$, such that $\col(M)\nsubseteq Y_i$. %, since otherwise $Y_i$ would be minimal
    Therefore, $M_i$, which is a superset of $M$, is added to $\Win_{\nu_X}$, and therefore $\Win$.
    
   % Therefore, when the algorithm terminates, all end components that that satisfies the Muller objective are in $\Win$. Since the algorithm returns the set of vertices that is almost sure reachable to such a set $\Win$, we can conclude that the almost sure winning set $W\subseteq \Win$.

    \textit{$\Win\subseteq W$.} We provide a winning strategy for every vertex in $W$ that almost-surely achieves the Muller objective.  Since $\Win$ is monotonically increasing, that is, vertices are added and not removed, we consider the first time that a vertex $v$ was added.

    We argue that the algorithm maintains the invariance that from any vertex in $\Win$, there is a strategy such that the Muller objective is satisfied with probability $1$ using a strategy that only visits the vertices in $\Win$.  Trivially this is true when $\Win = \emptyset$. 

    When a vertex $v$ is added to the set, it is because either $v\in M_i$ for some MEC of $\Mc|_X$ or because $v$ is almost surely reachable to $\Win_{\nu_X}$.
    If it was the former, then consider the strategy of the player that visits all vertices in $M_i$ infinitely often, while avoiding all vertices whose colour is not in $X$.  Such a strategy exists since $\Mc|_X$ is defined as the sub MDP from which the player could avoid any colour not in $X$, and further $M_i$ is a maximal end component ensuring that by randomising between all the edges in $M_i$, the set of colours seen infinitely often is exactly $\col(M_i)$ which is a winning subset of colours. 
    If the vertex $v$ was added because it had an almost sure reachability strategy to $\Win$, then from $v$, the player follows that strategy to reach the set $\Win_{\nu_X}$, and later follows the existing winning strategy for vertices defined for the vertices $\Win_{\nu_X}\subseteq \Win$. 
    \paragraph*{Runtime} The algorithm runs in time that is polynomial in the size $m$ of the MDP and the size $d$ of the DAG $\Zc_{C,\Fc}$. More precisely, we show that the runtime is at most $\Oh( m\cdot d)$.
    This is because topologically sorting the nodes of the DAG $\Dc$ takes time at most $\Oh(d)$ using Tarjan's algorithm~\cite{Tar72}. 
    Later, for each node $\nu_X$, the algorithm computes an MDP that avoids colours not in $X$, and then computes the maximal end components of such MDPs. This step takes time that is linear in the size of the MDP~\cite{CH11}. Finally, computing the almost sure reachable set of vertices can also be done in linear time in the number of states in the MDP. Therefore, the time taken is at most $\Oh(m\cdot d)$
\end{proof}
\MAcheckNPhard*
\begin{proof}
    We will use the proof by Prakash that showed NP-hardness for the problem of deciding history-determinism~\cite{Pra24a}. There, the reduction is from the problem of solving \emph{good 2-D parity games}, which we define first. 

    A \emph{2-dimensional parity game} $\Gc$ is a complete-observation non-stochastic game, where each edge $e = (u,v)$ in $\Gc$ is labelled with two priorities $(f_e,s_e)$. We call $f_e$ and $s_e$ as the first and second priority of $e$, respectively and write the edge $e$ as $e=u\xrightarrow{f_e,s_e}v$. We say that an infinite play in $\Gc$ satisfies the first (resp.\ second) parity condition if the largest first (resp.\ second) priority occurring infinitely often amongst the edges of that play is even. An infinite play $\rho$ in $\Gc$ is winning for Eve if the following condition holds: \emph{if $\rho$ satisfies the first parity condition, then $\rho$ also satisfies the second parity condition}. Otherwise, Adam wins that play. We say Eve wins $\Gc$ if she has a sure winning strategy, i.e., any play constructed when Eve is playing according to that strategy is winning.

    We say that a 2-dimensional parity game $\Gc$ is \emph{good} if all plays that satisfy the first parity condition also satisfy the second parity condition. The problem of deciding if Eve wins good 2-D parity games is $\NP$-complete~\cite[Lemma 13]{Pra24a}.  

    We reduce the problem of deciding if Eve wins a good 2-D parity game to deciding if a parity automaton is MA, by showing that the reduction of \cite{Pra24a}, starting from a good 2-D parity game $\Gc$, constructs $\Hc$, such that $\Hc$ contains a deterministic subautomaton $\Dc$ and:
    \begin{enumerate}
        \item if Eve wins $\Gc$, then $\Hc$ simulates $\Dc$ and $\Hc$ is determinisable-by-pruning, and therefore, $\Hc$ is MA.
        \item if Adam wins $\Gc$, then $\Hc$ does not simulate $\Dc$, $\Hc$ is not HD, and hence $\Hc$ is not MA.
    \end{enumerate}
    \paragraph*{The reduction}
    Let $\Gc$ be a 2-D parity game. We next describe the construction of $\Hc$ as above. The automata $\Hc$ will contain a deterministic subautomaton $\Dc$ and is over the alphabet $\Sigma = E\cup \$$. Informally, the priorities of transitions of $\Dc$ and $\Hc$ will capture the first and second priorities of edges in $\Gc$, respectively. 
    
    For each Adam vertex $v$, we have the states $v_D$ in $\Dc$ and $v_H$ in $\Hc$. For each edge $e = v\xrightarrow{f_e:s_e} u$ outgoing from $v$, we have the transitions $v_D \xrightarrow{e:f_e} u_{\$}$ and $v_H \xrightarrow{e:s_e} u_H$ on the letter $e$ (see \cref{fig:Adam-vertex}).
    
% \begin{figure}[!htb]
%     \centering
%     \begin{minipage}{0.4\linewidth}
      %\begin{figure}
       %\centering

    %\end{figure}  
  %   \end{minipage}\hfill\hfill
  % \begin{minipage}{0.6\textwidth}

    Adam choosing an outgoing edge from $v$ in $\Gc$ is captured by a round of the simulation game where Adam's token is in $v_D$ and Eve's token is in $v_H$. This way, Adam, by choice of his letter in the simulation game, captures the choice of outgoing edges from his vertex. 
       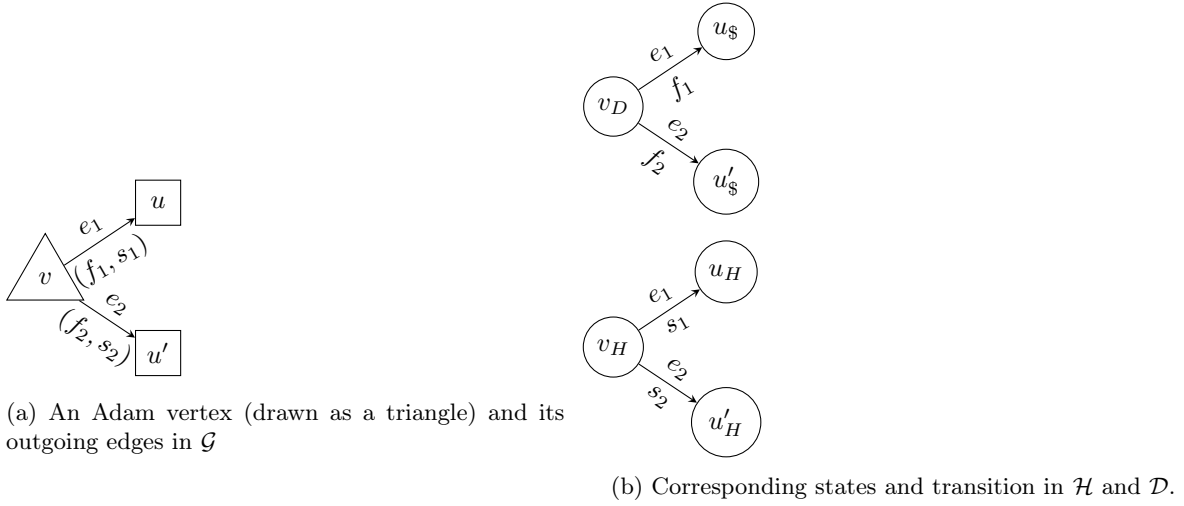
\begin{figure}
           \centering
    \begin{subfigure}[b]{0.45\linewidth}
    \begin{tikzpicture}
        \usetikzlibrary{graphs,quotes}
        \usetikzlibrary{backgrounds,scopes,fit,shapes.geometric}
         \node (v) [regular polygon, regular polygon sides = 3, minimum size = 0.25 cm, draw]    at (0,0)   {$v$};
         \node (u1) [rectangle, minimum size = 0.6 cm, draw] at (1.5,1) {$u$};
         \node (u2) [rectangle, minimum size = 0.6 cm, draw] at (1.5,-1) {$u'$};
         \path[-stealth]
         (v) edge node [above, rotate = 30] {$e_1$} node [below, rotate = 30] {$(f_1,s_1)$} (u1)
         (v) edge node [above, rotate = 327] {$e_2$} node [below, rotate = 327] {$(f_2,s_2)$} (u2)
         ;
    \end{tikzpicture}
    \caption{An Adam vertex (drawn as a triangle) and its outgoing edges in $\Gc$}\label{fig:Adam-vertexa}
    \end{subfigure}~
    \begin{subfigure}[t]{0.5\linewidth}
    \begin{tikzpicture}
        \usetikzlibrary{graphs,quotes}
        \usetikzlibrary{backgrounds,scopes,fit,shapes.geometric}    
        \node (vD) [circle, minimum size = 0.5 cm, draw] at (0,0) {$v_D$};
         \node (u1dollar) [circle, minimum size = 0.5 cm, draw] at (1.5,1) {$u_\$$};
         \node (u2dollar) [circle, minimum size = 0.5 cm, draw] at (1.5,-1) {$u'_{\$}$};
    
         \node (vH) [circle, minimum size = 0.5 cm, draw] at (0,-3.2) {$v_H$};
         \node (u1H) [circle, minimum size = 0.5cm, draw] at (1.5,-2.2) {$u_H$};
         \node (u2H) [circle, minimum size = 0.5cm, draw] at (1.5,-4.2) {$u'_H$};

         \path[-stealth]
         (vD) edge node [above, rotate = 30] {$e_1$}  node [below, rotate = 30] {$f_1$} (u1dollar)
         (vD) edge node [above, rotate = 325] {$e_2$} node [below, rotate = 325] {$f_2$} (u2dollar)
         (vH) edge node [above, rotate = 30] {$e_1$} node [below, rotate = 30] {$s_1$} (u1H)
         (vH) edge node [above, rotate = 325] {$e_2$} node [below, rotate = 325] {$s_2$} (u2H)
        ;
    \end{tikzpicture}
    \caption{Corresponding states and transition in $\Hc$ and $\Dc$.}\label{fig:Adam-vertexb}
    \end{subfigure}
    \caption{Transitions corresponding to Adam's vertices. The letters are displayed on top, and the priorities are below each edge.}\label{fig:Adam-vertex}
    \end{figure}
    For each Eve vertex $v$ and outgoing edge $e = v\xrightarrow{f,s}u$ from that vertex, we will have the states $v_{\$}$, $v_D$, and $u_D$ in $\Dc$, and the states $v_H$, $(v_H,e)$, and $(u_H)$ in $\Hc$. In $\Dc$, we will have the transitions $v_\$ \xrightarrow{\$:0} v_D, v_D \xrightarrow{e:f}: (u_D)$, and in $\Hc$, we will have the transitions $v_H \xrightarrow{\$:0} (v_H,e), (v_H,e) \xrightarrow{e:s} u_H$. Additionally, for every outgoing edge $e'=v\xrightarrow{f',s'}u'$ that is different from $e$, we add the transition $(v_H,e) \xrightarrow{e':s'} u'_D$ (see \cref{fig:Eve-vertex}).  
      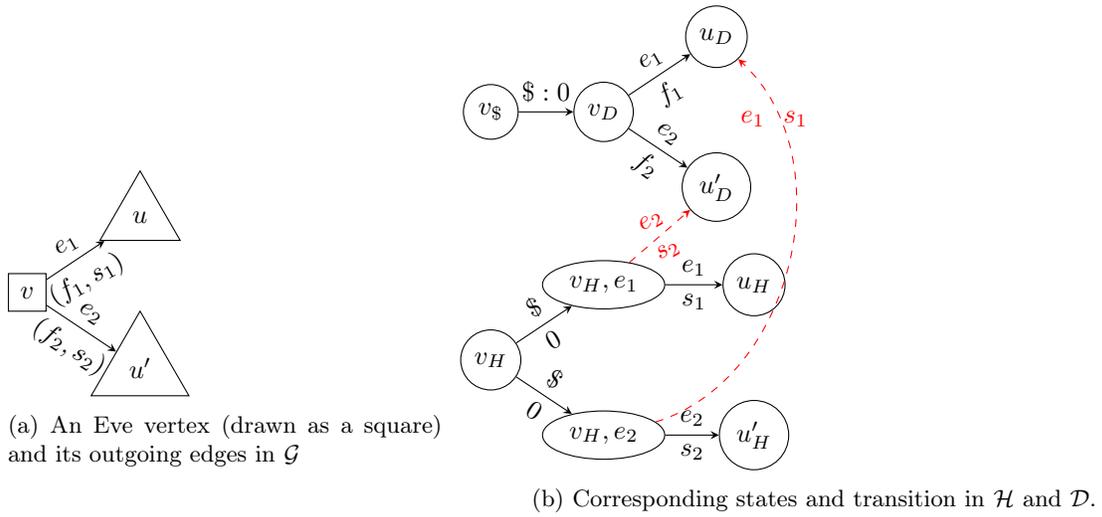
\begin{figure}
      \centering
\begin{subfigure}[b]{0.35\linewidth}
    \begin{tikzpicture}
    \usetikzlibrary{graphs,quotes}
    \usetikzlibrary{backgrounds,scopes,fit,shapes.geometric}
    \node (v) [rectangle, minimum size = 0.5 cm, draw] at (0,0) {$v$};
     \node (u1) [regular polygon, regular polygon sides = 3, minimum size =0.3cm, draw]    at (1.5,1)   {$u$};
     \node (u2) [regular polygon, regular polygon sides = 3, minimum size = 0.3cm, draw] at (1.5,-1) {$u'$};
      \path[-stealth]
     (v) edge node [above, rotate = 30] {$e_1$} node [below, rotate = 30] {$(f_1,s_1)$} (u1)
     (v) edge node [above, rotate = 327] {$e_2$} node [below, rotate = 327] {$(f_2,s_2)$} (u2)
     ;
    \end{tikzpicture}
    \caption{An Eve vertex (drawn as a square) and its outgoing edges in $\Gc$}\label{fig:Eve-vertexa}
\end{subfigure}~
\begin{subfigure}[t]{0.57\linewidth}
    \begin{tikzpicture}
        \usetikzlibrary{graphs,quotes}
    \usetikzlibrary{backgrounds,scopes,fit,shapes.geometric}
     \node (vdollar) [circle, minimum size = 0.5 cm, draw] at (0,0) {$v_\$$};
     \node (vD) [circle, minimum size = 0.5 cm, draw] at (1.5,0) {$v_D$};
     \node (u1D) [circle, minimum size = 0.5 cm, draw] at (3,1) {$u_D$};
    \node (u2D) [circle, minimum size = 0.5 cm, draw] at (3,-1) {$u'_D$};     

     \node (vH) [circle, minimum size = 0.5 cm, draw] at (0,-3.3) {$v_H$};
     \node (e1) [ellipse, minimum size = 0.5cm, draw] at (1.5,-2.3) {$v_H,e_1$};
     \node (e2) [ellipse, minimum size = 0.5cm, draw] at (1.5,-4.3) {$v_H,e_2$};
     \node (u1H) [circle, minimum size = 0.5 cm, draw] at (3.5,-2.3) {$u_H$};
     \node (u2H) [circle, minimum size = 0.5 cm, draw] at (3.5,-4.3) {$u'_H$};
    
     \path[-stealth]
     (vdollar) edge node [above] {$\$:0$} (vD)
     (vH) edge node [above, rotate = 30] {$\$$} node [below, rotate = 30] {$0$} (e1)
    (vH) edge node [above, rotate = 327] {$\$$} node [below, rotate = 327] {$0$} (e2) 
     
     (vD) edge node [above, rotate = 30] {$e_1$}  node [below, rotate = 30] {$f_1$} (u1D)
     (vD) edge node [above, rotate = 325] {$e_2$} node [below, rotate = 325] {$f_2$} (u2D)
    
    (e1) edge node [above] {$e_1$} node [below] {$s_1$} (u1H)
     (e2) edge node [above] {$e_2$} node [below] {$s_2$} (u2H)

    (e1) edge [red, dashed] node [above, rotate = 30] {$e_2$} node [below, rotate = 30] {$s_2$} (u2D)
    (e2) edge [red, dashed, bend right = 60] node [left,yshift=19mm,xshift=-2mm] {$e_1$} node [right,yshift=19mm,xshift=-2mm] {$s_1$} (u1D)
    ;
\end{tikzpicture}
\caption{Corresponding states and transition in $\Hc$ and $\Dc$.}\label{fig:Eve-vertexb}
\end{subfigure}    
\caption{Transitions corresponding to Eve's vertices. The letters are displayed on top, and the priorities are below each edge}\label{fig:Eve-vertex}
\end{figure}

    Eve's choosing of an outgoing edge from $v$ in $\Gc$ is captured by two rounds of the simulation game where at the start, Adam's token is in $v_{\$}$ while Eve's token is in $v_H$. Then, Adam selects the letter $\$$ and his token takes the transition to $v_D$. Then Eve can move her token to any state of the form $(v_H,e)$, where $e$ is an outgoing edge from $v$ in $\Gc$. Thus, Eve, by choice of her transition on $\$$ from $v_H$, captures the choice of outgoing edges available from her vertex. When Adam's token is at the state $v_D$ and Eve's token is at the state $(v_H,e)$, Adam can either replay the edge $e$ as the next letter, from where Adam's token goes to $u_H$ and $v_H$. Or, Adam can choose another edge $e'=(v,u')$ that is outgoing from $v$ as the letter: this causes both Eve's and Adam's token to be in the same state $u'_D$, from where Eve trivially wins the simulation game. We call these moves of Adam as \emph{corrupted moves}, and they cause Eve's token to take the dashed transitions in~\cref{fig:Eve-vertex}. 

    We suppose that the initial state of $\Gc$ is an Adam vertex $v$, and we let the initial state of $\Hc$ and $\Dc$ be $v_H$ and $v_D$, respectively. This concludes our description of $\Hc$ and $\Dc$.

    \paragraph*{Correctness of the reduction}
    Note that any play of the simulation game of $\Dc$ by $\Hc$ where Adam does not take any corrupted moves will correspond to a play of the game $\Gc$. The priorities of Adam's token and Eve's token corresponds to the first and second priorities of the edges in $\Gc$, respectively. Thus, Eve wins $\Gc$ if and only if $\Hc$ simulates $\Dc$. For a more rigorous proof, we refer the reader to \cite{Pra24a}.  

   Note that since $\Gc$ is good, any play satisfies the first parity condition also satisfies the second parity condition: this implies that $\Lc(\Dc) \subseteq \Lc(\Hc)$. Since $\Dc$ is a subautomaton of $\Hc$, we obtain that $\Lc(\Dc) = \Lc(\Hc)$. Thus, $\Hc$ simulates $\Dc$ if and only if $\Hc$ is HD if and only if Eve wins $\Gc$ (\cref{lemma:hd-equiv-detsimulation}). 
   
   If Eve wins a 2-dimensional parity game, then Eve has a positional pure winning strategy~\cite[Page 158]{CHP07}. Thus, if Eve wins $\Gc$, then she has a positional strategy given by $\sigma:V_{\eve} \xrightarrow{} E$. This yields a positional strategy for Eve in the HD game on $\Hc$, where from the vertex $v_H$ for $v \in V_{\eve}$ and on the letter $\$$, she chooses the transition  to $(v_H,(v_H,\sigma(v_H)))$. Thus, $\Hc$ is determinisable-by-pruning and hence MA.

   Otherwise, if Eve loses $\Gc$, then $\Hc$ is not HD and hence not MA. This shows the correctness of the above reduction. Thus, the problem of deciding if an automaton is MA is NP-hard.
\end{proof}

\subsection{Recognising SR and MA automata and verifying resolvers}

\lemmaUndecidablecoBuchi*
\begin{proof}
We reduce from the problem of checking the emptiness of a B\"uchi probabilistic automata under the probabilistic semantics.

\begin{question}[\textsf{PBA-Emptiness}]
    Given a probabilistic B\"uchi automaton, is there a word $w$ on which the probability of a run being accepting is non-zero?
\end{question}

\paragraph*{Construction}
Consider a probabilistic B\"uchi automaton $\Bc = (Q_\Bc,\Sigma,\Delta_\Bc,s)$ over the alphabet $\Sigma$. We will construct a non-deterministic coB\"uchi automaton $\Cc$ similar to the one in \cref{fig:ReductioncoBuchi}. We give a formal definition of $\Cc = (Q_\Cc,\Sigma', \Delta_\Cc,s)$ below where the
\begin{itemize}
    \item alphabet is $\Sigma' = \Sigma\cup \{\$,a,b\}$,
    \item states $Q_\Cc = Q_\Bc\uplus \{s, s_1, s_2, s_{\text{fin}}\}$,
    \item initial state is $s$
    \item transitions are $\Delta_{\Cc} = \Delta_{\Bc}' \uplus\{
    s\xrightarrow{\$: 0}s_1,
    s\xrightarrow{\$: 0}s_2,
    s_1\xrightarrow{a: 0}q_0,
    s_2\xrightarrow{b: 0}q_0, 
    s_1\xrightarrow{b: 0}s_{\text{fin}}, 
    s_2\xrightarrow{b: 0}s_{\text{fin}}\} $ where $\Delta_{\Bc}' = \{q \xrightarrow{a:0}q'\mid q \xrightarrow{a:1}q'\in \Delta_{\Bc} \}\cup \{q \xrightarrow{a:1}q'\mid q \xrightarrow{a:2}q'\in \Delta_{\Bc} \}$.
\end{itemize}
Consider the following memoryless (finite memory) resolver $\Rc$ which outputs with probability $1/2$ one of the two transitions $s\xrightarrow{\$: 0}s_1$ and $s\xrightarrow{\$: 0}s_2$ out of the initial state $s$. 
The memoryless resolver further outputs transitions with the exact probability distributions as the probabilistic automaton $\Bc$ from all the states in $Q_\Bc$. At states $s_1,s_2,s_{\text{fin}}$ from where the transitions are deterministic, the resolver chooses the unique transition available on that letter with probability $1$. 
\begin{figure}
    \centering
    \begin{tikzpicture}[shorten >=1pt, node distance=1.5cm, on grid, auto]
     \tikzstyle{state}=[circle, draw, minimum size=15pt, inner sep=1pt]
  % States
  \node[state, initial,initial text=] (q0) {$s$};
  \node[state] (q1) [above right=of q0] {$s_1$};
  \node[state] (q2) [below right=of q0] {$s_2$};
  \node[state] (q3) [right=of q1] {$q_0$};
  \node[state] (q4) [right=of q2] {$s_\textsf{fin}$};

  % Edges
  \path[->]
    (q0) edge node[sloped]  {$\$\colon \frac{1}{2}$} (q1)
    (q0) edge node[sloped,swap]  {$\$\colon \frac{1}{2}$} (q2) 
    (q1) edge node[sloped] {$a$} (q3)
    (q1) edge node[above, xshift=-3mm, yshift = 4mm] {$b$} (q4)
    (q2) edge node[sloped,swap] {$a$} (q4)
    (q2) edge node[below , xshift=-3mm, yshift = -4mm] {$b$} (q3)
    (q4) edge[loop right] node{$\Sigma$} (q4);

    \draw (2,0) rectangle (4.5,1.5);
      \node (q1) at (3.5,0.7) {$\Bar{\mathcal{B}}$};
\end{tikzpicture}
    \caption{Constructing a coB\"uchi $\Cc$ automaton where $\Bar{\Bc}$ is the dual of the probabilistic B\"uchi automaton $\Bc$}
    \label{fig:ReductioncoBuchi}
\end{figure}
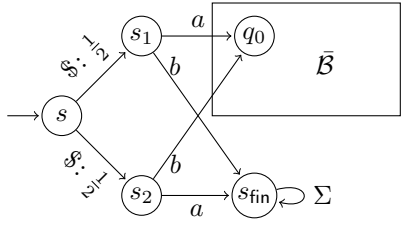

% \begin{figure}[h]
%     \centering
%     \includegraphics[scale=0.4]{ReductioncoBuchiResolver.png}
%     \caption{A reduction constructing a resolver for a coB\"uchi automaton where $\Bar{\Bc}$ is the dual of the probabilistic B\"uchi automaton $\Bc$ {\color{red}(replace $\Ac$ with $\Bc$ in the picture)}}
%     \label{fig:ReductioncoBuchi}
% \end{figure}

We will now show that the probabilistic automaton $\Bc$ accepts a word  $w\in \Sigma^\omega$ with probability $>0$ if and only if 
%there is some word $w\in \Sigma^\omega$ such that the probability of the run of $w$ on $\Bc$ that has an accepting run is $0$ if and only if
$\Rc$ is not an almost-sure resolver. 

\paragraph*{$\implies$} Suppose the PBA $\Bc$ accepts a word $w$ with probability $p>0$. 
Consider the word $ \$ a w$. On this word, the probability of the word being accepted is 
\begin{equation*} %\label{eq1}
\begin{split}
&\Pr[\text{reaching }s_1\text{ on } \$]\cdot \Pr[\Rc\text{ resolving } aw \text{ from }s_1] \\
        +  
    &\Pr[\text{reaching }s_2\text{ on } \$] \cdot \Pr[\Rc\text{ resolving } aw \text{ from }s_2] \\
    &= \frac{1}{2}\cdot\Pr[\Rc\text{ resolving } w \text{ from }q_0]\\
    &+ \frac{1}{2}\cdot\Pr[\Rc\text{ resolving } w \text{ from }s_{\text{fin}}]
\end{split}
\end{equation*}
From our construction, the resolver is such that the probability of $w$ having an accepting run from $q_0$ is $1-p$. Therefore, on $w$, the probability of acceptance by the resolver is $$\frac{1}{2} + \frac{1-p}{2} = 1-\frac{p}{2}<1.$$
\paragraph*{$\impliedby$} 
Suppose the resolver is \emph{not} an almost-sure resolver. This implies that there is a word $w$ in the language of the automaton $\Bc$ for which the probability of the run produced by a resolver is $p'<1$. Since the language is $\$(a+b)\Sigma^\omega$, the first two letters of the word should be $\$$ followed by $a$ or $b$. Without loss of generality, we assume the word is $\$aw$. 
The probability of a run chosen according to the resolver ends in state $q_0$ after reading $\$ a$ is $\frac{1}{2}$, and the probability that the run ends at state $s_{\text{fin}}$ after $\$ a$ is $\frac{1}{2}$. Let the probability of the resolver resulting in an accepting run from $q_0$ be $p$. Since all words in $\Sigma^\omega$ are accepting from $s_{\text{fin}}$, it must be that the probability $p'$ of the resolver producing an accepting run on $\$aw$ is $$\frac{p}{2} + \frac{1}{2} = \frac{1+p}{2}.$$
Since $p'<1$, we know $1+\frac{p}{2}<1$ and therefore $p<1$. Observe that from our construction, the probability $1-p$ is also the probability of the probabilistic B\"uchi automaton $\Bc$ accepting the word $w$. Therefore, there is a word $w$, such that the probability that $w$ is accepted with probability $1-p$, which is positive. 
\end{proof}
\lemmaUndecidableBuchi*

To prove \cref{lemma:UndecidableBuchi}, we will use an undecidability result from probabilistic automaton over finite words. A probabilistic finite automaton, or PFA for short, has similar syntax to an NFA, except that each transition is assigned a probability $0<p<1$, such that for each state and letter, the sum of probabilities of outgoing transitions from that state on that letter is $1$. The semantics of how runs work in PFA can be defined similarly to how we defined for probabilistic parity automata in \cref{sec:prelims}. We represent a PFA by a tuple $\Pc=(Q,\Sigma,\Delta,F,\rho,q_0)$, where $Q$ is the set of states, $\Sigma$ is the alphabet, $\Delta \subseteq Q \times \Sigma \times Q$ is the set of transitions, $F$ is the set of accepting states,  $\rho:\Delta \xrightarrow{} \unitinterval$ is the probability function, and $q_0$ is the initial state.

For a PFA $\Pc$ and a finite word $u$, we denote the probability that a random run of $\Pc$ on $u$ is accepting by $\prob_\Pc (u)$. 

\begin{proof}[Proof of \cref{lemma:UndecidableBuchi}]
As mentioned, we reduce from the zero-isolation problem.

\begin{question}[Zero-Isolation Problem] 
    Given a probabilistic automaton  $\Pc$ over finite words, is there a value $c>0$ such that for all words, the if the word has a non-zero probability of acceptance, then it is accepted by $\Pc$ with probability larger than $c$, that is, for all $w\in\Sigma^*$, is it true that $\prob_\Pc(w) =0 $ or $\prob_\Pc(w) > c$.
\end{question}   
The zero-isolation problem is undecidable~\cite[Theorem 4]{GO10}. 
We detail the reduction of the construction below and later prove the correctness afterwards. 
\paragraph*{Construction}
Let $\Pc$ be an PFA. We will construct a B\"uchi automaton $\Bc$ and a resolver $\Rc$ for $\Bc$ such that $\Rc$ is an almost-sure resolver of $\Bc$ if and only if $\Pc$ isolates zero. 

Let the set of all words accepted by $\Pc$ with non-zero probability be written as $\Lc_{>0}(\Pc)$. We will construct a B\"uchi automaton that accepts the language $\Lc_{>0}(\Pc)$ repeated with a symbol $\$$ as a separator between  two words from the language $\Lc_{>0}(\Pc)$, that is,  $$\Lc(\Bc) = \{w_1\$w_2\$\dots \$w_k\$\dots \mid w_i\in L_{>0}(\Pc)\}.$$
%We assume that $\Pc$ is complete, that is, for every word, the probability of the word having a run is $1$. If the automaton is not complete, then we can complete it by adding a sink state. 
We will construct $\Bc$ with the same set of states as the finite probabilistic automaton $\Pc$. 
The transitions of $\Bc$ are exactly the transitions from the probabilistic automata $\Pc$, where any probabilistic transition with non-zero probability are made into non-deterministic choices for $\Bc$. 
Finally, we add more transitions from all states of $\Pc$ to the starting state $q_0$ of $\Pc$ on the letter $\$$.
We need to assign priorities to the transitions of $\Bc$ to complete our construction of $\Bc$. We assign priority $2$ to the transitions on the letter $\$$ that go from a final state of $\Pc$ to $q_0$. We assign a priority of $1$ to all other transitions on $\Bc$.
The resolver $\Rc$ is a memoryless resolver that at state $q$ on letter $a$ uses the same probability distribution as dictated by the transition from $q$ on $a$ in $\Pc$.

Let $\Pc = \tpl{Q,\Sigma,\Delta,\rho,q_0}$. We construct a B\"uchi automaton $\Bc = (Q,\Sigma\uplus\{\$\},\Delta',q_0)$, where $\Delta' = \{q\xrightarrow{\$:1}q_0\mid q\notin F\}\cup \{q\xrightarrow{\$:2}q_0\mid q\in F\} \cup  \{q\xrightarrow{a:1}q_0\mid q\xrightarrow{a(p)}q_0\in\Delta, p>0\}$. An illustration is provided in \cref{fig:PBAfromPFA}. 
\begin{figure}
    \centering
    \begin{tikzpicture}[shorten >=1pt, node distance=2cm, on grid, auto]
   \node[state, initial, initial text=] (q_0) {$q_0$};
   \node[state, accepting, right=of q_0] (q_1) {$q_{f}^1$};
   \node[state, accepting, below=of q_1] (q_2) {$q_f^t$};
   \node[state, below=of q_0, yshift =6mm] (q_3) {$q_2$};
   \node (q_4) at (2.2,1)  {$F$};

   \path[->]
    (q_1) edge [bend right=70,double] node [above] {$\$$} (q_0)
    (q_2) edge [bend right=10, double] node [above] {$\$$} (q_0)
    (q_3) edge [bend left=30] node [left] {$\$$} (q_0);
    
   \draw[thick, rounded corners] 
        ($(q_0.north west)+(-0.3,0.3)$) rectangle ($(q_2.south east)+(0.2,-0.3)$);
 
   \draw[dashed] 
        ($(q_0.north east)+(1,0.3)$) -- ($(q_0.south east)+(1,-2.3)$);
\end{tikzpicture}
    \caption{Converting PFA $\Pc$ to a B\"uchi automaton $\Bc$ by adding transitions on $\$$. Other transitions in $\Bc$ include all transitions of $\Pc$ that have non-zero probability. }
    \label{fig:PBAfromPFA}
\end{figure}
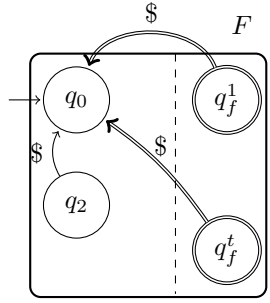
Further, consider the resolver $\Rc$ that chooses the non-deterministic transitions $q\xrightarrow{\$:1}q_0$ with probability $\rho(q\xrightarrow{a}q_0)$ that is the same as the probability assigned to the transition by the automaton $\Pc$. Observe that language of $\Bc$ is words of the word $w_1 \$ w_2 \$ w_3 \$ \dots$, where each $w_i \in \Sigma^{*}$ and infinitely many $w_i$ are in $L_{>0}(\Pc)$.

$$\Lc(\Bc) = ( (\Sigma^{*} \$) (\Lc_{>0} (\Pc))\$ )^{\omega}.$$

We show that the resolver $\Rc$ is an almost-sure resolver  if and only if there is a value $c$ such that $\prob_\Pc(w)>c$ for all words $w$. 
% \begin{claimproof}
% Consider any word in $((\Sigma^*\$)^*R\cdot\$)^\omega$, we show that it has an accepting run in $\Bc$.
% Consider $\Bc$. The only accepting transition are of the form $q\xrightarrow{\$:2}q_0$, where $q$ is a final state of the PFA $\Pc$. 
% Any word that has an accepting run must have a run that starts at 
% \end{claimproof}
\paragraph*{$\implies$} Suppose $\Pc$ is a probabilistic automaton that does not isolate $0$. Then, for all  $n\in\Nb$ there is a finite word $w_n$, such that $0<\prob_\Pc(w_n)<1/{2^n}$. Consider the infinite word $w_1\$w_2\$w_3\$\dots\$ w_n\$\dots$. This word is in $\Lc(\Bc)$. However, we argue that the resolver $\Rc$ only produces an accepting run with probability $0$. Indeed, the probability of the resolver visiting a transition with priority 2 on the segment of the word $w_i\$$ is exactly the same as the probability of acceptance of $w_i$ in $\Pc$, which is a value between $0$ and $1/2^n$. 
Observe that $\sum_i^\infty \prob_\Pc(w_i)<\sum_i 1/{2^i} < 1$.
By the Borel-Cantelli lemma (\cref{lemma:borellcantelli}), if the infinite sum of the probabilities of events over an infinite sequence of events is finite, then the probability that infinitely many of the events occur is $0$. Therefore, the probability of a run on the word $w_1\$w_2\$w_3\$\dots\$ w_n\$\dots$ containing infinitely many accepting transitions is $0$.
\paragraph*{$\impliedby$}
Suppose $\Pc$ is a probabilistic automaton that isolates $0$, that is, there is a $c>0$, such that for all $w\in\Sigma^*$, either $\prob_\Ac(w) =0 $ or $\prob_\Ac(w) > c$. 
Consider any word in the language of $\Bc$ which is  $((\Sigma^*\$)^*R\cdot\$)^\omega$. This implies it consists of infinitely many substrings (disjoint) of the form $\$w_i\$$ where $\prob_\Pc(w_i) > 0$. Since $\Pc$ has an isolated $0$, we can further claim that $\prob_\Pc(w_i) > c$. Therefore for each $i$, the probability a run resolved using resolver $\Rc$ visiting an accepting state on reading $\$w_i\$$ is at least $c$. From the second Borel-Cantelli lemma (\cref{lemma:secondborellcantelli}), we know if the sum of probabilities of an infinite sequence of independent events is infinite, then the probability of infinitely many of the events occurring is $1$.
For each $i$, the events $E_i$ where the run obtained from the resolver $\Rc$ on reading $w_i$ contains an accepting transition are all independent events. 
Furthermore, the probability of each event $E_i$ is $\prob_\Pc(w_i)$ and since $\sum_i \prob_\Pc(w_i) > \infty$, we can conclude that the resolver produces an accepting run with probability $1$.
\end{proof}

\end{document}